\newtheorem{definition}{Definition}[section]
\newtheorem{theorem}{Theorem}[section]
\newtheorem*{conjecture*}{Conjecture}
\newtheorem{corollary}{Corollary}[section]
\newtheorem*{theorem*}{Theorem}
\newtheorem*{corollary*}{Corollary}
\newtheorem{proposition}{Proposition}[subsection]
\newtheorem{lemma}{Lemma}[subsection]
\newtheorem{remark}{Remark}[section]
\newcommand{\Lb}{\underline{L}}
\DeclareMathAlphabet\mathbfcal{OMS}{cmsy}{b}{n}
\title{Boundedness and decay for the Teukolsky equation\\ on Kerr spacetimes I: the case $|a|\ll M$}
\author[1,2]{Mihalis Dafermos}
\author[3]{Gustav Holzegel}
\author[2]{Igor Rodnianski}
\affil[1]{\small University of Cambridge, Department of Pure Mathematics and Mathematical
Statistics, Wilberforce~Road,~Cambridge~CB3~0WA,~United~Kingdom\vskip.2pc \ }
\affil[2]{\small Princeton University, Department of Mathematics, Fine~Hall,~Washington~Road,~Princeton,~NJ~08544,~United~States~of~America\vskip.2pc \ }
\affil[3]{\small Imperial College London,
Department of Mathematics,
South~Kensington~Campus,~London~SW7~2AZ,~United~Kingdom\vskip.2pc \ }
\begin{document}

\maketitle

\begin{abstract}
We prove boundedness and polynomial decay statements 
for  solutions of the spin $\pm2$ Teukolsky equation on
a Kerr exterior background with parameters satisfying $|a|\ll M$. 
The bounds are obtained by introducing generalisations of the higher order
quantities $P$ and $\underline{P}$ used in our previous work on the linear stability of Schwarzschild.  The existence of these quantities in the Schwarzschild case is related to the transformation theory
of Chandrasekhar.
In a followup paper, we shall extend this result to the general
sub-extremal range of parameters $|a|<M$. 
As in the  Schwarzschild case, 
these bounds provide  the first step in proving the full linear stability
of the Kerr metric to gravitational perturbations.
\end{abstract}

\tableofcontents

\section{Introduction}
The   stability of the celebrated Schwarzschild~\cite{schwarzschild1916} 
and Kerr metrics~\cite{Kerr}  remains
one of the most important open problems of classical general relativity and has generated a large
number of studies over the years since the pioneering paper of Regge--Wheeler~\cite{Regge}. See~\cite{Mihalisnotes, dafrodlargea} and the introduction of~\cite{holzstabofschw} for recent surveys of the problem. 

The ultimate question is  that of \emph{nonlinear} stability,
that is to say, the dynamic stability of the Kerr family $(\mathcal{M},g_{a,M})$
(including the Schwarzschild case $a=0$), without symmetry assumptions,
as solutions to the Einstein vacuum equations
\begin{equation}
\label{vaceqhere}
{\rm Ric}[g]=0,
\end{equation}
in analogy to the nonlinear stability of Minkowski space, first proven in the monumental~\cite{CK}.
A necessary step to understand nonlinear stability is of course proving suitable versions of
\emph{linear stability}, i.e.~boundedness and decay statements for the
linearisation of $(\ref{vaceqhere})$ around 
the Schwarzschild and Kerr solutions.  This requires first imposing a gauge in which
the equations $(\ref{vaceqhere})$ become well-posed.
A complete study of the linear stability of Schwarzschild in a double null gauge has
been obtained in our recent~\cite{holzstabofschw}. A key step in~\cite{holzstabofschw} 
was proving boundedness and decay for the so-called \emph{Teukolsky equation}, 
to be discussed below in {\bf Section~\ref{Teukinintrosec}},
which can be thought to suitably control 
the ``gauge invariant'' part of the perturbations. See  already
equation~$(\ref{Teukphysicintro})$.
These decay results were 
then used in~\cite{holzstabofschw} to recover appropriate estimates
for the full linearisation of $(\ref{vaceqhere})$.

The purpose of the present paper is to extend the boundedness and decay results of~\cite{holzstabofschw} concerning the  
Teukolsky equation $(\ref{Teukphysicintro})$ from the
Schwarzschild $a=0$ case to the very slowly rotating Kerr case, corresponding to parameters
$|a|\ll M$.  We give a rough
statement of the main result
already in {\bf Section~\ref{herethemainresultintro}} below.

In part II of this series, we shall obtain an analogue of our main theorem for the case 
of general
subextremal Kerr parameters  $|a|<M$. The extremal case $|a|=M$ is
exceptional; see {\bf Section~\ref{otherrelatedsec}} for remarks on this and other related problems.
In a separate paper, following our previous work 
on Schwarzschild~\cite{holzstabofschw}, we will use the above result to show the full linear stability
of the Kerr solution in an appropriate gauge.

We end this introduction in {\bf Section~\ref{outlinesec}} 
with an outline of the paper.

\subsection{The Teukolsky equation for general spin}
\label{Teukinintrosec}
The original approach to linear stability in the Schwarzschild case 
centred on so-called metric perturbations, leading to the decoupled equations
of Regge--Wheeler~\cite{Regge} and Zerilli~\cite{Zerilli!}. The Regge--Wheeler equation
will in fact appear below as formula $(\ref{RWeqintro})$. This approach does not, however, 
appear to easily generalise
to Kerr.
Thus, it was a fundamental  advance when Teukolsky~\cite{teukolsky1973}
 identified two gauge invariant
quantities which decouple
from the full linearisation of $(\ref{vaceqhere})$ in the general Kerr case.
The quantities, corresponding to the extremal curvature components in the Newman--Penrose formalism~\cite{newmanpenrose}, 
can each be expressed by complex scalars $\upalpha^{[\pm2]}$  which
satisfy  a wave equation, now known as the Teukolsky equation:
\begin{eqnarray}
\label{Teukphysicintro}
\nonumber
\Box_g \upalpha^{[s]} +\frac{2s}{\rho^2}(r-M)\partial_r \upalpha^{[s]} +\frac{2s}{\rho^2}
\left(\frac{a(r-M)}{\Delta} +i\frac{\cos \theta}{\sin^2\theta}\right) \partial_\phi \upalpha^{[s]}
+\frac{2s}{\rho^2}\left(\frac{M(r^2-a^2)}{\Delta}-r-ia\cos\theta\right)\partial_t \upalpha^{[s]}\\
+\frac{1}{\rho^2}(s-s^2\cot^2\theta) \upalpha^{[s]}=0,
\end{eqnarray}
with $s=+ 2$ and $-2$ respectively. The scalars are more properly thought of as
spin $\pm2$ weighted quantities.
This generalised an analogous
property in the Schwarzschild case identified by Bardeen and Press~\cite{bardeen1973}. 
 These quantities govern the ``gauge invariant'' part of the perturbations in the sense that
an admissible solution of the linearised Einstein equations whose corresponding $\upalpha^{[\pm2]}$ both vanish
must be a combination of a linearised Kerr solution and a pure
gauge solution \cite{WaldKerr}. 

Note that equation $(\ref{Teukphysicintro})$ can be considered for arbitrary values of
$s\in \frac12\mathbb Z$. For $s=0$, $(\ref{Teukphysicintro})$ reduces to the covariant wave equation $\Box_g\psi=0$, 
while for $s=\pm 1$, $(\ref{Teukphysicintro})$ arises as an equation
satisfied by the extreme components of the Maxwell equations in a null 
frame~\cite{Chandrasekhar}.

\subsubsection{Separability and the mode stability of Whiting and Shlapentokh-Rothman}

An additional remarkable property of the  
Teukolsky 
equation $(\ref{Teukphysicintro})$ is that it can be 
formally separated, in analogy
with  Carter's separation~\cite{carter1968hamilton} of the wave equation (i.e.~the case
of $s=0$).
The  separation of the $\theta$-dependence is
surprising in the case $a\ne 0$ for all $s$ because    
the Kerr metric only admits $\partial_\phi$ and $\partial_t$ as Killing fields. 
It turns out that considering the ansatz
\begin{equation}
\label{separinintro}
\upalpha^{[s]} (r) e^{-i\omega t} S_{m\ell}^{[s]}(a\omega, \cos\theta)   e^{im\phi}
\end{equation}
where $S_{m\ell}^{[s]}(\nu, \cos\theta)$ denote spin-weighted oblate spheroidal harmonics, 
one can
derive from $(\ref{Teukphysicintro})$
an ordinary differential equation for $\upalpha$, which in rescaled form (see (\ref{rescaled})) can be
written as
\begin{equation}
\label{eqforinintro}
u'' +  V^{[s]}(\omega, m, \ell, r) u = 0 
\end{equation}
where for $s\ne 0$, the potential $V^{[s]}$ is complex valued. (Here $'$ denotes
differentiation with respect to $r^*$. See Section~\ref{kerrmetricsection}.)
See already~$(\ref{basic})$.
The separation $(\ref{separinintro})$ 
was subsequently understood to be related to the presence of an additional
 Killing tensor~\cite{kalnins}.

Of course, the problem of decomposing general, 
initially finite-energy solutions of $(\ref{Teukphysicintro})$ as
appropriate superpositions of
$(\ref{separinintro})$ is intimately tied with the validity of boundedness
and decay results, in view of the necessity of taking the Fourier transform in time.
A preliminary question that can be addressed already solely at the level of $(\ref{eqforinintro})$
is that of ``mode stability''.
Mode stability is the statement that there
are no initially finite-energy  solutions of the form $(\ref{separinintro})$
with ${\rm Im}(\omega)>0$. This reduces to showing the non-existence of solutions of
$(\ref{eqforinintro})$ with ${\rm Im}(\omega)>0$ and exponentially decaying boundary
conditions both as $r^*\to \infty$ and $r^*\to-\infty$.

In the case $a=0$, $s=0$, then mode stability can be immediately inferred
by applying the physical space energy estimate associated to the Killing
vector field $\partial_t$
to a solution of the form $(\ref{separinintro})$.
The question is highly nontrivial for $a\ne 0$, already in the
case $s=0$, in view  of the phenomenon of \emph{superradiance},
connected to the presence of the so-called \emph{ergoregion} where
$\partial_t$ is spacelike.  
For $s=\pm 2$, the question is non-trivial
even in the case $a=0$, as there does not exist an obvious conserved energy current.
(In separated form $(\ref{eqforinintro})$, this is related to the fact that the potential $V^{[s]}$
is now complex valued.)  
In a remarkable paper, Whiting~\cite{Whiting} nonetheless succeeded in proving 
mode stability 
for $(\ref{Teukphysicintro})$ for all $s$ in the general
subextremal range of parameters $|a|<M$ 
by cleverly exploiting certain algebraic transformations of the ode~$(\ref{eqforinintro})$.

Mode stability has been extended to exclude ``resonances'' on the real axis,
i.e.~solutions $u$ of $(\ref{eqforinintro})$ with  $\omega\in \mathbb R$ with appropriate boundary
conditions, by 
Shlapentokh-Rothman~\cite{SRT} in the case $s=0$,
who had the insight that the transformations applied in~\cite{Whiting} 
could be extended to the real axis
using the theory of oscillatory integrals.
Together with a continuity argument in $a$, \cite{SRT} can be used
to reprove the original~\cite{Whiting}, and this leads to certain simplifications.
The argument generalises to $s=\pm2$. See also~\cite{doi:10.1063/1.4991656} 
where the techniques
of~\cite{SRT} are combined with an alternative complex analytic treatment.

We emphasise that mode stability is a remarkable property tied to 
the specific form of the equation $(\ref{Teukphysicintro})$ and to the specific Kerr background, even
for $s=0$. Indeed, mode stability fails for $a\ne 0$  when an arbitrarily small Klein--Gordon mass is added, as was first suggested by~\cite{zouros1979instabilities, detweiler1980klein} and proven recently in~\cite{yakovinsta}. 
Even more surprisingly, mode stability fails when a well-chosen positive compactly supported potential is added to $(\ref{Teukphysicintro})$, or when the Kerr metric is itself sufficiently deformed, keeping
however all its symmetries and separation properties, in 
a spatially compact region which can be taken arbitrarily far from the ergoregion~\cite{moschidis2017superradiant}. 

\subsubsection{Previous work on boundedness and decay}
\label{prevworksec}
The quantitative study of the Cauchy problem for $(\ref{Teukphysicintro})$ with $s=0$,
beyond  statements for fixed modes, has become an active field in recent years.
The study for higher spin is still less developed beyond the Schwarzschild case.
We review some relevant previous work below.

\paragraph{The case $s=0$, $|a|<M$.}
An early  result~\cite{KayWald} 
obtained boundedness
for solutions to the Cauchy problem for
the scalar wave equation on Schwarzschild (i.e.~the case $s=0$ and $a=0$
of $(\ref{Teukphysicintro})$) with regular, localised initial data. Even this involved non-trivial considerations on the
event horizon, which can now be understood in a more robust way using
the red-shift energy identity~\cite{DafRod2, Mihalisnotes}.
Following intense activity in the last decade
(e.g.~\cite{MR1972492, DRPrice, Sterbenz, DafRod2, DafRodKerr, Mihalisnotes, DafRodsmalla, Toha2, AndBlue}) 
there are now complete boundedness
and decay results   for $(\ref{Teukphysicintro})$ with $s=0$ in the full subextremal range of Kerr parameters $|a|<M$~\cite{partiii}.

The main difficulties in passing from $a=0$ to $a\ne 0$ arise from superradiance, mentioned
already in the context of mode stability, and
the fact that trapped null geodesics no longer approach a unique value of $r$ in physical space.
The latter is relevant because integrated local energy decay estimates, an important
step in the proof of quantitative decay, must necessarily degenerate at trapping.\footnote{In 
the non-trapping case, such estimates are non-degenerate and can be
derived by classical virial identities whose use  goes 
back to~\cite{morawetz1968time}.}
One way of dealing with the latter difficulty is employing the separation $(\ref{separinintro})$ as
a method of frequency localising integrated local energy decay estimates. 
See~\cite{Mihalisnotes, DafRodsmalla}.
Once such an estimate is obtained, 
the difficulty of superradiance can easily be overcome in the $|a|\ll M$
case as the error terms in the ergoregion are small and can be  absorbed.
For alternative approaches, see~\cite{Toha2, AndBlue}.

The $|a|<M$ case appears a priori to be much more complicated. 
It turns out, however, that the Schwarzschild-like  structure  of trapping
survives, when appropriately
viewed in phase space. Moreover, in the high frequency regime,
one can quantify superradiance with the help of the fact that, quite fortuitously,  superradiant frequencies
happen \underline{not} to be trapped. See~\cite{partiii}. 
These good high frequency properties, 
together with Shlapentokh-Rothman's real mode stability~\cite{SRT}
and a continuity argument in $a$, allow one to extend the exact
same boundedness and integrated local energy decay
results originally obtained on Schwarzschild to the
whole sub-extremal range $|a|<M$ of Kerr parameters. Suitable polynomial decay
then follows from an application of the method of $r^p$ weighted energy
estimates~\cite{DafRodnew, Moschnewmeth}. See~\cite{partiii}. 
For comments on the extremal case $|a|=M$, see Section~\ref{extremality}.

\paragraph{The case $s=\pm2$, $a=0$.}
As we remarked already above, the Teukolsky equation with $s=\pm 2$, $a=0$ has been studied in our previous~\cite{holzstabofschw}
as part of our complete study of the linear stability of Schwarzschild.

The main difficulty of the $s=\pm2$ case as opposed to the case $s=0$, is that, as discussed
already in the context of mode stability, there
does not exist an obvious analogue of the conserved energy associated to the Killing 
field $\partial_t$. Thus,  proving even just boundedness for $a=0$ is non-trivial, even just far away
from the event horizon.
The key to understanding $(\ref{Teukphysicintro})$ for $s=\pm2$, $a=0$ in~\cite{holzstabofschw}
was associating quantities
$P^{[\pm 2]}$ to  $\upalpha^{[\pm2]}$ satisfying $(\ref{Teukphysicintro})$.
These are physical space versions of  transformations first considered
by Chandrasekhar~\cite{Chandrasekhar} 
and are  defined by the expressions
\begin{align}
\label{Pdefshere}
 P^{[+2]} &= -\frac{1}{2(r-2M)} \underline{L} \left( \frac{r^3}{r-2M} \underline{L} \left(\frac{(r-2M)^2}{r} \upalpha^{[+2]}\right)\right) \,  ,  \\
 \label{Pdefsheretwo}
  P^{[-2]} &=-\frac{1}{2(r-2M)} {L} \left( \frac{r^3}{r-2M} {L} \left( r^{-3} \upalpha^{[-2]}\right)\right) \, .
\end{align}
Here $L= \partial_t-\partial_{r*}$, $\underline{L}=\partial_t+\partial_{r*}$ are a null frame,
where $r^*$ is the Regge--Wheeler coordinate.
The quantities $\Psi^{[+2]} = r^3 P^{[+2]} $ and $\Psi^{[-2]}=r^3 P^{[-2]} $ can be shown to
satisfy  the Regge--Wheeler equation\footnote{See Sections \ref{relwiththesystemsec} and \ref{relagainwithDHR} for the precise relation between the tensorial Regge--Wheeler equation defined in \cite{holzstabofschw} and equation (\ref{RWeqintro}). Note in particular that $\Psi^{[+2]}$ and $\overline{\Psi^{[-2]}}$ satisfy the same equation, which explains the appearance of a single Regge--Wheeler equation in \cite{holzstabofschw}. We also note that both the operators $\mathring{\slashed\triangle}^{[\pm 2]}$ and $\mathring{\slashed\triangle}^{[\pm 2]}   \pm 2$ have non-negative eigenvalues. See Sections \ref{spinweigtsec} and \ref{sec:swest}.}
\begin{equation}
\label{RWeqintro}
L\underline{L} \Psi^{[\pm 2]} + \frac{\Omega^2}{r^2}
\left( \mathring{\slashed\triangle}^{[\pm 2]}  \Psi^{[\pm 2]}  \pm 2\right) + \Omega^2\left(\frac{4}{r^2} - \frac{6M}{r^3}\right) \Psi^{[\pm 2]} = 0 \, ,
\end{equation}
where $\mathring{\slashed\triangle}^{[+2]}$ denotes the spin-2-weighted Laplacian on the
unit sphere
\begin{equation} \label{defspinlaplacianintro}
\mathring{\slashed\triangle}^{[\pm 2]}=  -\frac{1}{\sin \theta} \frac{\partial}{\partial \theta} \left(\sin \theta \frac{\partial}{\partial \theta}\right) - \frac{1}{\sin^2 \theta} \partial_\phi^2 + 2 \left(\pm2\right) i\frac{ \cos \theta}{\sin^2 \theta} \partial_\phi + 4 \cot^2 \theta \mp 2 \, .
\end{equation}
Remarkably, $(\ref{RWeqintro})$ is precisely the
same equation which appeared as one of the equations
governing the ``metric
perturbations'' approach discussed at the beginning of Section~\ref{Teukinintrosec}! 

Unlike $(\ref{Teukphysicintro})$ with $s=\pm2$, 
the above equation $(\ref{RWeqintro})$ can  be 
 estimated on Schwarzschild just as for the
wave equation $s=0$, since $(\ref{RWeqintro})$ 
is indeed endowed with the usual structure
of energy estimates. 
In particular, both boundedness and
integrated local energy decay can be obtained. 
(For analysis of $(\ref{RWeqintro})$, see the 
previous~\cite{BlueSoffer, Holzegelspin2} as well
as the self-contained treatment in~\cite{holzstabofschw}.) 
Estimates for $\upalpha^{[\pm2]}$ could then be recovered
directly by integrating $(\ref{Pdefshere})$ as transport equations from initial data.  
Such integration on its own 
would lead, however, to ``loss of derivatives'' in the resulting estimates for $\upalpha^{[\pm2]}$.  
The Teukolsky equation itself $(\ref{Teukphysicintro})$ can be viewed as a further
elliptic relation which allows one to gain back these derivatives, leading finally
to boundedness results without loss of derivative, as well as 
integrated local energy decay and pointwise decay.

We remark  that, beyond $(\ref{Teukphysicintro})$, in the context of the full proof of linear stability in~\cite{holzstabofschw}, further transport equations
and elliptic equations
could then be used to appropriately estimate the remaining gauge dependent quantities.

\paragraph{Other spins.} 
We note that the scheme of~\cite{holzstabofschw}
has recently been applied also to the $s=\pm 1$ case by
Pasqualotto~\cite{pasqualotto2016spin}. This gives an alternative proof of boundedness and 
polynomial decay
for the Maxwell equations on Schwarzschild, proven originally by Blue~\cite{Blue}.
See also~\cite{MR3373052}.
Decay for Maxwell in the case $|a|\ll M$ was obtained in~\cite{andersson2013uniform}.
For a direct treatment of $(\ref{Teukphysicintro})$ for $s=\pm 1$ 
in the case $|a|\ll  M$, generalising some of the results of~\cite{pasqualotto2016spin},
there is the recent~\cite{Ma:2017yui}.
 For the cases $s=\pm1/2$
and $s=\pm3/2$ see~\cite{Smoller2012}. 
See~\cite{finster2003} for the related massive Dirac equation not covered by $(\ref{Teukphysicintro})$.
We note also the papers~\cite{MR3607063, Finster:2016tky}.

\subsection{The main result and first comments on the proof}
\label{herethemainresultintro}

The aim of the present paper is to extend the analysis of $(\ref{Teukphysicintro})$
for $s=\pm 2$ from the Schwarzschild
$a=0$ case considered in~\cite{holzstabofschw}
to the very slowly rotating Kerr case with parameters $|a|\ll M$. 
A rough version of our main result is
the following:

\begin{theorem*}[Rough version]
Let $|a|\ll M$.  Solutions $\upalpha^{[\pm2]}$ 
to the spin $s=\pm2$ Teukolsky equation $(\ref{Teukphysicintro})$ on Kerr exterior 
spacetimes $(\mathcal{M}, g_{a,M})$
arising from regular localised initial data on a Cauchy hypersurface $\Sigma_0$
remain uniformly bounded and satisfy an $r^p$-weighted  energy  
hierarchy and polynomial decay.
\end{theorem*}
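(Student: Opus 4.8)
The plan is to mirror the strategy of~\cite{holzstabofschw} but to treat the Kerr terms of size $O(|a|/M)$ as perturbations that can be absorbed, with all estimates carried out in a single framework robust enough to survive the breaking of spherical symmetry. First I would introduce generalisations $\Psilin^{[\pm2]}$ of the Regge--Wheeler quantities $\Psi^{[\pm2]}=r^3P^{[\pm2]}$ by applying suitable $a$-dependent analogues of the two null-derivative operators in $(\ref{Pdefshere})$--$(\ref{Pdefsheretwo})$ to $\upalpha^{[\pm2]}$, where now $\underline{L}$ and $L$ are replaced by the Kerr null frame and the coefficients of $(\ref{RWeqintro})$ pick up $a$-corrections. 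A direct computation, using only the Teukolsky equation $(\ref{Teukphysicintro})$, should show that $\Psilin^{[\pm2]}$ satisfies a wave equation of Regge--Wheeler type \emph{up to} error terms that are (i) linear in $a$ and (ii) expressible in terms of $\upalpha^{[\pm2]}$, its first derivatives, and possibly the intermediate ``one-derivative'' quantity $\psi^{[\pm2]}$. Isolating the precise structure of these error terms — in particular checking that they carry the correct $r$-weights and angular regularity so that they do not destroy the energy estimates — is the step I expect to be the main obstacle, and it is here that the frequency-space or physical-space decomposition into the relevant spin-weighted angular modes must be set up carefully.

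Next I would establish the basic energy and Morawetz (integrated local energy decay) estimates for the inhomogeneous Regge--Wheeler-type equation satisfied by $\Psilin^{[\pm2]}$. On Schwarzschild this is classical; for $|a|\ll M$ I would run the same multiplier estimates (the $T$-energy, the red-shift vector field near the horizon, and a Morawetz current capturing trapping at $r\approx 3M$), noting that all deformation-tensor error terms introduced by the $a$-corrections to the metric and to the potential are $O(|a|/M)$ relative to the main coercive terms. The only genuinely dangerous region is the ergoregion, where $T=\partial_t$ fails to be timelike; but as recalled in the excerpt for the $s=0$ case, for $|a|\ll M$ the superradiant error terms are quantitatively small and can be absorbed into the Morawetz bulk term once the latter is established. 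A continuity-in-$a$ argument, seeded by the Schwarzschild estimates of~\cite{holzstabofschw} and by (real) mode stability, would make the absorption rigorous. This yields boundedness and ILED for $\Psilin^{[\pm2]}$ modulo a controlled contribution from the inhomogeneity built out of $\upalpha^{[\pm2]}$ and $\psi^{[\pm2]}$.

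I would then close the system by going back down the transformation chain. Integrating the defining transport relations for $\Psilin^{[\pm2]}$ as ODEs along the null directions — from data on $\Sigma_0$ and, where appropriate, from the horizon and null infinity — recovers $\psi^{[\pm2]}$ and then $\upalpha^{[\pm2]}$, at the cost of a loss of derivatives; the Teukolsky equation $(\ref{Teukphysicintro})$ itself, viewed as an elliptic relation on the spheres of constant $(t,r)$, is used to regain the lost derivatives, exactly as in~\cite{holzstabofschw}, again with the $a$-dependent lower-order terms treated perturbatively. Feeding the resulting bounds for $\upalpha^{[\pm2]}$ and $\psi^{[\pm2]}$ back into the inhomogeneity estimated in the previous paragraph, and commuting with $\partial_t$, $\partial_\phi$ and the full red-shift / angular vector fields to reach higher order, produces a closed hierarchy of estimates; choosing $|a|/M$ small enough makes the absorption self-consistent. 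Finally, having obtained boundedness and ILED at all orders, the polynomial decay and the $r^p$-weighted energy hierarchy follow from the $r^p$-method of~\cite{DafRodnew, Moschnewmeth} applied near null infinity, combined with the ILED estimate in the interior, in the by-now standard fashion. The delicate points throughout are bookkeeping ones: ensuring the $a$-error terms always come with enough smallness, decay in $r$, and angular derivatives to be reabsorbed, and handling the trapping and ergoregion interactions simultaneously in the same multiplier scheme.
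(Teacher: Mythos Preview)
Your outline has the right architecture but misidentifies the central difficulty and, as stated, would not close. The claim that ``the only genuinely dangerous region is the ergoregion'' is backwards for $|a|\ll M$: superradiant errors are indeed small and easily absorbed, but the genuine obstruction is at \emph{trapping}. Two concrete issues arise there. First, in Kerr the trapped set is frequency-dependent --- the radius $r_{\rm max}$ at which the Morawetz weight must vanish varies with $(\omega,m,\widetilde\Lambda)$ --- so a single physical-space Morawetz current degenerating at $r=3M$ produces a bulk error of order $|a|\,\widetilde\Lambda$, which for high angular frequencies is \emph{not} small relative to the coercive $O(1)$ zeroth-order term. The paper therefore frequency-localises (via Teukolsky's separation) in a fixed neighbourhood $[A_1,A_2]\ni 3M$ and uses a frequency-dependent $f$ vanishing at the true $r_{\rm max}$; only in axisymmetry ($m=0$) is $r_{\rm max}$ frequency-independent, and there the paper indeed gives a purely physical-space argument.

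Second, and more subtle: even after frequency localisation, the coupling $\mathcal{J}^{[\pm2]}$ on the right of the Regge--Wheeler-type equation contains a term schematically $a\,\Phi\uppsi^{[\pm2]}\sim a\,im\,\psi$. Paired with the $T$-multiplier this produces $a\,m\omega\,\psi\,\overline\Psi$; since the Morawetz bulk at trapped frequencies controls $|\Psi'|^2+|\Psi|^2$ but \emph{not} $\omega^2|\Psi|^2$ or $\widetilde\Lambda|\Psi|^2$, naive Cauchy--Schwarz cannot absorb this regardless of how small $|a|$ is --- so your assertion that the $a$-errors are ``$O(|a|/M)$ relative to the main coercive terms'' fails precisely here. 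Your plan to invoke the elliptic structure of $(\ref{Teukphysicintro})$ to view $\upalpha$ as genuinely two orders below $\Psi$ would in principle resolve this (the paper explicitly remarks so), but the paper instead uses a more elementary device: commute the transport relation with $\partial_{r^*}$ and use $-\underline{L}=\partial_{r^*}+i\omega-iam/(r^2+a^2)$ to trade the missing $\omega\Psi$ for the non-degenerate $\Psi'$. Finally, mode stability and continuity-in-$a$ are neither used nor needed for $|a|\ll M$; they enter only for the full subextremal range in Part~II.
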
 

The precise statements embodying the above will be given as 
Theorem~\ref{finalstatetheor}.

The proof of our Theorem 
combines the use of the quantities $P^{[\pm2]}$ 
introduced in our previous~\cite{holzstabofschw} 
with
a simplified version of the
framework introduced in~\cite{DafRodsmalla, partiii} for frequency localised energy estimates,
which as discussed in Section~\ref{prevworksec}
 are useful to capture the obstruction to decay associated with   trapped null geodesics.
(In the special case of axisymmetric solutions, this frequency localisation
can be avoided and our proof can be expressed entirely in physical space. 
See already  Section~\ref{axi-intro}.)

The crucial  observation which allows this technique to work is the following:
In the scheme introduced in~\cite{holzstabofschw},
it is not in fact absolutely necessary 
that the quantities $P^{[\pm2]}$ each satisfy a  completely decoupled equation
 $(\ref{RWeqintro})$. It would
 have been permissible if the equation $(\ref{RWeqintro})$  for $P^{[\pm2]}$ was 
 somehow still coupled to $\upalpha^{[\pm2]}$ on the right hand side,
 provided that this coupling was at a suitable ``lower order'', in the sense
 that these lower order terms can 
indeed be recovered
by the transport (and elliptic equations) which were used in~\cite{holzstabofschw}
to estimate $\upalpha^{[\pm2]}$.

It turns out, remarkably, that when analogues of the quantities
$P^{[\pm2]}$  are defined for Kerr,
even though the exact decoupling from $\upalpha^{[\pm2]}$, respectively,  breaks down, the resulting equations indeed
only couple to $\upalpha^{[\pm2]}$
in the ``weak'' sense described above.

We explain below this structure in more detail, and how it is implemented
in our proof (where we will in fact be able to circumvent use of elliptic estimates).

\subsubsection{The generalisation of $P^{[\pm2]}$ to Kerr}
Our physical-space
definition for $P^{[+2]}$, generalising $(\ref{Pdefshere})$, is given as
\begin{equation}
\label{therelationhere}
 P^{[+2]} = -\frac{(r^2+a^2)^{1/2}}{2\Delta} \underline{L} \left( \frac{(r^2+a^2)^2}{\Delta} \underline{L}\left(\Delta^2 \left(r^2+a^2\right)^{-\frac{3}{2}} \upalpha^{[+2]}\right)\right) \,  . 
\end{equation}
A similar formula holds for $P^{[-2]}$. See already Section~\ref{physicalspacedefsec}.
A  computation reveals that
the rescaled $\Psi^{[+2]}=(r^2+a^2)^{\frac32}P^{[+2]}$ satisfies an equation of the form
\begin{equation}
\label{Reggewheelertype}
\mathfrak{R}^{[+2]} \Psi^{[+2]} =
c_1(r) \partial_\phi (\underline{L}\upalpha^{[+2]}) +c_2(r)
\underline{L} \upalpha^{[+2]}+c_3(r)\partial_\phi \upalpha^{[+2]}
+c_4(r)\upalpha^{[+2]},
\end{equation}
where $\mathfrak{R}^{[+2]}$ is a second order operator defined on Kerr generalising the 
Regge--Wheeler operator appearing on the left hand side of $(\ref{RWeqintro})$, which 
has good divergence properties and thus admits energy currents. 
Consistent with the total decoupling in the Schwarzschild case, 
the coefficient functions $c_i(r)$ above are all $O(|a|)$.
Provided that $\upalpha^{[+2]}$ 
can indeed be viewed as being of two degrees lower in differentiability
than $\Psi^{[+2]}$, then the right hand side is ``zero'th order'' in $\Psi^{[+2]}$.
Let us note, however,
that if we use only the transport relation $(\ref{therelationhere})$, 
then the right hand side of $(\ref{Reggewheelertype})$ 
can only be viewed as  ``first order'' in $\Psi^{[+2]}$,
as integration of $(\ref{therelationhere})$ does not improve
differentiability. Thus, to exploit fully this structure, one must also invoke
in general elliptic relations connecting $\alpha^{[+2]}$ and $\Psi^{[+2]}$ that
can be derived by revisiting equation $(\ref{Teukphysicintro})$ itself.
As we shall see below, it turns out, however, that we shall be able to avoid
invoking this by exploiting more carefully the special structure and the non-degeneration
of the derivative $\partial_{r^*}\Psi^{[\pm2]}$.
We describe 
in Sections~\ref{estim.away}--\ref{odeanalysisintro} how these
terms can be controlled.

We emphasise already that the above structure of
the terms appearing on the right hand side of $(\ref{Reggewheelertype})$ 
is surprising. Upon perturbing $(\ref{RWeqintro})$
one would expect higher order terms in $\Psi^{[\pm2]}$ to appear which cannot be
incorporated in the definition of
$\mathfrak{R}^{[+2]}$ so as to preserve its good divergence properties. 
We note already that in the axisymmetric case, the right hand side
of $(\ref{Reggewheelertype})$ is of even lower order, as the $\partial_\phi$ derivatives
vanish.
The deeper reason why these terms cancel is  
not at all clear. See also the remarks in Section~\ref{finalremhere} below.

\subsubsection{Estimates away from trapping}
\label{estim.away}

Away from trapping, 
it suffices to treat the right hand side
of $(\ref{Reggewheelertype})$ as if it were at the level of a general
``first order'' perturbation in $\Psi^{[+2]}$.

To see this, let us note first that 
suitably away from $r=3M$, 
the $f$ and $y$-multiplier estimate of~\cite{holzstabofschw} leads
in the Schwarzschild case to a coercive spacetime integral containing
\emph{all} first derivatives of $\Psi^{[+2]}$ (with suitable weights towards
the horizon and infinity). This coercivity property away from trapping
is manifestly preserved
to  perturbations to Kerr for $|a|<a_0\ll M$ sufficiently small.  We may
add also a small multiple of the $r^\eta$-current for an $\eta>0$ to generate extra useful
weights near infinity.
Moreover, we may add a suitable multiple of the energy
estimate associated to a vector field
$\partial_t + \chi \upomega_+ \partial_\phi$ which connects
the Hawking vector field on the horizon with the stationary
vector field $\partial_t$. This ensures positive boundary terms on suitable
spacelike and null boundaries,
at the expense of generating an $O(|a|)$ bulk term supported where $\chi'=0$,
which is chosen to be away from trapping. Thus, this bulk term 
can again be absorbed by the coercive terms of the $f$ and $y$-multipliers.

On the other hand, commutation of equation 
$(\ref{therelationhere})$
by the Killing fields $\partial_t$ and $\partial_\phi$ 
allows one to estimate all terms involving
$\upalpha$ and $\underline{L}\upalpha$ and their derivatives
appearing on the right
hand side of $(\ref{Reggewheelertype})$ 
from the spacetime estimate for $\Psi^{[+2]}$
by appropriate transport estimates. (Here, we  note that we
must make use of the extra 
$r^\eta$ weight, just as in~\cite{holzstabofschw}.)
Thus, were it not for trapping,
one could easily absorb
the error terms on the right hand side of $(\ref{Reggewheelertype})$.

\subsubsection{Frequency localised analysis of the coupled system near trapping}
\label{odeanalysisintro}

In view of the above discussion, the terms on the right hand side of
$(\ref{Reggewheelertype})$ are most dangerous near trapping.
Let us take a more careful look at the structure of 
$(\ref{therelationhere})$--$(\ref{Reggewheelertype})$ 
using our frequency analysis.

At the level of the formally separated solutions $(\ref{separinintro})$,
the operator $\underline{L}$ takes the form
\begin{equation}
\label{takes.the.form.sep}
-\underline{L}=\frac{d}{dr^*} +i\omega-\frac{iam}{r^2+a^2},
\end{equation}
where $r^*$ is a Regge--Wheeler type coordinate,
the relation $(\ref{therelationhere})$ reads 
\begin{equation}
\label{inadditionto}
\Psi^{[+2]} = -\frac 12 w^{-1}\Lb \left(w^{-1}\Lb \left(w\cdot u^{[+2]}\right)\right) 
\end{equation}
where 
\begin{equation}
\label{wdefinit}
w:= \frac{\Delta}{(r^2+a^2)^2}
\end{equation}
and the ``Regge--Wheeler'' type equation $(\ref{Reggewheelertype})$ 
takes the form
\begin{equation}
\label{ReggeWheelerlikeintro}
\frac{d^2}{(dr^*)^2}{\Psi^{[+2]}} +(\omega^2 -\mathcal{V} )\Psi^{[+2]} 
   =a\left(\mathfrak{c}_1(r) i m +\mathfrak{c}_2(r) \frac{a}{r} \right) \underline{L} (u^{[+2]} w)
  + a^2 w\left(\mathfrak{c}_3(r) \frac{1}{r} a i m  + \mathfrak{c}_4(r) \right) (u^{[+2]} w).
\end{equation}
Here $\mathcal{V}$ is a real potential depending smoothly on $a$
which reduces to the separated version of the
Regge--Wheeler potential for $a=0$ and the $\mathfrak{c}_i$ are bounded functions. Cf.~(\ref{Reggewheelertype}) and see Appendix \ref{appendix:RW}.

At the separated level, using a frequency localised version of
the current $f$ of~\cite{holzstabofschw}, chosen to vanish at the (frequency-dependent) 
maximum 
of the potential $\mathcal{V}$ as in
of~\cite{DafRodsmalla}, 
together with a frequency localised $y$-current
and the frequency-localised energy estimate (multiplication by
$\omega \Psi$)
one can prove the ODE analogue of
a degenerating integrated local energy decay 
for $\Psi^{[\pm2]}$,  with a right hand side involving the right hand side of
$(\ref{ReggeWheelerlikeintro})$.
Considerations are different in the ``trapped frequency range''
\begin{equation}
\label{trapped.rang}
1\ll \omega^2\sim \lambda_{m \ell}^{[s]} + s , 
\end{equation}
and the non-trapped frequencies.
(Here $\lambda_{m \ell}$ are the eigenvalues of the spin-weighted Laplacian (\ref{defspinlaplacianintro}) reducing to $\ell(\ell+1)-s^2 \geq 2$ in the case $a=0$.)

 In the trapped frequency range $(\ref{trapped.rang})$, 
 the above multiplier gives an estimate which can schematically be written as:
\begin{align}
\label{onlyforthis}
\int_{r \sim 3M} |\Psi^{[+2]}|^2 + |\partial_{r^\star}\Psi^{[+2]}|^2 dr^* \lesssim
   \text{terms controllable by physical space estimates (cf.~Section~\ref{estim.away})} \\
  + |a|\int_{r \sim 3M}( \omega \Psi^{[+2]} + \partial_{r^\star}\Psi^{[+2]})\Big\{ (a i m +1) \underline{L} (u^{[+2]} w)
  + a^2 \left( a i m  + 1 \right) (u^{[+2]} w)\Big\}dr^* \, .
\nonumber
\end{align}
This should be thought of as a degenerate integrated local energy decay bound
for $\Psi^{[+2]}$.
Considering the right hand side of $(\ref{onlyforthis})$, we
note that naive integration of $(\ref{inadditionto})$ as a transport equation 
is not sufficient to control the integral on the right hand side by the left hand side.
This is not surprising: In constrast to the considerations away
from trapping of Section~\ref{estim.away}, in general now only terms
which can be truly thought of as ``zero'th order''  in $\Psi^{[+2]}$ 
can manifestly be absorbed by  the left hand side of $(\ref{onlyforthis})$, in view 
of the absence of an $\omega^2|\Psi^{[+2]}|^2$
and $\Lambda |\Psi^{[+2]}|^2$ coercive term.

One way to try to realise the right hand side of $(\ref{onlyforthis})$ as 
``zero'th order'' in $\Psi^{[+2]}$ would be to 
invoke,
in addition to the transport $(\ref{inadditionto})$, also the elliptic estimates
of~\cite{holzstabofschw}. It turns out, however, that exploiting the presence
of the good first order term $|\partial_r\Psi^{[+2]}|^2$ on the left
hand side of $(\ref{onlyforthis})$, one can
argue in a more elementary manner:
Indeed, by commuting $(\ref{inadditionto})$  with $\partial_{r*}$ and exploiting
the relation $(\ref{takes.the.form.sep})$, one can indeed rewrite the right hand
side so as to absorb it into the left hand side.

Let us note finally
that for ``non-trapped'' frequencies (i.e.~outside the frequency range~$(\ref{trapped.rang})$), 
one can arrange the 
frequency localised multiplier
so that terms $m^2|\Psi^{[+2]}|^2$ and $\omega^2|\Psi^{[+2]}|^2$  appear 
on the left hand side of $(\ref{onlyforthis})$ without degeneration. 
One can then easily absorb the right hand side just as in Section~\ref{estim.away}
treating it essentially as one would a general ``first order'' term.

\subsubsection{Technical comments}
Let us discuss briefly the technical implementation
of the above argument.

As in~\cite{DafRodsmalla},
by using the smallness of  the Kerr parameter $a$, 
the fixed  
frequency analysis of Section~\ref{odeanalysisintro},
restricted entirely to real frequencies $\omega\in \mathbb R$, 
can indeed be implemented to general solutions $\upalpha^{[\pm2]}$ of the 
Cauchy problem for
$(\ref{Teukphysicintro})$, despite the fact that we do not know
a priori that solutions are square integrable in time.
This requires, however, applying cutoffs to $\upalpha$ 
in order to justify the Fourier
transform, and thus one must estimate inhomogeneous versions of 
$(\ref{Teukphysicintro})$ and thus also inhomogeneous versions
of the resulting ODE
$(\ref{ReggeWheelerlikeintro})$.  These inhomogeneous terms must themselves
be bound by the final estimates.

As opposed to the cutoffs of~\cite{DafRodsmalla, partiii}, we here will only
cut off the solution in a region $r^*\in[2A_1^*,2A_2^*]$ near trapping. Thus,
the resulting inhomogeneous terms will be supported in a fixed region of
finite $r^*$. Moreover, the fixed frequency ODE estimates of 
Section~\ref{odeanalysisintro}
will only
be applied in the region $r^*\in [A_1^*,A_2^*]$. They will be combined
with physical space estimates of Section~\ref{estim.away}.
These estimates are now coupled however via boundary terms
on $r=A_1$ and $r=A_2$.  The fixed frequency multipliers applied
to $\Psi^{[+2]}$ are chosen so as to be frequency independent
near $A_1$ and $A_2$ and coincide precisely with those used
in the physical space estimates in the \emph{away} region. As a result,
after summation over frequencies, the boundary terms in the mutliplier
currents exactly cancel. There are also boundary terms associated with the transport 
equations, but these can be absorbed using the smallness of $a$.

The above argument leads to a 
degenerate energy boundedness and integrated local energy
decay  for both $\Psi^{[\pm2]}$ and $\upalpha^{[\pm2]}$.
This preliminary decay bound will be stated  as Theorem~\ref{degenerateboundednessandILED}.
From Theorem~\ref{degenerateboundednessandILED},
we can easily improve our estimates at the event horizon, using
the red-shift technique of~\cite{DafRod2}, and 
then we can easily infer polynomial decay
using the weighted $r^p$ method of~\cite{DafRodnew}---all directly in
physical space.

\subsubsection{The axisymmetric case}
\label{axi-intro}

We have already remarked that in the axisymmetric case $\partial_\phi\upalpha^{[\pm2]}=0$,
the right hand side of $(\ref{Reggewheelertype})$ is of lower order.
An  even more important simplification is that trapped null geodesics all asymptote
to a single value of $r=r_{\rm trap}$ which is near $3M$, independent
of frequency. As a result, there is no need
for frequency-localised analysis and
the whole argument can be expressed entirely in physical space. 
This is convenient for non-linear applications.
We shall explain
how this simplified argument can be explicitly read off from our paper in Section~\ref{axi-note}.

\subsubsection{Final remarks}
\label{finalremhere}

Given the analogue of~\cite{SRT} for $s=\pm 2$, 
the argument
can in principle be applied for the whole subextremal range $|a|<M$
following the continuity argument of~\cite{partiii}, 
but in the present paper we shall only consider the case $|a|\ll M$,
where the lower order terms also have a useful smallness factor bounded by $a$,
and the relevant multiplier currents
can thus be constructed as perturbations of Schwarzschild.
The general case will be considered in part II of this series, 
following the more general constructions of~\cite{partiii}.

There are other generalisations of $P^{[\pm2]}$  to Kerr
which have been considered previously in the literature, see~\cite{Chandrasekhar165, doi:10.1143/PTP.67.1788} and the recent review~\cite{Glampedakis:2017rar}. In contrast to our situation,
the quantities of~\cite{Chandrasekhar165, doi:10.1143/PTP.67.1788} do indeed satisfy decoupled equations, though the transformations must now
be defined in phase space, and the transformed potentials are somewhat 
non-standard in their frequency dependence. 
It would be interesting to find an alternative argument using these
transformations.  We hope to emphasise with our method, however,
that exact decoupling is not absolutely
necessary for  quantities to be useful.

\subsection{Other related results}
\label{otherrelatedsec}
We collect  other related recent results concerning the stability of
black holes. The literature has already become vast so the  list below is in no way
exhaustive. See also the surveys~\cite{Mihalisnotes, dafrodlargea}.

\subsubsection{Metric perturbations}
An alternative approach to 
linear stability in the Schwarzschild case would go through the theory of so-called metric perturbations.
See for instance~\cite{johnson, hung2017linear} for estimates on the additional 
Zerilli equation which must be understood in that approach.  We note the 
paper~\cite{dotti2016black}.

\subsubsection{Canonical energy}
As discussed above, one of the difficulties in understanding linearised
gravity is the lack of an obvious coercive energy quantity for the full system,
even in the $a=0$ case. 
The Lagrangian structure of the Einstein equations 
$(\ref{vaceqhere})$ does give rise however to a notion of canonical energy,
albeit somewhat non-standard in view of diffeomorphism invariance,
and this can indeed be used to infer certain weak stability statements.
For some recent
results which 
have been obtained using this approach, see~\cite{hollands2013stability, prabhu2015black}
and the related~\cite{Holzegelfluxes}.

\subsubsection{Precise power-law asymptotics}
Though one expects that 
the decay bounds obtained here are in principle 
sufficient for non-linear applications, it is of considerable
interest for a wide range of problems
to obtain sharp asymptotics of solutions, of the type first suggested
by~\cite{Price}. For  upper bounds on decay compatible with some of the asymptotics of~\cite{Price},
 see~\cite{Donninger2012, tohaneanu, METCALFE201753}.  Lower bounds
were first obtained in~\cite{LukOhpub}.
 The most satisfying results are the sharp asymptotics recently obtained
by~\cite{Angelopoulos:2016wcv, Angelopoulos:2016moe} for the $s=0$, $a=0$ case.
Such results in particular have applications
to the interior
structure of black holes (see~\cite{LukOhpub}).

\subsubsection{Extremality and the Aretakis instability}
\label{extremality}
Whereas some stability results for $s=0$
carry over to the extremal case $|a|=M$, it turns out that,
already in axisymmetry~\cite{Aretakis},
the \emph{transversal} derivatives along the horizon grow 
polynomially~\cite{Aretakis, Aretakis2}. This phenomenon is now known
as the \emph{Aretakis instability}.
The Aretakis instability has been shown to hold also in the
case $s=\pm2$  by~\cite{lucietti2012gravitational}.
Understanding the non-axisymmetric case is completely open;
see~\cite{andersson2000superradiance} 
for some of the additional new phenomena that arise.

\subsubsection{Nonlinear model problems and stability under symmetry}
Though nonlinear stability of both Schwarzschild and Kerr is still open, 
various model problems have been considered which address some of
the specific technical difficulties expected to occur.

Issues connected to the handling
of decay
for quadratic nonlinearities in derivatives 
are addressed in the models considered in~\cite{MR3082240, lindblad2016global}. 
The Maxwell--Born--Infeld equations on Schwarzschild were
recently considered in~\cite{Pasqualotto:2017rkh}. This latter system, of
independent interest
in the context of high energy physics, can be thought to
capture at the same time aspects of both the quasilinear difficulties
as well as the tensorial difficulties (at the level of $s=\pm1$) inherent in $(\ref{vaceqhere})$.

Turning to stability under symmetry, the literature is  vast. 
For the Einstein--scalar field system under spherical symmetry, see~\cite{maththeory, DRPrice}. 
For the  vacuum equations~$(\ref{vaceqhere})$,~\cite{holzbiax} provides
the first result on the non-linear stability of the
Schwarzschild solution in symmetry,
considering  biaxial symmetry in $4+1$-dimensions. 
This again reduces to a $1+1$ problem. Beyond $1+1$, 
some aspects of the vacuum stability problem in 
axisymmetry are captured in a wave-map
model problem whose study was initiated by~\cite{Ionescu2015}.
Very recently, Klainerman--Szeftel \cite{SKqasp} have announced a proof of the non-linear
stability of Schwarzschild in the polarised, axisymmetric case.

\subsubsection{Analogues with $\Lambda \ne 0$}
There are analogues of the questions addressed here when
the Schwarzschild and Kerr solutions are replaced with 
the Schwarzschild--(anti) de Sitter metrics and Kerr--(anti) de Sitter metrics,
which are solutions of $(\ref{vaceqhere})$ 
when a cosmological term $\Lambda g_{\mu\nu}$
is added to the right hand side. These solutions are discussed in~\cite{kerrdesitrefcarter}.

In the de Sitter case $\Lambda>0$, the 
analogous problem is to understand the stability
of the spatially compact region bounded by the event and so-called cosmological horizons.
Following various linear results~\cite{Haefner, Dafermos:2007jd, vasy, dyatlov2011quasi, hintz2015resonance} the full non-linear
stability of this region has been obtained 
in remarkable work of Hintz--Vasy~\cite{hintz2016global}.
This de Sitter case is characterized by exponential 
decay, so many of the usual difficulties
of the asymptotically flat case are not present.   
The stability of the ``cosmological region'' beyond the event horizon has been considered in~\cite{schlue2016decay}.

The case of $\Lambda<0$ has been of considerable interest in the context
of high energy physics. Already, pure AdS spacetime fails to be globally
hyperbolic. In general, asymptotically AdS spacetimes have a timelike boundary
at infinity where boundary conditions must be prescribed to obtained well-posed problems. 

For reflective boundary conditions, the analogue of equation $(\ref{Teukphysicintro})$ on 
pure AdS space
admits infinitely many periodic solutions. 
In view of this lack of decay  in the reflective case, it is natural to conjecture
instability at the non-linear level~\cite{eguchiha}, once 
backreaction is taken into account.\footnote{In contrast,
good quantitative decay rates for solutions the Bianchi equations on pure AdS
with dissipative boundary conditions have been proven in~\cite{Holzegel:2015swa},
suggesting nonlinear stability.}
This nonlinear instability has indeed been seen in the seminal numerical study~\cite{bizon2011weakly},
which moreover sheds light on the relevance of resonant frequencies for calculating
a time-scale for growth.  Very recently, the full nonlinear instability of pure AdS space
 has  been proven in the simplest model for which the problem can  be
studied~\cite{moschidis2017proof}, exploiting an alternative physical-space mechanism.

In the case of Kerr--AdS, one has
logarithmic decay \cite{holzegel2013decay}---but in general no faster~\cite{HolzSmulevici}---for the analogue of $(\ref{Teukphysicintro})$ with $s=0$, on account of the fact
that trapped null geodesics, in contrast with the situation described in Section~\ref{prevworksec},
 are now stable.
Again, these results may  suggest instability at the non-linear level, 
as this slow rate of decay is in itself 
insufficient to control backreaction.

\subsubsection{Scattering theory}
A related problem to that of proving boundedness and decay
is developing a scattering theory for $(\ref{Teukphysicintro})$. 
Fixed frequency scattering theory for $(\ref{Teukphysicintro})$ is 
discussed in~\cite{Chandrasekhar}.
It was in fact the equality of the reflexion and transmission coefficients
between the Teukolsky, Regge--Wheeler and Zerilli equations that first suggested
the existence of  Chandrasekhar's transformations~\cite{Chandrasekhar}.
A definitive physical space scattering theory was developed
in the Schwarzschild case in~\cite{Dimock1, Dimock2}  for $s=0$, see also~\cite{nicolas}, 
and was recently extended to the Kerr
case in~\cite{Dafermos:2014jwa} for the full sub-extremal range of parameters $|a|<M$. 

Turning to the fully non-linear theory of $(\ref{vaceqhere})$,
a scattering construction of dynamic vacuum spacetimes settling
down to Kerr was given in~\cite{vacuumscatter}. 
The free scattering data allowed in the latter were very restricted, however,
as the radiation tail was required to decay exponentially in retarded time, 
and thus the spacetimes produced are measure
zero in the set of small perturbations of Kerr relevant for the stability problem.

For scattering for the Maxwell equations, see~\cite{MR1069953}.
For results in the $\Lambda>0$ case, see~\cite{georgescu, Mokdad:2017aua} and references therein.

\subsubsection{Stability and instability of the Kerr black hole interior}
The conjectured non-linear stability of the Kerr family refers only to the \emph{exterior} 
of the black hole region.
Considerations in the black hole interior are of a completely different nature.
The Schwarzschild case $a=0$ terminates at a spacelike singularity, whereas
for the rotating Kerr case $0<|a|<M$, the Cauchy development of
two-ended data   can be smoothly extended beyond a Cauchy horizon.
The $s=0$ case of $(\ref{Teukphysicintro})$ in the Kerr interior (as well
as the simpler Reissner--Nordstr\"om case) has been studied
in~\cite{McNamara1, mcnamara1978instability,  annefranzen, LukOhpub, Franzen2, Hintz:2015koq, LukSbierski, DafShlap}, and both $C^0$-stability but also $H^1$-instability 
have been obtained.  See~\cite{Gajic:2015csa, Gajic:2015hyu} for the extremal case.
In the full nonlinear theory,
 it has been proven that if the Kerr exterior stability conjecture
is true, then the bifurcate Cauchy horizon is globally $C^0$-stable~\cite{DafLuk1}. 
This implies in particular
that the $C^0$ inextendibility formulation of ``strong cosmic censorship'' is false.
See~\cite{Chrmil}.

\subsubsection{Note added}
Very recently,~\cite{Ma:2017yui2} gave a related approach
to obtaining integrated local energy decay estimates for the Teukolsky equation
in the $|a|\ll M$ case, following the frequency localisation  framework
of~\cite{DafRodsmalla} and
again based on proving estimates for $\Psi$ 
defined by generalisations of the transformations
used in~\cite{holzstabofschw}.

\subsection{Outline of the paper}
\label{outlinesec}
We end this introduction with an outline of the paper. 

We begin in {\bf Section~\ref{TeukandKerrsec}} by
recalling the notation from~\cite{partiii} 
regarding the Kerr metric and presenting the Teukolsky equation
in physical space for  spin $s=\pm2$. 

We then define in {\bf Section~\ref{physspacechandrasec}} 
our generalisations to Kerr of the quantities
$P^{[\pm2]}$, the rescaled quantities $\Psi^{[\pm2]}$  and the intermediate quantities
$\uppsi^{[\pm2]}$,
 as used in~\cite{holzstabofschw}, and derive our generalisation of the
Regge--Wheeler equation for $\Psi^{[\pm2]}$, now coupled to
$\uppsi^{[\pm2]}$ and $\upalpha^{[\pm2]}$.

In {\bf Section~\ref{generalstatement}} we shall define
various energy quantities which will allow us in particular
to   formulate our definitive (non-degenerate)
boundedness and decay results, stated 
as {\bf Theorem~\ref{finalstatetheor}}.

The first step in the proof of Theorem~\ref{finalstatetheor} is to obtain
integrated local energy decay.
In {\bf Section~\ref{Physspacesecnew}}, we shall prove a conditional 
such estimate, using entirely physical space methods, for  
the coupled system satisfied by $\Psi^{[\pm2]}$, $\uppsi^{[\pm2}$, and $\upalpha^{[\pm2]}$.
In view of the way this will be used, we must allow also
inhomogeneous terms on the right hand side of the Teukolsky equation. 
We apply the physical space multiplier estimates 
and transport estimates and transport estimates directly 
from~\cite{holzstabofschw}, except
that these estimates must now be coupled. 
The resulting estimates (see the propositions of Sections~\ref{condmultestsec} 
and~\ref{condtranestsec}) contain
on their right hand side an additional
timelike boundary term  on $r=A_1$ and $r=A_2$ for 
$A_1<3M<A_2$.
To control these terms, we will have to frequency localise the estimates in the
region $r\in [A_1,A_2]$. 
We also give certain auxiliary physical space estimates for the homogeneous Teukolsky
equation and its derived quantities (Section~\ref{auxil.est.sec}).

The next three sections will thus concern frequency localisation.
{\bf Section~\ref{Separationmegasec}} 
will interpret Teukolsky's separation of $(\ref{Teukphysicintro})$ for 
 spin $s=\pm2$
in a framework generalising that  introduced in~\cite{partiii} for the $s=0$
case.
In {\bf Section~\ref{Chandrasec}}, we define the frequency
localised versions of $P^{[\pm2]}$ and derive the coupled system
of ordinary differential equations satisfied by the $P^{[\pm2]}$ and $\upalpha^{[\pm2]}$.
In {\bf Section~\ref{ODEmegasec}} 
we then obtain estimates for this coupled system of
ODE's in the region $r\in[A_1,A_2]$.  The main statement is summarised as 
{\bf Theorem~\ref{phaseSpaceILED}} and can be thought
of as a fixed frequency version of the propositions of Sections~\ref{condmultestsec}--\ref{condtranestsec}, now valid
in $r\in[A_1,A_2]$. The estimate is again conditional on controlling boundary terms,
but the energy currents will have been chosen so that the most difficult of these, when formally summed, 
exactly cancel those appearing in the proposition of Section~\ref{condmultestsec}.

In {\bf Section~\ref{iledsec}}, we shall turn in ernest to
the study of the Cauchy problem 
for $(\ref{Teukphysicintro})$ 
 to obtain  a preliminary degenerate energy boundedness and
integrated local energy decay estimate in physical space.
This is stated as {\bf Theorem~\ref{degenerateboundednessandILED}}.
To obtain this, we cut off our solution of $(\ref{Teukphysicintro})$
in the future so as to allow for frequency localisation in $r\in[A_1,A_2]$.
This allows us to apply Theorem~\ref{phaseSpaceILED} and
sum over frequencies. We apply also 
the propositions of Sections~\ref{condmultestsec}--\ref{condtranestsec} to the cutoff-solution and sum the estimates.
The cutoff generates an inhomogeneous term which
is however only supported in a compact spacetime region.
By revisiting suitable estimates,
the cutoff term can then be estimated exploiting 
the smallness of $a$, following~\cite{DafRodsmalla}.
(We note that the fact that these cutoffs are here supported in a fixed, finite region of $r$ 
leads to various simplifications.) We distill a simpler purely physical-space proof 
for the axisymmetric case in Section~\ref{axi-note}.

The final sections  will complete the proof of  Theorem~\ref{finalstatetheor}
from Theorem~\ref{degenerateboundednessandILED}, 
by first applying red-shift estimates of~\cite{DafRod2} to obtain
non-degenerate control at the horizon
({\bf Section~\ref{sec:rsim}})
and then the $r^p$-weighted energy hierarchy of~\cite{DafRodnew}
({\bf Section~\ref{rphierarchysec}}).
This part follows closely 
the analogous estimates in the Schwarzschild case~\cite{holzstabofschw}.

Some auxilliary computations are relegated to  {\bf Appendix~\ref{sec:Psiderivation}}
and~{\bf\ref{otherappendix}}.

\subsection*{Acknowledgements}
MD~acknowledges support through NSF grant DMS-1709270 and 
EPSRC grant
EP/K00865X/1.  GH~acknowledges support through an 
ERC Starting Grant.  IR~acknowledges support
through NSF grant DMS-1709270 and an investigator award of the Simons Foundation.

\section{The Teukolsky equation on Kerr exterior spacetimes}
\label{TeukandKerrsec}
We recall in this section the Teukolsky equation on Kerr spacetimes.

We begin in {\bf Section~\ref{kerrmetricsection}} with a review of the
Kerr metric. We then present the Teukolsky equation on Kerr in 
{\bf Section~\ref{newteusec}}, focussing on the case $s=\pm2$.
This will allows us to state
a general well-posedness statement in {\bf Section~\ref{wellposedsec}}. Finally, 
in {\bf Section~\ref{relwiththesystemsec}} we
shall recall the relation of the $s=\pm2$ Teukolsky equation with 
the system of gravitational perturbations around Kerr.

\subsection{The Kerr metric}
\label{kerrmetricsection}
We review here the Kerr metric and associated structures, 
 following the notation of~\cite{partiii}.

\subsubsection{Coordinates and vector fields}
For each $|a|<M$, recall that the
Kerr metric in Boyer--Lindquist coordinates $(r,t,\theta,\phi)$ takes the form
\begin{equation}
\label{Kerrmetric}
g_{a,M} =-\frac{\Delta}{\rho^2}
(dt-a\sin^2\theta d\phi)^2+\frac{\rho^2}{\Delta}dr^2
+\rho^2d\theta^2+\frac{\sin^2\theta}{\rho^2}(a dt-(r^2+a^2)d\phi)^2,
\end{equation}
where
\begin{equation}
\label{variousdefs}
r_\pm =M\pm\sqrt{M^2-a^2},\qquad
\Delta =  (r-r_+)(r-r_-) ,\qquad \rho^2 =r^2+a^2\cos^2\theta.
\end{equation}

We recall from~\cite{partiii} the fixed
ambient manifold-with-boundary $\mathcal{R}$, diffeomorphic
to $\mathbb R^+\times \mathbb R\times \mathbb S^2$ and
the coordinates $( r, t^*, \theta^*, \phi^*)$  on $\mathcal{R}$ known as Kerr star coordinates. 

We recall the relations
\[
t(t^*, r) = t^*- \bar{t}(r) \ \ \ , \ \ \ \phi (\phi^\star, r) = \phi^* - \bar{\phi}(r) \mod 2\pi \ \ \ , \ \ \  \theta = \theta^*
\]
relating Boyer--Lindquist and Kerr star coordinates.
We do not need here the explicit form of $\bar{t}(r)$ and $\bar{\phi}(r)$; see~\cite{partiii}, Section 2.1.3 but remark that they both vanish for $r\geq 9/4M$. When expressed in Kerr star coordinates, the metric (\ref{Kerrmetric}) (defined a priori only in the interior of $\mathcal{R}$) extends to a smooth metric on $\mathcal{R}$, i.e.~it extends smoothly to the event horizon $\mathcal{H}^+$ defined as the boundary $\partial {\mathcal{R}}=\{r=r_+\}$.

It is easy to see that
the coordinate vector fields
$T=\partial_{t^*}$ and $\Phi=\partial_{\phi^*}$ of the fixed coordinate system
coincide for all $a$, $M$
with the  coordinate vector fields $\partial_t$ and $\partial_\phi$ of
Boyer--Lindquist coordinates, which are Killing for the metric
 $(\ref{Kerrmetric})$. 
 We recall that $T$ is spacelike in the so-called
 \emph{ergoregion} $\mathcal{S}=\{\Delta-a^2\sin^2\theta<0\}$.
Setting 
\[
\upomega_+\doteq \frac{a}{2Mr_+},
\]
we recall that the Killing field 
\[
K=T+\upomega_+ \Phi
\]
is null on the horizon $\mathcal{H}^+$ and is
timelike in $\{r_+<r<r_++R_K\}$ for some $R_K=R_K(a_0,M)$
where $R_K\to \infty$ as $a_0\to 0$.

An additional important coordinate will be $r^*$ defined to be a function $r^*(r)$
such that  
\begin{equation}
\label{rstareq}
\frac{dr^*}{dr} = \frac{r^2+a^2}{\Delta}
\end{equation}
and centred as in~\cite{partiii} so that $r^*(3M)=0$. Note that
$r^*\to -\infty$ as $r\to r_+$, while $r^*\to \infty$ as $r\to \infty$.
Given a parameter $R$ thought of as an $r$-value, we will often
denote $r^*(R)$ by $R^*$.

The vector fields
\begin{equation}
\label{spacetimenullframe}
L = \partial_{r^*}+T+ \frac{a}{r^2+a^2}\Phi,
\qquad
\underline{L}= -\partial_{r^*} +T +\frac{a}{r^2+a^2}\Phi,
\end{equation}
where $\partial_{r^*}$ is defined with respect to $(r^*,t,\theta,\phi)$ coordinates,
define principal null directions. 
We have the normalisation
\[
g(L, \underline{L})=-2\frac{\Delta\rho^2 }{(r^2+a^2)^2}  \, .
\]
The vector field $L$ extends smoothly to $\mathcal{H}^+$
to be parallel to the null generator,
while  $\underline{L}$ extends smoothly to $\mathcal{H}^+$ so as to vanish 
identically.
The quantity  $\Delta^{-1}\underline{L}$  has a smooth
nontrivial limit  on $\mathcal{H}^+$.
The vector fields $L$ and $\underline{L}$ are again $T$-(and $\Phi$-)invariant.

\subsubsection{Foliations and the volume form} 
\label{volumeformrelations}

For all values $\tau\in \mathbb R$, we recall that the hypersurfaces 
$\Sigma_\tau=\{t^*=\tau\}$ are  spacelike (see~\cite{partiii}, Section~2.2.5). 
We will denote the unit future normal of $\Sigma_\tau$ by $n_{\Sigma_\tau}$.
We recall the notation 
\[
\mathcal{R}_0=\{t^*\ge 0\}, \qquad \mathcal{R}_{(0,\tau)}=\{0\le t^*\le \tau\},
\qquad
\mathcal{H}^+_0=\mathcal{R}_0\cap\mathcal{H}^+,
\qquad 
\mathcal{H}^+_{(0,\tau)}=\mathcal{R}_{(0,\tau)}\cap \mathcal{H}^+.
\]

For polynomial decay following the method of~\cite{DafRodnew, Moschnewmeth}, 
we will also require
hypersurfaces $\widetilde{\Sigma}_{\tau}$ which connect the event horizon
and null infinity. For this we fix some $0<\eta<1$ and define the coordinate
\begin{equation}
\tilde{t}^* = t^* - \xi \left(r^*\right) \left(r^* + 2M \left(\frac{2M}{r}\right)^\eta - R_\eta^* - 2M \left(\frac{2M}{R_\eta}\right)^\eta - M\right)
\end{equation}
where $\xi$ is a smooth cut-off function equal to zero for $r \leq R_\eta$ and equal to $1$ for $r\geq R_\eta+M$. It is straightforward if tedious to show that for $R_\eta$ sufficiently large (and a suitably chosen function $\xi$) the hypersurfaces 
$\widetilde{\Sigma}_{\tau}$ defined by
\begin{align}
\label{choiceofhyperbol}
\widetilde{\Sigma}_{\tau} := \{ \tilde{t}^* = \tau \} \, 
\end{align}
are smooth and spacelike everywhere, in fact $c_\eta r^{-\eta-1} \leq -g\left(\nabla \tilde{t}^*, \nabla \tilde{t}^*\right) \leq C_\eta r^{-\eta-1}$ indicating that the hypersurfaces become asymptotically null near infinity. We take this $R_\eta$ as fixed from now on.

We will in fact use coordinates 
$\left(\tilde{t}^*, r, \theta,\phi^*\right)$ and perform estimates in the spacetime regions
\[
\widetilde{\mathcal{R}}(\tau_1,\tau_2)=\{\tau_1\le \tilde{t}^* \le \tau_2\},
\qquad
\widetilde{\mathcal{R}}_0=\{\tilde{t}^* \ge 0\}.
\]
See Figure~\ref{sigmasfig}.

\begin{figure}
\centering{
\def\svgwidth{10pc}
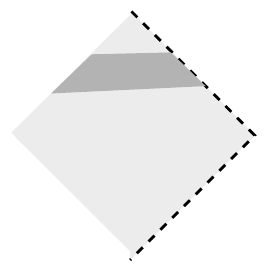}
\caption{The region $\widetilde{\mathcal{R}}(\tau_1,\tau_2)$}\label{sigmasfig}
\end{figure}

We compute the volume form in the different coordinate systems (recalling that the $r$ and $\theta$ coordinates are common to \emph{all} coordinate systems, so $\rho^2=r^2+a^2\cos^2 \theta$ is umambiguously defined)
\begin{align} \label{volform}
dV=\rho^2 dt \, dr\, \sin \theta d\theta d\phi = \frac{\rho^2 \Delta}{\left(r^2+a^2\right)} dt \, dr^* \, \sin \theta d\theta d\phi^*  = \rho^2 dt^* \, dr \, \sin \theta d\theta d\phi = \rho^2 d\tilde{t}^* \, dr \, \sin \theta d\theta d\phi^* \, .
\end{align}
We will often use the notation 
\[
d\sigma=\sin \theta d\theta d\phi.
\]

Denoting the (timelike) unit normal to the hypersurfaces (\ref{choiceofhyperbol}) by $n_{\widetilde\Sigma_{\tau}}$ we compute in coordinates $\left( r,\tilde{t}^*, \theta^*,\phi^*\right)$
\begin{align} \label{innp}
\sqrt{g_{\widetilde\Sigma_{\tau}}} g\left(\frac{r^2+a^2}{\Delta} \underline{L}, n_{\widetilde\Sigma_{\tau}}\right) =v\left(r,\theta\right) \rho^2 \sin \theta  \ \ \ \textrm{and} \ \ \ \sqrt{g_{\widetilde\Sigma_{\tau}}} g\left(L, n_{\widetilde\Sigma_{\tau}}\right) =v\left(r,\theta\right) \frac{1}{r^{1+\eta}} \rho^2 \sin \theta
\end{align}
for a function $v$ with $C^{-1} \leq v \leq C$. In particular, the volume element on slices of constant $\tilde{t}^*=\tau$ satisfies 
\[
dV_{\widetilde{\Sigma}_{\tau}} =\sqrt{g_{\widetilde{\Sigma}_\tau}} dr d\theta d\phi = v\left(r, \theta\right) r^2 r^{-\frac{1+\eta}{2}} dr d\sigma
\]
for a (potentially different) function $v$ with $C^{-1} \leq v \leq C$.

For future reference we note that, again in coordinates $\left( r,\tilde{t}^*,\theta^*,\phi^*\right)$, we have on the null hypersurfaces corresponding to the horizon and null infinity respectively the relations
\begin{align} \label{innp2}
\sqrt{g_{\mathcal{H}^+}} \ g\left(\frac{r^2+a^2}{\Delta} \underline{L}, L\right) = v\left(r,\theta\right) \sin \theta \ \ \ \textrm{and} \ \ \ \sqrt{g_{\mathcal{I}^+}} \  g\left(L,\underline{L}\right) = v\left(r,\theta\right) \rho^2 \sin \theta \,  ,
\end{align}
where the volume forms are understood to be themselves normalised by $L$ and
$\underline{L}$, respectively. The above will be the expressions that arise
in the context of the divergence theorem.

Finally, we note the covariant identities
\begin{align} \label{siid}
\nabla_a \left(\frac{1}{\rho^2} \frac{r^2+a^2}{\Delta} \underline{L}^a \right) = 0  \ \ \ 
\textrm{and} \ \ \ 
\nabla_a \left(\frac{1}{\rho^2} \frac{r^2+a^2}{\Delta} L^a \right) = 0 \, ,
\end{align}
which are most easily checked in Boyer--Lindquist coordinates.

\subsubsection{The very slowly rotating case $|a|<a_0 \ll M$}
\label{very.slowly.very.slowly}

In the present paper, we will restrict to the very slowly rotating case. 
This will allow us to exploit certain simplifications which arise
from closeness to Schwarzschild.

Recall that the hypersurface  $r=3M$ in Schwarzschild is known as the
\emph{photon sphere} and corresponds to the set where
integrated local energy decay estimates necessarily degenerate.
In the case $|a|<a_0\ll M$ the trapping is localised near $r=3M$ while
the ergoregion $\mathcal{S}$ is localised near $r=2M$.
See the general discussion in~\cite{Mihalisnotes}.
Let us quantify this below by fixing certain parameters.

We will fix parameters $A_1<3M<A_2$ sufficiently close
to $3M$.  We note already that for sufficiently small $|a|<a_0\ll M$, 
then all future trapped null geodesics will asymptote to an $r$ value
contained in $r\in [A_1,A_2]$. 
(We shall not use this fact directly, but rather, a related
property concerning the maximum of a potential function associated
to the separated wave equation. See already Lemma~\ref{lem.max}.)

We moreover can choose $a_0$ small enough so that in addition,
$R_K>A_1$ and so that the ergoregion satisfies $\mathcal{S}\subset \{r^*<4A_1^*\}$.

Fixing a cutoff function $\chi(r^*)$ which is equal to $1$
for $r^*\le 4A_1^*$ and $0$ for $r^*\ge 2A_1^*$
we define the vector field
$T+\upomega_+\chi \Phi$.
We note that by our arrangement, this vector field is now timelike for all $r>r_+$,
Killing outside $\{4A_1^*<r^*<2A_1^*\}$, null on $\mathcal{H}^+$,
and equal to $T$ on $\{r^*\ge A_1^*\}$.

Finally, let us note that, if $A_1^*$ is sufficiently small, then
restricting to small $a_0$, 
we have that
$t=t^*$ for $r^*\ge 2A_1^*$ for all $|a|<a_0$.

We note in particular
\[
t=t^*=\tilde{t}^* {\rm\ in\ the\ region\ }2A_1^*\le r^*\le 2A^*_2.
\]

\subsubsection{Parameters and conventions}
This paper will rely on fixing a number of parameters which will appear in the proof.
We have just discussed
the parameters $\eta$ and
\[
A_1<3M< A_2
\]
which have already been fixed.

We will also introduce fixed parameters
\[
\delta_1, \delta_2,\delta_3,  E
\]
which will be connected
to adding multiplier constructions on Schwarzschild,
as well as parameters $C_\sharp$, $c_\flat$, $C_\flat$ delimiting frequency ranges.
In particular, eventually,
these can be all thought of as fixed in terms of $M$ alone.

We will introduce 
an additional smallness parameter $\varepsilon$ associated
to the cutoffs in time. (This notation is retained from our~\cite{DafRodsmalla}.)
Again, eventually, this will be fixed depending only on $M$.

Finally, we will exploit the slowly rotating assumption by
employing $a_0$ as a smallness parameter, which will
only be fixed at the end of the proof. 

We introduce the following conventions regarding inequalities.
For non-negative quantities $\mathcal{E}_1$ and $\mathcal{E}_2$,
by 
\[
\mathcal{E}_1\lesssim \mathcal{E}_2
\]
we mean that
there exists a constant $C=C(M)>0$, depending
only on $M$, such that
\[
\mathcal{E}_1\le C(M) \mathcal{E}_2.
\]
We will sometimes use the notation 
\[
\mathcal{E}_1\lesssim \mathcal{Q}+\mathcal{E}_2
\]
where $\mathcal{Q}$ is not necessarily a non-negative quantity. 
In this context, this will mean that there exist constants $c(M)$, $C(M)$ such 
that
\[
c\mathcal{E}_1\le \mathcal{Q} +C\mathcal{E}_2.
\]
Note that two inequalities of the above form can be added
when the terms $\mathcal{Q}$ are identical.

Before certain parameters are fixed, say $\delta_1$,
we will use the notation
$\lesssim_{\delta_1}$ to denote the additional dependence
on $\delta_1$ of the constant $C(M,\delta_1)$ appearing in various inequalities.
Only when the parameter is definitively fixed in terms of $M$, 
can $\lesssim_{\delta_1}$  be replaced by
$\lesssim$.  

On the other hand, in the context of the restriction to $a_0\ll M$, which will appear ubiquitously,
the constant implicit
in $\ll$ may depend on all parameters yet to be fixed.
This will not cause confusion because
restriction to smaller $a$ will always be favourable in every estimate.

\subsection{The Teukolsky equation for spin weighted complex functions}
\label{newteusec}

In this section we present the Teukolsky equation on Kerr. 

We first review in Section~\ref{spinweightfu}
the notion of spin $s$-weighted complex functions and discuss some elementary properties of the spin $s$-weighted Laplacian in Section~\ref{sec:spinwl}.
We then recall in Section~\ref{itsdefinedhere} the classical form of the
Teukolsky operator for general spin. 
Finally,
specialising to $s=\pm 2$ we derive in Section~\ref{rescaletobereg} rescaled quantities
which satisfy an equation regular also on the horizon.
It is in this form that we will be able to state well-posedness in the section that
follows.

\subsubsection{Spin $s$-weighted complex functions on $S^2$ and $\mathcal{R}$}
\label{spinweightfu}

The Teukolsky equation will concern functions whose $(\theta,\phi)$
(or equivalently $(\theta,\phi^*)$) dependence
is that of a spin $s$-weighted function, for $s\in\frac12 \mathbb Z$.
We will always represent such functions as 
usual functions $\upalpha(r,t,\theta,\phi)$.

Smooth spin $s$-weighted functions on $S^2$ naturally arise, in a one-to-one fashion, from  complex-valued functions on $S^3$ (viewed as the Hopf bundle) which transform in a particular way under the group action on the $S^1$ fibres of $S^3$, as will be described now. (Note that this is indeed natural as $S^3$ can be identified with the bundle of orthonormal frames on $S^2$, and the definition of the Teukolsky null curvature components indeed depends on a choice of frame on $S^2$. See Section \ref{relwiththesystemsec}.)

Viewing $S^3$ as the Hopf bundle we have a $U(1)$ action on the $S^1$ fibres (corresponding to a rotation of the orthonormal frame in the tangent space of $S^2$). Introducing Euler coordinates\footnote{Euler coordinates cover $S^3$ everywhere except the north and southpole at $\theta=0$ and $\theta=\pi$ respectively. The ranges of the coordinates are $0<\theta<\pi$, $0\leq \phi < 2\pi$ and $0\leq \rho < 4\pi$.} $\left(\theta,\phi,\rho\right)$ on $S^3$ we denote this action by $e^{i\rho}$. Now any smooth function $F:S^3 \rightarrow \mathbb{C}$ which transforms as $F\left(p e^{i\rho}\right) = e^{-i\rho s} F\left(p\right)$ for $p \in S^3$ descends to a spin-weighted function on $S^2$ (by \emph{choosing} a frame at each point). More precisely, $F$ descends 
to a section of a complex line bundle over $S^2$ denoted traditionally by $B(R)$. See \cite{Lerner, Whale}.

Let $Z_1, Z_2, Z_3$ be a basis of right invariant vector fields constituting a global orthonormal frame on $S^3$. In Euler coordinates we have the representation
\begin{align}
Z_1 = -\sin \phi \partial_\theta + \cos \phi  \left( \csc \theta \partial_\rho - \cot \theta \partial_\phi\right) \ , \ 
Z_2 = - \cos \phi \partial_\theta - \sin \phi  \left( \csc \theta \partial_\rho - \cot \theta \partial_\phi\right) \ , \ 
Z_3 = \partial_\phi \, .
\end{align}
A complex-valued function $F$ of the Euler coordinates $\left(\theta,\phi, \rho\right)$ is smooth on $S^3$ if for any $k_1,k_2,k_3 \in \mathbb{N} \cup \{0\}$ the functions $\left(Z_1\right)^{k_1} \left(Z_2\right)^{k_2} \left(Z_3\right)^{k_3}F$ are smooth functions of the Euler coordinates\footnote{By this we understand that the restrictions $\left(0,2\pi\right) \times \left(0,\pi\right) \times \left(0,4\pi\right) \rightarrow \mathbb{C}$ are smooth in the usual sense and for any fixed $\theta \in \left(0,\pi\right)$ extend continuously with the same value respectively to $\phi=0$, $\phi=2\pi$ and $\rho=0$, $\rho=4\pi$.} and extend continuously to the poles of the coordinate system at $\theta=0$ and $\theta=\pi$. 

Since spin $s$-weighted functions on $S^2$ arise from smooth functions on $S^3$ as discussed above, there is a natural notion of the space of smooth spin $s$-weighted functions on $S^2$: 
A complex-valued function $f$ of the coordinates $(\theta,\phi)$ is called a smooth spin $s$-weighted function on $S^2$ if for any $k_1,k_2,k_3 \in \mathbb{N} \cup \{0\}$ the functions $(\tilde{Z}_1)^{k_1} (\tilde{Z}_2)^{k_2} (\tilde{Z}_3)^{k_3} f$ are smooth functions of the coordinates\footnote{By this we understand that the restrictions $\left(0,\pi\right) \times \left(0,2\pi\right) \rightarrow \mathbb{C}$ are smooth in the usual sense and for any fixed $\theta \in \left(0,\pi\right)$ extend continuously with the same value to $\phi=0$ and $\phi=2\pi$ respectively.} and extend continuously to the poles of the coordinate system at $\theta=0$ and $\theta=\pi$, where 
\begin{align}
\tilde{Z}_1 = -\sin \phi \partial_\theta + \cos \phi  \left( -is \csc \theta - \cot \theta \partial_\phi\right) \ , \ 
\tilde{Z}_2 = - \cos \phi \partial_\theta - \sin \phi  \left( -is \csc \theta - \cot \theta \partial_\phi\right) \ , \ 
\tilde{Z}_3 = \partial_\phi \, .
\end{align}
The space of smooth spin $s$-weighted functions on $S^2$ is denoted $\mathscr{S}^{[s]}_\infty$.  Note that considered as usual functions on $S^2$,
elements  of $\mathscr{S}^{[s]}_\infty$ are in general not regular at $\theta=0$.

We define the Sobolev space of smooth spin $s$-weighted functions on $S^2$, denoted ${}^{[s]}H^m(\sin\theta d\theta d\phi)$ as the completion of $\mathscr{S}^{[s]}_\infty$ with respect to the norm.
\[
\| f\|^2_{{}^{[s]}H^m(\sin\theta d\theta d\phi)} = \sum_{i=0}^m \sum_{k_1+k_2+k_3=i} 
\int_{S^2} |(\tilde{Z}_1)^{k_1} (\tilde{Z}_2)^{k_2} (\tilde{Z}_3)^{k_3} f|^2 \sin \theta d\theta d\phi \, .
\]
Note that the space $\mathscr{S}^{[s]}_{\infty}$ is dense in $L^2(\sin\theta d\theta d\phi)$.

We now define the analogous notions for functions $f$ of the spacetime coordinates $\left(t^*,r,\theta,\phi^*\right)$.

We define a smooth complex-valued spin $s$-weighted function $f$ on $\mathcal{R}$ to be a function $f: \left(-\infty,\infty \right) \times \left[2M,\infty\right) \times \left(0,\pi\right) \times \left[0, 2\pi\right)$ which is smooth in the sense that for any $k_1,k_2,k_3,k_4,k_5 \in \mathbb{N} \cup \{0\}$ the functions
\[
(\tilde{Z}_1)^{k_1} (\tilde{Z}_2)^{k_2} (\tilde{Z}_3)^{k_3} \left(\partial_{t^*}\right)^{k_4} \left(\partial_r\right)^{k_5} f
\]
are smooth functions\footnote{By this we understand that the restriction of these functions to $\left(-\infty,\infty \right) \times \left[2M,\infty\right) \times \left(0,\pi\right) \times \left(0, 2\pi\right)$ is smooth in the usual sense and that the functions extend continuously with the same value to $\phi=0$ and $\phi=2\pi$.} which extend continuously to the poles at $\theta=0$ and $\theta=\pi$. In particular, the restriction of $f$ to fixed values of $t^*, r$ is a smooth spin $s$-weighted function on $S^2$. We denote the space of smooth complex-valued spin $s$-weighted functions on $\mathcal{R}$ by $\mathscr{S}_\infty^{[s]}(\mathcal{R})$.

We similarly define a smooth complex-valued spin $s$-weighted function $f$ on a slice ${\Sigma}_\tau$ to be  a function $f: \left[2M,\infty\right) \times \left(0,\pi\right) \times \left[0, 2\pi\right)$ which is smooth in the sense that for any $k_1,k_2,k_3,k_4 \in \mathbb{N} \cup \{0\}$ the functions
\[
(\tilde{Z}_1)^{k_1} (\tilde{Z}_2)^{k_2} (\tilde{Z}_3)^{k_3} \left(\partial_{r}\right)^{k_4}  f
\]
are smooth functions extending continuously to the poles at $\theta=0$ and $\theta=\pi$. The space of such functions is denoted $\mathscr{S}_\infty^{[s]}(\Sigma_{\tau})$. The Sobolev space ${}^{[s]}H^m(\Sigma_{\tau})$ is defined as the completion of $\mathscr{S}_\infty^{[s]}(\Sigma_{\tau})$ with respect to the norm
\[
\| f\|^2_{{}^{[s]}H^m(\Sigma_{\tau})} = \sum_{i=0}^m \sum_{k_1+k_2+k_3+k_4=i} 
 \int_{\Sigma_\tau} dV_{\Sigma_{\tau}} |(\tilde{Z}_1)^{k_1} (\tilde{Z}_2)^{k_2} (\tilde{Z}_3)^{k_3} \left(\partial_r\right)^{k_4} f|^2  \, .
\]
If $\mathcal{U}$ is an open subset of $\Sigma_\tau$ we can define $\mathscr{S}_\infty^{[s]}(\mathcal{U})$ and ${}^{[s]}H^m(\mathcal{U})$  in the obvious way. This allows to define the space ${}^{[s]}H^m_{\rm loc} \left(\Sigma_\tau\right)$ as the space of functions on $\Sigma_{\tau}$ such that the restriction to any $\mathcal{U} \Subset \Sigma_\tau$ (meaning that there is a compact set $K$ with $\mathcal{U} \subset K \subset \Sigma_\tau$) is in ${}^{[s]}H^m \left(\mathcal{U}\right)$.

We finally note that we can analogously define these spaces for the slices $\widetilde{\Sigma}_{\tau}$, i.e.~define the spaces
\[
\mathscr{S}_\infty^{[s]}(\widetilde{\Sigma}_{\tau}) \ \  , \ \ {}^{[s]}H^m(\widetilde{\Sigma}_{\tau}) \ \ , \ \  {}^{[s]}H_{loc}^m(\widetilde{\Sigma}_{\tau}) \, .
\]

\subsubsection{The spin $s$-weighted Laplacian} \label{sec:spinwl}
Let us note that the operator defined in the introduction,
\begin{equation}
\label{spinweightlaplacagain}
\mathring{\slashed\triangle}^{[s]}=  -\frac{1}{\sin \theta} \frac{\partial}{\partial \theta} \left(\sin \theta \frac{\partial}{\partial \theta}\right) - \frac{1}{\sin^2 \theta} \partial_\phi^2 + 2s i\frac{ \cos \theta}{\sin^2 \theta} \partial_\phi + 4 \cot^2 \theta - s \, ,
\end{equation}
is a smooth operator on $\mathscr{S}^{[s]}_\infty$. Indeed, a computation yields 
$[(\tilde{Z}_1)^2 + (\tilde{Z}_2)^2 +(\tilde{Z}_3)^2] \Xi
= [ -\mathring{\slashed{\Delta}}^{[s]} - s -s^2 ]\Xi$.
Note also the formula $
\sum_{i=1}^3 | \tilde{Z}_i \Xi|^2 = |\partial_{\theta} \Xi|^2 + \frac{1}{\sin^2 \theta} | is \Xi \cos \theta+ \partial_\phi \Xi|^2 +s^2 |\Xi|^2$.

The eigenfunctions of $\mathring{\slashed\triangle}^{[s]}$ are
again in $\mathscr{S}^{[s]}_\infty$ and are known 
as $s$-spin weighted spherical harmonics. 
We shall discuss these (and their twisted analogues) further in 
Section~\ref{spinweigtsec}.

An integration by parts yields for $\Xi \in \mathscr{S}^{[s]}_\infty$ 
\begin{align} \label{laplace1}
 \int_0^\pi \int_0^{2\pi} d\phi \, d\theta \sin \theta \left(\mathring{\slashed\triangle}^{[+2]} \left(0\right) \Xi \right)\overline{\Xi} \nonumber \\
= \int_0^\pi \int_0^{2\pi} d\phi  \,d\theta \sin^5 \theta  \left[ \partial_\theta \left(\frac{\overline{\Xi}}{\sin^{2} \theta}\right) - \frac{i}{\sin \theta} \partial_\phi\left(\frac{\overline{\Xi}}{\sin^{2} \theta}\right)\right] \left[ \partial_\theta \left(\frac{{\Xi}}{\sin^{2} \theta}\right) + \frac{i}{\sin \theta} \partial_\phi\left(\frac{{\Xi}}{\sin^{2} \theta}\right)\right] \, ,
\end{align}
where the right hand side is manifestly non-negative.\footnote{In fact, the right hand side vanishes for the first spin-weighted spherical harmonics.} Introducing the spinorial gradient
$$
\mathring{\slashed{\nabla}}^{[\pm 2]}\Xi=   \big( \partial_\theta \Xi, 
\partial_\phi \Xi\pm 2 \cdot i \cos \theta \, \Xi \big)
$$
and defining
\begin{align}  \label{notcod}
| \mathring{\slashed{\nabla}}^{[\pm 2]}\Xi|^2  := 
   \big| \partial_\theta \Xi \big|^2 + \frac{1}{\sin^2 \theta}  \big| \partial_\phi \Xi \pm 2 \cdot i \cos \theta \Xi \big|^2  \, ,
\end{align}
we also have
\begin{align} \label{laplace2}
\int_0^\pi \int_0^{2\pi} d\phi  \,d\theta \sin \theta \left[ \mathring{\slashed\triangle}^{[\pm 2]} \left(0\right)  \pm 2 \right] \Xi \cdot \overline{\Xi}&= 
 \int_0^\pi \int_0^{2\pi} d\phi  \,d\theta \left[ \sin \theta  \big| \partial_\theta \Xi \big|^2 + \frac{1}{\sin \theta}  \big| \partial_\phi \Xi \pm 2 \cdot i \cos \theta \Xi \big|^2  \right]   \nonumber \\
 &= \int_0^\pi \int_0^{2\pi} d\phi  \,d\theta \sin \theta | \mathring{\slashed{\nabla}}^{[\pm 2]}\Xi|^2 \, .
\end{align}
We note that  for $\Xi, \Pi \in \mathscr{S}^{[s]}_\infty $
\begin{align} \label{laplacef}
\int_0^\pi \int_0^{2\pi} d\phi  \,d\theta \sin \theta \left[ \mathring{\slashed\triangle}^{[\pm 2]} \left(0\right)  \pm 2 \right] \Xi\, \cdot\overline\Pi= 
 \int_0^\pi \int_0^{2\pi} d\phi \, d\theta \sin \theta  \left[ \mathring{\slashed{\nabla}}^{[\pm 2]}\Xi\cdot
 \overline{ \mathring{\slashed{\nabla}}^{[\pm 2]}\Pi}
 \right]_{\Bbb S^2}  .
\end{align}
Directly from (\ref{laplace1}) and (\ref{laplace2}) we deduce the Poincar\'e inequality
\begin{align} \label{poincare}
 \int_0^\pi \int_0^{2\pi} d\phi  \,d\theta \sin \theta  \big|\mathring{\slashed{\nabla}}^{[\pm 2]}\Xi \big|^2 \geq 2  \int_0^\pi \int_0^{2\pi} d\phi  \,d\theta \sin \theta |\Xi|^2 \, .
\end{align}
Combining (\ref{laplace2}) and (\ref{poincare}) we also deduce
\begin{align} \label{azimuthal}
 \int_0^\pi \int_0^{2\pi} d\phi  \,d\theta \sin \theta  \big|\mathring{\slashed{\nabla}}^{[\pm 2]}\Xi \big|^2 \geq \frac{1}{8} \int_0^\pi \int_0^{2\pi} d\phi \, d\theta \sin \theta |\Phi \Xi|^2 \, .
\end{align}

\subsubsection{The Teukolsky operator for general spin $s$}
\label{itsdefinedhere}

Recall that the operator
\begin{eqnarray}
\label{Teukop}
\nonumber
\mathfrak{T}^{[s]} {\upalpha}^{[s]}&=
\Box_g {\upalpha}^{[s]} +\frac{2s}{\rho^2}(r-M)\partial_r {\upalpha}^{[s]} +\frac{2s}{\rho^2}
\left(\frac{a(r-M)}{\Delta}  +i\frac{\cos \theta}{\sin^2\theta}\right) \partial_\phi {\upalpha}^{[s]}\\
&\qquad
+\frac{2s}{\rho^2}\left(\frac{M(r^2-a^2)}{\Delta}-r-ia\cos\theta\right)\partial_t {\upalpha}^{[s]}
+\frac{1}{\rho^2}(s-s^2\cot^2\theta) {\upalpha}^{[s]}
\end{eqnarray}
is the traditional representation (see for instance~\cite{sbierskithesis})
of the Teukolsky operator with spin $s\in \frac12\mathbb Z$.
In view of the comments above,
this operator is smooth on $\mathscr{S}_\infty^{[s]}(\mathcal{R}\setminus\mathcal{H}^+)$.
We will say that such an  $\upalpha^{[s]}\in \mathscr{S}_\infty^{[s]}(\mathcal{R}\setminus
\mathcal{H}^+)$
satisfies the 
 Teukolsky equation  if the following holds:
\begin{equation}
\label{Teukphysic}
\mathfrak{T}^{[s]} {\upalpha}^{[s]}=0.
\end{equation}

The operator $(\ref{Teukphysic})$ is not smooth on $\mathscr{S}_\infty^{[s]}(\mathcal{R})$
itself.
This is because it has been derived with respect to a choice of frame which degenerates
at the horizon. See Section~\ref{relwiththesystemsec}. 
To obtain a regular equation at the horizon, we must
considered rescaled quantities. We turn to this now.

\subsubsection{Rescaled equations}
\label{rescaletobereg}

To understand regularity issues at the horizon
we must consider rescaled quantities.
We will restrict here to $s=\pm2$.

Define 
\begin{equation}
\label{rescaleddefsfirstattempt}
\tilde\upalpha^{[+2]}  = \Delta^2 (r^2+a^2)^{-\frac32}\upalpha^{[+2]}, \qquad
\tilde\upalpha^{[-2]} = \Delta^{-2}(r^2+a^2)^{-\frac32}\upalpha^{[-2]} \, .
\end{equation}

Define now the modified Teukolsky operator $\widetilde{\mathfrak{T}}^{[s]}$ by the relation
\begin{align} \label{teurefer}
\Delta\rho^{-2} \widetilde{\mathfrak{T}}^{[s]} = \frac{1}{2} \left(L \underline{L} + \underline{L} L\right)  &+ \frac{\Delta}{\left(r^2+a^2\right)^2}\left( \mathring{\slashed\triangle}^{[ s]}  + s -3 \frac{a^4+a^2r^2-2Mr^3}{(r^2+a^2)^2} +2\right) \nonumber \\
&- \frac{\Delta}{\left(r^2+a^2\right)^2}\left(2 a T \Phi +a^2 \sin^2\theta TT + 2i s a\cos \theta T  \right) + \mathfrak{t}^{[s]} \, ,
\end{align}
with $\mathring{\slashed\triangle}^{[ s]}$ denoting the spin $\pm 2$ weighted Laplacian on the round sphere defined in $(\ref{spinweightlaplacagain})$ and with the first order term $\mathfrak{t}^{[s]}$ given by
\begin{align}
\mathfrak{t}^{[+2]} = -2\frac{w^\prime}{w} \underline{L} - 8 a w\frac{r}{r^2+a^2} \Phi \ \ \ \textrm{and} \ \ \ \mathfrak{t}^{[-2]} = +2\frac{w^\prime}{w} {L} + 8 aw \frac{r}{r^2+a^2} \Phi \ \ \ \textrm{where} \ \ \ w := \frac{\Delta}{\left(r^2+a^2\right)^2}. 
\end{align}

One sees that $(\ref{Teukphysic})$ for $s=+2$
can be rewritten
as
\begin{equation}
\label{rescaledeqTeu}
\widetilde{\mathfrak{T}}^{[+2]} \tilde\upalpha^{[+2]} =0.
\end{equation}

On the other hand,  we observe that
$\widetilde{\mathfrak{T}}^{[+2]}$ now is a smooth
operator on $\mathscr{S}_\infty^{[s]}(\mathcal{R})$ and that its second order part is hyperbolic, in
fact, it is exactly equal   to 
$-\Box_g$. 

Similarly, we see that $(\ref{Teukphysic})$ for $s=-2$
can be rewritten
as
\begin{equation} \label{rescaledeqTeuaux}
\widetilde{\mathfrak{T}}^{[-2]}  \left(\Delta^{2} \tilde\upalpha^{[-2]} \right)=0 \, ,
\end{equation}
which in turn can be rewritten as
\begin{align}
\label{rescaledeqTeu2}
\left[ \widetilde{\mathfrak{T}}^{[-2]} -2\frac{\rho^2}{\Delta}\frac{w^\prime}{w} L +\mathfrak{t}^{[-2]}_{aux}\right] \tilde\upalpha^{[-2]} =  0  \, ,
\end{align}
where
\[
\mathfrak{t}^{[-2]}_{aux} = \frac{\rho^2}{\Delta}\left[- 4 \frac{\left(r^2+a^2\right)^\prime}{(r^2+a^2)}L+2\frac{\Delta^\prime}{\Delta}\underline{L} -2\left(\frac{\Delta^\prime}{\Delta}\right)^\prime +8 \frac{\left(r^2+a^2\right)^\prime}{(r^2+a^2)} \frac{\Delta^\prime}{\Delta}\right]
\]
is a first order operator acting smoothly on $\mathscr{S}_\infty^{[s]}(\mathcal{R})$. Now we observe 
that $ \widetilde{\mathfrak{T}}^{[-2]} -2\frac{\rho^2}{\Delta}\frac{w^\prime}{w} L$ also acts
smoothly on  $\mathscr{S}_\infty^{[s]}(\mathcal{R})$ and that its second order part is exactly equal   to 
$-\Box_g$. 
This will allow us to state a well-posedness proposition in the
section to follow. 

\begin{remark}
The weights in $(\ref{rescaleddefsfirstattempt})$ for $\widetilde{\upalpha}^{[+2]}$ 
will be useful for the global analysis of the equation, whereas the weights for
$\widetilde{\upalpha}^{[-2]}$ will only be useful for the well-posedness below.
For this reason, we shall define later (see Section \ref{boundcondusec}) the different rescaled quantities
$u^{[\pm2]}   = \Delta^{\pm 1}\sqrt{r^2+a^2}\upalpha^{[\pm 2]}$, and
deal mostly with the further 
rescaled quantities $u^{[\pm2]}\cdot w$. Note that
\[
u^{[+2]}\cdot w= \widetilde{\upalpha}^{[+2]}, \qquad {\rm but}\qquad u^{[-2]}\cdot w=  (r^2+a^2)^{-\frac32}\upalpha^{[-2]}  .
\]
The first quantity is finite (and generically non-zero) on the horizon $\mathcal{H}^+$
while the second quantity is finite (and generically non-zero) on null infinity $\mathcal{I}^+$
which makes them useful in the global considerations below. Note also that both quantities satisfy the simple equations (\ref{rescaledeqTeu}) and (\ref{rescaledeqTeuaux}) respectively.
\end{remark}

\subsection{Well-posedness}
\label{wellposedsec}

Standard theory yields that the Teukolsky equation in the form $(\ref{rescaledeqTeu})$, $(\ref{rescaledeqTeu2})$
is well-posed on $\mathcal{R}_0$  or $\widetilde{\mathcal{R}}_0$
with initial data $(\tilde\upalpha^{[s]}_0,\tilde\upalpha^{[s]}_1)$ defined on
$\Sigma_0$ 
in ${}^{[s]}H^j_{\rm loc}(\Sigma_0)\times {}^{[s]}H^{j-1}_{\rm loc}(\Sigma_0)$,
resp.~with $\widetilde{\Sigma}_0$ replacing $\Sigma_0$.
We state this as a proposition for reference:

\begin{proposition}[Well-posedness]
\label{wellposedstat}
For $s=\pm 2$, let
$(\tilde\upalpha^{[s]}_0, \tilde\upalpha^{[s]}_1)\in {}^{[s]}H^j_{\rm loc}(\Sigma_0)\times 
{}^{[s]}H^{j-1}_{\rm loc}(\Sigma_0)$ be
complex valued spin weighted functions with $j\ge 1$.
Then there exists a unique complex valued  $\tilde\upalpha^{[s]}$ 
on $\mathcal{R}_0$ satisfying $(\ref{rescaledeqTeu})$ 
(equivalently $\upalpha^{[s]}$ satisfying~\eqref{Teukphysic})
with
$\tilde\upalpha^{[s]} \in {}^{[s]}H^{j}_{\rm loc} (\Sigma_\tau)$, $n_{\Sigma_{\tau}} \tilde\upalpha^{[s]} \in {}^{[s]}H^{j-1}_{\rm loc}(\Sigma_\tau)$ such that $\tilde\upalpha^{[s]}\big|_{\Sigma_0}=\tilde\upalpha^{[s]}_0$,
$(n_{\Sigma_0}\tilde\upalpha^{[s]})\big|_{\Sigma_0}=\upalpha^{[s]}_1$. 
In particular, if 
 $(\tilde\upalpha^{[s]}_0, \tilde\upalpha^{[s]}_1)\in \mathscr{S}^{[s]}_{\infty}(\Sigma_0)$ 
then $\tilde\upalpha^{[s]}\in \mathscr{S}^{[s]}_{\infty}(\mathcal{R}_0)$.

The same statement holds with $\widetilde{\Sigma}_0$, $\widetilde{\Sigma}_\tau$,
$\widetilde{\mathcal{R}}_0$ in place of $\Sigma_0$, $\Sigma_\tau$, $\mathcal{R}_0$,
respectively.
\end{proposition}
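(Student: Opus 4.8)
The statement to prove is the well-posedness Proposition~\ref{wellposedstat}, which asserts local existence, uniqueness, and propagation of regularity for the rescaled Teukolsky equations \eqref{rescaledeqTeu} and \eqref{rescaledeqTeu2} on $\mathcal{R}_0$ (and $\widetilde{\mathcal{R}}_0$), with data in local spin-weighted Sobolev spaces, together with the persistence of smoothness. Since this is a ``standard theory'' statement, the plan is to reduce it to the classical well-posedness theory for linear second-order hyperbolic equations on a globally hyperbolic Lorentzian manifold, being careful about (i) the spin-weighted nature of the unknown, (ii) the fact that $\Sigma_0$ (resp.~$\widetilde{\Sigma}_0$) is spacelike but noncompact and asymptotically null, and (iii) regularity at the horizon $\mathcal{H}^+=\partial\mathcal{R}$, which is a null boundary and hence a characteristic hypersurface.

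\textbf{Step 1: Reduction to a scalar hyperbolic equation.} The key structural input is already recorded in Section~\ref{rescaletobereg}: the rescaled operator $\widetilde{\mathfrak{T}}^{[+2]}$ acts smoothly on $\mathscr{S}_\infty^{[s]}(\mathcal{R})$ and its principal part is exactly $-\Box_g$, and similarly the operator $\widetilde{\mathfrak{T}}^{[-2]} - 2\tfrac{\rho^2}{\Delta}\tfrac{w'}{w}L$ governing \eqref{rescaledeqTeu2} has principal part $-\Box_g$ and smooth lower-order coefficients. Thus in each case $\tilde\upalpha^{[s]}$ satisfies a linear equation of the form $\Box_g \tilde\upalpha^{[s]} + (\text{first order, smooth coefficients})\,\tilde\upalpha^{[s]} = 0$. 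The spin-weighted aspect is handled by recalling from Section~\ref{spinweightfu} that a smooth spin-$s$-weighted function on $\mathcal{R}$ lifts canonically to a genuine (complex-valued) smooth function on the total space of the associated circle bundle (the Hopf-type bundle $B(R)$ over each sphere), on which the operators $\tilde Z_i$ become honest vector fields; equivalently one works with $\mathcal{R}\times_{S^2} S^3$. On this lifted manifold the equation becomes a bona fide scalar wave equation $\Box_{\hat g}\hat\upalpha + \cdots = 0$ (the extra fibre direction contributing only a harmless first-order and zeroth-order term fixed by the representation $F(pe^{i\rho})=e^{-i\rho s}F(p)$), and the Sobolev spaces ${}^{[s]}H^m$ are by definition the corresponding Sobolev spaces on the lift.

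\textbf{Step 2: Global hyperbolicity and the standard energy estimate.} One checks that $\mathcal{R}_0$ (resp.~$\widetilde{\mathcal{R}}_0$) with the Kerr metric in Kerr-star coordinates is globally hyperbolic with Cauchy hypersurface $\Sigma_0$ (resp.~$\widetilde\Sigma_0$); this uses the facts stated in Section~\ref{kerrmetricsection}--\ref{volumeformrelations} that $\Sigma_\tau$, $\widetilde\Sigma_\tau$ are spacelike with future unit normal $n_{\Sigma_\tau}$, $n_{\widetilde\Sigma_\tau}$, and that the level sets foliate the region. The event horizon $\mathcal{H}^+$ is an \emph{ingoing} null boundary --- no boundary conditions are needed there because no characteristics exit $\mathcal{R}_0$ through it; in the domain-of-dependence picture one simply notes $D^+(\Sigma_0)\supset\mathcal{R}_0$ once the horizon is included as part of the region (the metric extends smoothly across $\mathcal{H}^+$ in Kerr-star coordinates, as recalled in Section~\ref{kerrmetricsection}). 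Existence and uniqueness in $H^1_{\rm loc}\times L^2_{\rm loc}$ then follow from the standard theory for linear wave equations with smooth coefficients on a globally hyperbolic spacetime (e.g.~via the Hawking--Ellis / Leray / Bär--Ginoux--Pfäffle framework, or a direct $T$-energy estimate localised to truncated causal diamonds). Since all constructions are local in $r$ and one only claims $H^j_{\rm loc}$, the noncompactness and asymptotic-null degeneration of $\Sigma_0$, $\widetilde\Sigma_0$ are irrelevant: one exhausts by relatively compact subsets and patches the local solutions by uniqueness.

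\textbf{Step 3: Propagation of higher regularity and smoothness.} To get $\tilde\upalpha^{[s]}\in {}^{[s]}H^j_{\rm loc}(\Sigma_\tau)$ with $n_{\Sigma_\tau}\tilde\upalpha^{[s]}\in{}^{[s]}H^{j-1}_{\rm loc}$ for $j\ge1$, commute the equation with the vector fields $\tilde Z_1,\tilde Z_2,\tilde Z_3,\partial_{t^*},\partial_r$ (equivalently, with a spanning set of smooth vector fields on the lifted manifold) up to order $j-1$; each commutator is lower order with smooth coefficients, so the energy estimate closes by induction, yielding the claimed regularity and, letting $j\to\infty$, the statement that $\mathscr{S}^{[s]}_\infty$ data produce $\mathscr{S}^{[s]}_\infty(\mathcal{R}_0)$ solutions (smoothness up to and including $\mathcal{H}^+$ follows because the rescaled operator is smooth there and the redshift structure of $\mathcal{H}^+$ only helps --- no loss at the horizon). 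The equivalence ``$\tilde\upalpha^{[s]}$ solves \eqref{rescaledeqTeu}/\eqref{rescaledeqTeu2} $\iff$ $\upalpha^{[s]}$ solves \eqref{Teukphysic}'' away from $\mathcal{H}^+$ is exactly the algebraic content of \eqref{rescaleddefsfirstattempt}, \eqref{teurefer}, \eqref{rescaledeqTeuaux} recorded above. The identical argument applies verbatim with $\widetilde\Sigma_0$, $\widetilde\Sigma_\tau$, $\widetilde{\mathcal{R}}_0$ replacing $\Sigma_0$, $\Sigma_\tau$, $\mathcal{R}_0$, since $\widetilde\Sigma_\tau$ is also spacelike (in fact asymptotically null, which causes no problem for local-in-$r$ estimates).

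\textbf{Main obstacle.} There is no deep obstacle --- this is genuinely ``standard theory'' --- but the one point requiring care is the bookkeeping that turns the spin-weighted equation into an honest scalar wave equation: one must verify that the operators $\widetilde{\mathfrak{T}}^{[\pm2]}$ (and the auxiliary modifications for $s=-2$) are \emph{smooth} on $\mathscr{S}^{[s]}_\infty(\mathcal{R})$ \emph{including at $\theta=0,\pi$ and at $\mathcal{H}^+$}, i.e.~that after lifting to the Hopf bundle the coefficients extend smoothly across the poles and across the horizon. This is precisely what the computations of Section~\ref{rescaletobereg} establish (the principal part being exactly $-\Box_g$), so the proof amounts to invoking those computations together with off-the-shelf linear hyperbolic theory; the only ``work'' is making the lift-to-$S^3$ identification precise, which Section~\ref{spinweightfu} has already set up.
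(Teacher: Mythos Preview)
Your proposal is correct and in fact supplies considerably more detail than the paper's own ``proof'', which consists solely of the line ``cf.~Proposition~4.5.1 of~\cite{DafRodsmalla}.'' The argument you sketch---reducing to a scalar linear hyperbolic equation with principal part $-\Box_g$ via the rescaling of Section~\ref{rescaletobereg}, handling the spin-weighted structure by the Hopf-bundle lift of Section~\ref{spinweightfu}, and then invoking standard local well-posedness on a globally hyperbolic region with a characteristic (null) boundary at $\mathcal{H}^+$---is exactly the content behind that citation; there is no alternative route in the paper to compare against.
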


\begin{proof}
cf.~Proposition 4.5.1 of~\cite{DafRodsmalla}.
\end{proof}

\subsection{Relation with the system of gravitational perturbations}
\label{relwiththesystemsec}

The Teukolsky equation (\ref{Teukphysicintro}) is traditionally derived via the 
Newman--Penrose formalism~\cite{newmanpenrose}. 
One defines the (complex) null tetrad $\left(\ell, n, m, \bar{m}\right)$ by
\begin{align}
l = \frac{r^2+a^2}{\Delta} L \ \ \ , \ \ \ n = \frac{r^2+a^2}{2\rho^2} \underline{L} \ \ \ , \ \ \ m = \frac{1}{\sqrt{2}(r+ia \cos \theta)} \left( ia\sin \theta \partial_t + \partial_{\theta} + \frac{i}{\sin \theta} \partial_\phi\right) \, ,
\end{align}
which is normalised such that
\[
g\left(l,n\right) = -1  \ \ \ , \ \ \ g \left(m, \bar{m}\right) = 1 \ \ \ , \ \ \ g\left(m,m\right) = g\left(\bar{m},\bar{m}\right) = 0 \, .
\]
Note that we can obtain an associated real spacetime null frame $\left(\ell, n, e_1,e_2\right)$ by defining $e_1 = \frac{1}{\sqrt{2}} \left(m + \bar{m}\right)$ and $e_2 =  \frac{1}{\sqrt{2}i} \left(m - \bar{m}\right)$, which then satisfies in particular $g\left(e_1,e_1\right)= g\left(e_2,e_2\right) =1$ and $g\left(e_1,e_2\right)=0$. 

The extremal Newman--Penrose curvature scalars are defined as the following components of the spacetime Weyl tensor\footnote{Recall that the Riemann tensor agrees with the Weyl tensor for a Ricci flat metric.}
\begin{align}
{\bf \Psi}_0 = - W \left(l,m,l,m\right) \ \ \ , \ \ \ {\bf \Psi}_4 = - W \left(n, \overline{m}, n, \overline{m}\right) \, .
\end{align}
Both ${\bf \Psi}_0$ and ${\bf \Psi}_4$ vanish for the exact Kerr metric. Remarkably, upon linearising the Einstein vacuum equations $(\ref{vaceqhere})$ (using the above frame) the linearised components ${\bf \Psi}_0$ and ${\bf \Psi}_4$ are gauge invariant (with respect to infinitesimal changes of both the frame and the coordinates) and moreover satisfy decoupled equations. 
Indeed, one may check that $\upalpha^{[-2]}= \left(r-ia\cos\theta\right)^4{\bf \Psi}_4$ and $\upalpha^{[+2]}= {\bf \Psi}_0$ satisfy precisely the Teukolsky
equation (\ref{Teukphysicintro}) for $s=-2$ and $s=2$ respectively. 

Instead of defining spin $s$-weighted complex functions ${\bf \Psi}_0$, ${\bf \Psi}_4$ one may (equivalently) define symmetric traceless $2$-tensors $\alpha$ and $\underline{\alpha}$ (living in an appropriate bundle of \emph{horizontal} tensors) by
\[
 \alpha \left(e_A, e_B\right) = W \left( L,e_A, L, e_B\right)  \ \ \ , \ \ \  \underline{\alpha} \left(e_A, e_B\right) = W \left( \underline{L},e_A, \underline{L}, e_B\right)  \, .
\]
Using the symmetry and the trace properties of the Weyl tensor we derive the relations
\[
\underline{\alpha} \left(e_1,e_1\right)=-\underline{\alpha} \left(e_2,e_2\right)  = -\frac{1}{2} \left(\frac{2\rho^2}{r^2+a^2}\right)^2 \left({\bf \Psi}_4 + \overline{{\bf \Psi}_4}\right)
\]
and
\[
\underline{\alpha} \left(e_1,e_2\right) = \underline{\alpha} \left(e_2,e_1\right)  = +\frac{1}{2}i\left(\frac{2\rho^2}{r^2+a^2}\right)^2 \left({\bf \Psi}_4 - \overline{{\bf \Psi}_4}\right) \, ,
\]
which relate the spin $2$-weighted complex function and the tensorial version of the curvature components.  Of course similar formulae are easily derived for $\alpha$. 

We can now connect directly to our previous~\cite{holzstabofschw} where we wrote down the 
Teukolsky equation for the symmetric traceless tensors $\alpha$ and $\underline{\alpha}$ in the Schwarzschild spacetime.

As a final remark we note that in the Schwarzschild case considered 
in~\cite{holzstabofschw} the null frame 
used to define the extremal Weyl components arose directly from a double null foliation of the spacetime. In stark contrast, the algebraically special null frame $\left(l, n, e_1,e_2\right)$ in 
Kerr for $a\ne 0$ does \emph{not} arise from a double null foliation of that spacetime.

\section{Generalised Chandrasekhar transformations for $s=\pm 2$}
\label{physspacechandrasec}
In this section, we generalise the physical space
reformulations of Chandrasekhar's transformations, given in~\cite{holzstabofschw}, to Kerr.

In accordance with the conventions of our present paper,
we will consider complex scalar spin $\pm2$ weighted quantities $\upalpha^{[\pm2]}$
in place of the tensorial ones of~\cite{holzstabofschw}.
We begin in {\bf Section~\ref{physicalspacedefsec}} with the definitions of
the quantities $P^{[\pm2]}$ associated to 
quantities $\upalpha^{[\pm2]}$.  If $\upalpha^{[\pm2]}$ satisfy
the (inhomogeneous) Teukolsky equation, then we show
in {\bf Section~\ref{inhomogRWphyspace}} that $P^{[\pm2]}$ will satisfy
an (inhomogeneous) 
Regge--Wheeler type equation, coupled to $\upalpha^{[\pm2]}$.
The latter coupling vanishes in the Schwarzschild case. 
The precise
relation with the tensorial definitions of~\cite{holzstabofschw} will be 
given in {\bf Section~\ref{relagainwithDHR}}.

\subsection{The definitions of $P^{[\pm2]}$, $\Psi^{[\pm2]}$ and $\uppsi^{[\pm2]}$}
\label{physicalspacedefsec}

Given functions $\upalpha^{[\pm2]}$, we define
\begin{equation}
\label{Pdefsbulkplus2}
 P^{[+2]} = -\frac{(r^2+a^2)^{1/2}}{2\Delta} \underline{L}^\mu \nabla_\mu \left( \frac{(r^2+a^2)^2}{\Delta} \underline{L}^\mu \nabla_\mu \left(\Delta^2 \left(r^2+a^2\right)^{-\frac{3}{2}} \upalpha^{[+2]}\right)\right) \,  , 
\end{equation}
\begin{equation}
\label{Pdefsbulkmunus2}
 P^{[-2]} =-\frac{(r^2+a^2)^{1/2}}{2\Delta} {L}^\mu \nabla_\mu \left( \frac{(r^2+a^2)^2}{\Delta} {L}^\mu \nabla_\mu \left( \left(r^2+a^2\right)^{-\frac{3}{2}} \upalpha^{[-2]}\right)\right)  .
\end{equation}
These are our physical-space generalisations to Kerr of Chandrasekhar's fixed frequency
Schwarzschild transformation
theory.

Note that if $\widetilde{\upalpha}^{[\pm2]}\in\mathscr{S}^{[\pm2]}_\infty(\mathcal{U})$
for $\mathcal{U}\subset\mathcal{R}$, then
$P^{[\pm2]}\in \mathscr{S}^{[\pm2]}_{\infty}(\mathcal{R})$.
We will typically work with the rescaled functions
\begin{equation}
\label{rescaledPSI}
\Psi^{[\pm2]} = (r^2+a^2)^{\frac32} P^{[\pm2]},
\end{equation}
which are of course again smooth.

As in~\cite{holzstabofschw}, it will be
again useful to give a name to the intermediate quantities $\uppsi^{[\pm2]}$
defined by
\begin{equation}
\label{officdeflittlepsiplus}
\uppsi^{[+2]} =
-\frac12 \Delta^{-\frac32} \left(r^2+a^2\right)^{+2} \underline{L}^\mu\nabla_\mu (\Delta^2 \left(r^2+a^2\right)^{-\frac{3}{2}} \upalpha^{[+2]}) 
\end{equation}
\begin{equation}
\label{officdeflittlepsiminus}
\uppsi^{[-2]} =+\frac12\Delta^{-\frac32}(r^2+a^2)^2 L^\mu\nabla_\mu
\left(\upalpha^{[-2]}(r^2+a^2)^{-\frac32}\right).
\end{equation}
We can rewrite $(\ref{Pdefsbulkplus2})$--$(\ref{Pdefsbulkmunus2})$
as
\begin{align} \label{auxrel1}
\underline{L}^\mu\nabla_\mu \left(\sqrt\Delta {\uppsi}^{[+2]} \right ) = \Delta(r^2+a^2)^{-2} \Psi^{[+2]},
\end{align}
\begin{align} \label{auxrel2}
L^\mu\nabla_\mu(\sqrt{\Delta} \uppsi^{[-2]})= - \Delta(r^2+a^2)^{-2}\Psi^{[-2]}.
\end{align}
Note that for $\widetilde{\upalpha}^{[\pm2]}$  smooth, it is the quantities $\sqrt{\Delta}\uppsi^{[+2]}$,
$(\sqrt{\Delta})^{-1}\uppsi^{[-2]}$
which are smooth.

\subsection{The generalised inhomogeneous Regge--Wheeler-type equation with error}
\label{inhomogRWphyspace}

The importance of the quantities $\Psi^{[\pm2]}$ arises from the following fundamental
proposition:

\begin{proposition}
\label{follfundprop}
If $\upalpha^{[\pm2]}$ satisfy the inhomogeneous equations
\begin{equation}
\label{inhomoteuk}
\widetilde{\mathfrak{T}}^{[+2]} \left(\tilde{\upalpha}^{[+2]}\right) =  F^{[+2]} \ \ \ \textrm{and} \ \ \  \widetilde{\mathfrak{T}}^{[-2]} \left(\Delta^2 \tilde{\upalpha}^{[-2]}\right) = \Delta^2 F^{[-2]}
\end{equation}
then the quantities 
$\Psi^{[\pm2]}$ satisfy the equation
\begin{equation}
\label{RWtypeinthebulk}
\mathfrak{R}^{[\pm2]}\Psi^{[\pm2]}=  -\frac{\rho^2}{\Delta} \mathcal{J}^{[\pm2]} -  \frac{\rho^2}{\Delta} \mathfrak{G}^{[\pm 2]}
\end{equation}
where
\begin{align}
\Delta \rho^{-2} \mathfrak{R}^{[s]}= \frac{1}{2} \left( L \underline{L} + \underline{L} L\right)   &+ \frac{\Delta}{\left(r^2+a^2\right)^2} \Big\{ \left(\mathring{\slashed\triangle}^{[s]} +s^2 +s\right)  -\frac{6Mr}{r^2+a^2} \frac{r^2-a^2}{r^2+a^2}  -7a^2 \frac{\Delta}{(r^2+a^2)^2} \Big\}       \nonumber \\
\label{RWoprdefhere}
& - \frac{\Delta}{\left(r^2+a^2\right)^2}\left(2 a T \Phi  +a^2 \sin^2\theta TT - 2i s a\cos \theta T  \right) \, ,
\end{align}
\begin{align}
\mathcal{J}^{[+2]} = & \frac{\Delta}{\left(r^2+a^2\right)^2} \left[ \frac{-8r^2+8a^2}{r^2+a^2} a\Phi -20a^2 \frac{r^3-3Mr^2+ra^2+Ma^2}{\left(r^2+a^2\right)^2}\right]\left( \sqrt{\Delta}\uppsi^{[+2]} \right) \nonumber \\
& +a^2 \frac{\Delta}{\left(r^2+a^2\right)^2} \left[ -12 \frac{r}{r^2+a^2}  a\Phi
+3  \left( \frac{r^4 -a^4+10Mr^3-6Ma^2r}{(r^2+a^2)^2}\right) \right] \left(\upalpha^{[+2]}\Delta^2 \left(r^2+a^2\right)^{-\frac{3}{2}}\right) \nonumber
\end{align}
\begin{align} \label{Gerror}
\mathfrak{G}^{[+ 2]}=\frac{1}{2} \underline{L} \left( \frac{\left(r^2+a^2\right)^2}{\Delta}\underline{L} \left(\frac{\Delta}{w\rho^2} F^{[+2]}\right)\right)
\end{align}
and
\begin{align}
\mathcal{J}^{[-2]} = 
& \frac{\Delta}{\left(r^2+a^2\right)^2} \left[ \frac{8r^2-8a^2}{r^2+a^2} a \Phi  -20a^2 \frac{r^3-3Mr^2+ra^2+Ma^2}{\left(r^2+a^2\right)^2} \right]\left( \sqrt{\Delta}\uppsi^{[-2]} \right) \nonumber \\
& +a^2 \frac{\Delta}{\left(r^2+a^2\right)^2} \left[  +12 \frac{r}{r^2+a^2} a \Phi
+3  \left( \frac{r^4 -a^4+10Mr^3-6Ma^2r}{(r^2+a^2)^2}\right) \right] \left(\upalpha^{[-2]} \left(r^2+a^2\right)^{-\frac{3}{2}}\right)  \, ,
\end{align}
\begin{align}
\label{Gerror2}
\mathfrak{G}^{[- 2]}=\frac{1}{2} {L} \left( \frac{\left(r^2+a^2\right)^2}{\Delta}{L} \left(\frac{\Delta^3}{w\rho^2} F^{[-2]}\right)\right).
\end{align}
\end{proposition}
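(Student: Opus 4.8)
The plan is to verify \eqref{RWtypeinthebulk} by direct computation, treating the spin $+2$ and spin $-2$ cases in parallel (the latter following by an essentially formal substitution $\underline{L}\leftrightarrow L$ together with the appropriate change of $\Delta$-weights, as already visible in the definitions \eqref{Pdefsbulkplus2}--\eqref{Pdefsbulkmunus2} and \eqref{officdeflittlepsiplus}--\eqref{officdeflittlepsiminus}). First I would record the commutation identities needed to push $\underline{L}^\mu\nabla_\mu$ (resp.\ $L^\mu\nabla_\mu$) through the operator $\widetilde{\mathfrak T}^{[\pm2]}$ of \eqref{teurefer}. The key structural facts are: the principal part of $\widetilde{\mathfrak T}^{[+2]}$ is $-\Box_g = \tfrac12(L\underline{L}+\underline{L}L) + \tfrac{\Delta}{(r^2+a^2)^2}\mathring{\slashed\triangle}^{[s]} + \ldots$; the vector fields $L,\underline{L}$ are $T$- and $\Phi$-invariant; the identities \eqref{siid} control the divergence of $\tfrac1{\rho^2}\tfrac{r^2+a^2}{\Delta}L$ and $\tfrac1{\rho^2}\tfrac{r^2+a^2}{\Delta}\underline{L}$; and $[\underline{L},\mathring{\slashed\triangle}^{[s]}]$ produces only the angular terms $aT\Phi$, $a^2\sin^2\theta\,TT$, $a\cos\theta\,T$ already present in \eqref{teurefer} (since the $\theta$-dependence of the metric enters through $\rho^2$ and $a\sin\theta$). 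These are exactly the commutators that, in the Schwarzschild limit, collapse to the clean relation between the Teukolsky and Regge--Wheeler operators underlying \eqref{Pdefshere}--\eqref{RWeqintro}.

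Next I would carry out the computation in two stages mirroring the two-step definition $\upalpha \rightsquigarrow \uppsi \rightsquigarrow \Psi$. Stage one: apply $-\tfrac12\Delta^{-3/2}(r^2+a^2)^2\underline{L}^\mu\nabla_\mu$ to the rescaled Teukolsky equation $\widetilde{\mathfrak T}^{[+2]}\tilde\upalpha^{[+2]} = F^{[+2]}$ and rewrite the result as an equation for $\sqrt\Delta\,\uppsi^{[+2]}$, using \eqref{officdeflittlepsiplus} and \eqref{auxrel1}. Stage two: apply $-\tfrac12(r^2+a^2)^{1/2}\Delta^{-1}\underline{L}^\mu\nabla_\mu\big((r^2+a^2)^2\Delta^{-1}\,\cdot\,\big)$ again, producing an equation for $\Psi^{[+2]} = (r^2+a^2)^{3/2}P^{[+2]}$. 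At each stage the commutators generate (a) terms that reassemble into $\mathfrak{R}^{[+2]}\Psi^{[+2]}$ — this is where the precise coefficients $-\tfrac{6Mr(r^2-a^2)}{(r^2+a^2)^3}$, $-7a^2\Delta/(r^2+a^2)^4$ in \eqref{RWoprdefhere} must emerge; (b) lower-order terms proportional to $\uppsi^{[+2]}$ and $\upalpha^{[+2]}$ with coefficients $O(|a|)$ — these must match $\mathcal{J}^{[+2]}$ term by term; and (c) the inhomogeneity, where $F^{[+2]}$ is hit by the same two $\underline{L}$-derivatives, yielding $\mathfrak{G}^{[+2]}$ as in \eqref{Gerror}. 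The overall factor $-\rho^2/\Delta$ on the right of \eqref{RWtypeinthebulk} is dictated by the normalisation $\Delta\rho^{-2}\mathfrak{R}^{[s]}$ on the left of \eqref{RWoprdefhere}. Since all of this is stated earlier (Appendix~\ref{sec:Psiderivation} is advertised for exactly these computations), I would reference that appendix for the bookkeeping and here only lay out the skeleton.

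The main obstacle is bookkeeping accuracy rather than conceptual difficulty: one must track many terms of size $O(|a|)$ and $O(a^2)$ through two successive first-order differentiations and several commutations, and show that the terms which are \emph{not} of the permitted ``lower order'' form (i.e.\ would-be second-order-in-$\Psi$ contributions, or first-order terms with the wrong weights) cancel exactly. As emphasised in Section~\ref{herethemainresultintro}, this cancellation is the surprising feature — naive perturbation of \eqref{RWeqintro} would leave higher-order junk incompatible with the good divergence structure of $\mathfrak{R}^{[+2]}$. Concretely, the dangerous terms are those where $\underline{L}^\mu\nabla_\mu$ hits the $\tfrac{\Delta}{(r^2+a^2)^2}$-weighted angular and $T$-dependent pieces of $\widetilde{\mathfrak T}^{[+2]}$: one must verify that, after using $[\underline{L},\partial_\phi]=0$, $[\underline{L},T]=0$ and the explicit $r$-dependence of $w=\Delta/(r^2+a^2)^2$, these reorganise into the displayed $\mathcal{J}^{[\pm2]}$ with no residue at the level of $\underline{L}\underline{L}\tilde\upalpha$ or $\mathring{\slashed\triangle}\tilde\upalpha$. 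For $s=-2$ there is the additional wrinkle that \eqref{rescaledeqTeu2} carries the extra first-order operators $-2\tfrac{\rho^2}{\Delta}\tfrac{w'}{w}L + \mathfrak{t}^{[-2]}_{\rm aux}$; one must check these are precisely absorbed when one instead differentiates the cleaner form \eqref{rescaledeqTeuaux} $\widetilde{\mathfrak T}^{[-2]}(\Delta^2\tilde\upalpha^{[-2]})=\Delta^2 F^{[-2]}$, which is why the proposition is stated in terms of \eqref{inhomoteuk} with $\Delta^2\tilde\upalpha^{[-2]}$ rather than $\tilde\upalpha^{[-2]}$ directly, and why $\mathfrak{G}^{[-2]}$ in \eqref{Gerror2} carries a $\Delta^3$ weight rather than $\Delta$. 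I would organise the verification so that the Schwarzschild terms ($a=0$) are matched first against the known computation of \cite{holzstabofschw}, reducing the task to tracking only the $a\neq0$ corrections.
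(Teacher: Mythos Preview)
Your proposal is correct and matches the paper's approach: the proof there is simply ``Direct calculation. See Appendix,'' and Appendix~\ref{sec:Psiderivation} carries out exactly the two-stage computation $\upalpha\rightsquigarrow\uppsi\rightsquigarrow\Psi$ you describe, applying $\underline{L}$ (resp.\ $L$) twice to the Teukolsky equation and tracking the commutator terms. The only cosmetic difference is that the appendix performs the derivation in the mode-decomposed (separated) picture for notational cleanliness---replacing $T\to-i\omega$, $\Phi\to im$, and using the commutator $[\underline{L},L]=\tfrac{4ra}{r^2+a^2}w\cdot im$---before translating back to physical space in Section~A.5; the paper itself remarks that this is purely a notational choice and the computation could equally be done directly in physical space as you outline.
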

\begin{proof}
Direct calculation. See Appendix.
\end{proof}

We will call the operator $\mathfrak{R}^{[s]}$
defined by $(\ref{RWoprdefhere})$
the \emph{generalised Regge--Wheeler operator}. We note that it has
smooth coefficients on $\mathcal{R}_0$ and its highest
order part is proportional to the wave operator.
The equation $(\ref{RWtypeinthebulk})$ reduces to the usual Regge--Wheeler 
equation in the case $a=0$:

\begin{corollary}
If $a=0$ and $F^{[\pm2]}=0$ then $\Psi^{[\pm2]}$ satisfies
the Regge--Wheeler equation
\begin{equation}
\label{RWinthebulkps}
L\underline{L} \Psi^{[\pm 2]} + \frac{\Omega^2}{r^2}
\left( \mathring{\slashed\triangle}^{[\pm 2]}  \Psi^{[\pm 2]}  \pm 2\right)   \Psi^{[\pm 2]} + \Omega^2\left(\frac{4}{r^2} - \frac{6M}{r^3}\right) \Psi^{[\pm 2]} = 0 ,
\end{equation}
where $\Omega^2 = 1- \frac{2M}{r}$.
\end{corollary}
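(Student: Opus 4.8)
The plan is to deduce the Corollary as the $a=0$, $F^{[\pm 2]}=0$ specialisation of Proposition~\ref{follfundprop}, checking that every term distinguishing $(\ref{RWtypeinthebulk})$ from the Schwarzschild Regge--Wheeler equation $(\ref{RWinthebulkps})$ carries an explicit power of $a$ and hence drops out.

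First I would substitute the Schwarzschild values of the background quantities. When $a=0$ we have $\Delta=r^2\Omega^2$, $r^2+a^2=r^2$ and $\rho^2=r^2$, so that $\Delta\rho^{-2}=\Omega^2$ and $\Delta(r^2+a^2)^{-2}=\Omega^2 r^{-2}$; moreover $L=\partial_{r^*}+T$ and $\underline{L}=-\partial_{r^*}+T$ then commute, whence $\tfrac12(L\underline{L}+\underline{L}L)=L\underline{L}$. Feeding these into the definition $(\ref{RWoprdefhere})$ of $\mathfrak{R}^{[s]}$, the terms $2aT\Phi$, $a^2\sin^2\theta\,TT$ and $-2isa\cos\theta\,T$ vanish, the factor $-\tfrac{6Mr}{r^2+a^2}\tfrac{r^2-a^2}{r^2+a^2}$ collapses to $-6M/r$, and $-7a^2\Delta(r^2+a^2)^{-2}$ vanishes; using $s^2+s=4\pm2$ for $s=\pm2$ and peeling off the constant $4$, one obtains
\[
\Omega^2\,\mathfrak{R}^{[\pm2]}\Psi^{[\pm 2]} \;=\; L\underline{L}\Psi^{[\pm 2]} + \frac{\Omega^2}{r^2}\Big(\mathring{\slashed\triangle}^{[\pm 2]}\Psi^{[\pm 2]} \pm 2\Big) + \Omega^2\Big(\frac{4}{r^2}-\frac{6M}{r^3}\Big)\Psi^{[\pm 2]},
\]
whose right-hand side is exactly the left-hand side of $(\ref{RWinthebulkps})$.

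Next I would kill the right-hand side of $(\ref{RWtypeinthebulk})$. With $F^{[\pm 2]}=0$ the error term $\mathfrak{G}^{[\pm 2]}$ vanishes identically from its defining formulas $(\ref{Gerror})$, $(\ref{Gerror2})$. For $\mathcal{J}^{[\pm 2]}$, a glance at the explicit expressions shows that in the first square bracket the $\Phi$-coefficient carries a factor $a$ and the remaining term a factor $a^2$, while the entire second line is multiplied by $a^2$; hence $\mathcal{J}^{[\pm 2]}$ vanishes when $a=0$. Thus $(\ref{RWtypeinthebulk})$ reduces to $\mathfrak{R}^{[\pm2]}\Psi^{[\pm 2]}=0$, and multiplying by $\Omega^2=\Delta\rho^{-2}$ yields $(\ref{RWinthebulkps})$.

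There is no substantive difficulty here, since everything follows from Proposition~\ref{follfundprop} by inspection; the only point requiring care is the bookkeeping of the constant terms, in particular matching the ``$s^2+s$'' appearing inside the bracket of $(\ref{RWoprdefhere})$ against the ``$\pm2$'' sitting inside the bracket of $(\ref{RWinthebulkps})$ together with the ``$4/r^2$'' appearing in its last potential term, and confirming that the zeroth-order normalisation ``$-s$'' of $\mathring{\slashed\triangle}^{[s]}$ in $(\ref{spinweightlaplacagain})$ agrees with the ``$\mp 2$'' normalisation in $(\ref{defspinlaplacianintro})$ for $s=\pm 2$. Both conventions coincide, so no further rescaling is needed.
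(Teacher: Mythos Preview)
Your proposal is correct and follows exactly the route the paper intends: the Corollary is stated without separate proof precisely because it is the $a=0$, $F^{[\pm2]}=0$ specialisation of Proposition~\ref{follfundprop}, and you have carried out that specialisation carefully. Your bookkeeping of the constant terms (splitting $s^2+s=4\pm2$ between the angular bracket and the $4/r^2$ piece of the potential) and the check that the two normalisations of $\mathring{\slashed\triangle}^{[\pm2]}$ agree are both accurate.
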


As discussed already in the introduction, we see that $(\ref{RWtypeinthebulk})$,
although still coupled to $\upalpha^{[\pm2]}$, 
retains some of the good structure of $(\ref{RWinthebulkps})$.
The operator $(\ref{RWtypeinthebulk})$ has a good divergence
structure admitting estimates via integration by parts, i.e.~it
does not have the problematic first order terms of  the Teukolsky operator
$\widetilde{\mathfrak{T}}^{[\pm2]}$, cf.~(\ref{teurefer}). See 
already the divergence identities of Section~\ref{sec:multid}.
Moreover, the terms 
$\mathcal{J}^{[+2]}$ can be thought of as lower order, from the perspective
of $\Psi^{[\pm2]}$, as they only involve up to second derivatives of 
$\upalpha^{[\pm2]}$ (via the term $\Phi (\sqrt{\Delta}\uppsi^{[\pm2]})$).

\subsection{Relation with the quantities $P$ and $\protect\underline{P}$ of~\cite{holzstabofschw}}
\label{relagainwithDHR}

As with the tensorial quantities $\alpha$ and $\underline{\alpha}$ discussed in 
Section~\ref{relwiththesystemsec}, 
in \cite{holzstabofschw} the transformations to the quantities $P$ and $\underline{P}$ (corresponding to the complex functions $ P^{[+2]}$, $ {P}^{[-2]}$ in this paper) were again 
given tensorially. In particular, the 
Regge--Wheeler equation for the symmetric traceless tensor $\Psi=r^5 P$ was written tensorially using projected covariant derivatives  as (cf.~Corollary~7.1 of \cite{holzstabofschw})
\begin{equation}
\label{fromprevpaper}
\Omega \slashed{\nabla}_3 \left(\Omega \slashed{\nabla}_4 \Psi \right) - \Omega^2 \slashed{\Delta} \Psi +  \Omega^2 V \Psi = 0 \ \ \ \textrm{with} \ \  V=\frac{4}{r^2} - \frac{6M}{r^3} \, ,
\end{equation}
where $\slashed{\nabla}_3$ and $\slashed{\nabla}_4$ are projected (to the spheres of symmetry) covariant derivatives in the null directions,
$\slashed{\Delta}$ is the covariant Laplacian associated with the metric on the spheres of symmetry acting on symmetric traceless tensors and $\Omega^2 = 1- \frac{2M}{r}$. Note that 
unlike the operator $\mathring{\slashed\triangle}^{[s]}$ considered 
in this paper, the operator
$\slashed{\Delta}$ was defined as a \emph{negative} operator in \cite{holzstabofschw}.

Computing the equation satisfied by the components of $\Psi$ in the standard orthonormal frame on the spheres of symmetry one obtains 
\begin{align}
&L \underline{L} \left( \Psi_{11}\right) +\Omega^2 \left(-\slashed{\Delta} \left(\Psi_{11}\right) +4\frac{\cos \theta}{\sin^2 \theta} \partial_\phi \Psi_{12} +4 \cot^2 \theta \Psi_{11} \right)  +   \Omega^2 V \Psi_{11} = 0 \, , \nonumber \\
&L \underline{L} \left( \Psi_{12}\right) + \Omega^2 \left(-\slashed{\Delta} \left(\Psi_{12}\right) -4\frac{\cos \theta}{\sin^2 \theta} \partial_\phi \Psi_{11} +4 \cot^2 \theta \Psi_{12}\right)  +   \Omega^2 V \Psi_{12} = 0 \nonumber \, ,
\end{align}
from which one infers that the complex-valued functions $\Psi^{[\pm 2]}=  \Psi_{11} \mp i \Psi_{12}$ satisfy the Regge--Wheeler equation (\ref{RWinthebulkps}) for $s=\pm 2$.\footnote{Note that in this paper $\Psi^{[+2]} = r^3 P^{[+2]}$ for $a=0$ so when relating orthonormal components of the tensor $P$ and the complex function $P^{[2]}$ there is an additional factor of $r^2$. This factor disappears when replacing the orthonormal frame on the spheres of symmetry with an orthonormal frame on the \emph{unit} sphere to express the components of $P$.}

\section{Energy quantities and statement of the main theorem}
\label{generalstatement}

We first give certain definitions of weighted energy quantities  
 in {\bf Section~\ref{Weighandstate}}. This will allow us to give
 a precise statement of the main theorem of this paper
 ({\bf Theorem~\ref{finalstatetheor}})
in {\bf Section~\ref{statementsubsec}}.
We will finally discuss in {\bf Section~\ref{logicoftheproofsec}}
how the logic of the proof of Theorem~\ref{finalstatetheor}
is represented by the sections that follow.

\subsection{Definitions of weighted energies}
\label{Weighandstate}
We will define in this section a number of weighted energies.
In addition to those appearing in the statement of
Theorem~\ref{finalstatetheor}, we will need to consider various auxiliary
quantities.

\subsubsection{The left, right and trapped subregions}
We will in particular need to introduce  energies
localised to various subregions of $\widetilde{\Sigma}_\tau$ and
$\widetilde{\mathcal{R}}(\tau_1, \tau_2)$. In anticipation of this,
let us define the following subregions
\[
\widetilde{\mathcal{R}}^{\rm left}(\tau_1, \tau_2) =
\widetilde{\mathcal{R}}(\tau_1, \tau_2)\cap \{r\le A_1\},
\qquad \widetilde{\mathcal{R}}^{\rm right}(\tau_1, \tau_2) =
\widetilde{\mathcal{R}}(\tau_1, \tau_2)\cap \{r\ge A_2\},
\]
\[
\widetilde{\mathcal{R}}^{\rm away}(\tau_1, \tau_2)
= \widetilde{\mathcal{R}}^{\rm left}(\tau_1, \tau_2)\cup 
\widetilde{\mathcal{R}}^{\rm right}(\tau_1, \tau_2)
\]
\[
\widetilde{\mathcal{R}}^{\rm trap}(\tau_1, \tau_2)=
\widetilde{\mathcal{R}}(\tau_1, \tau_2)\cap \{A_1\le r\le A_2\}.
\]
Note that
\[
\widetilde{\mathcal{R}}^{\rm trap}(\tau_1, \tau_2) \cup 
\widetilde{\mathcal{R}}^{\rm away}(\tau_1, \tau_2)=
\widetilde{\mathcal{R}}^{\rm trap}(\tau_1, \tau_2) \cup 
\widetilde{\mathcal{R}}^{\rm left}(\tau_1, \tau_2)\cup 
\widetilde{\mathcal{R}}^{\rm right}(\tau_1, \tau_2) = 
\widetilde{\mathcal{R}}(\tau_1, \tau_2).
\]

For $\widetilde{\Sigma}_\tau$, it will be more natural to consider
\[
\widetilde{\Sigma}_\tau^{\rm left}= \widetilde{\Sigma}_\tau\cap \{r\le A_1\}, \qquad
\widetilde{\Sigma}_\tau^{\rm right}= \widetilde{\Sigma}_\tau\cap \{r\ge A_2\}, \qquad
\widetilde{\Sigma}_\tau^{\rm away}=\widetilde{\Sigma}_\tau^{\rm left}\cup  \widetilde{\Sigma}_\tau^{\rm right},
\]
\[
\widetilde{\Sigma}_\tau^{\rm overlap}= \widetilde{\Sigma}_\tau\cap \{2A_1^*\le r^*\le 2A_2^*\}.
\]
Note
\[
\widetilde{\Sigma}_\tau\subset 
\widetilde{\Sigma}_\tau^{\rm overlap}\cup \widetilde{\Sigma}_\tau^{\rm away}.
\]
See Figure~\ref{overlapfigure}.
\begin{figure}
\centering{
\def\svgwidth{10pc}
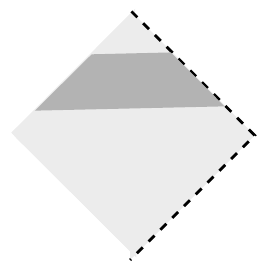}
\caption{Partioning $\widetilde{\mathcal{R}}(\tau_1, \tau_2)$ and
$\widetilde{\Sigma}_\tau$}\label{overlapfigure}
\end{figure}

\subsubsection{Weighted energies for $\Psi^{[\pm 2]}$}
The energies in this section will in general
be applied to $\Psi^{[\pm2]}$ satisfying the inhomogeneous equation 
$(\ref{RWtypeinthebulk})$.

Let $p$ be.a free parameter (which will eventually always take the
values $0, \eta, 1$ or $2$). 
We define the following weighted energies on the slices 
$\widetilde\Sigma_{\tau}$
\begin{align} \label{meP}
\mathbb{E}_{\widetilde\Sigma_{\tau},p} \left[\Psi^{[\pm2]}\right] \left(\tau\right) &= \int_{\widetilde\Sigma_{\tau}} dr d\sigma \left(  \big| L \Psi^{[\pm 2]}|^2 r^p + \big|\mathring{\slashed{\nabla}}^{[s]} \Psi^{[\pm 2]} \big|^2 r^{-2} + \big|\Psi^{[\pm 2]} \big|^2 r^{-2}   + r^{-1-\eta} \Big|\underline{L}\Psi^{[\pm 2]}\Big|^2\right)  \, ,  \\
\overline{\mathbb{E}}_{\widetilde\Sigma_{\tau},p} \left[\Psi^{[\pm2]}\right] \left(\tau\right) &= \int_{\widetilde\Sigma_{\tau}} dr d\sigma \left(  \big| L \Psi^{[\pm2]}|^2 r^p + \big|\mathring{\slashed{\nabla}}^{[s]} \Psi^{[\pm 2]} \big|^2 r^{-2} + \big|\Psi^{[\pm 2]} \big|^2 r^{-2} + r^{-1-\eta} \Big|\frac{r^2+a^2}{\Delta}\underline{L}\Psi^{[\pm 2]}\Big|^2 \right)\, . \nonumber
\end{align}
We remark that an overbar indicates that the energy has optimised weights near the horizon.

We will also consider the following  energy through $\widetilde\Sigma_{\tau}^{\rm away}$: 
\[
\mathbb{E}^{\rm away}_{\widetilde\Sigma_{\tau},p} \left[\Psi^{[\pm2]}\right] \left(\tau\right) = \int_{\widetilde\Sigma_{\tau}^{\rm away}} dr d\sigma \left(  \big| L \Psi^{[\pm 2]}|^2 r^p + \big|\mathring{\slashed{\nabla}}^{[s]} \Psi^{[\pm 2]} \big|^2 r^{-2} + \big|\Psi^{[\pm 2]} \big|^2 r^{-2} 
 + r^{-1-\eta} \Big|\underline{L}\Psi^{[\pm 2]}\Big|^2 \right) \, 
\]
and the following energy through $\widetilde\Sigma_{\tau}^{\rm overlap}$:
\[
\mathbb{E}^{\rm overlap}_{\widetilde{\Sigma}_\tau}\left[\Psi^{[\pm2]}\right](\tau)=
 \int_{\widetilde\Sigma_{\tau}^{\rm overlap}}
 dr d\sigma \left(  \big| L \Psi^{[\pm 2]}|^2 r^p + \big|\mathring{\slashed{\nabla}}^{[s]} \Psi^{[\pm 2]} \big|^2 r^{-2} + \big|\Psi^{[\pm 2]} \big|^2 r^{-2}   + r^{-1-\eta} \Big|\underline{L}\Psi^{[\pm 2]}\Big|^2 \right).
\]
Note that 
\begin{equation}
\label{just.a.note}
\mathbb{E}_{\widetilde{\Sigma}_\tau, p}\left[\Psi^{[\pm2]}\right]
\lesssim
\mathbb{E}^{\rm away}_{\widetilde{\Sigma}_\tau, p}\left[\Psi^{[\pm2]}\right]
+ \mathbb{E}^{\rm overlap}_{\widetilde{\Sigma}_\tau}\left[\Psi^{[\pm2]}\right].
\end{equation}

On the event horizon $\mathcal{H}^+$ we define the energies
\begin{align}
\mathbb{E}_{\mathcal{H}^+} \left[\Psi^{[\pm 2]}\right] \left(\tau_1,\tau_2\right) &=\int_{\tau_1}^{\tau_2} d\tau d\sigma  |L \Psi^{[\pm 2]}|^2   \, ,
\\
\overline{\mathbb{E}}_{\mathcal{H}^+} \left[\Psi^{[\pm 2]}\right] \left(\tau_1,\tau_2\right) &=\int_{\tau_1}^{\tau_2} d\tau d\sigma \left( | \Psi^{[\pm 2]}|^2 + |L \Psi^{[\pm 2]}|^2 + |\mathring{\slashed{\nabla}}^{[s]}\Psi^{[\pm 2]}|^2  \right) \, . \nonumber
\end{align}

On null infinity $\mathcal{I}^+$ we define the energies
\begin{align}
\mathbb{E}_{\mathcal{I}^+,p} \left(\Psi^{[\pm2]}\right] \left(\tau_1,\tau_2\right) &=\int_{\tau_1}^{\tau_2} d\tau d\sigma \left[  |\underline{L} \Psi^{[\pm2]}|^2 + r^{p-2} |\mathring{\slashed{\nabla}}^{[s]} \Psi^{[\pm2]}|^2 +r^{p-2} |\Psi^{-2}|^2 \right) \, . \nonumber
\end{align}

In addition to the energy fluxes,
we will define the weighted spacetime energies
\begin{align}
\mathbb{I}_p \left[\Psi^{[\pm 2]}\right] \left(\tau_1,\tau_2\right) &= \int_{\tau_1}^{\tau_2} d\tau \int_{\widetilde\Sigma_{\tau}} dr d\sigma \left(\left( \frac{ \big| L \Psi^{[\pm 2]}|^2}{r^{1+
\boldsymbol{\delta}^p_0 \eta}} + \frac{\big| \mathring{\slashed{\nabla}}^{[s]} \Psi^{[\pm 2]} \big|^2}{r^{3+\boldsymbol{\delta}^p_2\eta}} + \frac{\big|\Psi^{[\pm 2]} \big|^2}{r^{3+\boldsymbol{\delta}^p_2\eta}} \right) r^p +\frac{\big|\underline{L}\Psi^{[\pm 2]}\big|^2}{ r^{1+\eta} } \right)\, , \label{nondegPsi}  \\
\overline{\mathbb{I}}_p \left[\Psi^{[\pm 2]}\right] \left(\tau_1,\tau_2\right) &= \int_{\tau_1}^{\tau_2} d\tau \int_{\widetilde\Sigma_{\tau}} dr d\sigma \left(\left( \frac{ \big| L \Psi^{[\pm 2]}|^2}{r^{1+\boldsymbol{\delta}^p_0 \eta}} + \frac{\big|\mathring{\slashed{\nabla}}^{[s]} \Psi^{[\pm 2]} \big|^2}{r^{3+\boldsymbol{\delta}^p_2\eta}} + \frac{\big|\Psi^{[\pm 2]} \big|^2}{r^{3+\boldsymbol{\delta}^p_2\eta}} \right) r^p +\frac{\Big|\frac{r^2+a^2}{\Delta}\underline{L}\Psi^{[\pm 2]}\Big|^2}{ r^{1+\eta} } \right)\, ,
\end{align}
where $\boldsymbol{\delta}^a_b$ is the Kronecker delta symbol and also the degenerate spacetime energies
\begin{align}
{\mathbb{I}}^{{\rm deg}}_p \left[\Psi^{[\pm 2]}\right] \left(\tau_1,\tau_2\right) = \int_{\tau_1}^{\tau_2} d\tau \int_{\widetilde\Sigma_{\tau}} dr d\sigma \Bigg\{ & \left( \frac{ \big| L \Psi^{[\pm 2]}|^2}{r^{1+\boldsymbol{\delta}^p_0 \eta}} + \frac{\big| \mathring{\slashed{\nabla}}^{[s]} \Psi^{[\pm 2]} \big|^2}{r^{3+\boldsymbol{\delta}^p_2\eta}} +\frac{\big|\underline{L}\Psi^{[\pm 2]}\big|^2}{ r^{1+\eta} } r^{-p}  \right) r^p \cdot \chi \nonumber \\
 &+ \frac{\big|(\underline{L}-L)\Psi^{[\pm 2]}\big|^2}{ r^{1+\eta} }+ r^p \frac{\big|\Psi^{[\pm 2]} \big|^2}{r^{3+\boldsymbol{\delta}^p_2\eta}} \Bigg\} \, ,
\end{align}
\begin{align}
\overline{\mathbb{I}}^{{\rm deg}}_p \left[\Psi^{[\pm 2]}\right] \left(\tau_1,\tau_2\right) &= {\mathbb{I}}^{{\rm deg}}_p \left[\Psi^{[\pm 2]}\right] \left(\tau_1,\tau_2\right) \ \ \textrm{but replacing $\underline{L}$ by $\frac{r^2+a^2}{\Delta}\underline{L}$ in the round bracket} \nonumber
\end{align}
with $\chi$ a radial cut-off function equal to $1$ in 
$r^*\in (-\infty, A_1^*] \cup [A_2^*,\infty)$ and vanishing in $r^*\in \left[A_1^*/4,A_2^*/4\right]$.
Finally, we shall define 
\begin{align*}
{\mathbb{I}}^{{\rm away}}_p \left[\Psi^{[\pm 2]}\right] \left(\tau_1,\tau_2\right) &= \int_{\tau_1}^{\tau_2} d\tau \int_{\widetilde\Sigma_{\tau}^{\rm away}} dr d\sigma \left[ \frac{ \big| L \Psi^{[\pm 2]}|^2}{r^{1+\boldsymbol{\delta}^p_0 \eta}} + \frac{\big|\mathring{\slashed{\nabla}}^{[s]} \Psi^{[\pm 2]} \big|^2}{r^{3+\boldsymbol{\delta}^p_2\eta}} +\frac{\big|\underline{L}\Psi^{[\pm 2]}\big|^2}{ r^{1+\eta} } r^{-p}  + \frac{\big|\Psi^{[\pm 2]} \big|^2}{r^{3+\boldsymbol{\delta}^p_2\eta}} \right] r^p 
\end{align*}
and
\begin{align}
\nonumber
\mathbb{I}^{{\rm trap}}[\Psi^{[\pm2]}](\tau_1,\tau_2) =& \int_{\tau_1}^{\tau_2}
d\tau\int_{\widetilde{\Sigma}_\tau^{\rm trap}}drd\sigma \Bigg\{
\chi 
\left( \big| L \Psi^{[\pm 2]}|^2+\big| \mathring{\slashed{\nabla}}^{[s]} \Psi^{[\pm 2]} \big|^2
+ \big|\underline{L}\Psi^{[\pm 2]}\big|^2 \right)\\
\label{I.trapped.def}
&\qquad\qquad\qquad\qquad\qquad\qquad\qquad
\big|(\underline{L}-L)\Psi^{[\pm 2]}\big|^2+\big|\Psi^{[\pm 2]} \big|^2
\Bigg\}.
\end{align}
Note that
\[
{\mathbb{I}}^{{\rm deg}}_p \left[\Psi^{[\pm 2]}\right] (\tau_1,\tau_2) \lesssim
{\mathbb{I}}^{{\rm away}}_p \left[\Psi^{[\pm 2]}\right] \left(\tau_1,\tau_2\right) +
\mathbb{I}^{{\rm trap}}[\Psi^{[\pm2]}](\tau_1,\tau_2).
\]

\subsubsection{Weighted energies for $\upalpha^{[+2]}$, $\uppsi^{[+2]}$}
The quantities in this section will in general
be applied to $\upalpha^{[+2]}$, $\uppsi^{[+2]}$ arising from
a solution $\tilde\upalpha^{[+2]}$ 
of the inhomogeneous equation 
$(\ref{inhomoteuk})$.

We define the following energy densities
\begin{align} \label{dens1}
e_p \left[\upalpha^{[+2]}\right] &= \sum_{\Gamma \in \{id, \Phi\}} \Big|  \Gamma \left( \upalpha^{[+2]} \Delta^2 \left(r^2+a^2\right)^{-1}\right)  \Big|^2   r^{-\boldsymbol{\delta}^p_2\eta} r^p + \Big|T\left(\upalpha^{[+2]} \Delta^2 \left(r^2+a^2\right)^{-1}\right)  \Big|^2   r^{2-\eta} \, ,
\\
e_p \left[\uppsi^{[+2]}\right] &= \sum_{\Gamma \in \{id, \Phi\}} \Big|  \Gamma \left(\uppsi^{[+2]} \sqrt{\Delta}\right) \Big|^2   r^{-\boldsymbol{\delta}^p_2\eta} r^p + \Big|T\left(\uppsi^{[+2]} \sqrt{\Delta}\right) \Big|^2   r^{2-\eta} \, . \label{dens2}
\end{align}
With these, we 
define the following weighted energies on the slices 
$\widetilde\Sigma_{\tau}$:
\begin{align}
\mathbb{E}_{\widetilde\Sigma_{\tau},p} \left[\upalpha^{[+2]}\right] \left(\tau\right) = \int_{\widetilde\Sigma_{\tau}} dr d\sigma \  e_p \left[\upalpha^{[+2]}\right] \ \ \ , \ \ \ 
\mathbb{E}_{\widetilde\Sigma_{\tau},p} \left[\uppsi^{[+2]}\right] \left(\tau\right) = \int_{\widetilde\Sigma_{\tau}} dr d\sigma \ e_p \left[\uppsi^{[+2]}\right]   \, .
\end{align}
\begin{remark}
\label{control.the.l.derivs1}
We remark already that while these energies contain the $T$ and the $\Phi$ derivative only, we can obtain also the $L$ and the $\underline{L}$ derivative if we control in addition the energy (\ref{meP}) of $\Psi^{[+2]}$. This is because of the relations (\ref{psirel}) and (\ref{Prel}) and the relation $L = -\underline{L} + 2T + \frac{a}{r^2+a^2}\Phi$.
\end{remark}
It will be useful to also consider separately
\begin{align}
\label{newleftdef}
\mathbb{E}^{\rm left}_{\widetilde\Sigma_{\tau},p} \left[\upalpha^{[+2]}\right] \left(\tau\right) &= \int_{\widetilde\Sigma_{\tau}^{\rm left}} dr d\sigma  e_p \left[\upalpha^{[+2]}\right]  \, , \  \
\mathbb{E}^{\rm left}_{\widetilde\Sigma_{\tau},p} \left[\uppsi^{[+2]}\right] \left(\tau\right) = \int_{\widetilde\Sigma_{\tau}^{\rm left}} dr d\sigma \ e_p \left[\uppsi^{[+2]}\right]   \, , \\
\mathbb{E}^{\rm right}_{\widetilde\Sigma_{\tau},p} \left[\upalpha^{[+2]}\right] \left(\tau\right) &= \int_{\widetilde\Sigma_{\tau}^{\rm right}} dr d\sigma \  e_p \left[\upalpha^{[+2]}\right]  \, , \
\label{newrightdef}
\mathbb{E}^{\rm right}_{\widetilde\Sigma_{\tau},p} \left[\uppsi^{[+2]}\right] \left(\tau\right) = \int_{\widetilde\Sigma_{\tau}^{\rm right}} dr d\sigma  \ e_p \left[\uppsi^{[+2]}\right]   \, . 
\end{align}
We also use the notation $\mathbb{E}^{\rm away}_{\widetilde\Sigma_{\tau},p}$ for the sum of the left and the right energies.
On (timelike) hypersurfaces of constant $r=A>r_+$ we define
\begin{align}
\mathbb{E}_{r=A} \left[\upalpha^{[+2]}\right] \left(\tau_1,\tau_2\right) &= \int_{\tau_1}^{\tau_2} d\tau d\sigma\  e_p \left[\upalpha^{[+2]}\right]  \Bigg|_{r=A}  \ ,  \ 
\mathbb{E}_{r=A} \left[\uppsi^{[+2]}\right] \left(\tau_1,\tau_2\right) =\int_{\tau_1}^{\tau_2} d\tau d\sigma \ e_p \left[\uppsi^{[+2]}\right]  \Bigg|_{r=A} .
\end{align}
In the limit $r \rightarrow r_+$ we obtain the energies the event horizon $\mathcal{H}^+$ which we denote
\begin{align}
\mathbb{E}_{\mathcal{H}^+} \left[\upalpha^{[+2]}\right] \left(\tau_1,\tau_2\right) = \int_{\tau_1}^{\tau_2} d\tau d\sigma \  e_p \left[\upalpha^{[+2]}\right]  \Bigg|_{r=r_+}  \, ,  \, \
\mathbb{E}_{\mathcal{H}^+} \left[\uppsi^{[+2]}\right] \left(\tau_1,\tau_2\right) =\int_{\tau_1}^{\tau_2} d\tau d\sigma \ e_p \left[\uppsi^{[+2]}\right]  \Bigg|_{r=r_+} \, .
\end{align}

Let us  define
\[
\mathbb{E}^{\rm overlap}_{\widetilde{\Sigma}_{\tau}}\left[\upalpha^{[\pm2]}\right]
=\int_{\widetilde{\Sigma}_{\tau}^{\rm overlap}}
dr\,d\sigma \, e_{0}\left[\upalpha^{[\pm2]}\right], \qquad
\mathbb{E}^{\rm overlap}_{\widetilde{\Sigma}_{\tau}}\left[\uppsi^{[\pm2]}\right]
=\int_{\widetilde{\Sigma}_{\tau}^{\rm overlap}}
dr\,d\sigma \, e_{0}\left[\uppsi^{[\pm2]}\right].
\]

We also define the following weighted spacetime energies
\begin{align}
\mathbb{I}_p \left[\upalpha^{[+2]}\right]  \left(\tau_1,\tau_2\right)= \int_{\tau_1}^{\tau_2} d\tau \int_{\widetilde\Sigma_{\tau}} dr d\sigma \  \frac{1}{r} e_p \left[\upalpha^{[+2]}\right]  \ \ , \ \ \nonumber 
\mathbb{I}_p \left[\uppsi^{[+2]}\right]  \left(\tau_1,\tau_2\right)= \int_{\tau_1}^{\tau_2} d\tau \int_{\widetilde\Sigma_{\tau}} dr d\sigma \ \frac{1}{r} e_p \left[\uppsi^{[+2]}\right]   \, .
\end{align} 
As with the fluxes, it will be useful to also define
\begin{align}
\label{alsointenglr1}
\mathbb{I}_p^{\rm left} \left[\upalpha^{[+2]}\right]  \left(\tau_1,\tau_2\right)&= \int_{\tau_1}^{\tau_2} d\tau \int_{\widetilde\Sigma_{\tau}^{\rm left}} dr\, d\sigma \ \frac{1}{r} e_p \left[\upalpha^{[+2]}\right] \, , \\
\mathbb{I}_p^{\rm left} \left[\uppsi^{[+2]}\right]  \left(\tau_1,\tau_2\right)&= \int_{\tau_1}^{\tau_2} d\tau \int_{\widetilde\Sigma_{\tau}^{\rm left} } dr\, d\sigma \ \frac{1}{r} e_p \left[\uppsi^{[+2]}\right]   \, , \\
\mathbb{I}_p^{\rm right} \left[\upalpha^{[+2]}\right]  \left(\tau_1,\tau_2\right)&= \int_{\tau_1}^{\tau_2} d\tau \int_{\widetilde\Sigma_{\tau}^{\rm right} } dr\, d\sigma \ \frac{1}{r} e_p \left[\upalpha^{[+2]}\right] \, , \\
\label{alsointenglr4}
\mathbb{I}_p^{\rm right} \left[\uppsi^{[+2]}\right]  \left(\tau_1,\tau_2\right)&= \int_{\tau_1}^{\tau_2} d\tau \int_{\widetilde\Sigma_{\tau}^{\rm right} } dr\, d\sigma \ \frac{1}{r} e_p \left[\uppsi^{[+2]}\right]   \, .
\end{align}
Finally, we define
\begin{equation}
\label{edw.trapped.def}
\mathbb{I}^{\rm trap}\left [\upalpha^{[+2]}\right](\tau_1,\tau_2) =\int_{\tau_1}^{\tau_2}
d\tau\int_{\widetilde{\Sigma}_\tau^{\rm trap}}
dr\, d\sigma e_{0}\left[\upalpha^{[+2]}\right] \, ,
\end{equation}
\begin{equation}
\label{edw.trapped.def.2}
\mathbb{I}^{\rm trap}\left[\uppsi^{[+2]}\right](\tau_1,\tau_2) =\int_{\tau_1}^{\tau_2}
d\tau\int_{\widetilde{\Sigma}_\tau^{\rm trap}}
dr\, d\sigma \,  e_{0}\left[\uppsi^{[+2]}\right].
\end{equation}
We note the relations
\begin{equation}
\label{poses.snmeiwseis}
\mathbb{I}_p \left[\upalpha^{[+2]}\right]  \left(\tau_1,\tau_2\right) 
\lesssim 
\mathbb{I}_p^{\rm left} \left[\upalpha^{[+2]}\right]  \left(\tau_1,\tau_2\right)+
\mathbb{I}^{\rm trap}\left[\upalpha^{[+2]}\right](\tau_1,\tau_2) 
+\mathbb{I}_p^{\rm right} \left[\upalpha^{[+2]}\right]  \left(\tau_1,\tau_2\right).
\end{equation}

\subsubsection{Weighted energies for $\upalpha^{[-2]}$, $\uppsi^{[-2]}$}
The quantities in this section will in general
be applied to $\upalpha^{[-2]}$, $\uppsi^{[-2]}$ arising from a solution
$\tilde\upalpha^{[-2]}$ of the inhomogeneous equation 
$(\ref{inhomoteuk})$.

We define the following weighted energies on the slices $\widetilde\Sigma_{\tau}$:\footnote{Note that in contrast to the $\left[+2\right]$-energies, no $p$-weights appear. The underlying reason is that the transport estimates for $\uppsi^{[-2]}$ and $\upalpha^{[-2]}$ will always be applied with the same $r$-weight. Note also in this context that the $\mathbb{E}$-energies for $\upalpha^{[-2]}$ and $\uppsi^{[-2]}$ on the slices $\widetilde{\Sigma}_{\tau}$ in (\ref{zas})--(\ref{zas2}) carry the same $r$-weight as the corresponding spacetime $\mathbb{I}$-energies in (\ref{fiid})--(\ref{finid}). This arises from the fact that the transport for the $[-2]$-quantities happens in the $L$-direction and the relation (\ref{innp}) between $L$ and the unit normal to the slices $\widetilde{\Sigma}_{\tau}$.}
\begin{align}
\mathbb{E}_{\widetilde\Sigma_{\tau}} \left[\upalpha^{[-2]}\right] \left(\tau\right) &= \int_{\widetilde\Sigma_{\tau}} dr d\sigma  \sum_{\Gamma \in \{id, T, \Phi\}} \Big|\Gamma \left( \frac{\sqrt{r^2+a^2} \alpha^{[-2]}}{\Delta^2}\right)  \Big|^2   r^{-1-\eta} \label{zas} \, , \\
\mathbb{E}_{\widetilde\Sigma_{\tau}} \left[\uppsi^{[-2]}\right] \left(\tau\right) &= \int_{\widetilde\Sigma_{\tau}} dr d\sigma  \sum_{\Gamma \in \{id, T, \Phi\}}   \Big| \Gamma \left( \frac{{\uppsi}^{[-2]}(r^2+a^2)}{\sqrt{\Delta}}   \right) \Big|^2 r^{-1-\eta}\, . \label{zas2}  
\end{align}

We also define the energies 
\[
\overline{\mathbb{E}}_{\widetilde\Sigma_{\tau}} \left[\upalpha^{[-2]}\right] \left(\tau\right) \ \ ,  \ \ \overline{\mathbb{E}}_{\widetilde\Sigma_{\tau}} \left[\uppsi^{[-2]}\right] \left(\tau\right) \ \  
\]
by adding to the set $\Gamma$ in the energies without the overbar the vectorfield $\frac{r^2+a^2}{\Delta}\underline{L}$. Hence an overbar again indicates that the energy has been improved near the horizon. 
\begin{remark}
\label{control.the.l.derivs2}
In analogy with Remark~\ref{control.the.l.derivs1}, note that in view of the relations (\ref{psibart}) and (\ref{Pbart}) controlling the energies above and in addition the energy (\ref{meP}) allows one to control also the $L$ derivative of $\upalpha^{[-2]}$ and $\uppsi^{[-2]}$. Together these allow one to control the $\underline{L}$ derivative of $\upalpha^{[-2]}$ and $\uppsi^{[-2]}$ (without the $\Delta^{-1}$-weight near the horizon) in view of the relation $\underline{L} = -{L} + 2T + \frac{a}{r^2+a^2}\Phi$.
\end{remark}
We define 
\[
\mathbb{E}^{\rm left}_{\widetilde\Sigma_{\tau}} \left[\upalpha^{[-2]}\right], \qquad
\mathbb{E}^{\rm left}_{\widetilde\Sigma_{\tau}} \left[\uppsi^{[+2]}\right], \qquad
\mathbb{E}^{\rm right}_{\widetilde\Sigma_{\tau}} \left[\upalpha^{[-2]}\right], \qquad
\mathbb{E}^{\rm right}_{\widetilde\Sigma_{\tau}} \left[\uppsi^{[-2]]}\right],
\]
by appropriately restricting the domain in $(\ref{zas})$--$(\ref{zas2})$, 
in analogy with the
definitions $(\ref{newleftdef})$--$(\ref{newrightdef})$.

On (timelike) hypersurfaces of constant $r=A>r_+$ we define
\begin{align}
\mathbb{E}_{r=A} \left[\upalpha^{[-2]}\right] \left(\tau_1,\tau_2\right) &= \int_{\tau_1}^{\tau_2} d\tau d\sigma \sum_{\Gamma \in \{id, T, \Phi\}} \Big| \Gamma \left(  \frac{\sqrt{r^2+a^2} \alpha^{[-2]}}{\Delta^2}\right)  \Big|^2 \Bigg|_{r=A} \, ,  \nonumber \\
\mathbb{E}_{r=A} \left[\uppsi^{[-2]}\right] \left(\tau_1,\tau_2\right) &=\int_{\tau_1}^{\tau_2} d\tau d\sigma \sum_{\Gamma \in \{id, T, \Phi\}} \Big|  \Gamma \left(\frac{{\uppsi}^{[-2]}(r^2+a^2)}{\sqrt{\Delta}} \right) \Big|^2 \Bigg|_{r=A} \, .
\end{align}
In the limit $r\rightarrow \infty$ we define on null infinity $\mathcal{I}^+$ 
\begin{align}
\mathbb{E}_{\mathcal{I}^+} \left[\upalpha^{[-2]}\right] \left(\tau_1,\tau_2\right) &= \int_{\tau_1}^{\tau_2} d\tau d\sigma \sum_{\Gamma \in \{id, T, \Phi\}} \Big| \Gamma \left(  \frac{\sqrt{r^2+a^2} \alpha^{[-2]}}{\Delta^2}\right)  \Big|^2 \Bigg|_{r\rightarrow \infty}\, ,  \nonumber \\
\mathbb{E}_{\mathcal{I}^+} \left[\uppsi^{[-2]}\right] \left(\tau_1,\tau_2\right) &=\int_{\tau_1}^{\tau_2} d\tau d\sigma \sum_{\Gamma \in \{id, T, \Phi\}} \Big|  \Gamma \left(\frac{{\uppsi}^{[-2]}(r^2+a^2)}{\sqrt{\Delta}} \right) \Big|^2 \Bigg|_{r\rightarrow \infty} \, .
\end{align}
We also define the following weighted spacetime energies
\begin{align}
\mathbb{I} \left[\upalpha^{[-2]}\right]  \left(\tau_1,\tau_2\right)&= \int_{\tau_1}^{\tau_2} d\tau \int_{\widetilde\Sigma_{\tau}} dr d\sigma \sum_{\Gamma \in \{id, T, \Phi\}} \Big|\Gamma \left(  \frac{\sqrt{r^2+a^2} \alpha^{[-2]}}{\Delta^2}         \right)  \Big|^2 r^{-1-\eta} 
\label{fiid} \, ,
 \\
\mathbb{I} \left[\uppsi^{[-2]}\right]  \left(\tau_1,\tau_2\right)&= \int_{\tau_1}^{\tau_2} d\tau \int_{\widetilde\Sigma_{\tau}} dr d\sigma \sum_{\Gamma \in \{id, T, \Phi\}} \Big| \Gamma \left(\frac{{\uppsi}^{[-2]}(r^2+a^2)}{\sqrt{\Delta}}\right) \Big|^2 r^{-1-\eta}   \label{finid} \, ,
\end{align}
and the energies 
\[
\overline{\mathbb{I}} \left[\upalpha^{[-2]}\right]  \left(\tau_1,\tau_2\right) \ \ , \ \ \overline{\mathbb{I}} \left[\uppsi^{[-2]}\right]  \left(\tau_1,\tau_2\right) 
\]
by adding to the set $\Gamma$ appearing in the definitions (\ref{fiid})--(\ref{finid}) the vectorfield $\frac{r^2+a^2}{\Delta}\underline{L}$.
We define again
\[
\mathbb{I}^{\rm left} \left[\upalpha^{[-2]}\right], \qquad
\mathbb{I}^{\rm left} \left[\uppsi^{[-2]}\right], \qquad
\mathbb{I}^{\rm right} \left[\upalpha^{[-2]}\right], \qquad
\mathbb{I}^{\rm right} \left[\uppsi^{[-2]}\right],
\]
by restricting the domain in $(\ref{fiid})$--$(\ref{finid})$, in analogy
with $(\ref{alsointenglr1})$--$(\ref{alsointenglr4})$.
Finally, in analogy with $(\ref{edw.trapped.def})$--$(\ref{edw.trapped.def.2})$, we define
\begin{equation}
\label{minus.2.defs}
\mathbb{I}^{\rm trap}\left [\upalpha^{[-2]}\right], \qquad
\mathbb{I}^{\rm trap}\left[\uppsi^{[-2]}\right]
\end{equation}
and we note the $[-2]$ version of $(\ref{poses.snmeiwseis})$.

\subsection{Precise statement of the main theorem: Theorem~\ref{finalstatetheor}}
\label{statementsubsec}

We are now ready to give a precise version of the main theorem
stated in Section~\ref{herethemainresultintro}:

\begin{theorem}\label{finalstatetheor}
Let
$(\tilde\upalpha^{[\pm2]}_0, \tilde\upalpha^{[\pm2]}_1)\in {}^{[\pm2]}H^j_{\rm loc}(
\widetilde{\Sigma}_0)\times 
{}^{[\pm2]}H^{j-1}_{\rm loc}(\widetilde{\Sigma}_0)$ and   $\tilde\upalpha^{[\pm2]}$ 
be as in the well-posedness  Proposition~\ref{wellposedstat}, and let $\upalpha^{[\pm2]}$,
$P^{[\pm2]}$, $\Psi^{[\pm2]}$, $\uppsi^{[\pm2]}$ be as defined
by $(\ref{rescaleddefsfirstattempt})$, $(\ref{Pdefsbulkplus2})$, $(\ref{Pdefsbulkmunus2})$,
$(\ref{rescaledPSI})$, $(\ref{officdeflittlepsiplus})$ and $(\ref{officdeflittlepsiminus})$.
Then the following estimates hold:
\begin{enumerate}
\item degenerate energy boundedness and integrated local energy decay as in Theorem \ref{degenerateboundednessandILED}
\item
red-shifted boundedness and integrated local energy decay as in Theorem \ref{prop:basicnondeg}
\item
the weighted $r^p$ hierarchy of estimates as in Propositions \ref{prop:nondegweighted}   and  \ref{prop:weightedtransport} ($s=+2$) \\
as well as Propositions \ref{prop:nondegweighted2} and \ref{prop:weightedtransport2} ($s=-2$)

\item polynomial decay of the energy as in Theorem \ref{prop:maindecayprop}.
\end{enumerate}
For each statement, the Sobolev exponent $j$ in the initial data norm is assumed large enough
so that the quantities on the right hand sides of the corresponding estimates above are well defined.
\end{theorem}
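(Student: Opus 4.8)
The plan is to establish Theorem~\ref{finalstatetheor} by the multi-stage bootstrap indicated in the outline: the four assertions are logically ordered, each building on its predecessor. The core analytical content lies in assertion (1) --- degenerate energy boundedness and integrated local energy decay (ILED) for the coupled system of $\Psi^{[\pm2]}$, $\uppsi^{[\pm2]}$ and $\upalpha^{[\pm2]}$ --- while (2)--(4) are comparatively routine upgrades following the red-shift and $r^p$-hierarchy techniques of \cite{DafRod2, DafRodnew}, applied essentially as in the Schwarzschild case \cite{holzstabofschw}. So the bulk of the work is to prove Theorem~\ref{degenerateboundednessandILED}, and for that I would proceed as follows.

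\emph{Step 1: the conditional physical-space estimate.} First I would apply the $f$- and $y$-multiplier identities and the transport estimates of \cite{holzstabofschw} --- now to the \emph{coupled} system \eqref{RWtypeinthebulk} together with \eqref{auxrel1}--\eqref{auxrel2} --- to obtain, for solutions of the inhomogeneous Teukolsky equation \eqref{inhomoteuk}, an ILED bound that is degenerate at $r=3M$ and that holds \emph{conditionally}: its right-hand side contains boundary terms on the timelike hypersurfaces $r=A_1$ and $r=A_2$, plus the inhomogeneities $F^{[\pm2]}$ and lower-order contributions from $\mathcal{J}^{[\pm2]}$. Here the smallness $|a|<a_0\ll M$ is exploited so that the Schwarzschild multiplier currents, perturbed to Kerr, remain coercive away from trapping; one adds a small $r^\eta$-current for weights near infinity and a multiple of the energy estimate for $T+\upomega_+\chi\Phi$ to handle the ergoregion and to make the boundary terms on $\mathcal{H}^+$ and $\widetilde\Sigma_\tau$ have favourable sign. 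The transport relations \eqref{auxrel1}--\eqref{auxrel2}, commuted with $T$ and $\Phi$, then absorb the $\mathcal{J}^{[\pm2]}$ terms away from $r\sim 3M$. This is the content of the propositions of Sections~\ref{condmultestsec}--\ref{condtranestsec}.

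\emph{Step 2: the frequency-localised estimate near trapping.} To control the boundary terms on $r=A_1,A_2$, I would cut off the solution in time, take the Fourier transform in $t$ as in \cite{DafRodsmalla}, separate via \eqref{separinintro}, derive the coupled ODE system \eqref{ReggeWheelerlikeintro} for the separated $\Psi^{[\pm2]}$ and $u^{[\pm2]}$, and construct a frequency-localised multiplier $f$ vanishing at the maximum of the potential $\mathcal V$ (cf.~Lemma~\ref{lem.max}), together with frequency-localised $y$- and energy currents, to obtain the fixed-frequency ILED of Theorem~\ref{phaseSpaceILED} on $r^*\in[A_1^*,A_2^*]$. The crucial point is to realise the right-hand side of \eqref{onlyforthis} as ``zeroth order'' in $\Psi^{[\pm2]}$: rather than invoking elliptic estimates, I would commute the transport relation \eqref{inadditionto} with $\partial_{r^*}$ and use \eqref{takes.the.form.sep} to trade the dangerous $\underline{L}(u^{[+2]}w)$ terms for expressions absorbable by the good non-degenerate $|\partial_{r^*}\Psi^{[\pm2]}|^2$ term on the left. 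For non-trapped frequencies the multiplier can be arranged so $m^2|\Psi|^2$ and $\omega^2|\Psi|^2$ appear non-degenerately and the coupling is treated as a generic first-order term. The currents near $A_1^*,A_2^*$ are chosen frequency-independent and identical to those of Step~1, so that, summing over frequencies (Plancherel) and adding to the Step~1 estimate, the boundary terms cancel exactly; the boundary terms from the transport equations are absorbed using smallness of $a$.

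\emph{Step 3: removing the cutoff and closing.} Applying Steps 1--2 to the time-cut-off solution produces an inhomogeneity supported in the fixed compact region $r^*\in[2A_1^*,2A_2^*]$; following \cite{DafRodsmalla} this cutoff term is absorbed by the final estimate using the smallness of $a$ and the fact that it is supported in finite $r$. This yields Theorem~\ref{degenerateboundednessandILED}. Then assertion (2) follows by the red-shift commutation of \cite{DafRod2} near $\mathcal H^+$ (Section~\ref{sec:rsim}), assertion (3) by the $r^p$-weighted energy identities of \cite{DafRodnew} applied to $\Psi^{[\pm2]}$ and the companion weighted transport estimates for $\uppsi^{[\pm2]},\upalpha^{[\pm2]}$ (Section~\ref{rphierarchysec}), and assertion (4) by the standard interpolation/pigeonhole argument converting the $r^p$-hierarchy into polynomial decay of the energy; throughout, Remarks~\ref{control.the.l.derivs1} and \ref{control.the.l.derivs2} let one recover the missing null derivatives of $\upalpha^{[\pm2]},\uppsi^{[\pm2]}$ from control of $\Psi^{[\pm2]}$. \textbf{The main obstacle} is Step~2: ensuring that the frequency-localised coupled ODE estimate closes \emph{without} elliptic regularity, i.e.~that after commuting \eqref{inadditionto} with $\partial_{r^*}$ every error term genuinely fits under the (degenerate) left-hand side, and that the multiplier currents match those of the physical-space estimate precisely enough for the boundary terms to cancel after summation --- this is exactly where the surprising ``weak coupling'' structure of \eqref{Reggewheelertype}, with all $c_i=O(|a|)$, is indispensable.
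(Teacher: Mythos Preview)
Your proposal is correct and follows essentially the same architecture as the paper: the conditional physical-space estimates away from trapping (your Step~1) are exactly the propositions of Sections~\ref{condmultestsec}--\ref{condtranestsec}, the fixed-frequency coupled estimate with the $\partial_{r^*}$-commutation trick (your Step~2) is Theorem~\ref{phaseSpaceILED}, the summation with boundary cancellation and cutoff absorption (Step~3) is carried out in Section~\ref{iledsec}, and the upgrades (2)--(4) are precisely Sections~\ref{sec:rsim}--\ref{rphierarchysec}. One small imprecision worth noting: the cutoff $\Xi$ of \eqref{big.Xi.def} is not a pure time cutoff but a combined time--space cutoff, arranged so that $\upalpha_{\text{\Rightscissors}}$ agrees with $\upalpha$ in $\{r^*\le 2A_1^*\}\cup\{r^*\ge 2A_2^*\}$ while being compactly supported in $t^*$ for $r\in[A_1,A_2]$; this is what forces the inhomogeneity into the region \eqref{supportofF} and is what allows the frequency analysis to be confined to $[A_1,A_2]$ rather than applied globally as in \cite{DafRodsmalla}.
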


Let us note that we can easily deduce from the above
an alternative version where
initial data is posed (and weighted norms given) on $\Sigma_0$ instead of $\widetilde{\Sigma}_0$.
We suffice here with the remark that smooth, compactly supported initial data on $\Sigma_0$
trivially give rise to initial data on $\widetilde{\Sigma}_0$ satisfying the assumptions of the
above theorem.

As an example of the pointwise estimates which follow immediately
from the above theorem,
let us note the following pointwise corollary:
\begin{corollary}
\label{ptwisecor}
Let $(\tilde\upalpha^{[\pm2]}_0, \tilde\upalpha^{[\pm2]}_1)$
be smooth and of compact support. Then
the  solution $\tilde{\upalpha}$ satisfies
\[
|r^{\frac{3+\eta}{2}} \tilde{\upalpha}^{[+2]}|\le C|{\tilde{t}^*}|^{-(2-\eta)/2} \ \ \ \ , \ \ \ \ |r^4\tilde{\upalpha}^{[-2]}|\le C|{\tilde{t}^*}|^{-(2-\eta)/2}
\]
where $C$ depends on an appropriate higher Sobolev weighted norm.
\end{corollary}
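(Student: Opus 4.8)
The plan is to deduce Corollary~\ref{ptwisecor} from the energy statements of Theorem~\ref{finalstatetheor} by commutation and Sobolev embedding, following closely the analogous argument carried out in the Schwarzschild case in~\cite{holzstabofschw}. Since the Killing fields $T=\partial_{t^*}$ and $\Phi=\partial_{\phi^*}$ preserve the Teukolsky operator $\widetilde{\mathfrak{T}}^{[\pm2]}$, and since near $\mathcal{H}^+$ the red-shift commutation vector field of~\cite{DafRod2} used in Section~\ref{sec:rsim} produces only favourable zeroth-order error terms, the entire chain of estimates in Theorem~\ref{finalstatetheor}---in particular the polynomial decay of Theorem~\ref{prop:maindecayprop} and the $r^p$-hierarchy of Propositions~\ref{prop:nondegweighted}--\ref{prop:weightedtransport} and \ref{prop:nondegweighted2}--\ref{prop:weightedtransport2}---applies not only to $\tilde\upalpha^{[\pm2]}$ but to all quantities obtained by applying any fixed number $N$ of such commutations. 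For smooth compactly supported data this yields, with $\tau=\tilde t^*$, $\Gamma^\gamma$ a product of $|\gamma|$ commutation vector fields drawn from $\{T,\Phi,\tilde Z_1,\tilde Z_2,\tilde Z_3,\text{red-shift}\}$, and $D_0$ an appropriate higher-order weighted norm of the data on $\widetilde\Sigma_0$,
\begin{equation}
\label{planenergy}
\sum_{|\gamma|\le N}\mathbb{E}_{\widetilde\Sigma_\tau,p}\big[\Gamma^{\gamma}\Psi^{[\pm2]}\big](\tau)\lesssim \tau^{-(2-\eta)}\,D_0,
\end{equation}
at the top level $p$ of the hierarchy, and similarly for the (suitably weighted) slice energies of $\uppsi^{[\pm2]}$ and $\upalpha^{[\pm2]}$ defined in Section~\ref{Weighandstate}.

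The second step is to upgrade~\eqref{planenergy} to a pointwise bound. On a fixed slice $\widetilde\Sigma_\tau$ and for each $(r,\theta,\phi^*)$ one has a one-dimensional Sobolev inequality in $r$: writing $r^{q}|f|^2$ as an integral of its $\partial_r$-derivative from large $r$ inward and using that $\widetilde\Sigma_\tau$ is asymptotically null, $\sup_r r^{q}|f|^2$ on $\widetilde\Sigma_\tau$ is controlled by the corresponding slice energies of $f$ and of $\partial_r f$, with $q$ the weight dictated by the $r^p$-weighted densities appearing in $(\ref{dens1})$--$(\ref{dens2})$ and $(\ref{zas})$--$(\ref{zas2})$ at the top level of the hierarchy. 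Combined with Sobolev embedding on $S^2$ in the spin-weighted operators $\tilde Z_1,\tilde Z_2,\tilde Z_3$ (three commutations suffice) and, near $\mathcal{H}^+$, with the $\partial_r$-commuted red-shifted energies, this gives pointwise decay of $\Psi^{[\pm2]}$, $\uppsi^{[\pm2]}$ and $\upalpha^{[\pm2]}$ at rate $\tau^{-(2-\eta)/2}$, with $r$-weights precisely calibrated by those in $(\ref{dens1})$--$(\ref{zas2})$, e.g.~$|r^{1/2}(r^2+a^2)^{1/2}\tilde\upalpha^{[+2]}|\lesssim |\tilde t^*|^{-(2-\eta)/2}\sqrt{D_0}$.

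Finally I would pass from the pointwise control of these quantities to the precise weights claimed for $\tilde\upalpha^{[\pm2]}$. For $s=+2$ the transport relation $(\ref{auxrel1})$ together with the definition $(\ref{officdeflittlepsiplus})$ of $\uppsi^{[+2]}$ expresses $\tilde\upalpha^{[+2]}=\Delta^2(r^2+a^2)^{-3/2}\upalpha^{[+2]}$ as a double $\underline L$-transport of $\Psi^{[+2]}$; integrating from $\widetilde\Sigma_0$, where the data are compactly supported, and absorbing the resulting error terms via the smallness of $a$ exactly as in the transport estimates of Section~\ref{condtranestsec}, yields $|r^{(3+\eta)/2}\tilde\upalpha^{[+2]}|\lesssim |\tilde t^*|^{-(2-\eta)/2}$. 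For $s=-2$ one argues symmetrically with the $L$-transports $(\ref{auxrel2})$, $(\ref{officdeflittlepsiminus})$ and $(\ref{Pdefsbulkmunus2})$, the relevant rescaled quantity now being $(r^2+a^2)^{-3/2}\upalpha^{[-2]}=\Delta^2\tilde\upalpha^{[-2]}$, which has a finite nonzero limit at $\mathcal{I}^+$ and whose pointwise control translates into the weight $r^4$ for $\tilde\upalpha^{[-2]}$. The main obstacle is purely a matter of bookkeeping: one must check that the number of commutations by $T$, $\Phi$, $\tilde Z_i$ and the red-shift field required for the Sobolev embeddings is compatible with the Sobolev exponent $j$ quoted in Theorem~\ref{finalstatetheor}, and---more delicately---that the $r$-weights carried successively through~\eqref{planenergy}, the one-dimensional Sobolev inequality and the transport relations $(\ref{auxrel1})$--$(\ref{auxrel2})$ match up exactly, so that the stated powers $r^{(3+\eta)/2}$ and $r^4$ are attained with no loss.
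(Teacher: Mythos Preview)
The paper does not actually give a proof of this corollary; it is stated as following ``immediately'' from Theorem~\ref{finalstatetheor} (specifically the polynomial decay of Theorem~\ref{prop:maindecayprop}), and your outline---commute with Killing and spin-weighted angular operators, apply the decay estimates of Theorem~\ref{prop:maindecayprop} to the commuted quantities, then Sobolev in $r$ and on $S^2$---is exactly the standard argument the paper is implicitly invoking.

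One comment on your third step: you frame it as integrating the transport relations \eqref{auxrel1}--\eqref{auxrel2} from $\widetilde\Sigma_0$ to upgrade pointwise control of $\Psi^{[\pm2]}$ to pointwise control of $\tilde\upalpha^{[\pm2]}$. This is not quite how the argument runs. The decaying energies $\mathbb{E}_{\widetilde\Sigma_\tau,\eta}[\upalpha^{[\pm2]}]$ and $\mathbb{E}_{\widetilde\Sigma_\tau,\eta}[\uppsi^{[\pm2]}]$ (resp.\ without subscript $\eta$ for $s=-2$) already control $\tilde\upalpha^{[\pm2]}$ directly with the correct $r$-weights at the $L^2(\widetilde\Sigma_\tau)$ level; what is missing for the one-dimensional Sobolev in $r$ along $\widetilde\Sigma_\tau$ is control of the derivative tangent to the slice, which near infinity is essentially $L$. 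The role of the transport relations is then exactly that of Remarks~\ref{control.the.l.derivs1}--\ref{control.the.l.derivs2}: they let you reconstruct $L\tilde\upalpha^{[\pm2]}$ and $\underline L\tilde\upalpha^{[\pm2]}$ algebraically from the $T,\Phi$-derivatives already in the $\upalpha$-energy together with the $\uppsi$- and $\Psi$-energies. No further integration from data is needed at the pointwise stage.
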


The above decay rates can be improved following~\cite{Moschnewmeth}.

\begin{remark}
\label{decaycomments}
Recall that the quantities $\tilde\upalpha^{[\pm2]}$ are regular on the horizon and that near infinity  $r^{\frac{3+\eta}{2}} \tilde{\upalpha}^{[+2]} \sim r^{\frac{5+\eta}{2}} {\upalpha}^{[+2]} \sim r^{\frac{5+\eta}{2}} {\bf \Psi}_0$ and $r^4 \tilde{\upalpha}^{[-2]} \sim r^{-3} \upalpha^{[-2]} \sim r{\bf \Psi}_4$, allowing direct comparison with the null-components of curvature in an orthonormal frame (see Section \ref{relwiththesystemsec}). 
\end{remark}

\begin{remark}
Note that, in view of  Remark~\ref{decaycomments}, one sees that the decay
in $r$ provided for ${\bf \Psi}_0$ by Corollary~\ref{ptwisecor} is weaker than
peeling, consistent with the fact that, just as  in~\cite{CK}, our weighted energies  do not in fact 
impose initially
the validity of peeling. This   is important since it has been shown that peeling does not hold for
generic physically interesting data~\cite{christorome}. 
\end{remark}

\subsection{The logic of the proof}
\label{logicoftheproofsec}

The remainder of the paper concerns the proof of Theorem~\ref{finalstatetheor}.

Sections~\ref{Physspacesecnew}--\ref{ODEmegasec} are preliminary:
Section~\ref{Physspacesecnew} will prove an integrated energy estimate for
$\Psi^{[\pm2]}$, $\uppsi^{[\pm2]}$ and $\upalpha^{[\pm2]}$ arising from
general solutions to the inhomogeneous $s=\pm2$ Teukolsky equations $(\ref{inhomoteuk})$
outside of the region $r\in [A_1,A_2]$, with additional boundary
terms on $r=A_i$, as well as certain auxiliary estimates 
(Section~\ref{auxil.est.sec})
for $\Psi^{[\pm2]}$, $\uppsi^{[\pm2]}$ and $\upalpha^{[\pm2]}$ arising
from  a solution of the homogeneous equation $(\ref{Teukphysic})$.
Sections~\ref{Separationmegasec}--\ref{ODEmegasec}
will concern so-called $[A_1,A_2]$-admissible  solutions 
and will provide frequency-localised estimates  in the region $[A_1,A_2]$,
again with boundary terms on $r=A_i$.

The proof
proper of Theorem~\ref{finalstatetheor} commences in Section~\ref{iledsec}
where the degenerate integrated local energy decay and boundedness statements
are proven (statement 1.), using
the results of Sections~\ref{Physspacesecnew}--\ref{iledsec}, applied to  a
particular solution
 $\upalpha^{[\pm2]}_{\text{\Rightscissors}}$  of the inhomogeneous
equation $(\ref{inhomoteuk})$
which arises by cutting off a solution $\upalpha$ of the
homogeneous equation so that, \emph{when
restricted to  the $r$-range $[A_1,A_2]$},  $\upalpha^{[\pm2]}_{\text{\Rightscissors}}$
is compactly supported in $t^*\in [0,\tau_{\rm final}]$. 
The estimate of statement 1.~follows by appropriately summing the estimates
of Section~\ref{Physspacesecnew} and~\ref{ODEmegasec} applied to
$\upalpha_{\text{\Rightscissors}}$. We note already that when summing,
the most dangerous boundary terms on $r=A_i$
have been arranged to precisely cancel, while
the error term arising from the inhomogeneous term on the right hand
side of the equation of $\upalpha_{\text{\Rightscissors}}$ can easily be absorbed
in view of its support properties and the auxiliary estimates of
Section~\ref{auxil.est.sec}. Finally, in
Section~\ref{axi-note}, we will distill from our argument 
a simpler, purely physical space proof
of statement 1.~for the axisymmetric case.

The degenerate boundedness and integrated local energy decay
are combined with  redshift estimates in Section~\ref{sec:rsim} 
to obtain   statement 2.

Finally, the weighted $r^p$ estimates are obtained in Section~\ref{rphierarchysec},
giving statements 3.--4.

\section{Conditional physical space estimates} 
\label{Physspacesecnew}

In this section, we will derive certain physical space 
estimates for $\Psi^{[\pm2]}$, $\uppsi^{[\pm2]}$, $\upalpha^{[\pm2]}$
defined above, arising from solutions  $\upalpha^{[\pm2]}$ of the inhomogeneous version 
$(\ref{inhomoteuk})$ of
the Teukolsky equation.

We first apply  in {\bf Section~\ref{condmultestsec}} multiplier estimates for solutions
$\Psi^{[\pm2]}$ of the \emph{inhomogeneous} equation $(\ref{RWtypeinthebulk})$ outside
the region $r\in [A_1,A_2]$. Here, we use
the good divergence structure of the generalised
Regge--Wheeler operator. We then estimate in {\bf Section~\ref{condtranestsec}}
the quantities
 $\uppsi^{[\pm2]}$ and $\upalpha^{[\pm2]}$ via transport estimates. 
Taken together, these should 
 be viewed as providing a conditional
estimate stating that an integrated energy
expression for $\Psi^{[\pm2]}$, $\uppsi^{[\pm2]}$ and 
$\upalpha^{[\pm2]}$ can be controlled from initial data
provided that boundary terms on $r=A_i$ can be controlled.
(To understand the latter boundary term, this estimate must
be combined with that obtained in Section~\ref{ODEmegasec}.)

Finally, we shall need some auxiliary physical space estimates  (applied throughout
$\mathcal{R}$) for $\Psi^{[\pm2]}$, $\uppsi^{[\pm2]}$ and $\upalpha^{[\pm2]}$ arising
from a solution of the \emph{homogeneous} Teukolsky equation
$(\ref{Teukphysic})$. These
will be given in {\bf Section~\ref{auxil.est.sec}}.

{\bf \emph{Let us note that we may always assume
in what follows that any $\widetilde{\upalpha}^{[\pm2]}$
referred to is in $\mathscr{S}^{[\pm2]}_{\infty}(\widetilde{\mathcal{R}}_0)$.}}

\subsection{Multiplier estimates for $\Psi^{[\pm2]}$}
\label{condmultestsec}

We will apply multiplier estimates for $\Psi^{[\pm2]}$. The main result is

\begin{proposition}
\label{multpropnofreq}
Let $\upalpha^{[\pm2]}$ be as in Proposition~\ref{follfundprop},
and $\uppsi^{[\pm2]}$, $\Psi^{[\pm2]}$ be as defined in (\ref{Pdefsbulkplus2}), (\ref{Pdefsbulkmunus2}), (\ref{officdeflittlepsiplus}), (\ref{officdeflittlepsiminus}).
Let $\delta_1<1$, $\delta_2<1$ and $E>1$ be parameters and 
let $f_0$ be defined by $(\ref{newdefshere1})$ and
$y_0$ be defined by $(\ref{newdefshere2})$.
Then for sufficiently small $\delta_1$ and $\delta_2$ and sufficiently large
$E$, it follows that for sufficiently small $|a|<a_0\ll M$, 
then for any $0\le \tau_1\le \tau_2$, we have
\begin{align*}
&\mathbb{E}^{\rm away}_{\widetilde\Sigma_{\tau}, \eta}\left[\Psi^{[\pm2]}\right](\tau_2)
 +
\mathbb{I}^{\rm away}_{\eta}\left[\Psi^{[\pm2]}\right]
(\tau_1,\tau_2) \\
&\lesssim_{\delta_1,\delta_2, E} \mathbb{E}^{\rm away}_{\widetilde\Sigma_{\tau}, \eta}\left[\Psi^{[\pm2]}\right](\tau_1)
\\
&\qquad +\mathfrak{H}^{\rm away}[\Psi^{[\pm2]}](\tau_1,\tau_2) +\mathbb{Q}_{r=A_2}\left[\Psi^{[+2]}\right](\tau_1,\tau_2) -
 \mathbb{Q}_{r=A_1}\left[\Psi^{[+2]}\right](\tau_1,\tau_2)\\
&\qquad +|a|\, \mathbb{I}^{\rm left}_{[\eta]}\left[\uppsi^{[\pm2]}\right](\tau_1,\tau_2)
+|a|\, \mathbb{I}^{\rm right}_{[\eta]}\left[\uppsi^{[\pm2]}\right](\tau_1,\tau_2)\\
&\qquad +|a|\, \mathbb{I}^{\rm left}_{[\eta]}\left[\upalpha^{[\pm2]}\right](\tau_1,\tau_2)
+|a|\, \mathbb{I}^{\rm right}_{[\eta]}\left[\upalpha^{[\pm2]}\right](\tau_1,\tau_2).
\end{align*}
where   $\mathbb{Q}_{r=A_i}[\Psi^{[\pm2]}](\tau_1,\tau_2)$
is defined by $(\ref{blackboardQdef})$ and $\mathfrak{H}^{\rm away}[\Psi^{[\pm2]}](\tau_1,\tau_2)$
is defined by $(\ref{thisinhomodef})$. Moreover the subindex $\left[\eta\right]$ on the right hand side is equal to $\eta$ in case of $s=+2$ and it is dropped entirely in case $s=-2$.
\end{proposition}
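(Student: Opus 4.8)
The plan is to prove Proposition~\ref{multpropnofreq} by constructing an explicit divergence identity for the generalised Regge--Wheeler operator $\mathfrak{R}^{[\pm 2]}$ of $(\ref{RWoprdefhere})$, and then combining three standard multiplier currents---an $f$-current (Morawetz/virial-type), a $y$-current (redshift-type near the horizon, and an analogous weighted current near infinity), and the energy current associated with the timelike vector field $T+\upomega_+\chi\Phi$---but crucially applying all of them only in the \emph{away} region $\widetilde{\mathcal{R}}^{\rm away}(\tau_1,\tau_2)=\{r\le A_1\}\cup\{r\ge A_2\}$. First I would write $\mathfrak{R}^{[\pm 2]}\Psi^{[\pm 2]} = \mathcal{F}^{[\pm 2]}$, where $\mathcal{F}^{[\pm 2]}=-\tfrac{\rho^2}{\Delta}\mathcal{J}^{[\pm 2]}-\tfrac{\rho^2}{\Delta}\mathfrak{G}^{[\pm 2]}$, recall that the principal part of $\mathfrak{R}^{[\pm 2]}$ is $-\Box_g$ plus the good spin-weighted spherical Laplacian and a real potential, and note that for $a=0$ this is exactly the Regge--Wheeler equation whose multiplier theory was established in~\cite{holzstabofschw}. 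The $f_0$ and $y_0$ appearing in the statement are to be defined by the referenced equations $(\ref{newdefshere1})$--$(\ref{newdefshere2})$ as mild perturbations (in $a$) of the Schwarzschild multipliers of~\cite{holzstabofschw}; since in the away region $r$ is bounded away from the photon sphere $r=3M$, the Schwarzschild currents are already coercive there (no trapping degeneration), and this coercivity is manifestly stable under passing to $|a|<a_0\ll M$.

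The key steps, in order: (i) Establish the divergence identity $\nabla_\mu J^\mu[\Psi^{[\pm 2]}] = \text{(coercive bulk terms)} + \text{(terms linear in }\mathcal{F}^{[\pm 2]}\text{)}$ for the combined multiplier $J^\mu$, integrate over $\widetilde{\mathcal{R}}^{\rm away}(\tau_1,\tau_2)$, and use the divergence theorem; the boundary contributions are on $\widetilde{\Sigma}_{\tau_1}$, $\widetilde{\Sigma}_{\tau_2}$, on $\mathcal{H}^+$ and $\mathcal{I}^+$, and on the timelike hypersurfaces $r=A_1$ and $r=A_2$. (ii) Show that the bulk term controls $\mathbb{I}^{\rm away}_\eta[\Psi^{[\pm 2]}]$ and that the $\widetilde{\Sigma}_{\tau_2}$ and $\mathcal{H}^+$, $\mathcal{I}^+$ boundary terms are nonnegative (this is where the $T+\upomega_+\chi\Phi$ current is needed to make the horizon flux have a good sign, at the cost of an $O(|a|)$ bulk term supported where $\chi'\ne 0$, i.e.\ away from trapping, absorbable by the coercive $f_0$, $y_0$ bulk); add a small multiple of the $r^\eta$-current to generate the weights near $\mathcal{I}^+$. (iii) Bound the $\widetilde{\Sigma}_{\tau_1}$ boundary term by $\mathbb{E}^{\rm away}_{\widetilde{\Sigma}_\tau,\eta}[\Psi^{[\pm 2]}](\tau_1)$. (iv) Keep the $r=A_1,A_2$ boundary terms \emph{as is}, packaging them into $\mathbb{Q}_{r=A_i}[\Psi^{[\pm 2]}]$ defined by $(\ref{blackboardQdef})$---these are genuinely uncontrolled at this stage and will be cancelled later against the frequency-localised estimate of Section~\ref{ODEmegasec}; this is the whole point of a "conditional" estimate. (v) Estimate the inhomogeneous contribution: the $\mathfrak{G}^{[\pm 2]}$-part is packaged into $\mathfrak{H}^{\rm away}[\Psi^{[\pm 2]}]$ defined by $(\ref{thisinhomodef})$, while the $\mathcal{J}^{[\pm 2]}$-part, being $O(|a|)$ and involving only $\Phi(\sqrt{\Delta}\uppsi^{[\pm 2]})$, $\upalpha^{[\pm 2]}$, and $\sqrt{\Delta}\uppsi^{[\pm 2]}$, is controlled---after a Cauchy--Schwarz with the small factor $|a|$ split as $|a|^{1/2}\cdot|a|^{1/2}$ so that one factor absorbs into the coercive $\Psi^{[\pm 2]}$-bulk and the other leaves the $|a|$-weighted $\mathbb{I}^{\rm left/right}_{[\eta]}[\uppsi^{[\pm 2]}]$ and $\mathbb{I}^{\rm left/right}_{[\eta]}[\upalpha^{[\pm 2]}]$ terms on the right-hand side. (vi) Finally fix $\delta_1,\delta_2$ small and $E$ large so all absorptions close, then fix $a_0$ small enough that every $O(|a|)$ error is genuinely subleading.

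The main obstacle I expect is controlling the inhomogeneity $\mathcal{J}^{[\pm 2]}$ in a way that does not lose derivatives and does not produce $\Psi^{[\pm 2]}$-terms at the top order on the right. The subtlety is that $\mathcal{J}^{[\pm 2]}$ contains $\Phi(\sqrt{\Delta}\uppsi^{[\pm 2]})$, which is one angular derivative of $\uppsi^{[\pm 2]}$, whereas the transport relations $(\ref{auxrel1})$--$(\ref{auxrel2})$ only let one recover $\uppsi^{[\pm 2]}$ from $\Psi^{[\pm 2]}$ without gaining differentiability; this is exactly why the estimate must also be commuted with $\Phi$ (and $T$), and why one needs the full $r^\eta$-weighted structure---precisely as in~\cite{holzstabofschw}---so that the transport estimates of Section~\ref{condtranestsec} close with matching weights. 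A secondary technical point is the bookkeeping of the boundary terms at $r=A_i$: one must choose the $f_0$ and $y_0$ currents to be \emph{frequency-independent and explicitly fixed} near $r=A_1,A_2$, coinciding with the currents that will later be used inside $[A_1,A_2]$, so that upon eventual summation over frequencies the $\mathbb{Q}_{r=A_i}$ terms cancel exactly; at the present stage this is just a matter of recording them faithfully, but the choice of multiplier near $A_i$ must anticipate Section~\ref{ODEmegasec}. Everything else is a routine, if lengthy, perturbation of the Schwarzschild computation of~\cite{holzstabofschw}.
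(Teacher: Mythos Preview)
Your proposal is correct and follows essentially the same approach as the paper: combine the $f_0$-, $y_0$-, $T+\upomega_+\chi\Phi$-, and $\delta_2 r^\eta$-multiplier identities integrated over $\widetilde{\mathcal{R}}^{\rm away}(\tau_1,\tau_2)$, use Schwarzschild coercivity away from trapping and its stability for small $|a|$, record the $r=A_i$ boundary terms as $\mathbb{Q}_{r=A_i}$, and bound the $\mathcal{J}^{[\pm2]}$ contribution by Cauchy--Schwarz. Two minor corrections: the $y_0$-current here is \emph{not} a redshift multiplier (that is a separate $\tfrac{1}{w}\xi\underline{L}\overline{\Psi}$ current used later in Section~\ref{sec:rsim}) but simply a second virial-type current; and your ``main obstacle'' about commuting with $\Phi$, $T$ is misplaced---at the level of this proposition the $\Phi$-derivative in $\mathcal{J}^{[\pm2]}$ is handled by a direct Cauchy--Schwarz since the norms $\mathbb{I}^{\rm left/right}_{[\eta]}[\uppsi^{[\pm2]}]$ already contain $\Phi\uppsi$, and the commutation argument belongs to the transport estimates of Section~\ref{condtranestsec}.
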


We note already that the boundary terms $\mathbb{Q}_{r=A_2}\left[\Psi^{[+2]}\right](\tau_1,\tau_2) -
 \mathbb{Q}_{r=A_1}\left[\Psi^{[+2]}\right](\tau_1,\tau_2)$ appearing above
formally coincide with those of the fixed
frequency identity  to be obtained in  Section~\ref{multestsec}.
Thus these terms will cancel when all identities are summed in 
Section~\ref{iledsec}.

{\bf 
\emph{In what follows, our multiplier constructions will be identical for $\Psi^{[+2]}$ and $\Psi^{[-2]}$.
We will thus denote these  simply as $\Psi$. 
The spin weight will be explicitly denoted however for the terms arising from the right
hand side of $(\ref{RWtypeinthebulk})$.}}

\subsubsection{Multiplier identities} \label{sec:multid}
The proof of Proposition~\ref{multpropnofreq} will rely on various multiplier
identities for $(\ref{RWtypeinthebulk})$. These are analogous for standard
multiplier estimates proven for solutions of the  scalar wave equation
and in particular 
generalise  specific estimates which have been proven for
the Regge--Wheeler equation $(\ref{fromprevpaper})$ on 
Schwarzschild in~\cite{holzstabofschw}.

\paragraph{The $T+\upomega_+ \chi \Phi$ identity.}

Multiplying (\ref{RWtypeinthebulk}) by $\left(T+\upomega_+ \chi \Phi\right)\overline{\Psi}$ (recall $\chi$ was fixed in Section \ref{very.slowly.very.slowly}) and taking the real part leads to (use the formulae of Appendix \ref{sec:compT} and \ref{sec:compphi} and (\ref{convert}))
\begin{align}
\label{Hawkinident}
 \left(L+\underline{L}\right) \Big\{F^{T+\upomega_+\chi \Phi}_{L+\underline{L}}\Big\} +\left(L-\underline{L}\right) &\Big\{F^{T+\upomega_+\chi \Phi}_{L-\underline{L}}  \Big\}  + I^{T+\upomega_+\chi \Phi} \equiv {\rm Re}\left(\left(-\left(T+\upomega_+ \chi \Phi\right)\overline{\Psi}\right)\left(\mathcal{J}^{[s]} + \mathfrak{G}^{[s]}\right) \right)
\end{align}
where $\equiv$ denotes equality after integration with respect to the measure $\sin \theta d\theta d\phi$ and
\begin{align} \label{tidi}
F^{T+\omega_+\chi \Phi}_{L+\underline{L}} &= \frac{1}{16}\Big\{   |\left(L+\underline{L}\right)\Psi|^2+ |\left(L-\underline{L}\right)\Psi|^2 +4w |\mathring{\slashed{\nabla}}^{[s]} \Psi |^2
 + 4w \left[s^2 -\frac{6Mr}{r^2+a^2} \frac{r^2-a^2}{r^2+a^2} -\frac{7a^2\Delta}{(r^2+a^2)^2}  \right] |\Psi|^2
 \nonumber \\
&\qquad \qquad    -4a^2 w \sin^2 \theta|T\Psi|^2  - 8wa \upomega_+\chi |\Phi \Psi|^2 -8w a^2 \sin^2\theta \upomega_+\chi {\rm Re}\left((T\Psi)  \Phi \overline\Psi\right) \nonumber \\
& \qquad \qquad +4 \upomega_+ \left( \chi - \frac{r_+^2+a^2}{r^2+a^2}\right)\textrm{Re}\left(\Phi\Psi \left(L+\underline{L}\right) \overline\Psi \right)- 8swa \upomega_+\chi \cos \theta  \textrm{Im} \left(\Psi \Phi \overline{\Psi}\right)\Big\},
\nonumber  \\
 F^{T+\upomega_+\chi \Phi}_{L-\underline{L}} &= \frac{1}{8}\left(-2 {\rm Re}\left( \left(L - \underline{L}\right) \Psi T \overline\Psi\right) - 2\upomega_+\chi  \textrm{Re} \left(\left(L - \underline{L}\right) \Psi (\Phi \overline\Psi) \right) \right),
 \\
 I^{T+\upomega_+\chi \Phi} &= \frac{1}{2}\upomega_+ \chi^\prime  {\rm Re}\left( \left((L-\underline{L})\Psi\right)(\Phi \overline\Psi)\right). \nonumber
\end{align}

\paragraph{The $y$ identity.}
Multiplying (\ref{RWtypeinthebulk}) by $ y\left(L-\underline{L}\right)\overline{\Psi}$ for a smooth radial function $y$ and taking the real part produces (use the formulae of Appendix \ref{sec:compy})
\begin{equation}
\label{yidentform}
\left(L+\underline{L}\right) \Big\{F^{y}_{L+\underline{L}}\Big\} +\left(L-\underline{L}\right) \Big\{F^{y}_{L-\underline{L}}  \Big\} + I^y \equiv {\rm Re}\left(\left(\mathcal{J}^{[s]} + \mathfrak{G}^{[s]}\right)  \left(- y\left(L-\underline{L}\right) \overline{\Psi}\right)\right)
\end{equation}
where $\equiv$ denotes equality after integration with respect to the measure $\sin \theta d\theta d\phi$ and
\begin{align} \label{yidi}
F^y_{L+\underline{L}} &= \frac{1}{4}  {\rm Re} \Big\{  y \left(L + \underline{L}\right) \Psi \left( (L-\underline{L})\overline{\Psi}\right) + 2way \Phi \Psi \underline{L} \overline{\Psi} -2way \Phi \Psi L \overline{\Psi} \nonumber \\
& \ \ \ \ \ \ \ \  \ \  \ -2w a^2 \sin^2\theta T\Psi \left(y (L-\underline{L})\overline{\Psi}\right)-4sa\cos\theta wy \textrm{Im} \left( \Psi \overline{\Psi}^\prime \right)\Big\},
\nonumber \\
F^y_{L-\underline{L}} &=\frac{1}{4} \Big\{ -\frac{y}{2}   \left| (L+\underline{L})\Psi\right|^2 -\frac{y}{2}   \left| (L-\underline{L})\Psi\right|^2 + 2w y |\mathring{\slashed{\nabla}} \Psi |^2 +2w y\left[s^2 -\frac{6Mr}{r^2+a^2} \frac{r^2-a^2}{r^2+a^2} - \frac{7a^2\Delta}{(r^2+a^2)^2}  \right] |\Psi|^2\nonumber \\
  & \qquad  \ \ \  + 4way {\rm Re} \left( \Phi \Psi T \overline{\Psi} \right)+2w y a^2 \sin^2\theta |T\Psi|^2  +4sa \cos \theta wy \textrm{Im} \left(\Psi T \overline{\Psi}\right)\Big\},
\nonumber \\
I^y &= \frac{y^\prime}{4} \left[ \left( (L+\underline{L})\Psi\right)^2 + \left( (L-\underline{L})\Psi\right)^2 \right]  - \left(wy\left[s^2 -\frac{6Mr}{r^2+a^2} \frac{r^2-a^2}{r^2+a^2} - \frac{7a^2\Delta}{(r^2+a^2)^2}  \right] \right)^\prime |\Psi|^2     - \left(w y \right)^\prime |\mathring{\slashed{\nabla}} \Psi |^2   \nonumber \\
& \ \ +2\left( \frac{a^2}{r^2+a^2}  w y \right)^\prime |\Phi\Psi|^2 + \frac{4ra^2}{(r^2+a^2)^2} \frac{\Delta}{r^2+a^2} w y |\Phi \Psi|^2 \nonumber \\
& \ \  - a \left[ L \left(wy\right) \right] {\rm Re} \left((\Phi\Psi)(\underline{L} \overline{\Psi}) \right)+ a \left[ \underline{L} \left(wy\right) \right] {\rm Re} \left((\Phi\Psi)({L}\overline{\Psi})\right) - (wy)^\prime a^2 \sin^2\theta | T\Psi|^2 \nonumber \\
& \ \ -\frac{1}{2}y \frac{r a}{\left(r^2+a^2\right)^2} \frac{\Delta}{r^2+a^2} {\rm Re} \left(\Phi \Psi \left(L+\underline{L}\right)\overline{\Psi} \right)  - 2sa\cos \theta (w y)^\prime \textrm{Im} \left(\Psi T\overline{\Psi}\right).
\end{align}

\paragraph{The $h$ identity.}
Multiplying (\ref{RWtypeinthebulk}) by $h\overline{\Psi}$ for a smooth radial function $h$ and taking real parts leads to (use the formulae of Appendix \ref{sec:compL})
\begin{align}
\left(L+\underline{L}\right) \Big\{F^{h}_{L+\underline{L}}\Big\} +\left(L-\underline{L}\right) \Big\{F^{h}_{L-\underline{L}}  \Big\} + I^h \equiv {\rm Re} \left(-\left(\mathcal{J}^{[s]} + \mathfrak{G}^{[s]}\right) h \overline{\Psi}\right)
\end{align}
where $\equiv$ denotes equality after integration with respect to the measure $\sin \theta d\theta d\phi$ and
\begin{align}
F^h_{L+\underline{L}} &= \frac{1}{4} {\rm Re} \Big\{   \left(L + \underline{L}\right) \Psi h \overline{\Psi}  -2w a^2 \sin^2\theta T\Psi h\overline{\Psi}\Big\} \nonumber \\
F^h_{L-\underline{L}} &=
\frac{1}{4} \textrm{Re} \Big\{ -\left(L-\underline{L}\right)  \Psi h \overline{\Psi} + h^\prime |\Psi|^2   \Big\} 
\nonumber \\
I^h&= \frac{h}{4} \Big[  |(L-\underline{L})\Psi|^2 -  |(L+\underline{L})\Psi|^2 +4  w  |\mathring{\slashed{\nabla}}^{[s]} \Psi |^2 \Big] + \Big[wh \left(s^2 -\frac{6Mr}{r^2+a^2} \frac{r^2-a^2}{r^2+a^2}  - \frac{7a^2\Delta}{(r^2+a^2)^2} \right)- \frac{h^{\prime \prime}}{2}\Big] |\Psi|^2  \nonumber \\
& \ \ \  \  + 2w a h {\rm Re} \left((T  \Psi)  (\Phi \overline{\Psi}
 )\right) +  w a^2 \sin^2\theta h |T\Psi|^2 -2sa\cos \theta h w \textrm{Im} \left(T\Psi \overline{\Psi}\right) .
\end{align}

\paragraph{The $f$ identity.}
Adding the $y$-identity with $y=f$ 
and the $h$-identity with $h=f^\prime$ for $f$ a smooth radial function yields the identity (recall (\ref{convert}))
\begin{align}
\label{fidentityhere}
\left(L+\underline{L}\right) \Big\{F^{f}_{L+\underline{L}}\Big\} +\left(L-\underline{L}\right) \Big\{F^{f}_{L-\underline{L}}  \Big\}  + I^{f} \equiv {\rm Re} \left(-\left(\mathcal{J}^{[s]} + \mathfrak{G}^{[s]}\right) \left(f^\prime \overline{\Psi} + 2f\overline{\Psi}^\prime\right) \right)
\end{align}
where $\equiv$ denotes equality after integration with respect to the measure $\sin \theta d\theta d\phi$ and
\begin{align} \label{Xf}
F^{f}_{L+\underline{L}} &= \frac{1}{4} {\rm Re} \Big\{  f \left(L + \underline{L}\right) \Psi \left( (L-\underline{L})\overline{\Psi}\right) + 2waf \Phi \Psi \underline{L} \overline{\Psi} -2waf \Phi \Psi L \overline{\Psi} -2w a^2 \sin^2\theta T\Psi \left(f (L-\underline{L})\overline{\Psi}\right) \nonumber \\
\ \ \ \ & \qquad \ \ \ \ \  \  \left(L + \underline{L}\right) \Psi f^\prime \overline{\Psi}  -2w a^2 \sin^2\theta T\Psi f^\prime\overline{\Psi}-4sa\cos\theta wf \textrm{Im} \left( \Psi \overline{\Psi}^\prime \right) \Big\}, 
\nonumber \\
F^{f}_{L-\underline{L}} &=  \frac{1}{4} \Big\{ -\frac{f}{2}   \left| (L+\underline{L})\Psi\right|^2 -\frac{f}{2}   \left| (L-\underline{L})\Psi\right|^2 + 2w f |\mathring{\slashed{\nabla}}^{[s]} \Psi |^2 +2w f\left[s^2 -\frac{6Mr}{r^2+a^2} \frac{r^2-a^2}{r^2+a^2} - \frac{7a^2\Delta}{(r^2+a^2)^2}  \right] |\Psi|^2\nonumber \\
  & \qquad  \  + 4waf {\rm Re} \left( \Phi \Psi T \overline{\Psi} \right)+2w f a^2 \sin^2\theta |T\Psi|^2 
  -f^\prime \textrm{Re} \left(\left(L-\underline{L}\right)  \Psi  \overline{\Psi}\right) + f^{\prime \prime} |\Psi|^2 + 4sa \cos \theta w f \textrm{Im} \left(\Psi T\overline{\Psi}\right)  \Big\},
\nonumber \\
I^{f} &= \frac{f^\prime}{4} \left[  + 2\left( (L-\underline{L})\Psi\right)^2 \right] - w^\prime f |\mathring{\slashed{\nabla}}^{[s]} \Psi |^2 + \left[- f \left(w\left[s^2 -\frac{6Mr}{r^2+a^2} \frac{r^2-a^2}{r^2+a^2} - \frac{7a^2\Delta}{(r^2+a^2)^2}  \right] \right)^\prime - \frac{f^{\prime \prime \prime}}{2} \right] |\Psi|^2        \nonumber \\
& \ \ +2\left( \frac{a^2}{r^2+a^2}  w f \right)^\prime |\Phi\Psi|^2 + \frac{4ra^2}{(r^2+a^2)^2} \frac{\Delta}{r^2+a^2} w f |\Phi \Psi|^2 \nonumber \\
& \ \  - a \left[ L \left(wf\right) \right] {\rm Re} \left((\Phi\Psi)(\underline{L} \overline{\Psi}) \right)+ a \left[ \underline{L} \left(wf\right) \right] {\rm Re} \left((\Phi\Psi)({L}\overline{\Psi})\right) - (fw)^\prime a^2 \sin^2\theta | T\Psi|^2 \nonumber \\
& \ \ -\frac{1}{2}f \frac{r a}{\left(r^2+a^2\right)^2} \frac{\Delta}{r^2+a^2} {\rm Re} \left(\Phi \Psi \left(L+\underline{L}\right)\overline{\Psi} \right)  - 2sa\cos \theta f w^\prime \textrm{Im} \left(\Psi T\overline{\Psi}\right) \nonumber \\
& \ \ + 2w a f^\prime {\rm Re} \left((T  \Psi)  (\Phi \overline{\Psi}
 )\right) +  w a^2 \sin^2\theta f^\prime |T\Psi|^2.
\end{align}

\paragraph{The $r^p$-weighted identity.}
We multiply (\ref{RWtypeinthebulk}) by $r^p \beta_4 \xi L\overline{\Psi}$ with $\beta_4=1+4\frac{M}{r}$ and $\xi$ a smooth radial cut-off satisfying $\xi=0$ for $r \leq R$ and $\xi=1$ for $r \geq R+M$ with $R$ is chosen directly below (\ref{ipb}) depending only on $M$.  After taking the real parts of the resulting identity we obtain (use the formulae of Appendix \ref{sec:comprp})
\begin{align} \label{rpphysspaceid}
\underline{L} \big\{F^{r^p}_{\underline{L}}\big\} +L \big\{F^{r^p}_{L}  \big\}+ I^{r^p} \equiv {\rm Re} \left(-\left(\mathcal{J}^{[s]} + \mathfrak{G}^{[s]}\right) r^p \beta_4 \xi L\overline{\Psi}\right)
\end{align}
where $\equiv$ denotes equality after integration with respect to the measure $\sin \theta d\theta d\phi$ and
\begin{align}
F^{r^p}_{\underline{L}} 
&=   \frac{1}{2}\xi r^p \beta_4 |{L}\Psi|^2 + \frac{1}{2}\textrm{Re} \left(a \xi w r^p \beta_4 \Phi \Psi {L} \overline{\Psi}\right) + \frac{1}{2}w a^2 \sin^2\theta r^p \beta_4 \xi \textrm{Re} \left( T\Psi {L}\overline{\Psi}\right) ,
 \\
 F^{r^p}_{L}  &=
\frac{1}{2}w \xi r^p \beta_4 |\mathring{\slashed{\nabla}}^{[s]} \Psi |^2 + \frac{1}{2} \left(w r^p \beta_4 \xi \left[s^2 -\frac{6Mr}{r^2+a^2} \frac{r^2-a^2}{r^2+a^2} - \frac{7a^2\Delta}{(r^2+a^2)^2}  \right] \right) |\Psi|^2 + \frac{a^2 w r^p}{r^2+a^2}  \xi  \beta_4 |\Phi\Psi|^2 \nonumber \\
& \ \ \ \ \ \ \ \ \ \  -\frac{1}{2} \textrm{Re} \left(a \xi w r^p \beta_4 \Phi \Psi \underline{L}\overline{\Psi}\right) - \frac{1}{2} w a^2 \sin^2\theta \xi  r^p\beta_4  |T\Psi|^2  + \frac{1}{2}w a^2 \sin^2\theta r^p \beta_4 \xi \textrm{Re} \left( T\Psi {L}\overline{\Psi}\right),
\\
 \label{ipb}
I^{r^p} &=+\frac{1}{2} \left( \xi \left(pr^{p-1} + \mathcal{O}\left(r^{p-2}\right)\right) + \xi^\prime r^p \beta_4 \right) |{L}\Psi|^2 \\
& \ \ \ +\frac{1}{2}\left(\xi \left[\frac{\left(2-p\right)}{r^{3-p}} + \frac{(3-p)2M}{r^{4-p}} + \mathcal{O}\left(r^{p-5}\right) \right]+  \frac{\xi^\prime w}{r^{-p}}\right) |\mathring{\slashed{\nabla}}^{[s]}  \Psi |^2  \nonumber \\
& \ \ \ - \frac{1}{2} \left(w r^p \beta_4 \xi \left[s^2 -\frac{6Mr}{r^2+a^2} \frac{r^2-a^2}{r^2+a^2} - \frac{7a^2\Delta}{(r^2+a^2)^2}  \right] \right)^\prime |\Psi|^2 \nonumber \\
& \ \ \ - \frac{2r a \xi r^p \beta_4w}{r^2+a^2}  \textrm{Re} \left( \Phi \Psi {L}\overline{\Psi}\right)-  \left( \frac{a^2}{r^2+a^2}  \xi w r^p \beta_4 \right)^\prime |\Phi\Psi|^2 + \frac{1}{2} \textrm{Re} \left( \left(a \xi w r^p \beta_4\right)^\prime \Phi \Psi \left( {L} + \underline{L}\right) \overline{\Psi}  \right) \nonumber \\ 
&\ \ \ + \frac{1}{2} \textrm{Re} \left(a\xi w r^p \beta_4 \Phi \Psi \left[L, \underline{L} \right] \overline{\Psi}\right) +\frac{1}{2} \left(\xi \left(w r^p \beta_4\right)^\prime + \xi^\prime w r^p \beta_4 \right) a^2 \sin^2\theta | T\Psi|^2 -  2sa \cos \theta w r^p \beta_4 \xi \textrm{Im} \left(T\Psi L \overline{\Psi}\right). \nonumber
\end{align}
It is easy to see that we can choose $R$ in the cut-off function such that the coefficients of $ |{L}\Psi|^2$, $|\mathring{\slashed{\nabla}}^{[s]}  \Psi |^2$ and $|\Psi|^2$ in (\ref{ipb}) are all non-negative in $r\geq R+M$ for $p \in \left[0,2\right]$ and we henceforth make that choice.

\begin{remark}{(Conversion into divergence identities)} \label{rem:convert}
To convert the identities derived in this section into proper spacetime divergence identities (from which the boundary contributions, etc.,~are most easily assessed) we recall the identities (\ref{siid}).
Since the left hand side of any multiplier identity above has the schematic form
\[
L \big\{ F_L\big\} + \underline{L} \big\{ F_{\underline{L}} \big\} 
+ I + \mathcal{E} = RHS \, 
\]
with $\int \mathcal{E} \sin \theta d\theta d\phi=0$, we can use (\ref{siid}) to convert them into the divergence form 
\[
\nabla_a \left(L^a f \frac{1}{\rho^2} \frac{r^2+a^2}{\Delta} F_L + \underline{L}^a \frac{1}{\rho^2} \frac{r^2+a^2}{\Delta} F_{\underline{L}} \right) + I \frac{1}{\rho^2} \frac{r^2+a^2}{\Delta}+ \mathcal{E} \frac{1}{\rho^2} \frac{r^2+a^2}{\Delta} = RHS \frac{1}{\rho^2} \frac{r^2+a^2}{\Delta} \, .
\]
This is easily integrated using Stokes' theorem and making use of the formulae (\ref{innp}) and (\ref{innp2}) for the normals to the spacelike hypersurfaces (and the horizon and null infinity). Therefore it is the above identity which provides the precise sense in which the $F$'s in the identities indeed 
correspond to
boundary terms. Note the term involving $\mathcal{E}$  disappears after integration with
respect to the spacetime volume form (\ref{volform}).
\end{remark}

\subsubsection{Proof of Proposition~\ref{multpropnofreq}}

We define (cf.~\cite{holzstabofschw})
\begin{equation}
\label{newdefshere1}
f_0 = \left(1-\frac{3M}r\right)\left(1+\frac{M}r\right),
\end{equation}
and
\begin{equation}
\label{newdefshere2}
y_0=\delta_1 (f_0- \delta_1 \tilde\chi (r) r^{-\eta})
\end{equation}
where $\tilde\chi$ is a cutoff function such that $\tilde\chi=0$
for $r\le 9M$ and $\tilde\chi=1$ for $r\ge 10M$.  
We note the following
Schwarzschild proposition
\begin{proposition}[\cite{holzstabofschw}]
\label{coerc.in.schw}
In the Schwarzschild case $a=0$, then
\[
r^{-2} |(L-\bar{L}) \Psi|^2+ (1-3M/r)^2r^{-3} |\mathring{\slashed{\nabla}}^{[s]} \Psi |^2   
+r^{-3} |\Psi|^2 \lesssim 
I^{f_0}.
\]
As a consequence, for $\delta_1$ and $\delta_2$ sufficiently small and arbitrary $E$
we have
\begin{align*}
&(r^{-2}+\delta_1^2 r^{-1-\eta}) |(L-\bar{L}) \Psi|^2 +(1-3M/r)^2\delta_1^2 r^{-1-\eta} |(L+\bar{L}) \Psi|^2 
+\delta_2 \xi r^{\eta-1}|L\psi|^2  
+ (1-3M/r)^2r^{-3} |\mathring{\slashed{\nabla}}^{[s]} \Psi |^2\\
&\qquad\qquad +\delta_2 r^{\eta-1}|\Psi |^2
\lesssim  I^{f_0}+I^{y_0}+I^E +\delta_2 I^{r^\eta}.
\end{align*}
Note that in view of Remark~\ref{rem:convert}, 
upon application of the divergence theorem,
the left hand side leads to a term
which controls the integrand of $\mathbb I^{\rm deg}_\eta$.
\end{proposition}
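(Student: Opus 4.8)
The plan is to reduce the whole statement to the Schwarzschild Morawetz estimate of \cite{holzstabofschw}: at $a=0$ every $a$-dependent term in the multiplier identities of Section~\ref{sec:multid} vanishes and $\mathfrak{R}^{[s]}$ becomes exactly the Regge--Wheeler operator of $(\ref{RWinthebulkps})$. First I would specialise the $f$-identity $(\ref{fidentityhere})$--$(\ref{Xf})$ to $a=0$ with $f=f_0$, using $w=\Omega^2/r^2$, $[L,\underline{L}]=0$, and the facts that $s^2=4$ and that $(\ref{laplacef})$ renders the angular part as $w\,|\mathring{\slashed{\nabla}}^{[s]}\Psi|^2$ with effective scalar potential $V=\Omega^2(4/r^2-6M/r^3)$, so the bulk collapses to (primes denoting $d/dr^*$)
\[
I^{f_0}=\tfrac12 f_0'\,|(\underline{L}-L)\Psi|^2-w'f_0\,|\mathring{\slashed{\nabla}}^{[s]}\Psi|^2+\Big(-f_0\,V'-\tfrac12 f_0'''\Big)|\Psi|^2 .
\]
The three coefficients are then checked directly: $f_0'=\Omega^2\tfrac{2M}{r^2}(1+\tfrac{3M}{r})>0$, bounded below near $r=3M$ and $\gtrsim M r^{-2}$ at infinity; $-w'f_0=\dfrac{2\Omega^2(r-3M)^2(r+M)}{r^6}\gtrsim (1-3M/r)^2 r^{-3}$, vanishing only at the photon sphere; and the zeroth-order coefficient is $\gtrsim r^{-3}$ with \emph{no} degeneracy on $r>2M$ (e.g.\ at $r=3M$ it equals $-\tfrac12 f_0'''(3M)>0$). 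This last point is the crux: it fails for naive $f$, and it is precisely why $f_0$ carries the factor $(1+M/r)$ and why the $f$-identity already incorporates the lower-order multiplier $h=f_0'$. Since this is exactly the computation of \cite{holzstabofschw}, I would invoke it; combining the three terms yields the first displayed inequality.

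For the consequence I would add to $I^{f_0}$ the bulk terms of three further multipliers, all governed by small or fixed parameters. The $y$-identity $(\ref{yidentform})$ with $y=y_0=\delta_1(f_0-\delta_1\tilde\chi\,r^{-\eta})$ contributes the leading term $\tfrac14 y_0'\big(|(L+\underline{L})\Psi|^2+|(L-\underline{L})\Psi|^2\big)$; since $y_0'=\delta_1 f_0'+\delta_1^2\Omega^2(\eta\,\tilde\chi\,r^{-\eta-1}-\tilde\chi'r^{-\eta})$, this supplies the non-degenerate weights $\delta_1^2 r^{-1-\eta}|(L\pm\underline{L})\Psi|^2$ at infinity and, together with the $f_0'$-term above, the stated control of $(L+\underline{L})\Psi$ up to the harmless extra factor $(1-3M/r)^2$. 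The $r^p$-identity $(\ref{rpphysspaceid})$ with $p=\eta$, scaled by $\delta_2$, contributes via $I^{r^\eta}$ the asserted weights near infinity (the relevant coefficients were already noted non-negative for $r\ge R+M$ below $(\ref{ipb})$, with a Hardy inequality turning the $r^{\eta-1}|\partial_{r^*}\Psi|^2$ gain into the $|\Psi|^2$ weight). Finally $I^E$, the bulk of a further radial multiplier $(\ref{yidentform})$ taken with the large constant $E$, restores control near $\mathcal{H}^+$, where the $f_0$-bulk is exponentially weak in $r^*$, and dominates the $f_0$--$y_0$ cross terms. The error terms so generated — the $\delta_1$-multiple of the bounded bad terms of the $y_0$-current near $r=3M$, the $O(\delta_1^2 r^{-3-\eta})$ bad angular/zeroth-order terms of the $\tilde\chi r^{-\eta}$ piece at infinity, and the $\delta_2$-multiple of the lower-weight bad terms of the $r^\eta$-current near $r=3M$ — are absorbed into the coercive terms of $I^{f_0}$ by fixing $E$ large and then $\delta_1,\delta_2$ small.

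The main obstacle is exactly the non-degeneracy of the zeroth-order term in $I^{f_0}$: for Regge--Wheeler this is delicate and rests on the precise algebra $V=\Omega^2(4/r^2-6M/r^3)$ together with the specific $f_0$ and the built-in $h=f_0'$ correction; I would simply cite the corresponding computation of \cite{holzstabofschw}. Everything else — the $a=0$ reduction, the weight bookkeeping near $\mathcal{H}^+$ and $\mathcal{I}^+$, and the smallness absorption — is routine. The closing remark of the proposition then follows by matching the coercive left-hand side against the definition $(\ref{nondegPsi})$ with $p=\eta$, using that the cut-off $\chi$ there is supported away from the photon sphere so that the factor $(1-3M/r)^2$ is harmless on its support, and invoking Remark~\ref{rem:convert} to recognise the $F$-terms as genuine boundary fluxes on $\widetilde{\Sigma}_\tau$, $\mathcal{H}^+$ and $\mathcal{I}^+$.
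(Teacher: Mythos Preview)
The paper does not supply its own proof of this proposition: the bracketed citation in the heading indicates it is quoted from \cite{holzstabofschw} and used as a black box in the proof of Proposition~\ref{multpropnofreq}. Your strategy of specialising the multiplier identities to $a=0$ and invoking \cite{holzstabofschw} for the delicate coercivity of the zeroth-order coefficient $-f_0V'-\tfrac12 f_0'''$ is therefore exactly the right approach for the first inequality.

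There is, however, a genuine error in your handling of the second inequality, specifically the term $I^E$. You describe it as ``the bulk of a further radial multiplier $(\ref{yidentform})$ taken with the large constant $E$'' and claim it ``restores control near $\mathcal{H}^+$'' and ``dominates the $f_0$--$y_0$ cross terms''. This is a misidentification. In the paper's construction (see the enumerated list in the proof of Proposition~\ref{multpropnofreq}), the constant $E$ multiplies the $T+\upomega_+\chi\Phi$ identity $(\ref{Hawkinident})$, not any radial multiplier. At $a=0$ one has $\upomega_+=a/(2Mr_+)=0$, and from $(\ref{tidi})$ the bulk term $I^{T+\upomega_+\chi\Phi}=\tfrac12\upomega_+\chi'\,{\rm Re}\big((L-\underline{L})\Psi\cdot\Phi\overline{\Psi}\big)$ vanishes identically. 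Thus $I^E\equiv 0$ in Schwarzschild --- which is precisely why the proposition says the estimate holds for \emph{arbitrary} $E$, a feature your reading cannot explain. Moreover, at $a=0$ there are no cross terms at all in $I^{f_0}$ or $I^{y_0}$ (inspect $(\ref{Xf})$ and $(\ref{yidi})$: every surviving bulk term is a pure square). The role of $E$ surfaces only in the Kerr perturbation, where it secures positivity of the boundary fluxes (items~2--4 in the proof of Proposition~\ref{multpropnofreq}); it contributes nothing to bulk coercivity at $a=0$.

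The near-horizon degeneration $f_0'\sim\Omega^2$ that prompted your concern is real but is not resolved by any added multiplier; it is absorbed by the measure factor $\tfrac{r^2+a^2}{\rho^2\Delta}\sim\Omega^{-2}$ of Remark~\ref{rem:convert} when one passes to the spacetime integral, and in the application the near-horizon coercivity of $I^f+I^y$ is in any case verified by direct inspection as item~(d) in the proof of Proposition~\ref{multpropnofreq}.
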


Returning  to the Kerr case, we add
\begin{enumerate}
\item
the $f$-identity $(\ref{fidentityhere})$ applied with $f=f_0$, 
\item
the $y$-identity $(\ref{yidentform})$ applied with
$y=y_0$, 
\item
$E$ times the $T+\upomega_+\chi\Phi$ identity $(\ref{Hawkinident})$ 
\item
$\delta_2$ times the $r^{\eta}$ identity $(\ref{rpphysspaceid})$
\end{enumerate}
integrated in the region 
\[
\widetilde{\mathcal{R}}^{\rm away}(\tau_1,\tau_2)=\widetilde{\mathcal{R}}(\tau_1,\tau_2)\setminus \{A_1\le r\le A_2\}
\]
with respect to the spacetime volume form, and
apply Remark~\ref{rem:convert}.
We always will assume $E>1$
and $\delta_1<1$, $\delta_2<1$.

We have:
\begin{enumerate}
\item
Given any $E>1$, and sufficiently small $\delta_1$, $\delta_2$, then 
for $|a|<a_0\ll M$ sufficiently small, 
the resulting bulk term is nonnegative and in fact satisfies the coercitivity estimate
\begin{equation}
\label{fundcoerci}
\int_{\widetilde{\mathcal{R}}^{\rm away}(\tau_1,\tau_2)} \left( I^{f} +I^{y}+ EI^{T+\upomega_+\chi\Phi}+\delta_2 I^{r^{\eta}}\right)
 \frac{1}{\rho^2} \frac{r^2+a^2}{\Delta} dVol
\gtrsim_{\delta_1,\delta_2} \mathbb{I}^{\rm away}_{\eta} \left[\Psi^{[\pm2]}\right](\tau_1,\tau_2).
\end{equation}
This follows from (a)
Proposition~\ref{coerc.in.schw}, (b) smooth dependence on $a$
to infer coercivity away form the horizon and away from infinity,
(c)
the fact that for all $a$, the term  $I^{r^\eta}$  manifestly controls
the integrand  of $\mathbb{I}^{\rm away}_\eta$ for large $r$,
(d)  the fact that by direct inspection, for sufficiently
small $|a|<a_0\ll M$, the term $I^f+I^y$ 
controls the  integrand of $\mathbb{I}^{\rm away}_{\eta}$
near the horizon.
\item
For sufficiently large $E>1$, then for all $\delta_1<1$, $\delta_2<1$ the total flux terms
on $\mathcal{H}^+$ and $\mathcal{I}^+$ are nonnegative.
This follows from Remark~\ref{rem:convert}
and direct inspection of the boundary terms $F$ thus generated, together with
the relations concerning the volume form given in Section~\ref{volumeformrelations}.
(To avoid appealing to the fact that the flux to $\mathcal{I}^+$ is well defined,
we may argue as follows: The identity can be applied in a region
bounded by  a finite ingoing null boundary,
making the region of integration compact. The flux term on this boundary
is manifestly nonnegative by the choice of the multipliers. One then takes
this null boundary to the limit.)
\item
Again, by Remark~\ref{rem:convert}, inspection and the relations of Section~\ref{volumeformrelations},
it follows that for sufficiently large $E>1$, then
for all $\delta_1<1$, $\delta_2<1$,
 the 
arising flux term on $\tilde{t}^*=\tau_2$ controls
the energy 
$\mathbb{E}^{\rm away}_{\widetilde{\Sigma}_\tau,\eta}\left[\Psi^{[\pm2]}\right](\tau_2)$
with a uniform constant. 
\item
Similarly, for sufficiently large $E>1$, then for all $\delta_1<1$, $\delta_2<1$, the
 initial flux term on $\tilde{t}^*=\tau_1$ is controlled
by the energy 
 $\mathbb{E}^{\rm away}_{\widetilde{\Sigma}_\tau,\eta}\left[\Psi^{[\pm2]}\right](\tau_1)$,
with a constant depending on $E$.
 \item
 The remaining flux terms on $r=A_1$ and $r=A_2$ produce exactly the expression
 \[
\mathbb{Q}_{r=A_2}\left[\Psi^{[\pm2]}\right](\tau_1,\tau_2) -
 \mathbb{Q}_{r=A_1}\left[\Psi^{[\pm2]}\right](\tau_1,\tau_2)
\]
where (recalling (\ref{tidi}), (\ref{yidi}) and (\ref{Xf}))
\begin{equation}
\label{blackboardQdef}
\mathbb{Q}_{r=A_i}(\tau_1,\tau_2)
= \int_0^{\tau_2} dt \int_0^\pi d\theta \sin \theta \int_0^{2\pi} d\phi \Big\{ 2F^{f_0}_{L-\underline{L}} +2F^{y_0}_{L-\underline{L}} + 2F^{T+\omega_+\chi\Phi}_{L-\underline{L}} \Big\}.
\end{equation}
This again follows from Remark~\ref{rem:convert}: In $\left(t,r^*,\theta,\phi\right)$-coordinates we have that $\frac{1}{\sqrt{g_{r^* r^*}}} \partial_{r^*}$ is the unit normal to constant $r^*$ hypersurfaces and $\rho^2 \frac{1}{\sqrt{g_{r^* r^*}}} \frac{\Delta}{r^2+a^2} \sin \theta d\theta d\phi dt$ is the induced volume element. Using that $2\partial_{r^*} = L - \underline{L}$ and that $\partial_{r^*}$ is orthogonal to $L+\underline{L}$ the result follows. Observe that there is no contribution from $F^{r^\eta}$ in (\ref{blackboardQdef}) because that multiplier is supported away from $A_2$.
\item
The inhomogeneous term involving $\mathfrak{G}^{[\pm2]}$ 
on the right hand side of $(\ref{RWtypeinthebulk})$ generates the term
\begin{equation}
\label{thisinhomodef}
\mathfrak{H}^{\rm away}[\Psi^{[\pm2]}](\tau_1,\tau_2)=
\int_{\widetilde{\mathcal{R}}^{\rm away}(\tau_1,\tau_2)}
\mathfrak{G}^{[\pm 2]}\cdot (f,y,E, \delta_1,\delta_2) \, dVol 
\end{equation}
where (recall again Remark \ref{rem:convert})
\begin{align}
\mathfrak{G}^{[\pm 2]}\cdot (f,y,E, \delta_1,\delta_2) \doteq \frac{r^2+a^2}{\rho^2\Delta} \Big\{ E\cdot  {\rm Re}\left(\left(-\left(T+\omega_+ \chi \Phi\right)\overline{\Psi}\right) \mathfrak{G}^{[\pm2]} \right)  + {\rm Re} \left(-  \left(f_0^\prime \overline{\Psi} + 2f_0\overline{\Psi}^\prime\right)\mathfrak{G}^{[\pm2]} \right) \nonumber \\
+\delta_1{\rm Re}\left( \left(- 2f_0 \overline{\Psi}^\prime\right)\mathfrak{G}^{[\pm2]} \right)+\delta_2\cdot {\rm Re} \left(- \left(r^\eta \beta_k \xi L\overline{\Psi}\right) \mathfrak{G}^{[\pm2]}\right)
\Big\} . \nonumber
\end{align}
 \item
By Cauchy--Schwarz,
the term generated by the inhomogeneous term involving
$\mathcal{J}^{[\pm2]}$ on the right hand side of $(\ref{RWtypeinthebulk})$
can be bounded (with a constant depending on $E$) by the expression
\begin{align*}
|a|\, \mathbb{I}^{\rm left}_{[\eta]}\left[\uppsi^{[\pm2]}\right](\tau_1,\tau_2)
+|a|\, \mathbb{I}^{\rm right}_{[\eta]}\left[\uppsi^{[\pm2]}\right](\tau_1,\tau_2)\\
+|a|\, \mathbb{I}^{\rm left}_{[\eta]}\left[\upalpha^{[\pm2]}\right](\tau_1,\tau_2)
+|a|\, \mathbb{I}^{\rm right}_{[\eta]}\left[\upalpha^{[\pm2]}\right](\tau_1,\tau_2)
+
|a| \mathbb{I}^{\rm away}_{\eta}\left[\Psi^{[\pm2]}\right](\tau_1,\tau_2) \, ,
\end{align*}
with the subindex $[\eta]=\eta$ in case of $+2$, and $[\eta]$ being dropped entirely in case of $s=-2$. Note that the last term can be absorbed in view of $(\ref{fundcoerci})$, for sufficiently
small $|a|<a_0\ll M$.
 \end{enumerate}
 Thus, for $E$ sufficiently large, and $\delta_1$, $\delta_2$ sufficiently small,
 one obtains immediately
the statement of Proposition~\ref{multpropnofreq}.

 {\bf In what follows, we will now consider $E$ as fixed in terms of $M$, and thus
 incorporate the $E$ dependence into the $\lesssim$, etc.}
  We will further constrain $\delta_1$ and $\delta_2$ 
in Section~\ref{multestsec} and thus we will continue to denote explicitly
dependence of constants on $\delta_1$, $\delta_2$.

\subsection{Transport estimates for $\uppsi^{[\pm2]}$ and $\alpha^{[\pm2]}$}
\label{condtranestsec}

For transport estimates, it is natural to consider the spin $\pm2$ cases separately.

\subsubsection{Transport estimates for $\uppsi^{[+2]}$ and $\alpha^{[+2]}$}

\begin{proposition} \label{transportplusnof}
Let $\upalpha^{[+2]}$ be as in Proposition~\ref{follfundprop},
and $\uppsi^{[+2]}$, $\Psi^{[+2]}$ be as defined in (\ref{Pdefsbulkplus2}),
(\ref{officdeflittlepsiplus}).
Then we have for any $p \in \{\eta,1,2\}$ the following estimate
in $\widetilde{\mathcal{R}}^{\rm right}(\tau_1,\tau_2)$: 
\begin{align} \label{weightedtransportright}
 &  \ \ \ \mathbb{E}^{\rm right}_{\widetilde\Sigma_{\tau},p} \left[\upalpha^{[+2]}\right](\tau_2)
+\mathbb{I}_p^{\rm right} \left[\upalpha^{[+2]}\right]  \left(\tau_1,\tau_2 \right)
+ \mathbb{E}_{r=A_2} \left[\upalpha^{[+2]}\right] \left(\tau_1,\tau_2\right) \nonumber \\
&+\mathbb{E}^{\rm right}_{\widetilde\Sigma_{\tau},p} \left[\uppsi^{[+2]}\right](\tau_2) 
+\mathbb{I}_p^{\rm right} \left[\uppsi^{[+2]}\right]  \left(\tau_1,\tau_2 \right) 
+\mathbb{E}_{r=A_2} \left[\uppsi^{[+2]}\right] \left(\tau_1,\tau_2\right) \nonumber \\
&\qquad \lesssim
 \mathbb{I}^{\rm away}_{p}\left[\Psi^{[+2]}\right](\tau_1, \tau_2)
 +
\mathbb{E}^{\rm right}_{\widetilde\Sigma_{\tau},p} \left[\upalpha^{[+2]}\right](\tau_1)
+\mathbb{E}^{\rm right}_{\widetilde\Sigma_{\tau},p} \left[\uppsi^{[+2]}\right](\tau_1)
\end{align}
and the following estimate in $\widetilde{\mathcal{R}}^{\rm left}(\tau_1,\tau_2)$:
\begin{align}\label{weightedtransport.left}
\nonumber
 &  \ \ \ \mathbb{E}^{\rm left}_{\widetilde\Sigma_{\tau},p} \left[\upalpha^{[+2]}\right](\tau_2)
+\mathbb{I}_p^{\rm left} \left[\upalpha^{[+2]}\right]  \left(\tau_1,\tau_2 \right)
+ \mathbb{E}_{\mathcal{H}^+} \left[\upalpha^{[+2]}\right] \left(\tau_1,\tau_2\right) \\
\nonumber
&+\mathbb{E}^{\rm left}_{\widetilde\Sigma_{\tau},p} \left[\uppsi^{[+2]}\right](\tau_2) 
+\mathbb{I}_p^{\rm left} \left[\uppsi^{[+2]}\right]  \left(\tau_1,\tau_2 \right) 
+\mathbb{E}_{\mathcal{H}^+} \left[\uppsi^{[+2]}\right] \left(\tau_1,\tau_2\right) \\
\nonumber
&\qquad \lesssim
 \mathbb{I}^{\rm away}_{ p}\left[\Psi^{[+2]}\right](\tau_1, \tau_2)
 +
\mathbb{E}^{\rm left}_{\widetilde\Sigma_{\tau},p} \left[\upalpha^{[+2]}\right](\tau_1)
+\mathbb{E}^{\rm left}_{\widetilde\Sigma_{\tau},p} \left[\uppsi^{[+2]}\right](\tau_1) \nonumber \\
&\qquad \ \ +  \mathbb{E}_{r=A_1} \left[\upalpha^{[+2]}\right] \left(\tau_1,\tau_2\right) +  \mathbb{E}_{r=A_1} \left[\uppsi^{[+2]}\right] \left(\tau_1,\tau_2\right).
\end{align}
\end{proposition}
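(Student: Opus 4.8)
The plan is to integrate the defining transport relations for $\uppsi^{[+2]}$ and $\upalpha^{[+2]}$ along the $\underline{L}$-direction in the right region $r\ge A_2$ and along an appropriate direction in the left region $r\le A_1$, using as a source the integrated energy for $\Psi^{[+2]}$ already controlled by the multiplier estimate of Proposition~\ref{multpropnofreq}. Recall from $(\ref{auxrel1})$ that $\underline{L}^\mu\nabla_\mu(\sqrt\Delta\,\uppsi^{[+2]}) = \Delta(r^2+a^2)^{-2}\Psi^{[+2]}$, and from $(\ref{officdeflittlepsiplus})$ that $\underline{L}^\mu\nabla_\mu(\Delta^2(r^2+a^2)^{-3/2}\upalpha^{[+2]})$ is, up to the weight $-\tfrac12\Delta^{-3/2}(r^2+a^2)^2$, exactly $\uppsi^{[+2]}$. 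Thus one has a hierarchy: $\Psi^{[+2]}$ drives $\uppsi^{[+2]}$, which in turn drives $\upalpha^{[+2]}$. The strategy is the standard one for such transport systems (as in~\cite{holzstabofschw}): multiply the transport equation for the appropriately rescaled quantity by its complex conjugate (times a suitable $r$-weight, and also after commuting with $\Phi$ and $T$, since the energies $e_p$ contain these derivatives), take real parts, and integrate over $\widetilde{\mathcal{R}}^{\rm right}(\tau_1,\tau_2)$ or $\widetilde{\mathcal{R}}^{\rm left}(\tau_1,\tau_2)$ against the spacetime volume form.

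First I would carry out the $\uppsi^{[+2]}$ estimate. Writing the transport equation in the form $\underline{L}(\text{rescaled }\uppsi) = w\,\Psi^{[+2]} \cdot(\text{weight})$, I contract with $\overline{(\text{rescaled }\uppsi)}$ times $r^p$. Since $\underline{L} = -\partial_{r^*} + T + \frac{a}{r^2+a^2}\Phi$ and the $T,\Phi$ derivatives are tangent to $\widetilde\Sigma_\tau$ (with $T,\Phi$ Killing, so they commute with the transport operator up to lower order), the $\partial_{r^*}$-part gives, upon integration, the flux terms $\mathbb{E}^{\rm right}_{\widetilde\Sigma_\tau,p}[\uppsi^{[+2]}]$ on the slices together with $\mathbb{E}_{r=A_2}[\uppsi^{[+2]}]$ on the timelike boundary $r=A_2$, plus a bulk term of the form $\mathbb{I}^{\rm right}_p[\uppsi^{[+2]}]$ with a favorable sign provided the $r$-weight is chosen so that the derivative of the weight contributes positively (this is where the parameter $\eta>0$ and the choice $p\in\{\eta,1,2\}$ enter — the weight must decay towards infinity to generate the positive bulk, exactly as in the $r^p$ method). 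The right-hand side, after Cauchy--Schwarz, is absorbed: the $\Psi^{[+2]}$ term is bounded by $\mathbb{I}^{\rm away}_p[\Psi^{[+2]}]$ with a small remainder $\epsilon\,\mathbb{I}^{\rm right}_p[\uppsi^{[+2]}]$ absorbed into the left. Then I repeat verbatim with $\upalpha^{[+2]}$ in place of $\uppsi^{[+2]}$, using $(\ref{officdeflittlepsiplus})$ so that $\uppsi^{[+2]}$ plays the role of source; this yields $\mathbb{E}^{\rm right}_{\widetilde\Sigma_\tau,p}[\upalpha^{[+2]}]$, $\mathbb{I}^{\rm right}_p[\upalpha^{[+2]}]$, $\mathbb{E}_{r=A_2}[\upalpha^{[+2]}]$ bounded by $\mathbb{I}^{\rm right}_p[\uppsi^{[+2]}]$, hence ultimately by $\mathbb{I}^{\rm away}_p[\Psi^{[+2]}]$ and initial data. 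For the initial-data terms, integrating the transport from $\widetilde\Sigma_{\tau_1}$ produces $\mathbb{E}^{\rm right}_{\widetilde\Sigma_\tau,p}[\upalpha^{[+2]}](\tau_1)$ and $\mathbb{E}^{\rm right}_{\widetilde\Sigma_\tau,p}[\uppsi^{[+2]}](\tau_1)$ as stated.

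For the left region $r\le A_1$, the geometry at the horizon requires care. Here $\underline{L}$ degenerates at $\mathcal{H}^+$ (indeed $\underline{L}$ vanishes there), so it is $\Delta^{-1}\underline{L}$ that is the good regular transport vector field; the relations $(\ref{auxrel1})$ and $(\ref{officdeflittlepsiplus})$ are written precisely so that $\sqrt\Delta\,\uppsi^{[+2]}$ and $\Delta^2(r^2+a^2)^{-3/2}\upalpha^{[+2]}$ are the smooth quantities. Integrating the transport equation for these from $r=A_1$ inward to the horizon, the flux on $\mathcal{H}^+$ appears with a good sign (this is essentially the red-shift effect built into the weight structure), and one picks up the \emph{incoming} boundary term $\mathbb{E}_{r=A_1}[\upalpha^{[+2]}]$, $\mathbb{E}_{r=A_1}[\uppsi^{[+2]}]$ on the right-hand side — these are not yet controlled and must be supplied later by the frequency-localised analysis in $[A_1,A_2]$, which is exactly why the estimate is labelled \emph{conditional}. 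The $r$-weights near the horizon are bounded so the choice of $p$ is irrelevant there; the bulk positivity comes from the derivative of the $\Delta$-weights (as in the Schwarzschild case of~\cite{holzstabofschw}).

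The main obstacle I anticipate is bookkeeping the $r$-weights consistently across the three coupled quantities so that (i) each transport step produces a positive bulk term with the correct $p$-weight matching the definitions $(\ref{dens1})$--$(\ref{dens2})$ of $e_p$, and (ii) the source term at each level is controlled by the bulk quantity of the level above \emph{with the same weight}, so that the absorption closes without loss. In particular one must check that the weight $r^p$ on $e_p[\uppsi^{[+2]}]$ is compatible both with being a source (it must be dominated by $\mathbb{I}^{\rm away}_p[\Psi^{[+2]}]$) and with being driven (its own bulk must dominate the $\Psi^{[+2]}$-source). The fact that, per the footnote on the $[-2]$-energies, the $[+2]$-transport is genuinely in the $\underline{L}$-direction near infinity — transverse to $\widetilde\Sigma_\tau$ in the right sense per $(\ref{innp})$ — is what makes the slice fluxes and the spacetime bulk carry the \emph{same} weight, and getting all these exponents to line up is the only real work; the analytic content is entirely elementary (Fundamental Theorem of Calculus plus Cauchy--Schwarz) and local in nature, with no need for the smallness of $a$ here since the coupling to $\upalpha$ on the right-hand side of the Regge--Wheeler equation does not enter this purely transport-side estimate.
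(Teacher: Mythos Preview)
Your proposal is essentially the paper's proof: one multiplies the transport relations $(\ref{psirel})$--$(\ref{Prel})$ by $r^n$ times the complex conjugate of the rescaled quantity (and its $T$- and $\Phi$-commuted versions), takes real parts, and integrates using the divergence identity $(\ref{siid})$; the paper records the resulting pointwise inequalities as (idp1)--(idp2) and then chooses $n\in\{\eta,1,2-\eta\}$ for $\uppsi^{[+2]}$ and $n\in\{2+\eta,3,4-\eta\}$ for $\upalpha^{[+2]}$ (the shift of $2$ accounts for the extra factor of $(r^2+a^2)^{1/2}$ between the rescaled quantities appearing in $e_p[\upalpha^{[+2]}]$ versus the one in (idp1)). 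One correction to your reasoning: the weight $r^n$ must \emph{grow}, not decay, towards infinity --- since $\underline{L}(r^n)=-n r^{n-1}\Delta/(r^2+a^2)<0$ for $n>0$, it is exactly this sign that produces the positive bulk term $+\tfrac{n}{2}r^{n-1}\rho^{-2}|\cdot|^2$ in the paper's identities; with a decaying weight the bulk would have the wrong sign and the argument would fail.
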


\begin{proof}
We recall the relations
\begin{align}
-2 \frac{\Delta}{(r^2+a^2)^2} \sqrt{\Delta} \uppsi^{[+2]} &= \underline{L}^a \nabla_a \left(\Delta^2 \left(r^2+a^2\right)^{-\frac{3}{2}} \upalpha^{[+2]}\right) \label{psirel} \, ,
 \\
+\frac{\Delta}{(r^2+a^2)^2} \left(r^2+a^2\right)^{3/2} P^{[+2]} &= \underline{L}^a \nabla_a \left( \sqrt{\Delta} \uppsi^{[+2]}\right) \, .\label{Prel}
\end{align}
From (\ref{psirel}) we derive for $n\geq 0$
\begin{align}
\nabla_a \left(r^n \frac{1}{\rho^2}\frac{r^2+a^2}{\Delta}\underline{L}^a \Big|\upalpha^{[+2]} \Delta^2 \left(r^2+a^2\right)^{-\frac{3}{2}}\Big|^2\right) +n \frac{ r^{n-1}}{\rho^2} \Big|\upalpha^{[+2]} \Delta^2 \left(r^2+a^2\right)^{-\frac{3}{2}}\Big|^2 \nonumber \\
= -2 \frac{\left(r^2+a^2\right)^2}{\Delta \rho^2} w^\frac{3}{2} r^n  \left(\uppsi^{[+2]} \cdot \overline{ \upalpha^{[+2]} \Delta^2 \left(r^2+a^2\right)^{-\frac{3}{2}}} + \overline{\uppsi^{[+2]}} \cdot  \upalpha^{[+2]} \Delta^2 \left(r^2+a^2\right)^{-\frac{3}{2}}\right) \, ,
\end{align}
and hence
\begin{align} \label{idp1}
\nabla_a \left(r^n \frac{1}{\rho^2}\frac{r^2+a^2}{\Delta}\underline{L}^a \Big| \frac{\upalpha^{[+2]} \Delta^2}{ \left(r^2+a^2\right)^{\frac{3}{2}}}\Big|^2\right) +\frac{n}{2} \frac{ r^{n-1}}{\rho^2} \Big| \frac{\upalpha^{[+2]} \Delta^2}{ \left(r^2+a^2\right)^{\frac{3}{2}}}\Big|^2\leq C \frac{1}{\rho^2} \frac{r^{n+1}}{(r^2+a^2)^2} | \sqrt{\Delta}\uppsi^{[+2]}|^2 \, .
\end{align}
Moreover, the same estimate 
(\ref{idp1})
holds replacing $\upalpha^{[+2]}$ by $T \upalpha^{[+2]}$ ($\Phi \upalpha^{[+2]}$) on the left and $\uppsi^{[+2]}$ by $T \uppsi^{[+2]}$ ($\Phi \uppsi^{[+2]}$) on the right since the relation (\ref{psirel}) trivially commutes with the Killing fields $T$ and $\Phi$ respectively. We will refer to those estimates as the ``$T$-commuted and $\Phi$-commuted (\ref{idp1})'' below.

Similarly from (\ref{Prel}),
\begin{align} \label{idp2}
\nabla_a \left(r^n \frac{1}{\rho^2}\frac{r^2+a^2}{\Delta} \underline{L}^a |\uppsi^{[+2]} \sqrt{\Delta}|^2\right) +\frac{n}{2} \frac{ r^{n-1}}{\rho^2} |\uppsi^{[+2]} \sqrt{\Delta}|^2  \leq C_n \frac{1}{\rho^2} \frac{r^{n+1}}{\left(r^2+a^2\right)^2}  \Big|P^{[+2]} \left(r^2+a^2\right)^{\frac{3}{2}}\Big|^2  \, 
\end{align}
and the same estimate replacing $\uppsi^{[+2]}$ by $T \uppsi^{[+2]}$ ($\Phi \uppsi^{[+2]}$) on the left and $\Psi^{[+2]}$ by $T\Psi^{[+2]}$ ($\Phi \Psi^{[+2]}$) on the right. We again refer to the
latter as the ``$T$-commuted and $\Phi$-commuted (\ref{idp2})'' below.

Let us first obtain the estimate in $\widetilde{\mathcal{R}}^{\rm right}(\tau_1,\tau_2)$. 
The case in $\widetilde{\mathcal{R}}^{\rm left}(\tau_1,\tau_2)$
is analogous but easier since weights in $r$ do not play a role. We add
\begin{itemize}
\item (\ref{idp2})  
with $n \in \{\eta,1,2-\eta\}$
\item the $\Phi$-commuted (\ref{idp2}) with $n \in \{\eta,1,2-\eta\}$
\item the $T$-commuted (\ref{idp2}) with $n =2-\eta$
\end{itemize}
integrated over $\widetilde{\mathcal{R}}^{\rm right}(\tau_1,\tau_2)$. 
Combining the above we conclude 
 for $p \in \{\eta,1,2\}$ the estimate
\begin{align} \label{sety}
\mathbb{E}^{\rm right}_{\widetilde\Sigma_{\tau},p} \left[\uppsi^{[+2]}\right] \left(\tau_2\right) +\mathbb{E}_{r=A_2} \left[\uppsi^{[+2]}\right] \left(\tau_1,\tau_2\right) + \mathbb{I}^{\rm right}_p \left[\uppsi^{[+2]}\right]  \left(\tau_1,\tau_2\right) \nonumber \\
\lesssim \mathbb{I}^{\rm away}_p \left[\Psi^{[+2]}\right]  \left(\tau_1,\tau_2\right) + \mathbb{E}^{\rm right}_{\widetilde\Sigma_{\tau},p} \left[\uppsi^{[+2]}\right] \left(\tau_1\right)   \, .
\end{align}
Turning to the estimate (\ref{idp1}) we add
\begin{itemize}
\item (\ref{idp1}) with $n \in \{2+\eta,3,4-\eta\}$
\item the $\Phi$-commuted (\ref{idp1}) with $n \in \{2+\eta,3,4-\eta\}$
\item the $T$-commuted (\ref{idp1}) with $n =4-\eta$
\end{itemize}
integrated over $\widetilde{\mathcal{R}}(\tau_1,\tau_2) \cap \{r \geq A_2\}$. 
Combining the above we conclude for $p \in \{\eta,1,2\}$ (note that for $p=2$ there is an $\eta$-loss in the definition of the densities (\ref{dens1}), (\ref{dens2}), ensuring that we can indeed set $p=2$)
\begin{align} \label{setx}
\mathbb{E}^{\rm right}_{\widetilde\Sigma_{\tau},p} \left[\upalpha^{[+2]}\right] \left(\tau_2\right) + \mathbb{E}_{r=A_2} \left[\upalpha^{[+2]}\right] \left(\tau_1,\tau_2\right) + \mathbb{I}^{\rm right}_p \left[\upalpha^{[+2]}\right]  \left(\tau_1,\tau_2\right) \nonumber \\
\lesssim \mathbb{I}^{\rm right}_p \left[\uppsi^{[+2]}\right]   \left(\tau_1,\tau_2\right) + \mathbb{E}^{\rm right}_{\widetilde\Sigma_{\tau},p} \left[\upalpha^{[+2]}\right] \left(\tau_1\right) \, .
\end{align}
Combining (\ref{setx}) and (\ref{sety}) yields the desired estimate to the right of trapping.

As remarked above,
the estimate in the ``left region'' $\widetilde{\mathcal{R}}^{\rm left}(\tau_1,\tau_2)$ 
is easier  and left to the reader.
\end{proof}

\subsubsection{Transport estimates for $\uppsi^{[-2]}$ and $\upalpha^{[-2]}$}

\begin{proposition} \label{transportminusnof}
Let $\upalpha^{[-2]}$ be as in Proposition~\ref{follfundprop},
and $\uppsi^{[-2]}$, $\Psi^{[-2]}$ be as defined in (\ref{Pdefsbulkmunus2}), (\ref{officdeflittlepsiminus}).
Then we have the following estimate in $\widetilde{\mathcal{R}}^{\rm right}(\tau_1,\tau_2)$: 
\begin{align} \label{fiete}
 &  \ \ \ \mathbb{E}^{\rm right}_{\widetilde\Sigma_{\tau}} \left[\upalpha^{[-2]}\right](\tau_2)
+\mathbb{I}^{\rm right} \left[\upalpha^{[-2]}\right]  \left(\tau_1,\tau_2 \right) +  \mathbb{E}_{\mathcal{I}^+} \left[\upalpha^{[-2]}\right] \left(\tau_1,\tau_2\right) \nonumber
 \\
&+\mathbb{E}^{\rm right}_{\widetilde\Sigma_{\tau}} \left[\uppsi^{[-2]}\right](\tau_2) 
+\mathbb{I}^{\rm right} \left[\uppsi^{[-2]}\right]  \left(\tau_1,\tau_2 \right) 
+ \mathbb{E}_{\mathcal{I}^+} \left[\uppsi^{[-2]}\right] \left(\tau_1,\tau_2\right) \nonumber
 \\
&\qquad \lesssim
 \mathbb{I}^{\rm away}_{ \eta}\left[\Psi^{[-2]}\right](\tau_1, \tau_2)
 +
\mathbb{E}^{\rm right}_{\widetilde\Sigma_{\tau}} \left[\upalpha^{[-2]}\right](\tau_1)
+\mathbb{E}^{\rm right}_{\widetilde\Sigma_{\tau}} \left[\uppsi^{[-2]}\right](\tau_1) \nonumber \\
& \qquad  \ \ \ + \mathbb{E}_{r=A_2} \left[\upalpha^{[-2]}\right] \left(\tau_1,\tau_2\right)
+\mathbb{E}_{r=A_2} \left[\uppsi^{[-2]}\right] \left(\tau_1,\tau_2\right)
\end{align}
and the following estimate in $\widetilde{\mathcal{R}}^{\rm left}(\tau_1,\tau_2)$:
\begin{align}\label{fiete.2}
\nonumber
 &  \ \ \ \mathbb{E}^{\rm left}_{\widetilde\Sigma_{\tau}} \left[\upalpha^{[-2]}\right](\tau_2)
+\mathbb{I}^{\rm left} \left[\upalpha^{[-2]}\right]  \left(\tau_1,\tau_2 \right) +  \mathbb{E}_{r=A_1} \left[\upalpha^{[-2]}\right] \left(\tau_1,\tau_2\right)
\nonumber
 \\
&+\mathbb{E}^{\rm left}_{\widetilde\Sigma_{\tau}} \left[\uppsi^{[-2]}\right](\tau_2) 
+\mathbb{I}^{\rm left} \left[\uppsi^{[-2]}\right]  \left(\tau_1,\tau_2 \right) 
+ \mathbb{E}_{r=A_1} \left[\uppsi^{[-2]}\right] \left(\tau_1,\tau_2\right)
 \\
&\qquad \lesssim
 \mathbb{I}^{\rm away}_{ \eta}\left[\Psi^{[-2]}\right](\tau_1, \tau_2)
 +
\mathbb{E}^{\rm left}_{\widetilde\Sigma_{\tau}} \left[\upalpha^{[-2]}\right](\tau_1)
+\mathbb{E}^{\rm left}_{\widetilde\Sigma_{\tau}} \left[\uppsi^{[-2]}\right](\tau_1)  \nonumber
\end{align}
\end{proposition}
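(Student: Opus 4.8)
\emph{Proof sketch.}
The proof runs along the same lines as that of Proposition~\ref{transportplusnof}, the essential differences being that the transport is now carried out in the outgoing direction $L$ instead of the ingoing direction $\underline{L}$, and that --- as anticipated in the footnote preceding $(\ref{zas})$ --- a single family of radial weights suffices for both $\uppsi^{[-2]}$ and $\upalpha^{[-2]}$; there is no additional $r$-decay to be exploited, because the rescalings $\upalpha^{[-2]}(r^2+a^2)^{-3/2}$ and $\uppsi^{[-2]}(r^2+a^2)\Delta^{-1/2}$ appearing in the energies are precisely the ones which remain finite and generically non-vanishing at $\mathcal{I}^+$.

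First I would record the $[-2]$-analogues of $(\ref{psirel})$--$(\ref{Prel})$, i.e.\ the two $L$-transport relations that follow directly from $(\ref{officdeflittlepsiminus})$ and $(\ref{auxrel2})$,
\begin{align*}
L^\mu\nabla_\mu\!\left(\upalpha^{[-2]}(r^2+a^2)^{-\frac32}\right) &= 2\,\frac{\Delta}{(r^2+a^2)^2}\,\sqrt{\Delta}\,\uppsi^{[-2]},\\
L^\mu\nabla_\mu\!\left(\sqrt{\Delta}\,\uppsi^{[-2]}\right) &= -\,\frac{\Delta}{(r^2+a^2)^2}\,\Psi^{[-2]},
\end{align*}
and note that, $T$ and $\Phi$ being Killing, these commute with $T$ and $\Phi$, so the same relations hold with $\upalpha^{[-2]},\uppsi^{[-2]},\Psi^{[-2]}$ replaced by $\Gamma\upalpha^{[-2]},\Gamma\uppsi^{[-2]},\Gamma\Psi^{[-2]}$ for $\Gamma\in\{T,\Phi\}$.

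Given $X$ equal to one of $\sqrt{\Delta}\,\uppsi^{[-2]}$, $\upalpha^{[-2]}(r^2+a^2)^{-3/2}$ or their $T$- and $\Phi$-commuted versions, I would multiply the appropriate relation above by $\overline{X}$, take real parts, and use the covariant identity $(\ref{siid})$ for $\frac{1}{\rho^2}\frac{r^2+a^2}{\Delta}L^a$ to rewrite the result in the divergence form
\[
\nabla_a\!\left(\frac{r^n}{\rho^2}\frac{r^2+a^2}{\Delta}\,L^a\,|X|^2\right)-\frac{n\,r^{n-1}}{\rho^2}\,|X|^2=\frac{2\,r^n}{\rho^2}\frac{r^2+a^2}{\Delta}\,{\rm Re}\!\left(\overline{X}\,L^\mu\nabla_\mu X\right),
\]
and then integrate over $\widetilde{\mathcal{R}}^{\rm right}(\tau_1,\tau_2)$ (resp.\ $\widetilde{\mathcal{R}}^{\rm left}(\tau_1,\tau_2)$), evaluating the boundary terms by means of $(\ref{innp})$--$(\ref{innp2})$. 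Since $L$ is outgoing, with $n=0$ the fluxes through $\widetilde{\Sigma}_{\tau_i}$ and through $\mathcal{I}^+$ come out with exactly the $r^{-1-\eta}$-weights of $(\ref{zas})$ (using $(r^2+a^2)/\Delta\to1$ on $\mathcal{I}^+$ in $(\ref{innp2})$), and the $\{r=A_2\}$-flux lands on the data side; a second choice $n<0$, adjusted so that the bulk density matches the $r^{-1-\eta}$-density of $\mathbb{I}[\,\cdot\,]$ in $(\ref{fiid})$--$(\ref{finid})$, supplies, after a small absorption, the coercive spacetime term on the left. The right-hand side is then treated by Cauchy--Schwarz: running the argument first with $X$ built from $\uppsi^{[-2]}$ --- where the source is $\Psi^{[-2]}$, absorbed into $\mathbb{I}^{\rm away}_\eta[\Psi^{[-2]}]$ together with a small multiple of $\mathbb{I}[\uppsi^{[-2]}]$ reabsorbed from the $n<0$ identity --- and then with $X$ built from $\upalpha^{[-2]}$ --- whose source $\uppsi^{[-2]}$ is controlled by the previous step --- and adding all the $T,\Phi$-commuted versions, yields $(\ref{fiete})$.

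The estimate $(\ref{fiete.2})$ in $\widetilde{\mathcal{R}}^{\rm left}(\tau_1,\tau_2)$ follows the same scheme and is in fact simpler, the $r$-weights being irrelevant there; the one point to note is that $L$ is now the null generator of $\mathcal{H}^+$, so that the current $\frac{1}{\rho^2}\frac{r^2+a^2}{\Delta}L^a|X|^2$ is null and tangent to the horizon --- hence $\mathcal{H}^+$ drops out of the divergence identity entirely, which is why, in contrast to $(\ref{weightedtransport.left})$, no $\mathcal{H}^+$-energy appears in $(\ref{fiete.2})$, while $\{r=A_1\}$, being the outgoing face, contributes to the left-hand side. As in the $[+2]$ case (where the corresponding verifications are left to the reader), the only real work is the bookkeeping: checking that the signs and weights of all boundary and source terms are exactly those in the statement, and that the radial weights are chosen compatibly with $(\ref{innp})$ and with the normalisations in $(\ref{zas})$--$(\ref{finid})$. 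There is no analytical difficulty beyond what is already present in Proposition~\ref{transportplusnof}.
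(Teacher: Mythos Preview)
Your approach for the right region is essentially the same as the paper's and is correct there: with $X=\sqrt{\Delta}\,\uppsi^{[-2]}$ and an $r^{-\eta}$ weight the bulk, flux, and Cauchy--Schwarz all close, because in $\{r\ge A_2\}$ the factor $(r^2+a^2)/\Delta$ is bounded and your $X$ is comparable to the quantity $(r^2+a^2)\uppsi^{[-2]}/\sqrt{\Delta}$ appearing in the energies.

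There is, however, a real gap in the left region. Your chosen quantities $X=\sqrt{\Delta}\,\uppsi^{[-2]}$ and $X=\upalpha^{[-2]}(r^2+a^2)^{-3/2}$ differ from the quantities in the $[-2]$-energies $(\ref{zas})$--$(\ref{zas2})$ by factors of $(r^2+a^2)/\Delta$ and $(r^2+a^2)^{2}/\Delta^{2}$ respectively, and near the horizon these blow up. Concretely, your divergence current $\frac{r^n}{\rho^2}\frac{r^2+a^2}{\Delta}L^a|X|^2$ produces, on $\widetilde\Sigma_{\tau_2}$ and in the bulk, only control of $(r^2+a^2)|\uppsi^{[-2]}|^2$, not of $(r^2+a^2)^2\Delta^{-1}|\uppsi^{[-2]}|^2$ as required by $\mathbb{E}^{\rm left}[\uppsi^{[-2]}]$ and $\mathbb{I}^{\rm left}[\uppsi^{[-2]}]$; the only bulk positivity you generate comes from $-nr^{n-1}$, which is harmless but does not supply the missing $\Delta^{-1}$.

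The paper's remedy is to work directly with the energy-adapted quantities $X=(r^2+a^2)\uppsi^{[-2]}/\sqrt{\Delta}$ and $X=\sqrt{r^2+a^2}\,\upalpha^{[-2]}/\Delta^2$. The essential point you are missing is that the $L$-derivative of the extra $\Delta$-power then produces a zeroth-order term of favourable sign: for instance
\[
L\!\left(\frac{(r^2+a^2)\uppsi^{[-2]}}{\sqrt{\Delta}}\right)+\frac{2M(r^2-a^2)}{(r^2+a^2)^2}\cdot\frac{(r^2+a^2)\uppsi^{[-2]}}{\sqrt{\Delta}}=-\frac{\Psi^{[-2]}}{r^2+a^2},
\]
so that the resulting divergence identity (the paper's $(\ref{klo})$, and $(\ref{klp})$ for $\upalpha^{[-2]}$ with coefficient $6M$ in place of $2M$) carries the coercive bulk term $\frac{2M(r^2-a^2)}{(r^2+a^2)^2}|X|^2$ in addition to the $\eta r^{-1-\eta}$ term from the $(1+r^{-\eta})$ weight. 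It is precisely this redshift-type contribution that gives the non-degenerate control near $\mathcal{H}^+$ and makes the left estimate close. Once you switch to these $X$'s, the rest of your outline (commutation with $T,\Phi$, ordering $\uppsi$ before $\upalpha$, Cauchy--Schwarz on the source) goes through exactly as you describe.
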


\begin{remark} \label{remark:avoideps}
As the proof will show, these estimates also hold replacing $\mathbb{I}^{\rm away}_{ \eta}\left[\Psi^{[-2]}\right]$ by $\mathbb{I}^{\rm away}_{0}\left[\Psi^{[-2]}\right]$ provided we drop the two terms on null infinity $\mathcal{I}^+$ in (\ref{fiete}) and weaken the $r$-weight in the energies $\mathbb{E}^{\rm right}_{\widetilde\Sigma_{\tau}} \left[\upalpha^{[-2]}\right]$ and $\mathbb{E}^{\rm right}_{\widetilde\Sigma_{\tau}} \left[\uppsi^{[-2]}\right]$ from $r^{-1-\eta}$ to $r^{-1-2\eta}$; see (\ref{zas}), (\ref{zas2}). This way one could avoid the $r^\eta$ multiplier for $\Psi^{[-2]}$ (at the cost of losing control over the generically non-vanishing fluxes on null infinity).
\end{remark}

\begin{proof}
We recall the relations
\begin{align}
2 \frac{\Delta}{(r^2+a^2)^2} \sqrt{\Delta}{\uppsi}^{[-2]} &= L^a \nabla_a \left(\upalpha^{[-2]} \left(r^2+a^2\right)^{-\frac{3}{2}}\right) \label{psibart}
\, ,  \\
-\frac{\Delta}{(r^2+a^2)^2}\left(r^2+a^2\right)^{3/2} {P}^{[-2]} &=L^a \nabla_a \left( \sqrt{\Delta} {\uppsi}^{[-2]}\right) \, . \label{Pbart}
\end{align}
From (\ref{psibart}) we derive (recall $\rho^2=r^2+a^2\cos^2\theta$) for any $n, \eta \in \mathbb{R}$
\begin{align} 
\nabla_a \left(\left(\frac{\Delta}{r^2+a^2}\right)^{-n-1+4} \left(1+\frac{1}{r^\eta}\right) \frac{1}{\rho^2} L^a \Big|\frac{\sqrt{r^2+a^2} \alpha^{[-2]}}{\Delta^2} \Big|^2\right) \nonumber \\
+\left[    \left(1+\frac{1}{r^\eta}\right) \frac{2Mn \left(r^2-a^2\right)}{ (r^2+a^2)^2} + \frac{\eta}{r^{1+\eta}} \frac{\Delta}{r^2+a^2} \right] \frac{1}{\rho^2} \left(\frac{\Delta}{r^2+a^2}\right)^{-n-1+4}\Big|\frac{\sqrt{r^2+a^2} \alpha^{[-2]}}{\Delta^2} \Big|^2 \nonumber \\
= -2 \left(r^2+a^2\right) w^\frac{3}{2} \left(\frac{\Delta}{r^2+a^2}\right)^{-n-1+2} \frac{1}{\rho^2}  \left(1+\frac{1}{r^\eta}\right)  \left(\uppsi^{[-2]} \cdot \frac{\sqrt{r^2+a^2}  \overline{\alpha^{[-2]}}}{\Delta^2}  + \overline{\uppsi^{[-2]}} \cdot \frac{\sqrt{r^2+a^2}  {\alpha^{[-2]}}}{\Delta^2}\right) \, , \nonumber
\end{align}
and hence, choosing $n=3$, we have for any $\eta>0$ the estimate
\begin{align} \label{klp}
\nabla_a \left( \left(1+\frac{1}{r^\eta}\right) \frac{1}{\rho^2} L^a \Big|\frac{\sqrt{r^2+a^2} \alpha^{[-2]}}{\Delta^2} \Big|^2\right) &  \\
+\frac{1}{2} \left[    \left(1+\frac{1}{r^\eta}\right) \frac{6M \left(r^2-a^2\right)}{ (r^2+a^2)^2} + \frac{\eta}{r^{1+\eta}} \frac{\Delta}{r^2+a^2} \right] \frac{1}{\rho^2}\Big|\frac{\sqrt{r^2+a^2} \alpha^{[-2]}}{\Delta^2} \Big|^2
&\leq C_\eta \frac{1}{\rho^2} \Big|\frac{ \left(r^2+a^2\right) {\uppsi^{[-2]}}}{\sqrt{\Delta}}\Big|^2  \frac{r^{1+\eta}}{\left(r^2+a^2\right)^2} \, . \nonumber
\end{align}
Moreover, the same estimate holds replacing $1+ \frac{1}{r^\eta}$ by $\frac{1}{r^\eta}$ on the left and $r^{1+\eta}$ by $r^{1-\eta}$ on the right (cf.~Remark~\ref{remark:avoideps}). 
Note also that the estimate (\ref{klp}) also holds replacing $\upalpha^{[-2]}$ by $T\upalpha^{[-2]}$ ($\Phi \upalpha^{[-2]}$) and $\uppsi^{[-2]}$ by $T\uppsi^{[-2]}$ ($\Phi \uppsi^{[-2]}$) in view of the relation (\ref{psibart}) commuting trivially with the Killing field $T$ and $\Phi$. We will refer to those estimates as the $T$- and $\Phi$-commuted (\ref{klp}) below.

From (\ref{Pbart}) we derive
\begin{align} \label{klo}
\nabla_a \left( \left(1+\frac{1}{r^\eta}\right) \frac{1}{\rho^2} L^a \Bigg|\frac{{\uppsi}^{[-2]}(r^2+a^2)}{\sqrt{\Delta}}\Bigg|^2\right) & \\
+\frac{1}{2} \left[    \left(1+\frac{1}{r^\eta}\right) \frac{2M \left(r^2-a^2\right)}{ (r^2+a^2)^2} + \frac{\eta}{r^{1+\eta}} \frac{\Delta}{r^2+a^2} \right] \frac{1}{\rho^2}  \Bigg|\frac{{\uppsi^{[-2]}}(r^2+a^2)}{\sqrt{\Delta}}\Bigg|^2 
&\leq C_\eta \frac{1}{\rho^2} \Big| \left(r^2+a^2\right)^{3/2}P^{[-2]}\Big|^2\frac{r^{1+\eta}}{\left(r^2+a^2\right)^2} \, .  \nonumber
\end{align}
Moreover, the same estimate holds replacing $1+ \frac{1}{r^\eta}$ by $\frac{1}{r^\eta}$ on the left and $r^{1+\eta}$ by $r^{1-\eta}$ on the right (cf.~Remark~\ref{remark:avoideps}). 
Note also that the estimate (\ref{klo}) also holds replacing $\uppsi^{[-2]}$ by $T\uppsi^{[-2]}$ ($\Phi \uppsi^{[-2]}$) and $\Psi^{[-2]}$ by $T\Psi^{[-2]}$ ($\Phi \Psi^{[-2]}$) in view of the relation (\ref{Pbart}) commuting trivially with the Killing field $T$ and $\Phi$. We will refer to this estimates as the $T$- and $\Phi$-commuted (\ref{klo}) below.

We are now ready to prove the estimate in $\widetilde{\mathcal{R}}^{\rm left}(\tau_1,\tau_2)$. 

Integrating (\ref{klo}) and the $T$-commuted and $\Phi$-commuted (\ref{klo}) over $\widetilde{\mathcal{R}}^{\rm left}(\tau_1,\tau_2)$ produces 
\begin{align} 
\mathbb{E}^{\rm left}_{\widetilde\Sigma_{\tau}} \left[\uppsi^{[-2]}\right] \left(\tau_2\right)  + \mathbb{I}^{\rm left} \left[\uppsi^{[-2]}\right]  \left(\tau_1,\tau_2\right)
+ \mathbb{E}_{r=A_1} \left[\uppsi^{[-2]}\right]  \left(\tau_1,\tau_2\right) \nonumber \\
\lesssim \mathbb{I}_{\eta}^{\rm away} \left[\Psi^{[-2]}\right]  \left(\tau_1,\tau_2\right) + \mathbb{E}^{\rm left}_{\widetilde\Sigma_{\tau}} \left[\uppsi^{[-2]}\right] \left(0\right)  .
\end{align}
Integrating (\ref{klp}) and the $T$-commuted and $\Phi$-commuted (\ref{klp}) over $\widetilde{\mathcal{R}}^{\rm trap}(\tau_1,\tau_2)$ produces 
\begin{align} 
\mathbb{E}^{\rm left}_{\widetilde\Sigma_{\tau}} \left[\upalpha^{[-2]}\right] \left(\tau_2\right)  + \mathbb{I}^{\rm left} \left[\upalpha^{[-2]}\right]  \left(\tau_1,\tau_2\right)
+ \mathbb{E}_{r=A_1} \left[\upalpha^{[-2]}\right]  \left(\tau_1,\tau_2\right) \nonumber \\
\lesssim \mathbb{I}^{\rm left} \left[\uppsi^{[-2]}\right]  \left(\tau_1,\tau_2\right) + \mathbb{E}^{\rm left}_{\widetilde\Sigma_{\tau}} \left[\upalpha^{[-2]}\right] \left(0\right)  .
\end{align}
Combining the last two estimate produces the desired estimate
in $\widetilde{\mathcal{R}}^{\rm left}(\tau_1,\tau_2)$.
The estimate in $\widetilde{\mathcal{R}}^{\rm right}(\tau_1,\tau_2)$ is proven entirely analogously and is again left to the reader. The only important observation is that the good $\uppsi$-spacetime term generated from (\ref{klo}) is stronger (in terms of $r$-weight) than what is needed on the left hand side of (\ref{klp}).
\end{proof}

\subsection{Auxiliary estimates}
\label{auxil.est.sec}
We collect a number of  auxiliary estimates we shall require.

\subsubsection{The homogeneous $T+\upomega_+\chi \Phi$ estimate}

\begin{proposition}
\label{proptocontrolcutoff}
Let $\upalpha^{[\pm2]}$ satisfy the \underline{homogeneous} Teukolsky
equation $(\ref{Teukphysic})$ and let $\uppsi^{[\pm2]}$, $\Psi^{[\pm2]}$ be as defined in (\ref{Pdefsbulkplus2}), (\ref{Pdefsbulkmunus2}), (\ref{officdeflittlepsiplus}), (\ref{officdeflittlepsiminus}). Then we have
for any $0\leq \tau_1\leq \tau_2$
\begin{equation}
\label{firstinequalityhere}
\mathbb{E}_{\widetilde{\Sigma}_\tau,0}\left [\Psi^{[\pm2]}\right]
(\tau_2)
\lesssim
|a| \mathbb{I}^{\rm deg}_0\left[\Psi^{[\pm2]}\right](\tau_1,\tau_2)
+|a| \mathbb{I}_{[\eta]}\left[\uppsi^{[\pm2]}\right](\tau_1,\tau_2)
+|a| \mathbb{I}_{[\eta]}\left[\upalpha^{[\pm2]}\right](\tau_1,\tau_2)
+\mathbb{E}_{\widetilde{\Sigma}_\tau,0}\left [\Psi^{[\pm2]}\right](\tau_1).
\end{equation}
Here the subindex $\left[\eta\right]$ is equal to $\eta$ in case of $s=+2$ and it is dropped entirely in case $s=-2$.
\end{proposition}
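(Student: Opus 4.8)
\textbf{Proof proposal for Proposition~\ref{proptocontrolcutoff}.}

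The plan is to run the $E$ times $T+\upomega_+\chi\Phi$ energy identity $(\ref{Hawkinident})$ for $\Psi^{[\pm2]}$ directly, but now \emph{without} adding the $f$, $y$, or $r^\eta$ multipliers (since we are not trying to extract an integrated decay estimate here, only a boundedness estimate for the lowest-weight flux $\mathbb{E}_{\widetilde\Sigma_\tau,0}$), and \emph{without} restricting to the away region. First I would take the multiplier identity $(\ref{Hawkinident})$ with the cutoff $\chi$ exactly as fixed in Section~\ref{very.slowly.very.slowly}, convert it into a spacetime divergence identity using Remark~\ref{rem:convert} and the covariant identities $(\ref{siid})$, and integrate over all of $\widetilde{\mathcal{R}}(\tau_1,\tau_2)$ with respect to the volume form. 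Since $\upalpha^{[\pm2]}$ solves the \emph{homogeneous} Teukolsky equation, we have $\mathfrak{G}^{[\pm2]}=0$, so the only right hand side contribution comes from $\mathcal{J}^{[\pm2]}$.

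The key points to check, in order: (i) For $|a|<a_0\ll M$ sufficiently small, the boundary term $F^{T+\upomega_+\chi\Phi}_{L+\underline{L}}$ on the slice $\tilde t^*=\tau_2$ is coercive and controls $\mathbb{E}_{\widetilde\Sigma_\tau,0}[\Psi^{[\pm2]}](\tau_2)$ — this follows by inspection of $(\ref{tidi})$ exactly as in items 3.--4. of the proof of Proposition~\ref{multpropnofreq} (the vector field $T+\upomega_+\chi\Phi$ is timelike for all $r>r_+$ by the arrangement in Section~\ref{very.slowly.very.slowly}, null on $\mathcal{H}^+$, hence the fluxes on $\mathcal{H}^+$ and $\mathcal{I}^+$ are nonnegative and can be discarded, and the initial flux on $\tilde t^*=\tau_1$ is controlled by $\mathbb{E}_{\widetilde\Sigma_\tau,0}[\Psi^{[\pm2]}](\tau_1)$). (ii) The bulk term $E\int I^{T+\upomega_+\chi\Phi}\,\frac{1}{\rho^2}\frac{r^2+a^2}{\Delta}\,dVol$ is supported only where $\chi'\ne0$, i.e.~in the compact region $\{4A_1^*<r^*<2A_1^*\}$, which is away from trapping by the choice of $A_1$; moreover from $(\ref{tidi})$ the integrand $I^{T+\upomega_+\chi\Phi}=\tfrac12\upomega_+\chi'\,{\rm Re}((L-\underline{L})\Psi\,\Phi\overline\Psi)$ carries an explicit factor $\upomega_+=O(|a|)$. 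Hence by Cauchy--Schwarz this term is bounded by $|a|\,\mathbb{I}^{\rm deg}_0[\Psi^{[\pm2]}](\tau_1,\tau_2)$ (the region of support being within the $\chi$-cutoff region of $\mathbb{I}^{\rm deg}$, where the degenerate energy is non-degenerate in the relevant derivatives). (iii) The right hand side term coming from $\mathcal{J}^{[\pm2]}$: by the formulae $(\ref{Gerror})$ and the expression for $\mathcal{J}^{[\pm2]}$, every term in $-\frac{\rho^2}{\Delta}\mathcal{J}^{[\pm2]}$ carries an explicit factor of $a$ (indeed it involves $a\Phi$ or $a^2$), and is linear in $\sqrt\Delta\uppsi^{[\pm2]}$, $\Phi(\sqrt\Delta\uppsi^{[\pm2]})$, or $\upalpha^{[\pm2]}$. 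Pairing against $E(T+\upomega_+\chi\Phi)\overline\Psi$ and applying Cauchy--Schwarz, splitting the weight so that half goes onto a $\Psi$-energy and half onto the $\uppsi$/$\upalpha$-energies, gives a bound by $|a|\,\mathbb{I}^{\rm deg}_0[\Psi^{[\pm2]}]+|a|\,\mathbb{I}_{[\eta]}[\uppsi^{[\pm2]}]+|a|\,\mathbb{I}_{[\eta]}[\upalpha^{[\pm2]}]$, where for $s=+2$ the $r^\eta$-loss is absorbed into the definition of the $[+2]$-densities $(\ref{dens1})$--$(\ref{dens2})$ (hence the subscript $[\eta]$), while for $s=-2$ no such weight appears. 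Finally one absorbs the resulting $|a|\,\mathbb{I}^{\rm deg}_0[\Psi^{[\pm2]}]$ contribution — but note it cannot be absorbed on the left, since the left hand side here is only a flux, not a spacetime integral; however it appears with the good smallness factor $|a|$ and is simply kept on the right hand side as stated.

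The main obstacle I anticipate is purely bookkeeping rather than conceptual: one must verify that the weights in $\mathcal{J}^{[\pm2]}$, after being paired with $(T+\upomega_+\chi\Phi)\overline\Psi$ and Cauchy--Schwarz'd, genuinely fall under the specific $r$-weighted norms $\mathbb{I}_{[\eta]}[\uppsi^{[\pm2]}]$, $\mathbb{I}_{[\eta]}[\upalpha^{[\pm2]}]$ and $\mathbb{I}^{\rm deg}_0[\Psi^{[\pm2]}]$ as \emph{defined} in Section~\ref{generalstatement} — in particular that the term $\Phi(\sqrt\Delta\uppsi^{[\pm2]})$ in $\mathcal{J}^{[+2]}$ is controlled by $e_0[\uppsi^{[+2]}]$ (which contains exactly $|\Phi(\uppsi^{[+2]}\sqrt\Delta)|^2$) and that near the horizon and near infinity the $\Delta/(r^2+a^2)^2$ prefactors produce exactly the right decay. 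Since $\mathfrak{G}^{[\pm2]}=0$ in the homogeneous case, there is no genuine difficulty with the inhomogeneous terms, and the whole argument is a strict simplification of the proof of Proposition~\ref{multpropnofreq} with $f=y=\delta_2=0$; I would simply cite the relevant inspection steps there rather than repeating them.
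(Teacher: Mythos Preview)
Your proposal is correct and follows essentially the same approach as the paper's proof: integrate the $T+\upomega_+\chi\Phi$ identity $(\ref{Hawkinident})$ over all of $\widetilde{\mathcal{R}}(\tau_1,\tau_2)$, use $\mathfrak{G}^{[\pm2]}=0$ by homogeneity, observe that the boundary terms are nonnegative (or controlled by the initial energy), that the bulk term $I^{T+\upomega_+\chi\Phi}$ is supported where $\chi'\ne0$ (away from the degeneration of $\mathbb{I}^{\rm deg}$) and carries an $O(|a|)$ factor, and that the $\mathcal{J}^{[\pm2]}$ terms are handled by Cauchy--Schwarz. Your version simply fills in more detail than the paper's terse two-sentence proof.
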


\begin{proof}
The  inequality $(\ref{firstinequalityhere})$
 follows from integrating the identity (\ref{Hawkinident}) associated with the multiplier $T+\upomega_+\chi \Phi$ over the region $\widetilde{\mathcal{R}} \left(\tau_1,\tau_2\right)$. Note that 
 $\mathfrak{G}^{[s]}=0$, the additional boundary terms arising are nonnegative,
  and  the terms arising from $\mathcal{J}^{[s]}$ are easily controlled by the
  right hand side of $(\ref{firstinequalityhere})$ using the
 Cauchy--Schwarz inequality, in view of the fact that the support of
 $\chi$ is away from the degeneration of $\mathbb{I}^{\rm deg}$.
\end{proof}

\subsubsection{Local in time estimates}

\begin{proposition}
\label{localintimehereprop}
Let $\upalpha^{[\pm2]}$ satisfy the \underline{homogeneous} Teukolsky
equation and let $\uppsi^{[\pm2]}$, $\Psi^{[\pm2]}$ be as defined in (\ref{Pdefsbulkplus2}), (\ref{Pdefsbulkmunus2}), (\ref{officdeflittlepsiplus}), (\ref{officdeflittlepsiminus}). 
Then for any $\tau_{\rm step}>0$ there exists an $a_0\ll M$ such that
for $|a|<a_0$ 
we have for any $\tau_1 >0$
\begin{align}
\label{first.of.the.haves}
\nonumber
&\sup_{\tau_1 \le \tau\le \tau_1+\tau_{\rm step}}\mathbb{E}_{\widetilde{\Sigma}_\tau, 0}\left [\Psi^{[\pm2]}\right]
(\tau)\\
&\qquad \lesssim
\mathbb{E}_{\widetilde{\Sigma}_\tau, 0}\left [\Psi^{[\pm2]}\right]
(\tau_1)+
|a|\tau_{\rm step}e^{C\tau_{\rm step}} \mathbb{E}_{\widetilde{\Sigma}_\tau, [\eta]}\left [\uppsi^{[\pm2]}\right](\tau_1)
+
|a|\tau_{\rm step}e^{C\tau_{\rm step}}
\mathbb{E}_{\widetilde{\Sigma}_\tau, [\eta]}\left [\upalpha^{[\pm2]}\right](\tau_1),
\end{align}
\begin{align}
\label{second.of.the.haves}
\nonumber
\mathbb{I}_{0}\left [\Psi^{[\pm2]}\right]
(\tau_1, \tau_1+\tau_{\rm step})+
\mathbb{I}_{[\eta]} \left [\uppsi^{[\pm2]}\right](\tau_1, \tau_1+\tau_{\rm step})
+
\mathbb{I}_{[\eta]}\left [\upalpha^{[\pm2]}\right]
(\tau_1, \tau_1+\tau_{\rm step})
\\
\lesssim \tau_{\rm step} \mathbb{E}_{\widetilde{\Sigma}_\tau, 0}\left [\Psi^{[\pm2]}\right]
(\tau_1)+
\mathbb{E}_{\widetilde{\Sigma}_\tau, [\eta]}\left [\uppsi^{[\pm2]}\right](\tau_1)
+
\mathbb{E}_{\widetilde{\Sigma}_\tau, [\eta]}\left [\upalpha^{[\pm2]}\right](\tau_1)
\end{align}
where $C=C(M)$ (and the implicit constant in $\lesssim$ is independent of both
$\tau_{\rm step}$ and $\tau_1$, according to our general conventions). Here the subindex $\left[\eta\right]$ is equal to $\eta$ in case of $s=+2$ and it is dropped entirely in case $s=-2$.
\end{proposition}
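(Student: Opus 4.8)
The plan is to prove Proposition~\ref{localintimehereprop} by a Gr\"onwall-type argument applied to the multiplier identities already derived in Section~\ref{sec:multid}, exploiting the smallness of $a$ to treat the $\mathcal{J}^{[\pm2]}$-coupling as an error. First I would work directly with a solution $\upalpha^{[\pm2]}$ of the homogeneous Teukolsky equation, so that $\mathfrak{G}^{[\pm2]}=0$ and $\Psi^{[\pm2]}$ satisfies $\mathfrak{R}^{[\pm2]}\Psi^{[\pm2]} = -\tfrac{\rho^2}{\Delta}\mathcal{J}^{[\pm2]}$. Apply the $T+\upomega_+\chi\Phi$ identity $(\ref{Hawkinident})$, but now integrated over a \emph{short} slab $\widetilde{\mathcal{R}}(\tau_1,\tau)$ with $\tau\le \tau_1+\tau_{\rm step}$, rather than globally; the boundary terms on $\mathcal{H}^+$ and $\mathcal{I}^+$ have a sign, the flux on $\widetilde{\Sigma}_\tau$ controls $\mathbb{E}_{\widetilde{\Sigma}_\tau,0}[\Psi^{[\pm2]}]$ from above, and the flux on $\widetilde{\Sigma}_{\tau_1}$ is bounded by the initial energy. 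The bulk term $I^{T+\upomega_+\chi\Phi}$ is supported where $\chi'\ne0$, i.e.\ in a fixed compact $r$-range away from trapping and infinity, so it can be absorbed into $\sup_{\tau_1\le\tau'\le\tau}\mathbb{E}_{\widetilde{\Sigma}_{\tau'},0}[\Psi^{[\pm2]}]$ at the cost of a constant times $\int_{\tau_1}^\tau\mathbb{E}_{\widetilde{\Sigma}_{\tau'},0}[\Psi^{[\pm2]}]\,d\tau'$. The term coming from $\mathcal{J}^{[\pm2]}$ carries a factor $|a|$ and, by Cauchy--Schwarz, is bounded by $|a|\int_{\tau_1}^\tau\big(\mathbb{E}_{\widetilde{\Sigma}_{\tau'},0}[\Psi^{[\pm2]}] + \mathbb{E}_{\widetilde{\Sigma}_{\tau'},[\eta]}[\uppsi^{[\pm2]}] + \mathbb{E}_{\widetilde{\Sigma}_{\tau'},[\eta]}[\upalpha^{[\pm2]}]\big)d\tau'$, using that the relevant terms in $\mathcal{J}^{[\pm2]}$ involve at most $\Phi(\sqrt{\Delta}\uppsi^{[\pm2]})$ and $\upalpha^{[\pm2]}$ in a fixed compact $r$-region.

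Next I would close the system by combining this with the transport relations $(\ref{psirel})$--$(\ref{Prel})$ (and their $[-2]$ analogues $(\ref{psibart})$--$(\ref{Pbart})$) in the spirit of Propositions~\ref{transportplusnof} and~\ref{transportminusnof}, but again localised in time: integrating the pointwise transport identities $(\ref{idp1})$, $(\ref{idp2})$, $(\ref{klp})$, $(\ref{klo})$ over $\widetilde{\mathcal{R}}(\tau_1,\tau)$ gives, for each $\tau\le\tau_1+\tau_{\rm step}$,
\begin{align*}
\mathbb{E}_{\widetilde{\Sigma}_\tau,[\eta]}[\uppsi^{[\pm2]}](\tau) + \mathbb{E}_{\widetilde{\Sigma}_\tau,[\eta]}[\upalpha^{[\pm2]}](\tau) &\lesssim \mathbb{E}_{\widetilde{\Sigma}_{\tau_1},[\eta]}[\uppsi^{[\pm2]}](\tau_1) + \mathbb{E}_{\widetilde{\Sigma}_{\tau_1},[\eta]}[\upalpha^{[\pm2]}](\tau_1) \\
&\quad + \int_{\tau_1}^\tau \Big( \mathbb{E}_{\widetilde{\Sigma}_{\tau'},0}[\Psi^{[\pm2]}](\tau') + \mathbb{E}_{\widetilde{\Sigma}_{\tau'},[\eta]}[\uppsi^{[\pm2]}](\tau') \Big) d\tau',
\end{align*}
where the spacetime bulk of $\Psi^{[\pm2]}$ appearing on the right of the transport estimates is crudely bounded by $\int_{\tau_1}^\tau\mathbb{E}_{\widetilde{\Sigma}_{\tau'},0}[\Psi^{[\pm2]}]\,d\tau'$ (no integrated-decay gain is needed here, only a local-in-time bound). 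Setting $G(\tau) = \mathbb{E}_{\widetilde{\Sigma}_\tau,0}[\Psi^{[\pm2]}](\tau) + \mathbb{E}_{\widetilde{\Sigma}_\tau,[\eta]}[\uppsi^{[\pm2]}](\tau) + \mathbb{E}_{\widetilde{\Sigma}_\tau,[\eta]}[\upalpha^{[\pm2]}](\tau)$, the two families of estimates combine to $G(\tau)\lesssim G(\tau_1) + \int_{\tau_1}^\tau G(\tau')\,d\tau'$ uniformly for $\tau\in[\tau_1,\tau_1+\tau_{\rm step}]$; Gr\"onwall then gives $\sup_{[\tau_1,\tau_1+\tau_{\rm step}]}G \lesssim e^{C\tau_{\rm step}}G(\tau_1)$. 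Feeding this back into the $\Psi^{[\pm2]}$-estimate, the $|a|$-weighted time integral becomes $|a|\tau_{\rm step}e^{C\tau_{\rm step}}\big(\mathbb{E}_{\widetilde{\Sigma}_{\tau_1},[\eta]}[\uppsi^{[\pm2]}] + \mathbb{E}_{\widetilde{\Sigma}_{\tau_1},[\eta]}[\upalpha^{[\pm2]}]\big)$ plus $|a|\tau_{\rm step}e^{C\tau_{\rm step}}\mathbb{E}_{\widetilde{\Sigma}_{\tau_1},0}[\Psi^{[\pm2]}]$, and the latter is absorbed on the left for $a_0$ small enough depending on $\tau_{\rm step}$; this yields $(\ref{first.of.the.haves})$.

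For $(\ref{second.of.the.haves})$ I would additionally run the $f$-, $y$- and $r^{\eta}$-multiplier identities of Section~\ref{sec:multid} over the full region $\widetilde{\mathcal{R}}(\tau_1,\tau_1+\tau_{\rm step})$ (not just the \emph{away} region), whose coercive bulk controls $\mathbb{I}_0[\Psi^{[\pm2]}]$ (degenerate at trapping, but that is harmless for a local-in-time statement since we do not need a gain), picking up boundary fluxes bounded by $\sup_{[\tau_1,\tau_1+\tau_{\rm step}]}\mathbb{E}_{\widetilde{\Sigma}_\tau,0}[\Psi^{[\pm2]}]$, i.e.\ by the right-hand side of $(\ref{first.of.the.haves})$. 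The transport identities integrated over the same slab give $\mathbb{I}_{[\eta]}[\uppsi^{[\pm2]}]$ and $\mathbb{I}_{[\eta]}[\upalpha^{[\pm2]}]$ in terms of $\mathbb{I}_0[\Psi^{[\pm2]}]$ and the initial fluxes; combining and using the already-established $(\ref{first.of.the.haves})$ closes $(\ref{second.of.the.haves})$, with the factor $\tau_{\rm step}$ in front of $\mathbb{E}_{\widetilde{\Sigma}_{\tau_1},0}[\Psi^{[\pm2]}]$ arising from integrating the $T+\upomega_+\chi\Phi$ energy bound in time. The main obstacle is bookkeeping: one must check that every error term genuinely carries either a factor $|a|$ (from $\mathcal{J}^{[\pm2]}$, absorbable after choosing $a_0$ small relative to $\tau_{\rm step}$) or is supported in a fixed compact $r$-region where the coercive multiplier bulk has no degeneration, so that the naive Cauchy--Schwarz bounds suffice; in particular one must be careful that the $[-2]$-transport, which runs in the $L$-direction toward $\mathcal{I}^+$, does not require an $r^p$-gain here, only the crude local-in-time control, which is exactly what Remark~\ref{remark:avoideps} makes available. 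No frequency localisation or integrated-decay input is used, which is what makes this a purely physical-space, short-time statement.
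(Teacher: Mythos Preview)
Your treatment of $(\ref{first.of.the.haves})$ is essentially the paper's argument: both combine the $T+\upomega_+\chi\Phi$ identity with the global transport relations, run Gr\"onwall on the combined quantity $G$, and then feed the resulting exponential bound back into the $|a|$-weighted coupling terms. That part is fine.

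For $(\ref{second.of.the.haves})$ there is a real gap. You propose to run the $f$-, $y$-, $r^\eta$-multipliers over the full slab to obtain $\mathbb{I}_0[\Psi^{[\pm2]}]$, dismissing the degeneration at trapping as ``harmless for a local-in-time statement''. But the quantity $\mathbb{I}_0[\Psi^{[\pm2]}]$ in the statement is the \emph{non}-degenerate norm $(\ref{nondegPsi})$; the physical-space multipliers yield only $\mathbb{I}^{\rm deg}_0$. Worse, the global transport estimates you then invoke to control $\mathbb{I}_{[\eta]}[\uppsi^{[\pm2]}]$ and $\mathbb{I}_{[\eta]}[\upalpha^{[\pm2]}]$ require the $T$- and $\Phi$-commuted versions of $(\ref{idp2})$, $(\ref{klo})$, whose right-hand sides contain $|T\Psi|^2$ and $|\Phi\Psi|^2$ integrated through the trapping region---precisely what $\mathbb{I}^{\rm deg}_0$ fails to control. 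The paper bypasses multipliers entirely here and uses instead the trivial bound
\[
\mathbb{I}_\eta\big[\Psi^{[\pm2]}\big](\tau_1,\tau_1+\tau_{\rm step}) \lesssim \int_{\tau_1}^{\tau_1+\tau_{\rm step}} \mathbb{E}_{\widetilde{\Sigma}_\tau,0}\big[\Psi^{[\pm2]}\big](\tau)\,d\tau \lesssim \tau_{\rm step}\sup_{[\tau_1,\tau_1+\tau_{\rm step}]}\mathbb{E}_{\widetilde{\Sigma}_\tau,0}\big[\Psi^{[\pm2]}\big],
\]
which is manifestly non-degenerate and is what actually produces the $\tau_{\rm step}$ factor on the right. (The Remark immediately following the paper's proof makes explicit that multiplier estimates would only recover $\mathbb{I}^{\rm deg}_0$.) Once you replace your multiplier step with this trivial inequality and feed in $(\ref{first.of.the.haves})$, the rest of your argument closes as written.
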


\begin{proof}
We first note that 
\begin{align}
\label{noting.hill}
\nonumber
&\sup_{\tau_1\le \tau\le \tau_1+\tau_{\rm step}} \left(\mathbb{E}_{\widetilde{\Sigma}_\tau, 0}\left [\Psi^{[\pm2]}\right](\tau)
+ \mathbb{E}_{\widetilde{\Sigma}_\tau, [\eta]}\left [\uppsi^{[\pm2]}\right](\tau)
+\mathbb{E}_{\widetilde{\Sigma}_\tau, [\eta]}\left [\upalpha^{[\pm2]}\right](\tau)\right)\\
&\qquad \lesssim
e^{C\tau_{\rm step}}\left(
\mathbb{E}_{\widetilde{\Sigma}_\tau, 0}\left [\Psi^{[\pm2]}\right](\tau_1)
+ \mathbb{E}_{\widetilde{\Sigma}_\tau, [\eta]}\left [\uppsi^{[\pm2]}\right](\tau_1)
+
\mathbb{E}_{\widetilde{\Sigma}_\tau, [\eta]}\left [\upalpha^{[\pm2]}\right](\tau_1)\right).
\end{align}
This follows easily by the estimates of the previous sections.

We now apply $(\ref{firstinequalityhere})$ with
$\tau_2$ taken in $\tau_1\le \tau_2 \le \tau_1+\tau_{\rm step}$,
noting that the first three terms on the
right hand side can be bounded by $|a|\tau_{\rm step}$ times the right
hand side of $(\ref{noting.hill})$. Restricting $a_0$ so that in particular
$|a|\tau_{\rm step}e^{C\tau_{\rm step}} < 1$ we obtain $(\ref{first.of.the.haves})$.

We  note that we can repeat the transport estimates of
Section~\ref{condtranestsec}, 
now for the \underline{homogeneous} equations, and 
applied globally
in $\widetilde{\mathcal{R}}(\tau_1,\tau_1+\tau_{\rm step})$,
obtaining
\begin{align*}
\mathbb{I}_{[\eta]}\left [\uppsi^{[\pm2]}\right](\tau_1, \tau_1+\tau_{\rm step})
+
\mathbb{I}_{[\eta]}\left [\upalpha^{[\pm2]}\right]
(\tau_1, \tau_1+\tau_{\rm step})
&\lesssim \mathbb{E}_{\widetilde{\Sigma}_\tau, [\eta]}\left [\uppsi^{[\pm2]}\right](\tau_1)
+
\mathbb{E}_{\widetilde{\Sigma}_\tau, [\eta]}\left [\upalpha^{[\pm2]}\right](\tau_1)
\\
&\qquad+ \mathbb{I}_\eta\left[\Psi^{[\pm2]}\right](\tau_1, \tau_1+\tau_{\rm step}).
\end{align*}
Note that the term is $\mathbb{I}_\eta$ and not $\mathbb{I}^{\rm deg}_\eta$.

In view of
\[
\mathbb{I}_\eta\left[\Psi^{[\pm2]}\right](\tau_1, \tau_1+\tau_{\rm step})
\lesssim
\int_{\tau_1}^{\tau_1+\tau_{\rm step}}
\mathbb{E}_{\widetilde{\Sigma}_\tau, 0}\left [\Psi^{[\pm2]}\right](\tau) d\tau
\lesssim 
\tau_{\rm step} \sup_{\tau_1\le \tau\le \tau_1+\tau_{\rm step}} \mathbb{E}_{\widetilde{\Sigma}_\tau, 0}\left [\Psi^{[\pm2]}\right](\tau)
\]
(note the $\eta$ on the left but the $0$ on the right hand side),
we obtain  $(\ref{second.of.the.haves})$ for sufficiently small $a$.
\end{proof}

\begin{remark}
We note that  a more careful examination of the Schwarzschild case
and Cauchy stability yields that  the
inequality $(\ref{second.of.the.haves})$ can be proven without the $\tau_{\rm step}$ factor
on the first term of  right hand side, provided $\mathbb{I}_0$ is
replaced by $\mathbb{I}_0^{\rm deg}$. We shall not however require this  here.
\end{remark}

\section{The admissible class and Teukolsky's separation}
\label{Separationmegasec}

In this section we will implement Teukolsky's separation~\cite{teukolsky1973} of~$(\ref{Teukop})$
for $s=\pm2$.

To make sense \emph{a priori} of the formal separation of~\cite{teukolsky1973}, 
one must in particular work in a class of functions for
which  one can indeed take the
Fourier transform in time.  This requires applying
the analysis to functions which satisfy certain 
time-integrability properties. A useful such class is the
``sufficiently integrable, outgoing'' class defined in~\cite{partiii, Dafermos:2014jwa}
for the $s=0$ case.

In the present paper, it turns out that we shall only require Fourier analysis in
the region $r\in [A_1,A_2]$. We may thus consider the more elementary setting
of what we shall call the \emph{$[A_1,A_2]$-admissible class}
where time square integrability is only required for $r\in[A_1,A_2]$. (We will in fact
assume compact support in $t^*$ in this $r$-range.)
This leads to a number of useful simplifications. In particular, we need not refer
to the asymptotic analysis of the ODE's as $r^*\to \pm\infty$,
as was done in~\cite{partiii, Dafermos:2014jwa}, in order to infer boundary behaviour.

The section is organised as follows: We will define our elementary notion 
of $[A_1,A_2]$-admissible class
in {\bf Section~\ref{siosec}}.  
We will then implement
Teukolsky's separation in
{\bf Section~\ref{revofteuksepsec}}, deriving the radial ODE, valid for $r\in [A_1,A_2]$.

(We note already that, in practice, the results of this section will be applied to solutions of the
inhomogeneous Teukolsky equation which arises from applying a suitable cutoff
to solutions of $(\ref{Teukphysic})$. The restriction of Fourier analysis to the
range $r^*\in [A_1^*,A_2^*]$  will
allow us to use a cutoff whose derivatives are supported in a region of
finite $r^*\in[2A_1^*,2A_2^*]$, leading to additional simplifications with respect
to~\cite{partiii}. We will only turn to this in Section~\ref{iledsec}.)

\subsection{The  $[A_1,A_2]$-admissible class}
\label{siosec}

We define an admissible class of functions for our  frequency analysis.
This is to be compared with 
 the class of \emph{sufficiently integrable} functions
  from~\cite{partiii, Dafermos:2014jwa}.
Since we will only apply frequency localisation in a neighbourhood of trapping,
we only consider the behaviour in the fixed $r$-region $[A_1,A_2]$
with $r_+<A_1<A_2<\infty$ defined in 
Section~\ref{very.slowly.very.slowly}. (Recall in this region that
$t=t^*=\tilde{t}^*$.)
On the other hand, for convenience, we will assume
compact support in $t$ for these $r$-values, as this is what we shall indeed
obtain after applying cutoffs.

\begin{definition}
\label{suffintdef}
Let $a_0<M$, $|a|<a_0$ and let $g=g_{a,M}$.
We say that a smooth complex valued spin$\pm2$ weighted 
function $\tilde\upalpha:\mathcal{R}\cap \{A_1\le r\le A_2\}
\to \mathbb C$ is \emph{$[A_1,A_2]$-admissible}
if it is compactly supported in $t$.
\end{definition}

\begin{remark}
One could work with the weaker condition
that (cf.~\cite{partiii}) for all $j\ge 1$, the following holds
\begin{equation}
\label{firstcondhere}
\sup_{r\in [A_1,A_2]}
\int_{-\infty}^\infty
\int_{\mathbb S^2}
\sum_{0\le i_1+i_2+i_3+i_4+i_5\le j}
\left| (\tilde{Z}_1)^{i_1}(\tilde{Z}_2)^{i_2}(\tilde{Z}_3)^{i_3}T^{i_4}(\partial_r)^{i_5}\tilde\upalpha \right|^2 \sin
\theta\, dt\, d\theta\, d\phi <\infty,
\end{equation}
with the only caveat that in the frequency analysis we would have to 
restrict to generic frequency $\omega$ for the ODE to be satisfied
in the classical sense.
\end{remark}

\subsection{Teukolsky's separation}
\label{revofteuksepsec}
We will now implement Teukolsky's formal separation  of the operator~$(\ref{Teukop})$
in the context of $[A_1,A_2]$-admissible spin-$s$ weighted functions $\upalpha^{[s]}$
for $s=\pm2$.

We begin in Section~\ref{spinweigtsec} with a review of the basic properties
of spin-weighted oblate spheroidal harmonics and their associated eigenvalues
$\lambda^{[s]}_{m\ell}(\nu)$. We will then turn immediately in Section~\ref{sec:swest}
to some elementary estimates
for the eigenvalues $\lambda^{[s]}_{m\ell}(\nu)$ which will be
useful later in the paper.
Next, we shall  apply these oblate spheroidals  together with the Fourier transform
in time in Section~\ref{coeffssec}
to  define coefficients $\upalpha^{[s],(a\omega)}_{m\ell}(r)$ associated
to $[A_1,A_2]$-admissible $\upalpha^{[s]}$. We  then 
give Proposition~\ref{radodeprop} 
in 
Section~\ref{radialodesec}, stating 
that these coefficients  satisfy an ordinary differential equation with respect
to $r^*$; this is the content of Teukolsky's remarkable separation of $(\ref{Teukop})$.

\subsubsection{Spin-weighted oblate spheroidal harmonics}
\label{spinweigtsec}

Let $\nu\in \mathbb R$, $s=0,\pm2$ and consider
the self-adjoint operator $\mathring{\slashed{\triangle}}^{[s]}(\nu)$ defined by
\[
\mathring{\slashed{\triangle}}^{[s]}(\nu) \Xi = -\frac{1}{\sin \theta}\frac\partial{\partial  \theta}\left(\sin \theta \frac{\partial\Xi }{\partial\theta}\right)-\left(\frac{\partial^2 \Xi}{\partial\phi^2}+2s\cos\theta 
i \frac{\partial \Xi}{\partial\phi} \right)\frac{1}{\sin^2\theta} -\nu^2\cos^2\theta \Xi
+2\nu s\cos \theta \Xi+s^2\cot^2 \theta \Xi  -s \Xi
\]
on $\mathscr{S}^{[s]}_\infty$, which we recall is a dense subset of $L^2(\sin \theta\,  d\theta \, d\phi)$.

This has a complete collection of eigenfunctions
\begin{align} \label{setofef}
 \{
S^{[s]}_{m\ell}(\nu, \cos\theta)e^{im\phi}\}_{m\ell}
\end{align}
with eigenvalues $\lambda^{[s]}_{m\ell}\in \mathbb R$, indexed by $m\in \mathbb Z$, $\ell \ge |m|+|s|$.
These are known as
 the spin-weighted oblate\footnote{The prolate case corresponds to the $\xi$ being purely imaginary.} spheroidal harmonics. For each fixed $m\in \mathbb Z$, the $S^{[s]}_{m\ell}$  themselves 
 form a complete collection of eigenfunctions of the following self-adjoint operator with eigenvalues $\lambda^{[s]}_{m \ell} \left(\nu\right)$:
\begin{align}
\mathring{\slashed\triangle}^{[s]}_m \left(\nu\right) :=- \frac{1}{\sin \theta} \frac{d}{d\theta} \left(\sin \theta \frac{d}{d\theta}\right) +\left(-\nu^2 \cos^2\theta + \frac{m^2}{\sin^2 \theta} + 2\nu s \cos \theta + \frac{2ms \cos \theta}{\sin^2 \theta} + s^2 \cot^2 \theta -s \right)
\end{align}
\begin{align}  \label{deflam}
\mathring{\slashed\triangle}^{[s]}_m \left(\nu\right) S_{m\ell}^{[s]}  =  \lambda^{[s]}_{m\ell} \left(\nu\right) S_{m\ell}^{[s]} \, . 
\end{align}
The eigenfuctions themselves satisfy
\[
S^{[s]}_{m\ell}(\nu, \cos\theta)e^{im\phi} \in \mathscr{S}_{\infty}^{[s]}
\]
for all $\nu\in \mathbb R$.

We note the following familiar special cases:
\begin{enumerate}
\item For $s=0$ one obtains the oblate spheroidal harmonics familiar from the angular part of the separation equation of the scalar wave equation on Kerr \cite{partiii}. The case $s=0$ and $\nu=0$ recovers the standard spherical harmonics $S_{m \ell}^{[0]}(0,\cos\theta)e^{im\phi}=Y_{m \ell}$ with eigenvalues $\ell \left(\ell+1\right)$.

\item For $\nu=0$, then $\mathring{\slashed\triangle}^{[s]}(0)$ is the spin-$s$-weighted Laplacian
and one obtains the spin-weighted spherical harmonics, whose eigenvalues can also be determined explicitly
\begin{align} \label{hemo}
\lambda_{m \ell}^{[s]} \left(0\right)+ s = \lambda_{m \ell}^{[-s]} \left(0\right)-s = \ell \left(\ell+1\right) - s^2 \geq 2
\end{align}
where the last inequality follows from the relation $|\ell|\geq |s|$. For future reference we note the relation
\begin{equation}
\label{forfutref}
\mathring{\slashed\triangle}^{[s]}_m \left(\nu\right) = \mathring{\slashed\triangle}^{[s]}_m \left(0\right) -\nu^2 \cos^2\theta + 2\nu s \cos \theta \, .
\end{equation}

\end{enumerate}
We finally remark also the general relation 
\begin{align} \label{genrel}
\lambda_{m \ell}^{[s]} \left(\nu\right)+ s = \lambda_{m \ell}^{[-s]}  \left(\nu\right)-s 
\end{align}
allowing us to restrict to $s=+2$ without loss of generality when obtaining estimates on the $\lambda_{m \ell}^{[s]} \left(\nu\right)$.

For various asymptotics concerning the behaviour of $\lambda_{m\ell}^{[s]}$
see~\cite{Berti:2005gp}.

\subsubsection{Estimates on $\lambda^{[s]}_{m\ell} \left(\nu\right)$
and $\widetilde\Lambda^{[s]}_{m \ell}\left(\nu\right)$}
\label{sec:swest}
To estimate $ \lambda_{m\ell}^{[s]} \left(\nu \right)$ we compute from (\ref{deflam})
\begin{align} \label{imdf2}
 \lambda^{[s]}_{m \ell}\left(\nu\right) +s &= 
 \int_0^\pi \int_0^{2\pi} d\phi \, d\theta \sin \theta  \left[  \big| \partial_\theta \Xi^{[s]} \big|^2 + \left( \frac{ \left( m+ s \cos \theta  \right)^2}{\sin^2 \theta}  -(a\nu)^2 \cos^2\theta +2s \cos \theta \nu  \right) |\Xi^{[s]}|^2\right]   \, ,
\end{align}
where $\Xi^{[s]}$ denotes (shorthand instead of the full (\ref{setofef})) a normalised eigenfunction of the operator $\mathring{\slashed\triangle}^{[s]}_m \left(a\omega\right)$ with eigenvalue $\lambda_{m \ell}^{[s]}\left(a\omega\right)$. Using the variational characterisation of  the lowest eigenvalue of the operator $\mathring{\slashed\triangle}^{[s]}_m \left(0\right)$ (which is $2$ for $m=0,1$ and $m\left(m+1\right)-4$ for $m\geq 2$ by (\ref{hemo}) and the relation $|m|\leq \ell$) we conclude for 
\begin{equation}
\label{workingdef}
\widetilde{\Lambda}^{[\pm2]}_{m \ell}\left(\nu\right) := \lambda^{[s]}_{m \ell}\left(\nu\right)
+s + \nu^2 + 4|\nu |  \, 
\end{equation}
the bound
\begin{align} \label{eli1e} 
\widetilde{\Lambda}^{[\pm2]}_{m \ell}\left(\nu\right) \geq  \max \left(2, m(m+1)-4\right) \, .
\end{align}

Our ode estimates in Section~\ref{ODEmegasec} will only require $(\ref{eli1e})$.
This motivates the following
\begin{definition}
\label{admiss.freq.def}
A triple $(\omega, m, \widetilde{\Lambda})$ will be said to be \underline{admissible} 
if $\omega\in \mathbb R$, $m\in \mathbb Z$ and $\widetilde\Lambda\in \mathbb R$ satisfies
$\widetilde{\Lambda} \ge {\rm max}(2, m(m+1)-4)$.
\end{definition}

\subsubsection{The coefficients $\upalpha^{[s],(a\omega)}_{m\ell}$ and the Plancherel relations}
\label{coeffssec}

Given parameters $a$, $M$ and $s$, we let
$\upalpha^{[s]}$ be $[A_1,A_2]$-admissible according to 
Definition~\ref{suffintdef}.

We have
\begin{equation}
\label{fouriertra}
\upalpha^{[s]}(t,r,\theta,\phi)=\frac{1}{2\pi} \int_{-\infty}^\infty e^{-i\omega t} \hat \upalpha^{[s]} (\omega, r,\theta,\phi)d\omega.
\end{equation}
Setting $\nu =a\omega$, for each $\omega\in \mathbb R$ we may decompose
\begin{equation}
\label{defofcoeffs}
\hat\upalpha^{[s]}(\omega, r, \theta, \phi) =\sum_{m\ell} \upalpha^{[s],(a\omega)}_{m\ell} S^{[s]}_{m,\ell}
(a\omega,\cos\theta)e^{im\phi}.
\end{equation}
We obtain then the representation
\begin{equation}
\label{firstrep}
\upalpha^{[s]} (t,r,\theta, \phi) = \frac{1}{\sqrt{2\pi}}
\int_{-\infty}^\infty\sum_{m\ell} e^{-i\omega t} \upalpha^{[s], (a\omega)}_{m\ell}(r)
S^{[s]}_{m\ell}(a\omega, \cos\theta)e^{im\phi} d\omega.
\end{equation}
As in~\cite{partiii}, we remark that for each fixed $r$,
$(\ref{fouriertra})$ and $(\ref{firstrep})$ 
are to be understood as holding in $L^2_tL^2_{\mathbb S^2}$,
while $(\ref{defofcoeffs})$ is to be understood in $L^2_\omega L^2_{\mathbb S^2}$.
Note that if $\upalpha^{[s]}$ satisfies Definition~\ref{suffintdef}, then so do
$\partial_t \upalpha^{[s]}$ and $\partial_\phi \upalpha^{[s]}$ and we have
\[
\partial_t \upalpha^{[s]}(t,r,\theta,\phi)=\frac{-i}{2\pi} \int_{-\infty}^\infty \omega e^{-i\omega t} \hat \upalpha^{[s]} (\omega, r,\theta,\phi)d\omega \, , 
\]
\[
\partial_\phi \upalpha^{[s]}(t,r,\theta,\phi)=\frac{i}{2\pi} \int_{-\infty}^\infty m e^{-i\omega t} \hat \upalpha^{[s]} (\omega, r,\theta,\phi)d\omega \, ,
\]
where these relations are to be interpreted in $L^2_tL^2_{\mathbb S^{2}}$.

We also recall as in~\cite{DafRodsmalla, partiii}  the following  Plancherel relations
\[
\int_0^{2\pi}\int_0^\pi\int_{-\infty}^\infty
\left|\upalpha^{[s]}\right|^2(t,r,\theta,\phi)\sin\theta\, d\phi\, d\theta \, dt
=\int_{-\infty}^{\infty} \sum_{m\ell}\left|\upalpha^{[s], (a\omega)}_{m\ell}(r)\right|^2 d\omega \, ,
\]
\[
\int_0^{2\pi}\int_0^\pi\int_{-\infty}^\infty
{}_{1}\upalpha^{[s]}\cdot {}_2\bar{\upalpha}^{[s]} \sin\theta\, d\phi\, d\theta \, dt
=\int_{-\infty}^{\infty} \sum_{m\ell}{}_1\upalpha^{[s], (a\omega)}_{m\ell}\cdot
{}_2\bar{\upalpha}^{[s], (a\omega)}_{m\ell}  d\omega \, ,
\]
\[
\int_0^{2\pi}\int_0^\pi\int_{-\infty}^\infty
\left|\partial_r \upalpha^{[s]}\right |^2(t,r,\theta,\phi)\sin\theta\, d\phi\, d\theta \, dt
=\int_{-\infty}^{\infty} \sum_{m\ell}\left|\frac{d}{dr}\upalpha^{[s], (a\omega)}_{m\ell}(r)\right|^2 d\omega \, ,
\]
\[
\int_0^{2\pi}\int_0^\pi\int_{-\infty}^\infty
\left|\partial_t \upalpha^{[s]}\right|^2(t,r,\theta,\phi)\sin\theta\, d\phi\, d\theta \, dt
=\int_{-\infty}^{\infty} \sum_{m\ell}\omega^2\left|\upalpha^{[s], (a\omega)}_{m\ell}(r)\right|^2 d\omega \, ,
\]
as well as
\begin{align}
\int_0^{2\pi}\int_0^\pi\int_{-\infty}^\infty
\left(\left|\frac{\partial \upalpha^{[s]}}{\partial \theta}\right|^2+\left|\left(\frac{\partial \upalpha^{[s]}}{\partial \phi}+is \cos \theta \upalpha^{[s]} \right)\sin^{-1}\theta\right|^2\right)(t,r,\theta,\phi)\sin\theta\, d\phi\, d\theta \, dt
= \nonumber \\
\int_{-\infty}^\infty \sum_{m \ell} \left( \lambda_{m \ell}^{[s]} \left(a\omega \right) +s +a^2 \omega^2 \cos^2 \theta -2sa\omega \cos \theta\right) \left|\upalpha^{[s], (a\omega)}_{m\ell}(r)\right|^2  d\omega \, .
\end{align}
From the inequalities of Section \ref{sec:swest} it follows that 
\begin{align}
\lambda_{m \ell}^{[s]} \left(a\omega \right) +s +a^2 \omega^2 \cos^2 \theta -2sa\omega \cos \theta \lesssim \widetilde{\Lambda}^{[\pm2]}_{m \ell}(a\omega)+\omega^2 \, .
\end{align}
In what follows, we shall often write
$\lambda_{m \ell}^{[s],(a\omega)}$ for $\lambda_{m \ell}^{[s]} \left(a\omega \right)$
and  ${\widetilde{\Lambda}}_{m \ell}^{[s],(a\omega)}$ 
for ${\widetilde{\Lambda}}_{m \ell}^{[s]} \left(a\omega \right)$.

\subsubsection{The radial ODE}
\label{radialodesec}
We here state a proposition that implements Teukolsky's formal separation of $(\ref{Teukop})$ in the
context of $[A_1,A_2]$-admissible spin-weighted functions.

Fix $|a|<M$ and $s=0,\pm2$. 
Let $\upalpha^{[s]}$ be an $[A_1,A_2]$-admissible spin weighted functions
and $\upalpha^{[s], (a\omega)}_{m\ell}$ be as defined in Section~\ref{coeffssec}.
Note that (recall (\ref{rescaleddefsfirstattempt}))
defining 
\begin{equation}
\label{inhomogdefin}
F^{[+2]} = \widetilde{\mathfrak{T}}^{[+2]} \tilde{\upalpha}^{[+2]}  \ \ \ , \ \ \ F^{[-2]} = \widetilde{\mathfrak{T}}^{[-2]} \left(\Delta^2 \tilde{\upalpha}^{[-2]}\right)
\end{equation}
we have   that $F^{[s]}$ is also $[A_1,A_2]$-admissible  and
the coefficients $F^{[s],(a\omega)}_{m\ell}$
can be defined.

Let us first introduce the following shorthand notation
\[
\kappa = \left(r^2+a^2\right)\omega - am 
\]
and 
\begin{equation}
\label{defofbigLambda}
\Lambda^{[s],(a\omega)}_{m \ell}  =  \lambda_{m\ell}^{[s], (a\omega)} + a^2 \omega^2 -2am\omega .
\end{equation}

We have the following
\begin{proposition}
\label{radodeprop}
Fix $|a|<M$ and $s=0,\pm2$. 
Let $\upalpha^{[s]}$ be an $[A_1,A_2]$-admissible spin weighted function, $F^{[s]}$ be as
defined in~$(\ref{inhomogdefin})$,   
with coefficients $\upalpha^{[s], (a\omega)}_{m\ell}$, $(\rho^2F)^{[s], (a\omega)}_{m\ell}$
as defined above.
Then $\upalpha^{[s], (a\omega)}_{m\ell}$ is smooth in $r\in [A_1,A_2]$ and satisfies the 
ordinary differential equation 
\begin{equation}
\label{odeorigform}
\frac{1}{\Delta^s} \frac{d}{dr} \left(\Delta^{s+1} \frac{d\upalpha^{[s], (a\omega)}_{m\ell}}{dr}\right)   +  \left(\frac{\kappa^2-2is\left(r-M\right)
\kappa}{\Delta} + 4is\omega r - \Lambda^{[s], (a\omega)}_{m\ell} \right)
\upalpha^{[s], (a\omega)}_{m\ell}
 = \frac{\left(r^2+a^2\right)^{7/2}}{\rho^2 \Delta^{1+s/2}} F^{[s], (a\omega)}_{m\ell}.
\end{equation}
\end{proposition}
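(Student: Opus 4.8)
The plan is to verify Proposition~\ref{radodeprop} by a direct computation, separating variables in the operator $\widetilde{\mathfrak{T}}^{[s]}$ (equivalently $\mathfrak{T}^{[s]}$) and reading off the radial ODE that each Fourier--spheroidal coefficient must satisfy. First I would work with the \emph{unrescaled} Teukolsky operator $\mathfrak{T}^{[s]}$ from $(\ref{Teukop})$ rather than $\widetilde{\mathfrak{T}}^{[s]}$, since the classical separation is cleanest there; one then converts the inhomogeneity via the relation $(\ref{teurefer})$ between $\mathfrak{T}^{[s]}$ and $\widetilde{\mathfrak{T}}^{[s]}$, accounting for the rescaling weights $\Delta^{\pm 2}(r^2+a^2)^{-3/2}$ in $(\ref{rescaleddefsfirstattempt})$ and the extra first-order piece $\mathfrak{t}^{[s]}$. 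Since $\upalpha^{[s]}$ is $[A_1,A_2]$-admissible (compactly supported in $t$ for $r\in[A_1,A_2]$), the Fourier transform in $t$ and the expansion in spin-weighted oblate spheroidals $(\ref{defofcoeffs})$ are justified pointwise in $r$, and smoothness of the coefficients $\upalpha^{[s],(a\omega)}_{m\ell}(r)$ in $r\in[A_1,A_2]$ follows from smoothness of $\upalpha^{[s]}$ together with the $L^2$-convergence of the spheroidal expansion and elliptic regularity in the angular variables (the operator $\mathring{\slashed\triangle}^{[s]}_m(a\omega)$ has discrete spectrum with eigenfunctions in $\mathscr{S}^{[s]}_\infty$).

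Next I would substitute the ansatz $e^{-i\omega t}S^{[s]}_{m\ell}(a\omega,\cos\theta)e^{im\phi}\,\upalpha^{[s],(a\omega)}_{m\ell}(r)$ into $\mathfrak{T}^{[s]}\upalpha^{[s]}=\rho^2(\text{source})$ and use $\partial_t\mapsto -i\omega$, $\partial_\phi\mapsto im$. The key algebraic step is to recognise that, after multiplying through by $\rho^2$, all the $\theta$-dependent terms reorganise precisely into the operator $\mathring{\slashed\triangle}^{[s]}_m(a\omega)$ acting on $S^{[s]}_{m\ell}$: the terms $-\frac{1}{\sin\theta}\partial_\theta(\sin\theta\,\partial_\theta)$, $\frac{m^2}{\sin^2\theta}$, $\frac{2ms\cos\theta}{\sin^2\theta}$, $s^2\cot^2\theta$, $-s$, together with the mixed $\theta$--$\omega$ terms $-a^2\omega^2\cos^2\theta+2a\omega s\cos\theta$ coming from $\Box_g$ and the spin-lower-order terms, combine to give $\lambda^{[s],(a\omega)}_{m\ell}$ via $(\ref{deflam})$. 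The remaining purely radial terms — coming from $\frac{1}{\rho^2}\big[\partial_r(\Delta\partial_r)\big]$ in $\Box_g$, the $\frac{2s(r-M)}{\rho^2}\partial_r$ term, the $\frac{(r^2+a^2)^2}{\Delta}\partial_t^2$-type term, the cross term $\frac{4Mar}{\Delta}\partial_t\partial_\phi$, the $\frac{a^2}{\Delta}\partial_\phi^2$ term, and the first-order-in-$\partial_t$ and $\partial_\phi$ spin terms with coefficients $\frac{a(r-M)}{\Delta}$, $\frac{M(r^2-a^2)}{\Delta}-r$, $-ia\cos\theta$ (the $\cos\theta$ pieces again get absorbed angularly) — must be collected and shown to equal the left-hand side of $(\ref{odeorigform})$. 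Here one uses the identity $\frac{1}{\Delta^s}\frac{d}{dr}\big(\Delta^{s+1}\frac{d}{dr}\big)=\frac{d}{dr}\big(\Delta\frac{d}{dr}\big)+2s(r-M)\frac{d}{dr}$ and the completion of the square that produces $\kappa=(r^2+a^2)\omega-am$, so that $\frac{\kappa^2-2is(r-M)\kappa}{\Delta}+4is\omega r$ emerges, with the constant $\Lambda^{[s],(a\omega)}_{m\ell}=\lambda^{[s],(a\omega)}_{m\ell}+a^2\omega^2-2am\omega$ of $(\ref{defofbigLambda})$ collecting the $\omega$-independent-of-$\theta$ leftovers.

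The last piece is the right-hand side: I would track how the source $F^{[s]}$ defined in $(\ref{inhomogdefin})$, which lives on the rescaled level, produces the factor $\frac{(r^2+a^2)^{7/2}}{\rho^2\Delta^{1+s/2}}$ once everything is expressed in terms of $\upalpha^{[s],(a\omega)}_{m\ell}$ (not $\widetilde\upalpha$) and once the common angular normalisation is divided out; this is bookkeeping of the powers $\Delta^2(r^2+a^2)^{-3/2}$, the $\Delta\rho^{-2}$ prefactor in $(\ref{teurefer})$, and the $\rho^2$ that was multiplied through. The main obstacle I anticipate is purely computational stamina: keeping the numerous first-order and lower-order $a$-dependent terms organised and confirming that the $\cos\theta$- and $\cos^2\theta$-weighted pieces from $\Box_g$, from the $\mathfrak{t}^{[s]}$ correction, and from the spin terms all land exactly inside $\mathring{\slashed\triangle}^{[s]}_m(a\omega)$ with no residue — there is no conceptual difficulty, since this is Teukolsky's classical separation, but the rescaling weights in this paper's conventions ($\sqrt{r^2+a^2}$ rather than $r$, and the $\Delta^{\pm 2}$ factors) mean one cannot simply quote a reference and must recompute carefully. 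I would relegate the full term-by-term verification to an appendix, as the paper does for Proposition~\ref{follfundprop}.
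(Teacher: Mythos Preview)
Your proposal is correct and is essentially the same approach as the paper's: the paper's proof consists of a single sentence noting that ``the proof is immediate from the usual formal derivation'' and citing the classical reference~\cite{hartle1974analytic}, whereas you propose to actually write out that formal derivation in detail. The only difference is that the paper declines to reproduce the computation at all, so your plan to relegate it to an appendix is already more than the paper does.
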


In view of our definitions, the proof is immediate from the usual formal derivation
of $(\ref{odeorigform})$.
See~\cite{hartle1974analytic}.
 The $s=0$ case corresponds precisely to Proposition~5.2.1 of~\cite{partiii}.

Note the difference between $(\ref{defofbigLambda})$ and 
our $\widetilde{\Lambda}^{[s],(a\omega)}_{m \ell}$ in (\ref{workingdef}). It is only the latter quantity
which will appear in the estimates of this paper. 
We have retained $(\ref{defofbigLambda})$
to faciliate comparison with the literature.

\subsubsection{The rescaled coefficients $u$}
\label{boundcondusec}

Let us fix parameters $|a|<M$ and $s$, and consider $\upalpha^{[s]}$ as in 
the statement of Proposition~\ref{radodeprop}.

Define the rescaled\footnote{We note that this renormalisation is slightly different from~\cite{hartle1974analytic}.} quantities
\begin{equation}
\label{rescaled}
u^{[s], (a\omega)}_{m\ell}(r) = 
\Delta^{s/2} \sqrt{r^2+a^2}\, \upalpha^{[s], (a\omega)}_{m\ell}\left(r\right) ,
\end{equation}
\begin{equation}
\label{rescaledrhs}
H^{[s], (a\omega)}_{m\ell} =\frac{\Delta}{\rho^2 w} F^{[s], (a\omega)}_{m\ell} .
\end{equation}
Equation $(\ref{odeorigform})$  then reduces to
\begin{align} \label{basic}
\frac{d^2}{(dr^*)^2} u^{[s], (a\omega)}_{m\ell}  + V^{[s], (a\omega)}_{m\ell}\left(r^*\right) u =H^{[s], (a\omega)}_{m\ell}
\end{align}
for 
\begin{align*}
V^{[s], (a\omega)}_{m\ell}\left(r^*\right) =\frac{\Delta}{\left(r^2+a^2\right)^2} \tilde{V}^{[s], (a\omega)}_{m\ell}+
 V^{[s]}_{0},
\end{align*}
with
\begin{align*}
\tilde{V}^{[s], (a\omega)}_{m\ell} :=  \frac{\kappa^2-2is\left(r-M\right)\kappa}{\Delta} + 4is\omega r - \Lambda^{[s]}_{m\ell}, \qquad
V_{0}^{[s]} := \frac{\Delta^{-s/2+1}}{\left(r^2+a^2\right)^\frac{3}{2}} \frac{d}{dr} \left(\Delta^{s+1} \frac{d}{dr} \left(\frac{\Delta^{-s/2}}{\sqrt{r^2+a^2}}\right)\right).
\end{align*}

For $s=0$, this reduces to the form of the separated wave equation
used in~\cite{partiii}.

\section{The frequency-localised transformations}
\label{Chandrasec}

In this section, we will define frequency localised versions of the quantities
$P^{[\pm2]}$, $\Psi^{[\pm2]}$, $\uppsi^{[\pm2]}$ 
of Section~\ref{physspacechandrasec} and the Regge--Wheeler type equation 
$(\ref{RWtypeinthebulk})$.

We begin in {\bf Section~\ref{sepnullframsec}}  with 
 the definitions of the frequency
localised version of the null frame $L$, $\underline{L}$. We then derive
in {\bf Section~\ref{newnamehere}}
the frequency localised expression for $\Psi^{[\pm2]}$ 
followed in {\bf Section~\ref{flRWeqsec}}
with the frequency localised form of $(\ref{RWtypeinthebulk})$.

\vskip1pc

{\bf \emph{In what follows in this section,
we will always assume $\upalpha^{[\pm2]}$ is as in Proposition~\ref{radodeprop} 
with corresponding $u^{[\pm2],  (a\omega)}_{m\ell}$.
}}

\subsection{The separated null frame}
\label{sepnullframsec}
Note that (following the conventions in~\cite{partiii}) we have the following formal analogues:
\[
-i\omega \sim \partial_t,
\]
\[
im \sim \partial_\phi.
\]
We define the separated frame operators (corresponding to the principal null 
directions~$(\ref{spacetimenullframe})$) by
\begin{equation}
\label{Ldefin}
L=\frac{d}{dr^*}-i\omega+\frac{iam}{r^2+a^2},
\end{equation}
\begin{equation}
\label{Lbardefin}
-\underline{L}=\frac{d}{dr^*} +i\omega-\frac{iam}{r^2+a^2}.
\end{equation}
We have retained the notation of $(\ref{spacetimenullframe})$ without fear
of confusion.

Also note that
$(\ref{forfutref})$ 
implies  the following formal analogue: 
\[
\mathring{\slashed\triangle}^{[s]}_m \left(a\omega \right) \sim \mathring{\slashed\triangle}^{[s]}_m \left(0\right) + a^2 \cos^2\theta \partial_t^2 - 2i s a\cos \theta \partial_t \, .
\]

\subsection{The frequency localised coefficients $P^{[\pm2], (a\omega)}_{m\ell}$, 
$\Psi^{[\pm 2], (a\omega)}_{m\ell}$ and $\psi^{[\pm 2],(a\omega)}_{m\ell}$}
\label{newnamehere}

We may now understand the relations between the quantities
of Section~\ref{physicalspacedefsec}
at the frequency localised level.

\begin{proposition}
\label{newstyleprop}
Let ${\upalpha}^{[\pm2]}$ be as in Proposition~\ref{radodeprop}
and consider $P^{[+2]}$, $\Psi^{[+2]}$ and $\uppsi^{[+2]}$ defined by
$(\ref{Pdefsbulkplus2})$, $(\ref{rescaledPSI})$ 
and $(\ref{officdeflittlepsiplus})$, respectively,
and consider $P^{[-2]}$, $\Psi^{[-2]}$ and $\uppsi^{[-2]}$ 
defined by $(\ref{Pdefsbulkmunus2})$, $(\ref{rescaledPSI})$ 
and $(\ref{officdeflittlepsiminus})$, respectively.

Let  $u^{[\pm 2], (a\omega)}_{m\ell}$ be the arising 
coefficient of $\upalpha^{[\pm 2]}$.
Then $P^{[\pm 2]}$, $\Psi^{[\pm 2]}$ and $\uppsi^{[\pm 2]}$ are 
$[A_1,A_2]$-admissible spin weighted functions
and their coefficients $P^{[\pm 2], (a\omega)}_{m\ell}$, $\Psi^{[\pm 2], (a\omega)}_{m\ell}$ and
$\psi^{[\pm 2], (a\omega)}_{m\ell}$ are related by
\begin{align}
\label{plustwofwoflr1}
\left(r^2+a^2\right)\sqrt{w} \cdot {\psi}^{[+2], (a\omega)}_{m\ell} &= - \frac{1}{2} \frac{1}{w}\underline{L} \left(u^{[+2], (a\omega)}_{m\ell} \cdot w\right),
 \\
\label{plustwofwoflr2}
 \Psi^{[+2], (a\omega)}_{m\ell} = \left(r^2+a^2\right)^{3/2} {P}^{[+2], (a\omega)}_{m\ell} &= \frac{1}{w} \underline{L} \left( \left(r^2+a^2\right)\sqrt{w} \cdot \psi^{[+2], (a\omega)}_{m\ell}  \right) = -\frac{1}{2} \frac{1}{w} \underline{L} \left(\frac{1}{w}\underline{L} \left(u^{[+2], (a\omega)}_{m\ell} \cdot w\right)\right),
\end{align}
\begin{align}
\label{minustwofwoflr1}
\left(r^2+a^2\right)\sqrt{w} \cdot {\psi}^{[-2], (a\omega)}_{m\ell} &= \frac{1}{2} \frac{1}{w} L \left({u}^{[-2],(a\omega)}_{m\ell} \cdot w\right) \, ,
 \\
\label{minustwofwoflr2}
\Psi^{[-2],(a\omega)}_{m\ell} = \left(r^2+a^2\right)^{3/2} {P}^{[-2],(a\omega)}_{m\ell} &=- \frac{1}{w} L \left( \left(r^2+a^2\right)\sqrt{w} \cdot {\psi}^{[-2],(a\omega)}_{m\ell}  \right) = -\frac{1}{2} \frac{1}{w} L \left(\frac{1}{w} L
 \left({u}^{[-2],(a\omega)}_{m\ell} \cdot w\right)\right).
\end{align}
\end{proposition}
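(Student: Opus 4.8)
The plan is to reduce everything to a single application of the separated frame operator, using the intermediate relations \eqref{auxrel1}--\eqref{auxrel2} already recorded in Section~\ref{physicalspacedefsec}, and then to track $r$-dependent prefactors via $w=\Delta(r^2+a^2)^{-2}$ and the rescaling \eqref{rescaled}.

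First I would check that $P^{[\pm2]}$, $\Psi^{[\pm2]}$ and $\uppsi^{[\pm2]}$ are $[A_1,A_2]$-admissible in the sense of Definition~\ref{suffintdef}. On $\{A_1\le r\le A_2\}$ the defining expressions \eqref{Pdefsbulkplus2}, \eqref{Pdefsbulkmunus2}, \eqref{officdeflittlepsiplus}, \eqref{officdeflittlepsiminus} are obtained from $\upalpha^{[\pm2]}$ by composing multiplication by functions of $r$ which are smooth (and, in this $r$-range, also nonvanishing) with the vector fields $L$, $\underline L$, which are themselves built from $\partial_{r^*}$, $T$ and $\Phi$. All of these operations preserve smoothness and compact support in $t$ and leave the spin weight unchanged, so $P^{[\pm2]}, \Psi^{[\pm2]}, \uppsi^{[\pm2]}\in\mathscr{S}^{[\pm2]}_\infty(\mathcal R\cap\{A_1\le r\le A_2\})$ with compact $t$-support; in particular their Fourier--spheroidal coefficients are well defined by the construction of Section~\ref{coeffssec}.

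The key observation is the following: if $f$ is an $[A_1,A_2]$-admissible spin-$s$ weighted function with coefficients $f^{(a\omega)}_{m\ell}(r)$, then, since $\underline L^\mu\nabla_\mu$ acting on a section of $B(R)$ in the chosen trivialization reduces to the coordinate operator $-\partial_{r^*}+T+\frac{a}{r^2+a^2}\Phi$ (the $t$--$r$ components of the connection contribute nothing), and since $T\leftrightarrow-i\omega$, $\Phi\leftrightarrow im$ on Fourier--spheroidal modes as in the formal analogues of Section~\ref{sepnullframsec}, the coefficients of $\underline L^\mu\nabla_\mu f$ are $-\underline L\, f^{(a\omega)}_{m\ell}$ with $\underline L$ the separated operator \eqref{Lbardefin}, and likewise the coefficients of $L^\mu\nabla_\mu f$ are $L\, f^{(a\omega)}_{m\ell}$ with $L$ as in \eqref{Ldefin}. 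Granting this, the proof is a direct computation. For $s=+2$: the coefficient of $\Delta^2(r^2+a^2)^{-3/2}\upalpha^{[+2]}$ equals $\Delta^2(r^2+a^2)^{-3/2}\upalpha^{[+2],(a\omega)}_{m\ell}=\Delta(r^2+a^2)^{-2}u^{[+2],(a\omega)}_{m\ell}=w\,u^{[+2],(a\omega)}_{m\ell}$ by \eqref{rescaled}; hence from \eqref{officdeflittlepsiplus} the coefficient of $\uppsi^{[+2]}$ is $-\tfrac12\Delta^{-3/2}(r^2+a^2)^2\,(-\underline L)(w\,u^{[+2],(a\omega)}_{m\ell})$, and multiplying by $(r^2+a^2)\sqrt w=\sqrt\Delta$ gives $(r^2+a^2)\sqrt w\,\psi^{[+2],(a\omega)}_{m\ell}=-\tfrac12 w^{-1}\underline L(w\,u^{[+2],(a\omega)}_{m\ell})$, which is \eqref{plustwofwoflr1}. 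Combining this with the coefficient form of \eqref{auxrel1}, namely $\Psi^{[+2],(a\omega)}_{m\ell}=w^{-1}\underline L\big((r^2+a^2)\sqrt w\,\psi^{[+2],(a\omega)}_{m\ell}\big)$, and recalling $\Psi^{[+2]}=(r^2+a^2)^{3/2}P^{[+2]}$ from \eqref{rescaledPSI}, yields all of \eqref{plustwofwoflr2}. The case $s=-2$, giving \eqref{minustwofwoflr1}--\eqref{minustwofwoflr2}, is entirely analogous with $\underline L$ replaced by $L$, using \eqref{Pdefsbulkmunus2}, \eqref{officdeflittlepsiminus}, \eqref{auxrel2} and the $s=-2$ rescaling in \eqref{rescaled}, the sign changes being exactly those already present in those formulas; note that $u^{[-2],(a\omega)}_{m\ell}\cdot w=(r^2+a^2)^{-3/2}\upalpha^{[-2],(a\omega)}_{m\ell}$, so the bookkeeping is identical.

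The only genuine work is this last bookkeeping of $r$-prefactors together with verifying that the spin weight is preserved at every stage so that the angular decomposition commutes with all the operations; once the key observation on the separated frame is in place, the identities \eqref{plustwofwoflr1}--\eqref{minustwofwoflr2} are purely algebraic consequences of \eqref{auxrel1}--\eqref{auxrel2}. I therefore expect no serious obstacle, only care in matching conventions for $\sqrt\Delta$ near $\mathcal H^+$ (harmless here since $r\ge A_1>r_+$).
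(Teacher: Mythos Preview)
Your approach is correct and is essentially what the paper (implicitly) does: the proposition is stated without proof there, being a direct translation of the physical-space definitions \eqref{officdeflittlepsiplus}--\eqref{officdeflittlepsiminus} and \eqref{auxrel1}--\eqref{auxrel2} to the separated picture via the correspondence of Section~\ref{sepnullframsec}, together with the rescaling \eqref{rescaled}. One small slip to fix in your write-up: the coefficient of $\underline{L}^\mu\nabla_\mu f$ is $\underline{L}\,f^{(a\omega)}_{m\ell}$, not $-\underline{L}\,f^{(a\omega)}_{m\ell}$, since $-\partial_{r^*}-i\omega+\tfrac{iam}{r^2+a^2}$ is exactly the separated $\underline{L}$ obtained by negating \eqref{Lbardefin}; your final formula \eqref{plustwofwoflr1} is nonetheless correct, so there is a compensating sign error in the intermediate arithmetic that you should straighten out.
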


\subsection{The frequency localised Regge--Wheeler equation $(\ref{RWtypeinthebulk})$ 
for $\Psi^{[\pm 2],(a\omega)}_{m\ell}$}
\label{flRWeqsec}

A straightforward computation now leads to

\begin{proposition} \label{psieq}
Under the assumptions of Proposition~\ref{newstyleprop}, 
the  ${\Psi}^{[\pm2], (a\omega)}_{m\ell}$ 
satisfy the equation
\begin{align}  \label{rwPsi2}
\left(\Psi^{[s],(a\omega)}_{m\ell}\right)^{\prime \prime} &+ \left(\omega^2 - \mathcal{V}^{[s],(a\omega)}_{m\ell} \right)\Psi^{[s],(a\omega)}_{m\ell}=  \mathcal{J}^{[s],(a\omega)}_{m\ell} + \mathfrak{G}^{[s],(a\omega)}_{m\ell} \, ,
\end{align}
where the potential $\mathcal{V}^{[s],(a\omega)}_{m\ell}$ is \underline{real} and defined by
\begin{align} \label{Vdef}
\mathcal{V}^{[s],(a\omega)}_{m\ell}&=\frac{ \Delta \left(\lambda^{[s]}_{m \ell} +a^2\omega^2+s^2+s\right) + 4Mram\omega -a^2 m^2 }{\left(r^2+a^2\right)^2} -\frac{\Delta}{(r^2+a^2)^2}\frac{6Mr (r^2-a^2)}{(r^2+a^2)^2} -7 a^2 \frac{\Delta^2}{(r^2+a^2)^4} \nonumber \\
&= \mathcal{V}^{[s]}_0 + \mathcal{V}_1 + \mathcal{V}_2 \, .
\end{align}
and the inhomogeneous terms by
\begin{align}
\mathcal{J}^{[s],(a\omega)}_{m\ell} = & \ \ \ \ \ \  \frac{\Delta}{\left(r^2+a^2\right)^2} \left[ s\frac{-4r^2+4a^2}{r^2+a^2} aim -20a^2 \frac{r^3-3Mr^2+ra^2+Ma^2}{\left(r^2+a^2\right)^2}\right]\left( \sqrt{\Delta}\psi^{[s],(a\omega)}_{m\ell} \right) \nonumber \\
& +a^2 \frac{\Delta}{\left(r^2+a^2\right)^2} \left[ -6s \frac{r}{r^2+a^2} aim
+3  \left( \frac{r^4 -a^4+10Mr^3-6Ma^2r}{(r^2+a^2)^2}\right) \right] \left(u^{[s],(a\omega)}_{m\ell}\frac{\Delta}{\left(r^2+a^2\right)^2}\right), 
\end{align}
\[
 \mathfrak{G}^{[+2],(a\omega)}_{m\ell} =\frac{1}{2} \underline{L} \left( \frac{\left(r^2+a^2\right)^2}{\Delta}\underline{L} \left(\frac{\Delta}{w\rho^2} F^{[+2],(a\omega)}_{m\ell}\right)\right),  
 \qquad
\mathfrak{G}^{[-2],(a\omega)}_{m\ell} =\frac{1}{2} {L} \left( \frac{\left(r^2+a^2\right)^2}{\Delta}{L} \left(\frac{\Delta^3}{w\rho^2} F^{[-2],(a\omega)}_{m\ell}\right)\right).
\]
\end{proposition}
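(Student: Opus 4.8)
\textbf{Proof proposal for Proposition \ref{psieq}.}

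The plan is to obtain \eqref{rwPsi2} as the frequency-localised shadow of the physical-space identity \eqref{RWtypeinthebulk} proved in Proposition \ref{follfundprop}, but it is cleaner in practice to derive it directly from the radial ODE \eqref{basic} together with the algebraic relations of Proposition \ref{newstyleprop}. First I would record the dictionary between the physical-space operators and their separated counterparts: $T\sim -i\omega$, $\Phi\sim im$, $\underline L$ and $L$ become the operators \eqref{Lbardefin}, \eqref{Ldefin}, and $\mathring{\slashed\triangle}^{[s]}_m(a\omega)$ acts on $S^{[s]}_{m\ell}$ by the eigenvalue $\lambda^{[s]}_{m\ell}(a\omega)$ via \eqref{deflam}. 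Under this dictionary the generalised Regge--Wheeler operator $\mathfrak R^{[s]}$ of \eqref{RWoprdefhere}, applied to a quantity of the form $\Psi^{[s],(a\omega)}_{m\ell}(r)e^{-i\omega t}S^{[s]}_{m\ell}e^{im\phi}$, collapses to $\tfrac{\Delta\rho^2}{(r^2+a^2)^2}$ times the one-dimensional operator $\tfrac{d^2}{(dr^*)^2}+\omega^2-\mathcal V^{[s],(a\omega)}_{m\ell}$, the factor $\tfrac12(L\underline L+\underline L L)$ producing $\tfrac{d^2}{(dr^*)^2}+\omega^2$ plus lower order $r^*$-derivative terms that get absorbed into $\mathcal V$. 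One checks that collecting the $|\Psi|^2$-coefficients — the $(\mathring{\slashed\triangle}^{[s]}+s^2+s)$ term giving $\Delta(\lambda^{[s]}_{m\ell}+a^2\omega^2+s^2+s)$, the $-\tfrac{6Mr(r^2-a^2)}{(r^2+a^2)}$ term, the $-7a^2\Delta/(r^2+a^2)^2$ term, and the first-order-in-$T,\Phi$ piece $-(2aT\Phi+a^2\sin^2\theta\,TT-2isa\cos\theta\,T)$ which contributes $+4Mram\omega-a^2m^2$ after using $\kappa=(r^2+a^2)\omega-am$ and \eqref{forfutref} — reproduces exactly \eqref{Vdef}. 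The reality of $\mathcal V^{[s],(a\omega)}_{m\ell}$ is then manifest from the final expression, since $\lambda^{[s]}_{m\ell}$ is real (self-adjointness of $\mathring{\slashed\triangle}^{[s]}_m(a\omega)$) and $\omega,m$ are real.

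Next I would transcribe the right-hand side. The term $-\tfrac{\rho^2}{\Delta}\mathcal J^{[s]}$ in \eqref{RWtypeinthebulk}, after multiplying through by $\tfrac{(r^2+a^2)^2}{\Delta\rho^2}$ to normalise the principal part to $\tfrac{d^2}{(dr^*)^2}+\cdots$, becomes $\mathcal J^{[s],(a\omega)}_{m\ell}$; here one uses $\Phi\sqrt\Delta\uppsi^{[s]}\mapsto im\sqrt\Delta\psi^{[s],(a\omega)}_{m\ell}$ and the identity $\upalpha^{[s]}\Delta^2(r^2+a^2)^{-3/2}\mapsto u^{[s],(a\omega)}_{m\ell}\,\Delta(r^2+a^2)^{-2}$ coming from \eqref{rescaled} (for $s=+2$) and the analogous rescaling for $s=-2$, which is precisely why the coefficient structure $\tfrac{-4r^2+4a^2}{r^2+a^2}s\,aim$ etc.\ appears with $s$ as displayed. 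The term $-\tfrac{\rho^2}{\Delta}\mathfrak G^{[s]}$ transcribes directly into $\mathfrak G^{[s],(a\omega)}_{m\ell}$ since $\underline L,L,F$ all separate and the prefactors $\tfrac{(r^2+a^2)^2}{\Delta}$, $\tfrac{\Delta}{w\rho^2}$, $\tfrac{\Delta^3}{w\rho^2}$ are radial. Thus \eqref{rwPsi2} follows once the physical-space Proposition \ref{follfundprop} is granted, and the only genuine content beyond bookkeeping is the verification that nothing is lost passing to Fourier coefficients, i.e.\ that the equation holds in the classical sense in $r\in[A_1,A_2]$.

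Alternatively — and this is the route I would actually carry out in detail, to be self-contained — I would apply the operators \eqref{plustwofwoflr2}, \eqref{minustwofwoflr2} to \eqref{basic} directly. For $s=+2$, write $\Psi^{[+2],(a\omega)}_{m\ell}=-\tfrac12 w^{-1}\underline L(w^{-1}\underline L(u\cdot w))$ with $u=u^{[+2],(a\omega)}_{m\ell}$ and $w=\Delta/(r^2+a^2)^2$; commuting $\underline L$ through the operator $\tfrac{d^2}{(dr^*)^2}+(\omega^2-V^{[+2],(a\omega)}_{m\ell})$ twice, using $\underline L=-\tfrac{d}{dr^*}-i\kappa/(r^2+a^2)$ and the explicit form of $V^{[+2]}$, one collects all commutator terms; the miracle — which is Chandrasekhar's transformation, and is the only place where the precise coefficients matter — is that the would-be highest-order coupling to $u$ (and to $\underline L u$) organises itself so that the surviving coupling is exactly $\mathcal J^{[+2],(a\omega)}_{m\ell}$, i.e.\ involves only $\sqrt\Delta\psi^{[+2],(a\omega)}_{m\ell}$ (one $\underline L$-derivative of $u$, times $w$-weights) and $u$ itself, with all $O(1)$-in-$a$ pieces cancelling. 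The $s=-2$ case is identical with $L$ in place of $\underline L$. I expect the main obstacle to be precisely this algebraic cancellation: one must be scrupulous with the $w$-weights and the $\pm i\kappa/(r^2+a^2)$ shifts in \eqref{Ldefin}--\eqref{Lbardefin}, since a single misplaced factor destroys the structure; the cleanest way to organise it is to first prove \eqref{RWtypeinthebulk} in physical space (Appendix \ref{sec:Psiderivation}) and then invoke the separation dictionary, relegating the direct ODE computation to a check. In either case the proof is ``direct calculation,'' and I would present it as such, pointing to Appendix \ref{sec:Psiderivation} for the physical-space version and to Appendix \ref{appendix:RW} for the explicit form of the coefficients $\mathfrak c_i$.
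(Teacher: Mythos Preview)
Your proposal is correct and your preferred route (the second one, applying $\underline L$ twice directly to the separated Teukolsky ODE \eqref{basic} and collecting commutators) is precisely what the paper does in Appendix~\ref{sec:Psiderivation}: there the Teukolsky equation is first rewritten in the form \eqref{teudnp}--\eqref{teudnm} using $L,\underline L$, then $\underline L$ (resp.\ $L$) is applied once to reach \eqref{elliptic1}--\eqref{elliptic2} for $\psi^{[s]}$, and once more to reach \eqref{elm}--\eqref{elm2} for $\Psi^{[s]}$, after which the coefficients are simplified to the stated $\mathcal V^{[s]}$ and $\mathcal J^{[s]}$. One small correction to your final paragraph: Appendix~\ref{sec:Psiderivation} actually proceeds in the \emph{opposite} order from what you suggest---it derives the separated version first (Sections~A.1--A.4) and only then translates to physical space (Section~A.5) to obtain Proposition~\ref{follfundprop}; so your ``alternative'' direct ODE route is in fact the primary one in the paper, and your first ``transcription from physical space'' approach, while perfectly valid, would be circular as written since \eqref{RWtypeinthebulk} is itself established via the separated computation.
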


\begin{proof}
See Appendix \ref{sec:Psiderivation}.
\end{proof}

\begin{remark}
Note that $\mathcal{J}^{[s]}$ vanishes for $a=0$. The second line of $\mathcal{J}^{[s]}$ contains only linear terms in $m$
(i.e.~corresponding to only first derivatives in physical space). The first line contains in this sense ``first'' and ``zero'' derivatives of $\psi^{[s]}$ and hence at most (certain) ``second" derivatives of $u^{[s]}$.
\end{remark}

\begin{remark}
We may rewrite the potential
\begin{equation}
\label{rewritepot}
\mathcal{V}^{[\pm2]}_0 = \frac{ \Delta \left(\widetilde{\Lambda}^{[\pm2]}-4|a\omega| +4\right) + 4Mram\omega -a^2 m^2 }{\left(r^2+a^2\right)^2}.
\end{equation}
Here we see the dependence in the spin is entirely contained in the
definition of $\widetilde{\Lambda}^{[\pm2]}$.
\end{remark}

\begin{remark}
\label{stillhold}
Let us note finally that if, for a fixed frequency triple $(\omega, m, \widetilde{\Lambda})$, 
$u$ is simply assumed to be a smooth 
solution of the ODE $(\ref{basic})$ where $\lambda_{m\ell}^{[s]}(a\omega)$ is replaced by the quantity defined
by $\widetilde{\Lambda}-s-(a\omega)^2-4|a\omega|$ in view of $(\ref{workingdef})$, 
and $P$, $\Psi$, $\psi$ are \underline{defined} by relations 
$(\ref{plustwofwoflr1})$, $(\ref{plustwofwoflr2})$, $(\ref{minustwofwoflr1})$,
$(\ref{minustwofwoflr2})$,
then the identities of Proposition~\ref{psieq} again hold.
\end{remark}

\section{Frequency-localised estimates in $r\in[A_1,A_2]$}
\label{ODEmegasec}

The present section deals entirely with the system of relations satisfied
by 
\[
u^{(a\omega)}_{m\ell}, \qquad \psi^{(a\omega)}_{m\ell}, \qquad \Psi^{(a\omega)}_{m\ell}
\]
at fixed frequency in the region $r\in[A_1,A_2]$, for given inhomogeneous terms. 
The main result will be {\bf Theorem~\ref{phaseSpaceILED}},
stated in {\bf Section~\ref{sectionforstatement}}, 
which can be thought of as a fixed frequency
version of an integrated local energy estimate for all quantities near trapping,
with boundary terms  ${\rm Q}(A_i)$
which will eventually cancel the boundary terms appearing
on the right hand side of Proposition~\ref{multpropnofreq}
of Section~\ref{Physspacesecnew}.

We shall prove multiplier estimates for $(\ref{rwPsi2})$ in {\bf Section~\ref{multestsec}}
and
transport estimates for $(\ref{plustwofwoflr1})$--$(\ref{minustwofwoflr2})$ 
in {\bf Section~\ref{transestsec}}.
Together with an integration by parts argument, the transport estimates will allow
us to bound in {\bf Section~\ref{completing}} 
the inhomogeneous terms on the right hand side of
$(\ref{rwPsi2})$ arising from the coupling of the Regge--Wheeler
equation for $\Psi^{(a\omega)}_{m\ell}$ with $u^{(a\omega)}_{m\ell}$ and 
$\psi^{(a\omega)}_{m\ell}$, 
thus will allow to complete the proof of Theorem~\ref{phaseSpaceILED}

Just like with the analogous Theorem~8.1 of~\cite{partiii},
the results of this section can be understood as results about ODE's,
independently of the particular framework of Section~\ref{Separationmegasec}.
We have thus tried to give as self-contained a statement as possible.

\vskip1pc

\subsection{Statement of Theorem~\ref{phaseSpaceILED}: the main fixed frequency estimates}
\label{sectionforstatement}

In the present section  we consider the 
coupled system of ODEs 
satisfied by $u$, $\psi$ and $\Psi$ and state a fixed frequency analogue of local integrated
energy decay, in the region $r\in[A_1,A_2]$ near trapping.

\subsubsection{Frequency localised norms}
\label{frequencylocalnormssec}

Before formulating the theorem, we define certain energy norms.

In view of Remark~\ref{stillhold},
the natural setting of the theorem
refers only to  an admissible frequency triple $(\omega, m, \widetilde{\Lambda})$
(cf.~Definition~\ref{admiss.freq.def})
and associated
solutions $u^{[\pm2]}$
of~\eqref{basic} on $[A_1,A_2]$ 
 and $\psi^{[\pm2]}$, $\Psi^{[\pm2]}$
defined by 
$(\ref{plustwofwoflr1})$--$(\ref{minustwofwoflr2})$,
where $\lambda_{m\ell}^{[s]}(a\omega)$ is replaced by the quantity defined
by $\widetilde{\Lambda}-s-(a\omega)^2-4|a\omega|$ in view of $(\ref{workingdef})$.
Recall that all derived ordinary differential identities follow, in particular $(\ref{RWtypeinthebulk})$,
as does  the estimate $(\ref{eli1e})$ of Section~\ref{sec:swest}.
In practice, of course, we will always apply this for $u^{[\pm2]}$ equal
to  $u^{[\pm2],(a\omega)}_{m\ell}$ and $\widetilde{\Lambda}$ equal 
to $\widetilde{\Lambda}^{[s],(a\omega)}_{m\ell}$.

Given the above,
let us define the quantities
\begin{align*}
\|\mathfrak{d} \Psi^{[\pm2]}\|^2 &= 
\int_{A_1^*}^{A_2^*}\left[ \left|(\Psi^{[\pm2]})'\right|^2 + \left( \left(1-r^{-1}r_{\rm trap}\right)^2\left(  \omega^2 + \widetilde\Lambda\right) + 1\right)\left|(\Psi^{[\pm2]})\right|^2\right]\, dr^*,
\\
\|\mathfrak{d}\psi^{[\pm2]}\|^2 &=\int_{A_1^*}^{A_2^*}(\omega^2+m^2+1)|\psi^{[\pm2]}|^2 dr^*,
\\
\|\mathfrak{d}u^{[\pm2]}\|^2 &=\int_{A_1^*}^{A_2^*}(\omega^2+m^2+1)|u^{[\pm2]}|^2 dr^*,
\end{align*}
as well as the boundary energies for $i=1,2$:
\[
\|\mathfrak{d}\psi^{[\pm2]}\|^2(A_i)=
(\omega^2+m^2+1)|\psi^{[\pm2]}(A_i)|^2, \qquad
\|\mathfrak{d}u^{[\pm2]}\|^2(A_i)=
(\omega^2+m^2+1)|u^{[\pm2]}(A_i)|^2.
\]
In the above, $r_{\rm trap}$ is a parameter depending on $M$, $a$ and
the frequency triple $(\omega, m, \widetilde{\Lambda})$ to be determined later.
For ``trapped'' frequencies, we will have $r^*_{\rm trap} \in [A_1^*/4,A_2^*/4]$,
but it will be important that in various high frequency but untrapped frequency
ranges,
we can take $r_{\rm trap}=0$.

Note that since this is a region 
of fixed finite $r$, bounded away from infinity and the horizon, 
no $r$-weights or $\Delta$-factors need  appear in the above norms.

Finally, it will be convenient if we introduce the alternate
notation 
\[
A_+:=A_1, \qquad A_-:=A_2
\]
which will be useful when
referring to boundary terms in contexts where the choice
of term depends on the spin.

\subsubsection{Statement of the theorem}

\begin{theorem}\label{phaseSpaceILED}
Given  $0\le a_0 \ll M$ sufficiently small,
then the following is true.

Let $0\le a\le a_0$ and let $(\omega, m, \widetilde{\Lambda})$ be an admissible frequency 
triple.  Let $E>1$ be the parameter fixed after
Proposition~\ref{multpropnofreq}.
Given a parameter $\delta_1<1$,
let $f_0$,  $y_0$  be defined by $(\ref{newdefshere1})$ and $(\ref{newdefshere1})$ as in the proof
of Proposition~\ref{multpropnofreq}.

Then one can choose sufficiently small $\delta_1<1$
depending only on $M$, and
functions $f$,  $y$ and
an $r$-value $r_{\rm trap}$,
depending on the parameters $a$, $M$
and the frequency triple
$(\omega, m, \widetilde{\Lambda})$ but
satisfying the uniform bounds
\begin{equation}
\label{rtrapunifm}
r_{\rm trap}=0 {\rm \ or\ } r_{\rm trap}\in [A^*_1/4,A^*_2/4]
\end{equation}
\begin{equation}
\label{unif.f.bnds}
\left|f\right| +\left|f'\right| + \left|y\right|  \lesssim 1,
\end{equation}
\begin{equation}
\label{theymatch}
f = f_0(r), \qquad y=y_0(r)\qquad \text{ for }r^*\in [A_1^*/2,A_2^*/2]^c,
\end{equation}
such that, for all smooth solutions $u^{[\pm2]}$ of \eqref{basic}
on $[A_1,A_2]$
and associated $\psi^{[\pm2]}$ and $\Psi^{[\pm2]}$,
then
\begin{align}\label{fromPhaseSpace2}
\|\mathfrak{d} \Psi^{[\pm2]}\|^2
&\lesssim \mathfrak{H}^{[\pm 2]}
 +{\rm Q}(A_2)-{\rm Q}(A_1) +
|a| \sum_{i=1}^2 ( \|\mathfrak{d}\psi^{[\pm2]}\|^2(A_i)
 + \|\mathfrak{d}u^{[\pm2]}\|^2(A_{i})),
\end{align}
\begin{align}\label{fromPhaseSpaceoquant}
\nonumber
&\|\mathfrak{d}\psi^{[\pm2]}\|^2(A_{\mp})
 + \|\mathfrak{d}u^{[\pm2]}\|^2(A_{\mp})+
 \|\mathfrak{d}\psi^{[\pm2]}\|^2
+\|\mathfrak{d}u^{[\pm2]}\|^2 \\
&\qquad 
\lesssim   \mathfrak{H}^{[\pm 2]}+{\rm Q}(A_2)-{\rm Q}(A_1)
+ \|\mathfrak{d}\psi^{[\pm2]}\|^2(A_{\pm})
 + \|\mathfrak{d}u^{[\pm2]}\|^2(A_{ \pm}),
\end{align}
where 
\[
\mathfrak{H}^{[\pm2]}=\int_{A_1^*}^{A_2^*} {\mathfrak G}^{[\pm2]} \cdot  (f, y, E) \cdot (\Psi^{[\pm2]}, {\Psi^{[\pm2]}}')\, dr^* ,
\]
\begin{align}
{\mathfrak G}^{[\pm2]} \cdot  (f, y, E) \cdot (\Psi^{[\pm2]}, {\Psi^{[\pm2]}}') \doteq &-2f{\rm Re}\left({\Psi^{[\pm2]}}'\overline{{\mathfrak G}^{[\pm2]} }\right) -f'{\rm Re}\left(\Psi^{[\pm2]}\overline{{\mathfrak G}^{[\pm2]} }\right) -2y\text{Re}\left({\Psi^{[\pm2]}}'\overline{{\mathfrak G}^{[\pm2]} }\right) \notag\\
\label{another.def.here}
 &\qquad-E\omega{\rm Im}\left({\mathfrak G}^{[\pm2]} \overline{\Psi^{[\pm2]}}\right),
\end{align}
and ${\rm Q}$ is given by $(\ref{totalcurrent})$.
\end{theorem}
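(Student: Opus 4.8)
The strategy is to reduce the coupled system at fixed frequency to a combination of (i) a multiplier estimate for the Regge--Wheeler equation $(\ref{rwPsi2})$ for $\Psi^{[\pm2]}$, and (ii) transport estimates for $u^{[\pm2]}$ and $\psi^{[\pm2]}$ obtained from $(\ref{plustwofwoflr1})$--$(\ref{minustwofwoflr2})$, and then to absorb the coupling terms $\mathcal{J}^{[\pm2],(a\omega)}_{m\ell}$ using the smallness of $a$ together with a crucial integration-by-parts trick. First I would fix the frequency-dependent trapping parameter $r_{\rm trap}$: I would define it via the (unique, by Lemma~\ref{lem.max}-type considerations) maximum of the Schwarzschild-perturbed potential $\mathcal{V}^{[\pm2],(a\omega)}_{m\ell}$ in the trapped frequency regime $1\ll\omega^2\sim\widetilde\Lambda$, and set $r_{\rm trap}=0$ for non-trapped frequencies (low frequencies, angular-dominated frequencies, and superradiant frequencies which are not trapped). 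I would then construct the multiplier functions $f$ and $y$ as frequency-localised modifications of $f_0,y_0$ from $(\ref{newdefshere1})$--$(\ref{newdefshere2})$: $f$ is chosen to vanish at $r_{\rm trap}$ and have $f'>0$ near $r_{\rm trap}$ (following \cite{DafRodsmalla}), and to match $f_0$, $y_0$ outside $[A_1^*/2,A_2^*/2]$ so that $(\ref{theymatch})$ holds; the bound $(\ref{unif.f.bnds})$ is then automatic.

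The multiplier step: apply to $(\ref{rwPsi2})$ the fixed-frequency analogues of the $f$-current, the $y$-current, and $E$ times the $\omega$-energy current (i.e.\ multiply by $f\Psi'+\tfrac12 f'\Psi$, by $y\Psi'$, and by $E\omega\,{\rm Im}$-type terms), integrate over $[A_1^*,A_2^*]$, and take real/imaginary parts. For $a=0$ this is exactly the Schwarzschild Regge--Wheeler multiplier estimate of \cite{holzstabofschw} (Proposition~\ref{coerc.in.schw}), which yields a coercive bulk controlling $\|\mathfrak{d}\Psi^{[\pm2]}\|^2$ with the correct $(1-r^{-1}r_{\rm trap})^2$ degeneration at $r_{\rm trap}$. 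By smooth dependence on $a$ and the restriction $|a|<a_0\ll M$, coercivity persists; the error terms genuinely proportional to $a$ (from $\mathcal{V}_1,\mathcal{V}_2$ and the first-order imaginary parts of the potential) are absorbed since the bulk is non-degenerate in $r^*$ except at the single point $r_{\rm trap}$ and these error terms carry an explicit $a$-factor. This produces the boundary terms, collected into ${\rm Q}(A_i)$ via $(\ref{totalcurrent})$, the inhomogeneous term $\mathfrak{H}^{[\pm2]}$ as in $(\ref{another.def.here})$, and on the right-hand side the coupling contribution $\int {\rm Re}\big((\text{mult.})\overline{\mathcal{J}^{[\pm2]}}\big)$.

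The transport step and the coupling: from $(\ref{plustwofwoflr1})$--$(\ref{minustwofwoflr2})$, writing $-\underline{L}=\frac{d}{dr^*}+i\omega-\frac{iam}{r^2+a^2}$ etc., I would integrate these as ODEs in $r^*$ on $[A_1^*,A_2^*]$ to bound $\|\mathfrak{d}\psi^{[\pm2]}\|^2$ and $\|\mathfrak{d}u^{[\pm2]}\|^2$ (and the boundary quantities at the appropriate endpoint $A_\mp$) by $\|\mathfrak{d}\Psi^{[\pm2]}\|^2$ plus the data at the opposite endpoint $A_\pm$ — this is the fixed-frequency analogue of Propositions~\ref{transportplusnof}--\ref{transportminusnof}, and since $[A_1,A_2]$ is a compact $r$-interval bounded away from horizon and infinity, no weights intrude and the integration is elementary. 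Plugging these back, the coupling term in the multiplier estimate is bounded by $|a|$ times $\|\mathfrak{d}\Psi^{[\pm2]}\|^2+\|\mathfrak{d}\psi^{[\pm2]}\|^2+\|\mathfrak{d}u^{[\pm2]}\|^2$ plus $|a|$ times the endpoint data, and the $|a|\|\mathfrak{d}\Psi^{[\pm2]}\|^2$ part is absorbed for $a_0$ small; the endpoint pieces become the $|a|\sum_i(\|\mathfrak{d}\psi^{[\pm2]}\|^2(A_i)+\|\mathfrak{d}u^{[\pm2]}\|^2(A_i))$ appearing in $(\ref{fromPhaseSpace2})$. The one subtlety, and \textbf{the main obstacle}, is the trapped regime: there $\|\mathfrak{d}\Psi^{[\pm2]}\|^2$ has no coercive $\omega^2|\Psi|^2$ or $\widetilde\Lambda|\Psi|^2$ term at $r_{\rm trap}$, so the $m\,\psi$ and $m^2 u$ pieces of $\mathcal{J}^{[\pm2]}$ (first order and ``second order'' in $\Psi$-counting) cannot be naively absorbed by transport alone. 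The resolution, as flagged in Section~\ref{odeanalysisintro}, is to exploit the good non-degenerate term $|(\Psi^{[\pm2]})'|^2$ on the left: commute the transport relation $(\ref{plustwofwoflr2})$ (resp.\ $(\ref{minustwofwoflr2})$) with $\frac{d}{dr^*}$, use $\underline{L}'=\cdots+\frac{d}{dr^*}$ to trade a $\psi'$ for a combination of $\Psi$ and lower-order transport quantities, and integrate by parts in $r^*$ so that the dangerous $im\,\underline{L}(uw)$ term is rewritten against $\Psi'$ rather than against $\omega\Psi+\Psi'$ without a $\Psi'$; the boundary terms generated are of the form already present in ${\rm Q}(A_i)$ and the endpoint data. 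Carrying this out carefully, keeping track that the multiplier currents near $A_1,A_2$ coincide with the physical-space ones (so ${\rm Q}$ matches $\mathbb{Q}_{r=A_i}$ up to the transport boundary terms controlled by $|a|$), gives $(\ref{fromPhaseSpace2})$; then feeding $(\ref{fromPhaseSpace2})$ into the transport estimates gives $(\ref{fromPhaseSpaceoquant})$. The non-trapped frequency ranges are easier: there one arranges the multiplier so that $m^2|\Psi|^2+\omega^2|\Psi|^2$ appear coercively (taking $r_{\rm trap}=0$), and the coupling is absorbed exactly as a general first-order term, as in Section~\ref{estim.away}.
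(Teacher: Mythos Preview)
Your proposal is correct and follows essentially the same approach as the paper, which organizes the argument into Propositions~\ref{multestforPsi} (multiplier estimate with coupling term $\mathcal{K}^{[\pm2]}$ on the right), \ref{prop:synth} (transport), and~\ref{newsyntheprop} (bounding $\mathcal{K}^{[\pm2]}$). One minor clarification: the paper keeps the commute-transport-with-$d/dr^*$ trick (used in Proposition~\ref{prop:synth} to obtain $\omega^2,m^2$ weights on $\psi,u$ in the range $\mathcal{F}^\sharp=\{\omega^2\gtrsim m^2\}$) separate from the integration-by-parts maneuver for the $E\omega\,{\rm Im}(\mathcal{J}\overline{\Psi})$ piece of $\mathcal{K}$ (where one rewrites $i\omega$ via $-\underline{L}=\tfrac{d}{dr^*}+i\omega-\tfrac{iam}{r^2+a^2}$ and then integrates by parts), whereas your outline runs these together---but both exploit the non-degeneracy of $|\Psi'|^2$ at trapping, so the argument is sound.
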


\subsection{Multiplier estimates for $\Psi^{[\pm2]}$}
\label{multestsec}

We begin in this section with frequency localised
bounds for $\Psi^{[\pm2]}$.
Frequency localisation is necessary to capture trapping,
in the style of our previous~\cite{DafRodsmalla}. The multipliers will be
frequency independent at $r=A_1$ and $r=A_2$  and will in fact
match exactly those  applied in Section~\ref{condmultestsec}.
This is ensured by $(\ref{theymatch})$.
As a result, in the setting of Section~\ref{iledsec},
the boundary terms ${\rm Q}(A_i)$ which will appear below,
after summation over frequencies, will exactly cancel the terms
$\mathbb Q(A_i)$ appearing in  Proposition~\ref{multpropnofreq}.

Recall the quantity
$\|\mathfrak{d} \Psi^{[\pm2]}\|^2$ defined in Section~\ref{frequencylocalnormssec}.
The main result of the section is the following:
\begin{proposition}
\label{multestforPsi}
With the assumptions of Theorem~\ref{phaseSpaceILED}, 
we have
\begin{equation}
\label{tothelefths0}
\|\mathfrak{d} \Psi^{[\pm2]}\|^2  \lesssim \mathfrak{H}^{[\pm2]}+  \mathcal{K}^{[\pm2]}  +
{\rm Q}(A_2)-{\rm Q}(A_1)
\end{equation}
where $\mathcal{K}^{[\pm2]}$ is defined by
\[
\mathcal{K}^{[\pm2]}=\int_{A_1^*}^{A_2^*} {\mathcal J}^{[\pm2]} \cdot  (f, y, E) \cdot (\Psi^{[\pm2]}, {\Psi^{[\pm2]}}')\, dr^* ,
\]
where 
\begin{align} \label{Kexpl}
{\mathcal J}^{[\pm2]} \cdot  (f, y, E) \cdot (\Psi^{[\pm2]}, {\Psi^{[\pm2]}}') \doteq &-2f\text{Re}\left({\Psi^{[\pm2]}}'\overline{{\mathcal J}^{[\pm2]} }\right) -f'\text{Re}\left(\Psi^{[\pm2]}\overline{{\mathcal J}^{[\pm2]} }\right) -2y\text{Re}\left({\Psi^{[\pm2]}}'\overline{{\mathcal J}^{[\pm2]} }\right) \notag\\ &\qquad -E\omega\text{Im}\left({\mathcal J}^{[\pm2]} \overline{\Psi^{[\pm2]}}\right)
\end{align}
and
${\rm Q}$ is given by $(\ref{totalcurrent})$.
\end{proposition}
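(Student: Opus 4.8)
\textbf{Proof proposal for Proposition~\ref{multestforPsi}.}

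The plan is to run the same four-current argument used in the proof of Proposition~\ref{multpropnofreq}, but now at fixed frequency and localised to $r\in[A_1,A_2]$, so that frequency-dependent choices of the multiplier functions can be made near $r=3M$ to handle trapping. First I would apply the standard multiplier identities to the separated Regge--Wheeler equation $(\ref{rwPsi2})$: multiply by $f\,(\Psi^{[\pm2]})' $ (plus the companion lower-order $f'$-term, i.e.~the separated analogue of the $f$-identity $(\ref{fidentityhere})$), by $y\,(\Psi^{[\pm2]})'$ (the separated $y$-identity $(\ref{yidentform})$), and by $E\omega\,\overline{\Psi^{[\pm2]}}$ with an imaginary part taken (the separated analogue of the $T+\upomega_+\chi\Phi$ current $(\ref{Hawkinident})$, which in $r\in[A_1,A_2]$ is just $\partial_t$ since $\chi\equiv1$ there). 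Each identity, integrated in $r^*$ over $[A_1^*,A_2^*]$, produces a bulk term, a boundary contribution at $A_1^*$ and $A_2^*$, a term contracting the multiplier against $\mathcal J^{[\pm2]}$, and a term contracting it against $\mathfrak G^{[\pm2]}$; summing with the weights $1$, $1$, $E$ and collecting the boundary data into a single quantity ${\rm Q}$ gives $(\ref{totalcurrent})$, and the $\mathcal J$- and $\mathfrak G$-contractions are precisely $\mathcal K^{[\pm2]}$ and $\mathfrak H^{[\pm2]}$ as defined in $(\ref{Kexpl})$ and the statement of Theorem~\ref{phaseSpaceILED}.

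The heart of the matter is choosing $f$, $y$ and $r_{\rm trap}$ so that the combined bulk term is coercive, i.e.~dominates $\|\mathfrak d\Psi^{[\pm2]}\|^2$. Near the endpoints, namely for $r^*\in[A_1^*/2,A_2^*/2]^c$, I would take $f=f_0$ and $y=y_0$ exactly as in $(\ref{newdefshere1})$--$(\ref{newdefshere2})$; this guarantees $(\ref{theymatch})$ and hence that ${\rm Q}(A_i)$ matches the physical-space boundary term $\mathbb Q_{r=A_i}$, which is what makes the eventual cancellation in Section~\ref{iledsec} work. In the central region I would instead follow the virial/Morawetz construction of~\cite{DafRodsmalla}: choose $f$ to be a bounded function, increasing in $r^*$, vanishing at the (frequency-dependent) maximum $r^*_{\rm trap}$ of the effective potential $\mathcal V^{[s],(a\omega)}_{m\ell}$, so that $f\cdot(\text{something like } \mathcal V')$ has a good sign and the bulk $f$-term controls $|(\Psi^{[\pm2]})'|^2$ everywhere plus $(1-r^{-1}r_{\rm trap})^2(\omega^2+\widetilde\Lambda)|\Psi^{[\pm2]}|^2$; for the non-trapped high-frequency ranges one instead sets $r_{\rm trap}=0$ and the multiplier gives non-degenerate control of $(\omega^2+\widetilde\Lambda)|\Psi^{[\pm2]}|^2$. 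The $y$-current, with a small constant $\delta_1$, supplies the coercive zeroth-order term $|\Psi^{[\pm2]}|^2$ (the $1$ in the norm $\|\mathfrak d\Psi^{[\pm2]}\|^2$) via the $I^{y_0}$ positivity recorded in Proposition~\ref{coerc.in.schw}, while the $E\omega$-current controls the cross terms and, together with the smallness $|a|<a_0$, ensures the lower-order Kerr correction terms in $\mathcal V$ (the pieces $\mathcal V_1,\mathcal V_2$, all $O(|a|)$ relative to the trapping term, plus the indefinite $a m\omega$ and $a^2\sin^2\theta$ contributions after separation) can be absorbed. Smooth dependence on $a$ then upgrades the known Schwarzschild coercivity (Proposition~\ref{coerc.in.schw}) to the slowly-rotating case. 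The bounds $(\ref{rtrapunifm})$, $(\ref{unif.f.bnds})$, $(\ref{theymatch})$ on $r_{\rm trap}$, $f$, $y$ are verified by construction.

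The main obstacle is the frequency-uniform verification that the effective potential has a single non-degenerate maximum in (a uniform neighbourhood of) $r^*\in[A_1^*/4,A_2^*/4]$ for exactly the trapped frequency range, with the potential being ``decreasing past the photon sphere'' in the sense needed for the $f$-current to be monotone and sign-definite — this is the content of the lemma referred to as Lemma~\ref{lem.max}. For $a\ne0$ the location of the maximum genuinely depends on $(\omega,m,\widetilde\Lambda)$, so one must carry out the case analysis over frequency ranges (trapped vs.~various untrapped regimes delimited by the parameters $C_\sharp$, $c_\flat$, $C_\flat$), exactly as in~\cite{DafRodsmalla}, and check in each that a choice of $(f,y,r_{\rm trap})$ with the stated uniform bounds renders the bulk coercive; the slow-rotation smallness $a_0\ll M$ is what keeps this close to the explicit Schwarzschild computation and lets the indefinite terms be swallowed. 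Once coercivity $(\ref{tothelefths0})$ is established, nothing more is needed here: the $\mathcal K^{[\pm2]}$ and $\mathfrak H^{[\pm2]}$ terms are simply carried along, to be estimated in Section~\ref{transestsec} and Section~\ref{completing} respectively.
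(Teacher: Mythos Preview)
Your proposal is correct and follows the same approach as the paper, which likewise defines the frequency-localised currents ${\rm Q}^f$, ${\rm Q}^y$, ${\rm Q}^T$, sets ${\rm Q}={\rm Q}^f+{\rm Q}^y+E{\rm Q}^T$, and establishes bulk coercivity (stated as Proposition~\ref{positivprop}) via a case split into the frequency ranges $\mathcal{G}_1$ (angular-dominated, trapped, and low frequencies) and $\mathcal{G}_2$ (time-dominated), with the $f$ of~$(\ref{simple.def.of.f})$ vanishing at the $r_{\rm max}$ supplied by Lemma~\ref{lem.max}. One small slip: in $r\in[A_1,A_2]$ the cutoff from Section~\ref{very.slowly.very.slowly} satisfies $\chi\equiv 0$ (not $\chi\equiv 1$), which is precisely why $T+\upomega_+\chi\Phi$ reduces to $T$ there.
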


The  estimate  above   differs from the estimate for
$\|\mathfrak{d} \Psi^{[\pm2]}\|^2$
given by  $(\ref{fromPhaseSpace2})$ 
as it is
still coupled with  $u^{[\pm2]}$ and $\psi^{[\pm2]}$
in view of the presence of the term $\mathcal{K}^{[\pm2]}$.
We will be able to replace $\mathcal{K}^{[\pm2]}$
with  $\mathfrak{H}^{[\pm2]}$ and the additional
boundary term  $|a|\| \mathfrak{d}\psi^{[\pm2]}\|^2(A_{ \pm})
 + |a|\| \mathfrak{d}u^{[\pm2]}\|^2(A_{ \pm})$
 appearing in $(\ref{fromPhaseSpace2})$
 in   Section~\ref{completing}.

\begin{proof}

The estimate $(\ref{tothelefths0})$ will be proven by using multiplier identities. 
The relevant frequency-localised current templates, corresponding precisely
to the physical space multiplier identities used in Section~\ref{condmultestsec},
 will be defined in Section~\ref{templatesec} below.
For a specific combination of these currents,  the bulk term will control the integrand
of the left hand side of $(\ref{tothelefths0})$ whereas the boundary terms (after summation
over frequencies)
will correspond precisely to the boundary terms of  Proposition~\ref{multpropnofreq}. 
This coercivity is stated as 
Proposition~\ref{positivprop}
in Section~\ref{totalcurrentpospropsec}. 
The precise choice of the functions $f$ and $y$ will be frequency
dependent and is carried out separately for the frequency ranges $\mathcal{G}_1$
and $\mathcal{G}_2$ in Sections~\ref{gonerangesec} and~\ref{gtworangesec} respectively.

{\bf \emph{In the rest of this subsection, we will always write $\Psi$ in the place
of $\Psi^{[\pm2]}$, as the choice of the multipliers will not depend on the spin. 
We will write $\mathcal{V}$ in place of $\mathcal{V}^{[\pm2]}$, 
and $\widetilde{\Lambda}$ for ${\widetilde{\Lambda}}^{[\pm2]}$,
remembering
that the dependence of $\mathcal{V}^{[\pm2]}$ on the spin in the
context of the separation is completely
contained in the different definition of ${\widetilde{\Lambda}}^{[\pm2]}$; see
formula $(\ref{rewritepot})$.}}
We will only refer explicitly to $s=\pm2$ when discussing the inhomogeneous
terms on the right hand side of $(\ref{rwPsi2})$.

\subsubsection{The frequency-localised multiplier current templates}
\label{templatesec}

Let us define the frequency localised multiplier currents 
which correspond to the physical space multipliers of Section~\ref{condmultestsec}:
\begin{eqnarray*}
{\rm Q}^f[\Psi]&=&f\left(|\Psi'|^2+(\omega^2-\mathcal{V})|\Psi|^2\right)+f^\prime{\rm Re}\left(\Psi'\bar{\Psi}\right)-\frac12 f''|\Psi|^2,\\
{\rm Q}^y[\Psi]&=& y\left(|\Psi'|^2+(\omega^2-\mathcal{V})|\Psi|^2\right),\\
{\rm Q}^T[\Psi]&=& \omega{\rm Im}(\Psi' \bar{\Psi}).
\end{eqnarray*}

If $\Psi$ satisfies
\[
\Psi '' +\mathcal{V}\Psi = H
\]
for an admissible frequency triple $(\omega, m, \widetilde{\Lambda})$,
then, since $\mathcal{V}$ is real, 
we have
\begin{align*}
({\rm Q}^f[\Psi])' &= 2f' |\Psi'|^2-f\mathcal{V}' |\Psi|^2 -\frac12f'''|\Psi|^2 &
+{\rm Re}(2f\bar{H}\Psi ' +f'\bar{H} \Psi),\\
({\rm Q}^y[\Psi])' &= y'(|\Psi'|^2+(\omega^2-\mathcal{V})|\Psi|^2)-y\mathcal{V}'|\Psi|^2&+2y{\rm Re}(\bar{H}\Psi'),\\
({\rm Q}^T[\Psi])' &= &\omega {\rm Im}(H\bar\Psi).
\end{align*}

Let us remark already that if $\upalpha$ is an $[A_1,A_2]$-admissible
solution of the inhomogeneous Teukolsky equation $(\ref{inhomoteuk})$,
such that the restriction of $\upalpha$ to $r\in[A_1,A_2]$ is  supported
in $t=t^*=\tilde{t}^*\in (\tau_1,\tau_2)$, then the identity corresponding to applying
\[
\int d\omega \sum_{m\ell}
\]
to
\[
{\rm Q}^f(A_1)+\int_{A_1^*}^{A_2^*} ({\rm Q}^f)'(r^*) dr^*={\rm Q}^f(A_2),
\]
resp.~with ${\rm Q}^y$, ${\rm Q}^T$,
yields precisely the
identities of Section~\ref{sec:multid} applied in the region 
$\widetilde{\mathcal{R}}^{\rm trap}(\tau_1, \tau_2)$. 
(Note that by our choices from Section~\ref{very.slowly.very.slowly},
we have $T=T+\upomega_+\chi\Phi$ in this region, and note
moreover that the boundary terms on $\tilde{t}^*=\tau_i$ vanish by the restriction on the
support.)

\subsubsection{The total current ${\rm Q}$ and its coercivity properties}
\label{totalcurrentpospropsec}

For all frequencies, we will apply the identity corresponding to a current
of the form 
\begin{equation}
\label{totalcurrent}
{\rm Q}= {\rm Q}^f+ {\rm Q}^y
+ E{\rm Q}^T,
\end{equation}
for appropriate choices of functions $f$, $y$.
The coercivity statement is given by the following:

\begin{proposition}
\label{positivprop}
Let $E$ and $f_0$ be as fixed in the proof of Proposition~\ref{multpropnofreq}.
Then one can choose $\delta_1<1$ sufficiently small, depending only on $M$,
such that the following is true:

There exist  functions $f$ and $y$ and a parameter $r_{\rm trap}$
depending on the parameters $a$, $M$
and the frequency triple
$(\omega, m, \widetilde{\Lambda})$, 
satisfying 
$(\ref{rtrapunifm})$, $(\ref{unif.f.bnds})$ and 
$(\ref{theymatch})$ 
and
such that ${\rm Q}$ defined by $(\ref{totalcurrent})$
satisfies
\begin{align}
\label{pwiseposit}
\nonumber 
&  \left|\Psi'\right|^2 +  \left(\left(1-r_{\rm trap}r^{-1}\right)^2\left(\omega^2 + \widetilde\Lambda\right) + 1\right)\left|\Psi\right|^2
\\
&\qquad \lesssim {\rm Q}'  - \mathcal{J}^{[\pm2]}\cdot (f,y,E)\cdot (\Psi, \Psi') 
-\mathfrak{G}^{[\pm2]}\cdot (f,y,E)\cdot (\Psi, \Psi').
\end{align}
\end{proposition}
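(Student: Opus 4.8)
The statement is a frequency-localised multiplier estimate of exactly the type familiar from \cite{DafRodsmalla, partiii}, adapted to the Regge--Wheeler-type equation $(\ref{rwPsi2})$ satisfied by $\Psi^{[\pm2]}$. The strategy is to add the currents ${\rm Q}^{f}$, ${\rm Q}^{y}$, $E\,{\rm Q}^{T}$ with judiciously chosen $f$ and $y$ so that the bulk term ${\rm Q}'$, modulo the explicit error contributions from $\mathcal{J}^{[\pm2]}$ and $\mathfrak{G}^{[\pm2]}$ on the right-hand side of $(\ref{rwPsi2})$, dominates pointwise the nonnegative quantity $|\Psi'|^2 + ((1-r_{\rm trap}r^{-1})^2(\omega^2+\widetilde\Lambda)+1)|\Psi|^2$. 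Since the coefficients of the currents are fixed to agree with $f_0$, $y_0$ and the constant $E$ from Proposition~\ref{multpropnofreq} on $r^*\in [A_1^*/2, A_2^*/2]^c$ — this is what $(\ref{theymatch})$ records — the whole content of the proposition is a statement about the interior region $r^*\in [A_1^*/2, A_2^*/2]$, where one is free to make a frequency-dependent choice of $f$ and $y$ and of the trapping radius $r_{\rm trap}$.

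\textbf{Key steps.} First I would recall from the proof of Proposition~\ref{multpropnofreq} that $E$ and $f_0$ are already fixed (depending only on $M$) and that $y_0 = \delta_1(f_0 - \delta_1\tilde\chi r^{-\eta})$; on $r^*\in[A_1^*/2,A_2^*/2]^c$ the potential $\mathcal{V}^{[\pm2]}$ is uniformly close to the Schwarzschild Regge--Wheeler potential for $a_0\ll M$, so the Schwarzschild coercivity of Proposition~\ref{coerc.in.schw} — which gives $r^{-2}|(L-\underline L)\Psi|^2 + (1-3M/r)^2r^{-3}|\mathring{\slashed\nabla}\Psi|^2 + r^{-3}|\Psi|^2\lesssim I^{f_0}$ — survives as a pointwise lower bound for $({\rm Q}^{f_0}+{\rm Q}^{y_0}+E{\rm Q}^{T})'$ outside the interior region, with $r_{\rm trap}=0$ there. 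The core of the argument is then the \emph{interior} construction, which splits into the ``trapped'' frequency range $\mathcal{G}_2$ (roughly $1\ll\omega^2\sim\widetilde\Lambda$) and the complementary range $\mathcal{G}_1$. In $\mathcal{G}_2$ one follows \cite{DafRodsmalla}: define $r_{\rm trap}$ as the (frequency-dependent) location of the maximum of the relevant potential $\omega^2-\mathcal{V}_0^{[\pm2]}$ — using $(\ref{rewritepot})$ and Lemma~\ref{lem.max} this maximum lies in $[A_1^*/4,A_2^*/4]$ — and choose $f$ to vanish there with $f'>0$, matching $f_0$ and $y_0$ smoothly at $r^*=A_i^*/2$; the Morawetz bulk term $2f'|\Psi'|^2 - (f\mathcal{V}' + \tfrac12 f''' )|\Psi|^2$ then controls $|\Psi'|^2 + (1-r_{\rm trap}/r)^2(\omega^2+\widetilde\Lambda)|\Psi|^2$, the degeneration at $r_{\rm trap}$ being unavoidable, and the remaining $|\Psi|^2$ near $r_{\rm trap}$ is recovered by a small multiple of the $y$-current (or equivalently a Hardy/lower-order absorption), exactly as in the Schwarzschild treatment of \cite{holzstabofschw}. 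In $\mathcal{G}_1$ (non-trapped frequencies: $\omega^2$ or $\widetilde\Lambda$ dominated, or genuinely low frequency) one can take $r_{\rm trap}=0$ and choose $f$ monotone with $f'$ bounded below on the whole interior, so that no degeneration occurs and $|\Psi'|^2 + (\omega^2+\widetilde\Lambda+1)|\Psi|^2$ is controlled outright; the superradiant part of this range is handled using that superradiant frequencies are not trapped, together with the smallness $|a|\le a_0$ which makes the ergoregion contributions absorbable, and a Hardy inequality near $r=A_i$ supplies the zeroth-order term. In both cases the $E{\rm Q}^T$ term is added only to make the total current match the physical-space one at the boundary, so that ${\rm Q}(A_i)$ is exactly the trace of the physical-space $\mathbb{Q}_{r=A_i}$; its bulk contribution $E\omega\,{\rm Im}(H\bar\Psi)$ is precisely the $-E\omega\,{\rm Im}((\mathcal{J}+\mathfrak{G})\bar\Psi)$ term appearing on the right of $(\ref{pwiseposit})$. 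Finally, $\delta_1$ is fixed small enough (depending only on $M$) that all the $O(\delta_1)$ corrections between $y$ and $f_0$, and the $O(\delta_1)$ terms coming from the $r^{-\eta}$ piece of $y_0$, are dominated by the leading positive terms uniformly in the frequency triple and in $0\le a\le a_0$.

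\textbf{Main obstacle.} The delicate point is the simultaneous requirements on $f$ in the interior region: it must vanish exactly at the frequency-dependent $r_{\rm trap}$ with the right sign of $f'$, must have $|f|+|f'|$ bounded uniformly in the frequency triple (the bound $(\ref{unif.f.bnds})$), and must glue $C^{1}$ (indeed with enough derivatives that $f'''$ is controlled) to $f_0$ at $r^*=A_i^*/2$ while keeping the bulk term nonnegative across the gluing region. Showing that such an $f$ exists uniformly over all of $\mathcal{G}_1\cup\mathcal{G}_2$ — in particular that the transition between ``$r_{\rm trap}$ in the interior'' and ``$r_{\rm trap}=0$'' can be made continuously and without destroying positivity — is the heart of the construction, and is where the bulk of the casework lies; it mirrors, and can be read off from, the corresponding construction in Section~8 of \cite{partiii} and in \cite{DafRodsmalla}, with the only new feature that the spin dependence enters solely through $\widetilde\Lambda^{[\pm2]}$ via $(\ref{rewritepot})$, so the analysis is literally identical for $s=+2$ and $s=-2$ once that substitution is made. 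The estimate $(\ref{eli1e})$, $\widetilde\Lambda\ge\max(2,m(m+1)-4)$, is used to control the $a^2m^2$ term in $\mathcal{V}$ and to ensure the would-be trapping maximum stays in $[A_1^*/4,A_2^*/4]$ for $a_0$ small. Once Proposition~\ref{positivprop} is established, integrating $(\ref{pwiseposit})$ over $[A_1^*,A_2^*]$ gives $(\ref{tothelefths0})$ with ${\rm Q}(A_2)-{\rm Q}(A_1)$ the boundary terms and $\mathcal{K}^{[\pm2]}$, $\mathfrak{H}^{[\pm2]}$ the integrated $\mathcal{J}$- and $\mathfrak{G}$-errors, which is exactly the claim of Proposition~\ref{multestforPsi}.
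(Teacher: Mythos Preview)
Your overall strategy is correct and aligns with the paper's: fixed-frequency $f$-, $y$-, and $T$-currents, with the spin entering only through $\widetilde\Lambda$ via $(\ref{rewritepot})$. But several organisational and technical details differ from the paper's actual implementation.

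\textbf{Frequency ranges are labelled oppositely.} In the paper, $\mathcal{G}_1 = \{\widetilde\Lambda \ge c_\flat\omega^2\}\cup\{\text{low}\}$ is the range that \emph{contains} the trapped frequencies (as a strict subcase), while $\mathcal{G}_2 = \{\omega^2 > c_\flat^{-1}\widetilde\Lambda\}\cap\{\text{not low}\}$ is time-dominated and entirely non-trapped. Thus the paper's split is not ``trapped vs.\ non-trapped'' but rather ``$\widetilde\Lambda$-controlled or low'' vs.\ ``$\omega^2$-dominated''. Within $\mathcal{G}_1$ a further parameter $C_\flat$ separates the genuinely trapped subrange (where $r_{\rm trap}=r_{\rm max}$) from the angular-dominated subrange, where an additional current $y_2$ is added to remove the degeneration and one sets $r_{\rm trap}=0$.

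\textbf{The gluing you flag as the main obstacle is much simpler here than you anticipate.} In $\mathcal{G}_1$ the paper's $f$ is the explicit one-line formula $(\ref{simple.def.of.f})$: it is $f_0$ with $3M$ replaced by $3M+\chi(r^*)(r_{\rm max}-3M)$. Since $|r_{\rm max}-3M|\to 0$ as $a\to 0$ (Lemma~\ref{lem.max}), this is a small smooth perturbation of $f_0$, the matching $(\ref{theymatch})$ is automatic, and $f'''$ is uniformly close to $f_0'''$. In $\mathcal{G}_2$ one takes simply $f=f_0$, $y=\delta_1 f_0$, and the bulk term $y'(\omega^2-\mathcal{V})\ge\tfrac12\delta_1 f_0'\,\omega^2$ supplies nondegenerate control; $f$ is \emph{not} modified to avoid vanishing. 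No continuity in the frequency triple is required across the ranges --- the choice of $f$, $y$, $r_{\rm trap}$ jumps discretely.

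\textbf{Some ingredients you invoke are not used.} The ``superradiant frequencies are not trapped'' mechanism is the $|a|<M$ argument of~\cite{partiii} and plays no role here; smallness of $a_0$ absorbs all such errors directly. No Hardy inequality is used: the zeroth-order $|\Psi|^2$ control comes from $-\tfrac12 f'''$. Finally, $r_{\rm max}$ is the location of the maximum of $\mathcal{V}_0$, not of $\omega^2-\mathcal{V}_0$.
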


\begin{proof}
See Sections~\ref{gonerangesec} and~\ref{gtworangesec}.
\end{proof}

Let us note that integrating the equation
\[
{\rm Q}(A_1) +\int_{A_1^*}^{A_2^*} {\rm Q}'(r^*) dr^*  = {\rm Q}(A_2)
\]
we infer from $(\ref{pwiseposit})$ the inequality $(\ref{tothelefths0})$.

\subsubsection{The $\mathcal{G}_1$ range}
\label{gonerangesec}

We define the range
\begin{equation}
\label{range1}
\mathcal{G}_1=
\{\widetilde\Lambda \ge c_{\flat} \omega^2 \} \cup \{ \widetilde\Lambda 
+\omega^2 +m^2 \le C_{\sharp}\}
\end{equation}
for some 
$0<c_{\flat}<1$  and $C_{\sharp}>1$ which can be chosen finally
to  depend
only on $M$.
The frequency range $\mathcal{G}_1$
includes thus ``angular-dominated frequencies''
$\widetilde{\Lambda}\gg \omega^2$, 
``trapped frequencies'' $\widetilde{\Lambda}\sim  \omega^2$
and ``low frequencies'' $\widetilde\Lambda 
+\omega^2 +m^2 \lesssim 1$.
We have the following:

\begin{proposition}
\label{addedpropos}
For sufficiently small $|a|<a_0\ll M$, then for all frequency triples
in $\mathcal{G}_1$, 
there exists a function $f$ and a parameter $r_{\rm max}$ 
with the following properties for $r^*\in [A_1^*,A_2^*]$:
\begin{enumerate}
\item
$f=f_0$ for $r^*\in[A_1^*/2,A_2^*/2]^c$ and 
$|f|\lesssim 1$, $|f'| \lesssim 1$ in $[A_1^*,A_2^*]$,
\item
$|r_{\rm max}-3M | \le c(a,M) $ with $c(a, M)\to 0$ as $a\to 0$,
in particular $a_0$ can be chosen so that $r_{\rm trap}^*\in [A_1^*/4,A_2^*/4]$;
for $m=0$, $r_{\rm max}$ is independent of $\omega$ and $\widetilde{\Lambda}$,
\item
$f' \gtrsim 1 $,
\item
$-f\mathcal{V}'-\frac12 f'' \gtrsim \left( \widetilde\Lambda (1-r_{\rm max}r^{-1})^2 +1 \right)$.
\end{enumerate}
\end{proposition}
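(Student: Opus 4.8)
\textbf{Proof plan for Proposition~\ref{addedpropos}.}

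The plan is to construct the virial multiplier function $f$ in the style of \cite{DafRodsmalla}, designed so that its zero sits exactly at the maximum of the relevant trapping potential, and then to verify the four stated properties by a perturbative analysis off the Schwarzschild case. First I would recall from \cite{DafRodsmalla} the analysis of the potential $\mathcal{V}^{[\pm2]}_0$ in $(\ref{rewritepot})$: for fixed admissible $(\omega,m,\widetilde\Lambda)$, one studies the function $\mathcal{V}_0 / (\text{something positive})$, or more precisely the quantity whose critical points govern trapping, namely one looks at the equation for the maximum of $\Delta\big(\widetilde\Lambda - 4|a\omega| + 4\big) + 4Mram\omega - a^2m^2$ divided by $(r^2+a^2)^2$. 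One needs a preliminary lemma (this is the analogue of Lemma~\ref{lem.max} alluded to in Section~\ref{very.slowly.very.slowly}) stating that, for $|a| < a_0 \ll M$, this potential has a unique maximum at some $r_{\rm max}$ with $|r_{\rm max} - 3M| \le c(a,M)$, $c \to 0$ as $a \to 0$, and that the second derivative there is strictly negative, uniformly over the frequency range $\mathcal{G}_1$; moreover for $m = 0$ the potential coefficient is frequency-independent up to the overall factor $\widetilde\Lambda + 4$, so $r_{\rm max}$ does not depend on $\omega, \widetilde\Lambda$. This gives property~(2), and by choosing $a_0$ small the bound $|r_{\rm max} - 3M| \le c$ forces $r_{\rm max}^* \in [A_1^*/4, A_2^*/4]$.

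Next I would define $f$. The construction: take $f$ to be a smooth function interpolating between $f_0 = (1 - 3M/r)(1 + M/r)$ outside $[A_1^*/2, A_2^*/2]$ and, in a neighbourhood of $r_{\rm max}$, a function of the form $f = \arctan\big(\zeta (r^* - r_{\rm max}^*)\big)$ (suitably rescaled) for a large constant $\zeta$ — the key point being that $f$ vanishes at $r^* = r_{\rm max}^*$ and has $f' > 0$ there, with $f'$ sharply peaked. One then glues: since $f_0$ itself vanishes (to first order) precisely at $r = 3M$ in Schwarzschild, and $r_{\rm max}$ is $O(|a|)$-close to $3M$, the gluing on the transition annuli $[A_1^*/2, A_1^*/4]$ and $[A_2^*/4, A_2^*/2]$ can be done keeping $f' > 0$ throughout (this is where one uses that the whole window $[A_1, A_2]$ is a fixed bounded $r$-region, so $f_0' \gtrsim 1$ there for Schwarzschild and hence, by smoothness in $a$, also for small $|a|$). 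This simultaneously yields property~(1) (matching and boundedness — note $|f| \lesssim 1$, $|f'| \lesssim 1$ are automatic since everything happens on a fixed compact $r$-interval with $f$ built from bounded pieces) and property~(3) ($f' \gtrsim 1$).

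The main work is property~(4): $-f \mathcal{V}' - \tfrac12 f''' \gtrsim \widetilde\Lambda(1 - r_{\rm max} r^{-1})^2 + 1$. Here $\mathcal{V} = \mathcal{V}_0^{[\pm2]} + \mathcal{V}_1 + \mathcal{V}_2$, and the dominant term is $\mathcal{V}_0$, which to leading order in the frequency range $\mathcal{G}_1$ behaves like $\widetilde\Lambda$ times a fixed profile $\Delta/(r^2+a^2)^2$ (the terms $4Mram\omega - a^2m^2$ and $-4|a\omega|$ are lower order on $\mathcal{G}_1$, being $O(|a|)$ relative to $\widetilde\Lambda$ after using $\widetilde\Lambda \gtrsim m^2 - 4$ and the Cauchy--Schwarz bound $|m\omega| \lesssim \omega^2 + m^2 \lesssim c_\flat^{-1}\widetilde\Lambda + \widetilde\Lambda$, etc.; one also invokes $\widetilde\Lambda \ge \max(2, m(m+1)-4)$ from $(\ref{eli1e})$). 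Since $f$ vanishes at the critical point of (the leading part of) $\mathcal{V}$ and $f$ has the sign of $(r^* - r_{\rm max}^*)$, the product $-f\mathcal{V}_0'$ is non-negative and, using the non-degeneracy of the maximum (the preliminary lemma), is bounded below by $c\,\widetilde\Lambda (1 - r_{\rm max}r^{-1})^2$ away from a small neighbourhood $|r^* - r_{\rm max}^*| \le \epsilon$; inside that neighbourhood, where $-f\mathcal{V}_0'$ may be small, the term $-\tfrac12 f''' $ takes over — because there $f$ looks like the rescaled $\arctan$ whose third derivative is large and negative near its inflection point (choosing $\zeta$ large relative to $\epsilon^{-1}$), producing the needed positive lower bound, in particular the additive $+1$. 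The error contributions from $\mathcal{V}_1, \mathcal{V}_2$ and from the difference $\mathcal{V}_0 - \widetilde\Lambda \cdot(\text{profile})$ are all $O(|a|)$ or $O(1)$ and can be absorbed for $a_0$ small and $\zeta$ large. I expect this last balancing — ensuring the $f'''$ term dominates in the inflection neighbourhood while not destroying positivity where $-f\mathcal{V}_0'$ already dominates, uniformly in the frequency triple — to be the main obstacle, and it is handled exactly as in the corresponding construction of \cite{DafRodsmalla}, with the simplification here that we work on a fixed compact $r$-range and only need the estimate there.
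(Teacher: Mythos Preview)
Your proposal is essentially correct and would yield a valid proof, but the paper's construction is considerably simpler than the $\arctan$-based gluing you describe. Rather than introducing a peaked function with a large parameter $\zeta$ and balancing the $f'''$ term against $-f\mathcal{V}_0'$ near the inflection point, the paper simply \emph{shifts the zero of $f_0$} from $3M$ to $r_{\rm max}$: it defines
\[
f = \Bigl(1 - \frac{3M + \chi(r^*)(r_{\rm max} - 3M)}{r}\Bigr)\Bigl(1 + \frac{M}{r}\Bigr),
\]
where $\chi$ is a cutoff equal to $1$ on $[A_1^*/4, A_2^*/4]$ and $0$ outside $[A_1^*/2, A_2^*/2]$. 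Properties (1)--(3) are then immediate.

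For property~(4), the key difference is where the additive $+1$ comes from. In your argument it comes from the large $-\tfrac12 f'''$ peak of the $\arctan$ near $r_{\rm max}$. In the paper's argument it comes instead from the \emph{Schwarzschild} coercivity: on the fixed compact window $[A_1^*, A_2^*]$ one already has $-f_0\mathcal{V}_{Schw}' - \tfrac12 f_0''' \gtrsim (1-3M/r)^2\ell(\ell+1) + 1$, and since the shifted $f$ satisfies $|f''' - f_0'''| \le c(a)$ (because $|r_{\rm max} - 3M| = O(|a|)$), this $+1$ survives the perturbation. The $\widetilde\Lambda(1 - r_{\rm max}r^{-1})^2$ part then comes from Lemma~\ref{lem.max} exactly as you outline: $(r-r_{\rm max})^{-1}\mathcal{V}_0' \gtrsim \widetilde\Lambda$ combined with $f$ vanishing at $r_{\rm max}$.

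What each approach buys: your $\arctan$ construction (as in \cite{DafRodsmalla}) is more robust---it works on the full line and does not rely on a pre-existing Schwarzschild coercivity estimate---but requires the delicate parameter balancing you flag as the main obstacle. The paper's shifted-$f_0$ construction exploits the restriction to a fixed compact $r$-region and the smallness of $|a|$ to avoid that balancing entirely; no auxiliary large parameter is needed, and property~(4) reduces to a straightforward perturbation off Schwarzschild.
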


\begin{proof}
Let  $\mathcal{V}_{Schw}^{[\pm2]}$ denote the potential $\mathcal{V}$ of
$(\ref{Vdef})$ in the $a=0$ Schwarzschild case. 
Writing this potential as in $(\ref{Vdef})$
as 
\[
\mathcal{V}_{Schw}=(\mathcal{V}_{Schw})_0+(\mathcal{V}_{Schw})_1,
\]
we see easily that $(\mathcal{V}_{Schw})_0$ has a unique maximum
at $r=3M$, while
\[
f_0'\gtrsim r(r-2M) r^{-4}, \qquad -f_0\mathcal{V}_{Schw}'-\frac12f_0''' 
\gtrsim c r(r-2M) \left( \frac{(r-3M)^2}{r^2}\ell(\ell+1)+1 \right) r^{-5},
\]
so in particular, in the region $r^*\in [A_1^*,A_2^*]$, we have
\[
f_0'\gtrsim 1 ,\qquad  -f_0\mathcal{V}_{Schw}'-\frac12f_0''' \gtrsim (1-3M/r)^2\ell(\ell+1) +1.
\]

We begin with a lemma concerning the behaviour of the potential
$\mathcal{V}$  in the $\mathcal{G}_1$ frequency range.

\begin{lemma}
\label{lem.max}
Let $0<c_{\flat}<1$ and $C_\sharp>1$ be arbitrary.
For sufficiently small $|a|<a_0\ll M$, then
for all frequency triples in the range $\mathcal{G}_1$,
the potential $\mathcal{V}_0$ of $(\ref{Vdef})$
has a unique maximum $r_{\rm max}$ satisfying property 2.~and
\begin{equation}
\label{lemmasta}
(r-r_{\rm max})^{-1}\mathcal{V}'_0 \gtrsim \widetilde{\Lambda}
\end{equation}
in $[A_1,A_2]$. If $m=0$, then $r_{\rm max}$ is manifestly independent
of $\omega$ and $\widetilde{\Lambda}$.
\end{lemma}

\begin{proof}
This is an easy computation in view of $(\ref{rewritepot})$.
For the region $\mathcal{G}_1\setminus \{\widetilde\Lambda 
+\omega^2 +m^2 \le C_{\sharp} \}$, one
 uses the bound 
\[
\widetilde{\Lambda}-4|a\omega| \geq \frac{1}{2}\widetilde{\Lambda} + \frac{1}{4}c_{\flat} \omega^2
\geq \frac14 \widetilde{\Lambda}+\frac14 c_{\flat}\omega^2 +\frac1{16}m^2
\qquad {\rm\ in\ } \qquad \mathcal{G}_1\setminus \{\widetilde\Lambda 
+\omega^2 +m^2 \le C_{\sharp} \}
\]
 and the smallness of $a$. For the region
 $\{\widetilde\Lambda 
+\omega^2 +m^2 \le C_{\sharp} \}$
it suffices to use the
general
bound $\widetilde{\Lambda}\ge 1$ and the smallness of $a$.
 Notice that according to our conventions,
 the constant in the $\gtrsim$ indeed only depends on $M$,
 since smallness of $a$ can be used to absorb the $c_{\flat}$ and $C_\sharp$ 
 dependence.
\end{proof}

Let $\chi(r^*)$ be a cutoff function such that
$\chi=1$ in $[A_1^*/4,A_2^*/4]$ and $\chi =0$  in $[A_1^*/2,A_2^*/2]^c$. 
We define now
\begin{equation}
\label{simple.def.of.f}
f= \left(1-\frac{3M +\chi(r^*)({r}_{\rm max}-3M)}r\right)\left(1+\frac{M}r\right) .
\end{equation}
This function obviously satisfies property 1.~and is easily seen to satisfy
property 3.

It remains to show property 4.
By
$(\ref{lemmasta})$ and the definition of $f$ we have
\[
-f\mathcal{V}_0' \gtrsim \widetilde\Lambda (1-r_{\rm max}r^{-1})^2 \, .
\]
On the other hand, for $|a|\ll a_0<M$ sufficiently small,
we have that $|f_0'''-f''' |\le c(a)$, and thus
\[
-f\mathcal{V}_0' -\frac12f''' \gtrsim (\widetilde\Lambda (1-r_{\rm max}r^{-1})^2+1).
\]
Finally, we note that $\mathcal{V}=\mathcal{V}_0+\mathcal{V}_1+\mathcal{V}_2$,
and   we
have $|\mathcal{V}_1-(\mathcal{V}_{Schw})_1|\le c(a)$, 
$|\mathcal{V}_2|\le c(a)$ 
 with $c(a)\to0$.
 
 We have 
  \[
  -f\mathcal{V} '-\frac12f''' =
  -f\mathcal{V}_0'-\frac12f''' 
  -f(\mathcal{V}_{Schw})_1' 
+f(\mathcal{V}_1'-(\mathcal{V}_{Schw})_1' )
-f\mathcal{V}_2'
\]
It follows readily that property 4.~indeed holds
for frequencies in $\mathcal{G}_1$.
\end{proof}

Now, given a parameter $\delta_1<1$, we define the function
\begin{equation}
\label{y1def.here}
y_1= \delta_1((1-\chi)f+\chi f^3)).
\end{equation}
Note that this function satisfies 
$(\ref{theymatch})$. We compute
\begin{equation}
\label{positivityofy1}
y'_1 =  \delta_1( (1-\chi)f' + 2\chi f^2 f'  - \chi' f +\chi' f^3) \gtrsim  \delta_1 (r-r_{\rm max})^2
\end{equation}
where we are using also that $|f|\le 1$ implies that $|f^3|\le |f|$.

Note on the other hand that for sufficiently small $|a|<a_0\ll M$, we have
\[
|\mathcal{V} |\lesssim \widetilde{\Lambda}+1  \, ,\qquad  |\mathcal{V}'| \lesssim 
\widetilde{\Lambda}+1
\]
in $r^*\in[A_1^*,A_2^*]$ for all frequencies in $\mathcal{G}_1$,
in view of the 
general bound
\begin{equation}
\label{gen.bound.all.freq}
\frac{1}{4}m^2 + 1  \le \widetilde{\Lambda}
\end{equation}
and the bound 
\[
\omega^2 \le c_{\flat}^{-1} \widetilde{\Lambda} +C_\sharp,
\]
which holds in $\mathcal{G}_1$.
Thus
\[
y'_1 \mathcal{V} - y_1\mathcal{V}' \lesssim  \delta_1(\widetilde{\Lambda}(
1-r_{\rm max}r^{-1})^2+1).
\]

It follows that we may choose $\delta_1$ sufficiently small
so as for
\begin{equation}
\label{almostcoer}
-f\mathcal{V}'-\frac12f''' -y_1'\mathcal{V} +y_1\mathcal{V}' 
+y_1'\omega^2
\gtrsim (\widetilde{\Lambda}+\delta_1 \omega^2) (1-r_{\rm max}r^{-1})^2 +1.
\end{equation}
{\bf Henceforth, $\delta_1$ will be fixed.} 
In particular, according to our conventions,
we may replace the $\delta_1$ factor by $1$ on the right hand side of
$(\ref{almostcoer})$. 

In view of $(\ref{almostcoer})$ and $(\ref{positivityofy1})$,
examining the identities of Section~\ref{templatesec}, 
we have obtained the degenerate coercivity of $({\rm Q}^f+{\rm Q}^{y_1})'$.

We would like to improve this coercivity in the ``angular-dominated'' subrange
of $\mathcal{G}_1$. 
Let us now introduce a new parameter
$C_{\flat} \gg 1$ and consider the range
\begin{equation}
\label{smaller.range}
\mathcal{G}_1\cap \{ \widetilde{\Lambda} \ge C_{\flat} \omega^2\}.
\end{equation}
Noting that we have 
\[
\mathcal{V} \gtrsim \widetilde{\Lambda} +1 
\]
in $\mathcal{G}_1$,
it follows that 
for $C_{\flat}$ sufficiently large,  we have
\[
\mathcal{V} -\omega^2 \gtrsim \mathcal{V} \gtrsim \widetilde{\Lambda} \gtrsim 
\widetilde{\Lambda}+\omega^2
\]
in $(\ref{smaller.range})$. {\bf Henceforth, $C_{\flat}$ will be fixed}.
We may now define a new small parameter 
$\delta_3>0$  and define a function
\[
y_2= \delta_3 (r_{\rm max}-r^*)\chi,
\]
where $\chi$ is the cutoff from above.
We have that for frequency triples in $(\ref{smaller.range})$,
\[
y_2'(\omega^2 -\mathcal{V})\gtrsim \delta_3, \qquad -y_2\mathcal{V}' \le \delta_3
 (\widetilde{\Lambda}(1-r_{\rm max}r^{-1})^2 +1)
\]
in $[A_1^*/4,A_2^*/4]$,
while 
\[
y_2'\mathcal{V}-y_2\mathcal{V}' \le \delta_3 (\widetilde{\Lambda}(1-r_{\rm max}r^{-1})^2 +1),
\qquad |y'_2| \lesssim \delta_3
\]
in $[A_1^*, A_2^*]$.
In particular, we may choose $\delta_3$ sufficiently small,  with
the smallness requirement depending only on $M$,
so that, defining
\begin{equation}
\label{ythesum}
y=y_1 +y_2,
\end{equation}
we have 
\begin{equation}
\label{precoer}
2f'+y' \gtrsim 1,\qquad 
-f\mathcal{V}'-\frac12f''' -y'\mathcal{V} +y\mathcal{V}' +\omega^2 y'
\gtrsim  (\widetilde{\Lambda}+\omega^2)(\delta_3+  (1-r_{\rm max}r^{-1})^2) +1
\end{equation}
in $(\ref{smaller.range})$. {\bf Henceforth, $\delta_3$ will be fixed.}

We are ready now for our final definitions.
In the range $(\ref{smaller.range})$, we define $y$ by
$(\ref{ythesum})$.
Since $\delta_3$ is now fixed we may now write
\[
(\delta_3+(1-r_{\rm max}r^{-1})^2) \gtrsim 1.
\]
We thus can set $r_{\rm trap}=0$.

For the remaining frequencies in $\mathcal{G}_1$, i.e.~for frequencies
in $\mathcal{G}_1 \cap \{\widetilde{\Lambda} < C_{\flat}\omega^2\}$,
we define  simply $y=y_1$
and $r_{\rm trap} = r_{\rm max}$.

Finally, we consider  the current 
\[
E{\rm Q}^T
\]
for $E$ the parameter fixed in Section~\ref{condmultestsec}.

Thus, applying the identity corresponding
to $(\ref{totalcurrent})$
in view of $(\ref{positivityofy1})$, 
$(\ref{almostcoer})$ and $(\ref{precoer})$,
we obtain that Proposition~\ref{positivprop} 
holds for all  frequencies in $\mathcal{G}_1$.

\subsubsection{The $\mathcal{G}_2$ range}
\label{gtworangesec}

We define this frequency range to be the complement of $\mathcal{G}_1$, i.e.
\begin{equation}
\label{range2}
\mathcal{G}_2 =\{\omega^2 > c_{\flat}^{-1} \widetilde\Lambda \} \cap \{\widetilde\Lambda +\omega^2+m^2 
 > C_\sharp\}.
\end{equation}
These are the ``time-dominated'' large frequencies.

We may choose $c_{\flat}$ sufficiently small, and $C_\sharp$ sufficiently large,
so that for sufficiently small $|a|<a_0\ll M$, we have
\begin{equation}
\label{havesandhavenots}
\omega^2-\mathcal{V} \ge \frac12\omega^2, \qquad
|\mathcal{V}'| \le \frac12\omega^2\qquad
 {\rm\ in\ }\mathcal{G}_2
\end{equation}
{\bf Henceforth, $c_{\flat}$ and $C_\sharp$ will be fixed by the above restriction.}
We note that it is certainly the case that $C_{\flat} \ge c_{\flat}$.

Consider the function $f_0 $ of the previous section.
We define simply $f=f_0$ for frequencies in $\mathcal{G}_2$.

Given the parameter $\delta_1$ fixed in Section~\ref{gonerangesec}, 
we define now $y=\delta_1 f$. 
It follows from $(\ref{havesandhavenots})$ that in the range $\mathcal{G}_2$ we have
\[
(2f' +y')\gtrsim 1, \qquad
-( f\mathcal{V}' +y\mathcal{V}') -\frac12 f''' 
+y' (\omega^2-\mathcal{V}) \gtrsim  \omega^2 
\gtrsim (\omega^2 +\widetilde{\Lambda}^2 +1 ).
\]
We may define thus
the parameter $r_{\rm trap}=0$ for the frequency range $\mathcal{G}_2$.

Finally, we may  again add  
\[
E{\rm Q}^T
\]
for $E$ the parameter fixed in Section~\ref{condmultestsec}.

Thus again applying the identity to $(\ref{totalcurrent})$ with the above
definitions we obtain that
Proposition~\ref{positivprop} holds for all frequencies in $\mathcal{G}_2$.

Since $\mathcal{G}_1\cup\mathcal{G}_2$ contains all admissible frequencies,
the results of this section together with Section~\ref{gonerangesec} 
imply that Proposition~\ref{positivprop}, and thus  $(\ref{tothelefths0})$, indeed holds.

The proof of Proposition~\ref{multestforPsi} is now complete.
\end{proof}

{\bf Let us recall that in the course of the above proof, we have
fixed the parameter
$\delta_1$. This allows us to fix also $\delta_2$ of Proposition~\ref{multpropnofreq}. 
Since $E$ has been fixed previously, it follows
that all dependences on parameters can be removed from the
$\lesssim$ in the
statement of Proposition~\ref{multpropnofreq}.}

\subsection{Transport estimates for $\psi^{[\pm2]}$ and $u^{[\pm2]}$}
\label{transestsec}

In this section we will prove frequency-localised versions for the transport
estimates of~\cite{holzstabofschw}
to obtain estimates for $u^{[+2]}$ and $\psi^{[+2]}$ from $\Psi^{[+2]}$
as well as for $u^{[-2]}$ and $\psi^{[-2]}$ from $\Psi^{[-2]}$,
localised in $r\in [A_1,A_2]$.

The main result of the section is:
\begin{proposition} \label{prop:synth}
With the assumptions of Theorem~\ref{phaseSpaceILED}, 
we have the following estimates:
\begin{align} \label{desi1}
 \|\mathfrak{d}\psi^{[\pm2]}\|^2(A_{ \mp})
 + \|\mathfrak{d}u^{[\pm2]}\|^2(A_{ \mp})+ \|\mathfrak{d}\psi^{[\pm2]}\|^2 +\|\mathfrak{d}u^{[\pm2]}\|^2 
\nonumber \\
 \lesssim \|\mathfrak{d} \Psi^{[\pm 2]}\|^2  + \|\mathfrak{d}\psi^{[\pm2]}\|^2(A_{ \pm})
 + \|\mathfrak{d}u^{[\pm2]}\|^2(A_{ \mp}).
\end{align}
\end{proposition}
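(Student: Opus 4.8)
\textbf{Proof plan for Proposition~\ref{prop:synth}.}
The plan is to integrate the defining transport relations \eqref{plustwofwoflr1}--\eqref{minustwofwoflr2} along the $r^*$-direction, exactly as in the transport estimates of Section~\ref{condtranestsec}, but now on the bounded interval $[A_1^*,A_2^*]$ and at fixed frequency, so that all weights in $r$ and $\Delta$ are comparable to constants depending only on $M$. I will treat the $[+2]$ and $[-2]$ cases separately because the transport is in the $\underline{L}$-direction for $s=+2$ (so one integrates \emph{inward}, from $A_2$ towards $A_1$, i.e.\ the data sit on $A_+=A_1$) and in the $L$-direction for $s=-2$ (so one integrates \emph{outward}, data on $A_-=A_2$). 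In each case the relation has the schematic form $\tfrac{d}{dr^*}\big(\tfrac1w\,(\text{lower quantity})\cdot(\text{weight})\big) = \pm\,(\text{weight})\cdot(\text{higher quantity})$, where ``higher'' means $\psi$ when the lower is $u$, and $\Psi$ when the lower is $\psi$.

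First I would rewrite \eqref{plustwofwoflr1} as a first-order ODE for $g:=w\,u^{[+2]}$ with source proportional to $(r^2+a^2)\sqrt{w}\,w\,\psi^{[+2]}$, and \eqref{plustwofwoflr2} as a first-order ODE for $h:=(r^2+a^2)\sqrt{w}\,\psi^{[+2]}$ with source proportional to $w\,\Psi^{[+2]}$; then multiply each by the conjugate of the unknown, take real parts, and integrate over $[A_1^*,A_2^*]$. Since $w$, $r^2+a^2$ and their derivatives are bounded above and below on this compact interval, the pointwise identity $\big(|{\cdot}|^2\big)' = 2\,\mathrm{Re}(\overline{(\cdot)}\,(\cdot)')$ together with Cauchy--Schwarz and a standard Grönwall argument on the bounded interval yields
$|u^{[+2]}(r^*)|^2 \lesssim |u^{[+2]}(A_2)|^2 + \int_{A_1^*}^{A_2^*}|\psi^{[+2]}|^2\,dr^*$
and similarly $|\psi^{[+2]}(r^*)|^2 \lesssim |\psi^{[+2]}(A_2)|^2 + \int_{A_1^*}^{A_2^*}|\Psi^{[+2]}|^2\,dr^*$, for every $r^*\in[A_1^*,A_2^*]$; in particular the bound holds at $r^*=A_1^*$. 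Commuting the transport relations with the Fourier multipliers $-i\omega$ and $im$ (which is trivial, since they have constant coefficients in $r^*$ and commute through the $r^*$-derivative and through $w$, $r^2+a^2$) gives the same estimates for $\omega u$, $m u$, $\omega\psi$, $m\psi$. Summing these three versions and integrating the pointwise bound in $r^*$ over $[A_1^*,A_2^*]$ produces precisely the left-hand side of \eqref{desi1} (both the bulk norms $\|\mathfrak d u^{[+2]}\|^2$, $\|\mathfrak d\psi^{[+2]}\|^2$ and the boundary term at $A_-=A_1$) controlled by $\|\mathfrak d\Psi^{[+2]}\|^2 + \|\mathfrak d u^{[+2]}\|^2(A_+) + \|\mathfrak d\psi^{[+2]}\|^2(A_+)$, noting that $\|\mathfrak d\Psi^{[+2]}\|^2$ dominates $\int_{A_1^*}^{A_2^*}(\omega^2+m^2+1)|\Psi^{[+2]}|^2\,dr^*$ (here one uses $\widetilde\Lambda\gtrsim m^2$ from \eqref{eli1e}, and the fact that the $\Psi$-norm contains $|\Psi'|^2$ and $((1-r^{-1}r_{\rm trap})^2(\omega^2+\widetilde\Lambda)+1)|\Psi|^2$; the possible degeneration at $r_{\rm trap}$ is harmless because it is the \emph{lower} quantities $\psi,u$ that are being estimated, not $\Psi$, and their norms carry no $r_{\rm trap}$-degeneracy — wait, one does need to be slightly careful here). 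For $s=-2$ the argument is identical after reversing the direction of integration, so that the data sit on $A_+:=A_1$ for $s=+2$ but the \emph{source} of new boundary terms is at $A_-$; the bookkeeping of which of $A_1,A_2$ is ``$A_+$'' versus ``$A_-$'' is precisely what the alternate notation $A_\pm$ from Section~\ref{frequencylocalnormssec} is designed to handle.

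The one genuine subtlety — and the step I expect to require the most care rather than the most work — is the potential degeneration of the $|\Psi|^2$-weight at $r=r_{\rm trap}$ in the norm $\|\mathfrak d\Psi^{[\pm2]}\|^2$: the transport source term for $\psi$ is $\int |\Psi|^2\,dr^*$ with \emph{no} degeneracy, whereas $\|\mathfrak d\Psi\|^2$ controls $\int((1-r^{-1}r_{\rm trap})^2(\omega^2+\widetilde\Lambda)+1)|\Psi|^2$, which does contain the undegenerate term $\int|\Psi|^2\,dr^*$ (the ``$+1$'' inside the bracket). Hence $\int_{A_1^*}^{A_2^*}|\Psi^{[\pm2]}|^2\,dr^*\le\|\mathfrak d\Psi^{[\pm2]}\|^2$ with no loss, and similarly the $\omega$- and $m$-commuted transport sources are $\int(\omega^2+m^2)|\Psi|^2$, which is controlled once one notes $(\omega^2+m^2)\lesssim \omega^2+\widetilde\Lambda$ but \emph{only away from} $r_{\rm trap}$ — near $r_{\rm trap}$ one instead absorbs the $\omega$-commuted source by first integrating the \emph{uncommuted} relation and using that $\Psi'$ (which is undegenerate in $\|\mathfrak d\Psi\|^2$) together with the algebraic relation $-\underline L = \tfrac{d}{dr^*}+i\omega-\tfrac{iam}{r^2+a^2}$ lets one trade $\omega\psi$ for $\psi'$-type quantities; this is exactly the mechanism flagged in Section~\ref{odeanalysisintro} of the introduction. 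I would carry this out by commuting the transport relation \eqref{plustwofwoflr1} with $\tfrac{d}{dr^*}$ rather than with $-i\omega$, use \eqref{plustwofwoflr2} to express the resulting $(w\psi)'$-term back in terms of $\Psi$, and close. Everything else is routine one-dimensional Grönwall and Cauchy--Schwarz on a compact interval with $M$-dependent constants.
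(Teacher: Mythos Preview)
Your proposal has the right core mechanism — the $\partial_{r^*}$-commutation trick to exploit the undegenerate $|\Psi'|^2$ term — but the implementation you sketch has a genuine gap in how you organise the argument around the trapping degeneracy.

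You propose a \emph{spatial} split ``near $r_{\rm trap}$'' versus ``away from $r_{\rm trap}$'', applying the $\partial_{r^*}$-commutation only near $r_{\rm trap}$ and direct multiplication by $\omega,m$ elsewhere. This is not how the paper proceeds, and it is not clear your version closes. The transport equations are global ODEs on $[A_1^*,A_2^*]$; patching two methods across a spatial interface would generate cutoff errors you do not discuss. More seriously, your discussion only treats the $\omega$-commuted source near trapping and says nothing about how to obtain $\int m^2|\psi|^2$ there. The $\partial_{r^*}$-trick yields control of $\int(\omega-\tfrac{am}{r^2+a^2})^2|\psi|^2$, hence (for small $a$) of $\int\omega^2|\psi|^2$, but not of $\int m^2|\psi|^2$ without further input.

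The paper's resolution is a \emph{frequency} dichotomy, not a spatial one. One splits into $\mathcal{F}^\flat=\{\omega^2<\tfrac14 C_\flat^{-1}m^2\}$ and $\mathcal{F}^\sharp=\{\omega^2\ge\tfrac14 C_\flat^{-1}m^2\}$. The crucial observation — which you are missing — is that $\mathcal{F}^\flat$ lies in the angular-dominated subrange \eqref{smaller.range} where $r_{\rm trap}=0$, so $\|\mathfrak{d}\Psi\|^2$ is \emph{non}-degenerate and direct multiplication by $\omega,m$ succeeds. In $\mathcal{F}^\sharp$ one has $m^2\lesssim\omega^2$, so the $\partial_{r^*}$-commutation (applied to \eqref{plustwofwoflr2}, not \eqref{plustwofwoflr1}, on the \emph{entire} interval) gives $\int\omega^2|\psi|^2$ and hence automatically $\int m^2|\psi|^2$; the same relation $m^2\lesssim\omega^2$ together with smallness of $a$ is what lets one absorb the commutator error $\int a^2m^2|\psi|^2$ appearing in \eqref{yu3}. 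Once $(\omega^2+m^2+1)|\psi|^2$ is controlled, one multiplies the $u$-transport \eqref{yu2} directly by $\omega^2,m^2$ since its source is $\psi$, not $\Psi$.

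A minor point: your direction bookkeeping is tangled (you say you integrate ``from $A_2$ towards $A_1$'' yet place the data at $A_+=A_1$). For $s=+2$ the $\underline{L}$-transport deposits the boundary data at $A_2$; the paper's own $A_\pm$ notation in the statement is itself not entirely consistent with its proof, so some care is warranted here, but the mathematics is unaffected.
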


\begin{proof}
We consider first the case $+2$ of (\ref{desi1}).
 
 Adding the identity arising from multiplying $(\ref{plustwofwoflr2})$
  by $r \sqrt{\Delta} \ \overline{\psi^{[+2]}}$ and its complex conjugate by 
 $r \sqrt{\Delta}\psi^{[+2]}$ leads
 after integration and applying Cauchy--Schwarz on the right hand side to the estimate
\begin{align} \label{yu1}
r |\sqrt{\Delta} \psi^{[+2]}|^2 \left(A_1^*\right) + \int_{A_1^*}^{A_2^*} dr^*  |\sqrt{\Delta} \psi^{[+2]}|^2 \lesssim \int_{A_1^*}^{A_2^*}dr^* |\Psi^{[+2]}|^2 +r |\sqrt{\Delta} \psi^{[+2]}|^2 \left(A_2^*\right)  \, .
\end{align}
Similarly, adding the identity arising from multiplying $(\ref{plustwofwoflr1})$ by $r\overline{u^{[+2]}}w$ and its complex conjugate by $r u^{[+2]}w$ leads after integration and applying 
Cauchy--Schwarz on the right hand side to the estimate
\begin{align} \label{yu2}
r |u^{[+2]}w|^2 \left(A_1^*\right) + \int_{A_1^*}^{A_2^*} dr^* \   |u^{[+2]}w|^2 \lesssim  \int_{A_1^*}^{A_2^*} dr^*  | \psi^{[+2]}|^2  +r |u^{[+2]}w|^2 \left(A_2^*\right)  \, .
\end{align}
Combining (\ref{yu1}) and (\ref{yu2}) yields (\ref{desi1}) 
without the $m^2$ and $\omega^2$ terms in the norms on the left. 

To obtain the estimate with the $m^2$ and $\omega^2$ terms we define the frequency ranges
\[
{\cal F}^{\sharp}=\{\omega^2\ge   \frac14C_{\flat}^{-1} m^2 \},\qquad 
{\cal F}^{\flat}=\{\omega^2<\frac14C_{\flat}^{-1}  m^2\}
\]
where  $C_{\flat}$ is the constant of Section~\ref{gonerangesec}.
In view of the general bound $(\ref{gen.bound.all.freq})$
which holds for all admissible frequencies, it follows that in the frequency
range ${\cal F}^{\flat}$, we have
\[
C_{\flat} \omega^2  < \frac14 m^2 \le  \widetilde{\Lambda} 
\]
and thus ${\cal F}^{\flat}$ is contained in the frequency range $(\ref{smaller.range})$.
It follows that $r_{\rm trap}=0$ for ${\cal F}^{\flat}$,
i.e.~these frequencies are \underline{not} ``trapped''.

Suppose  first that $(\omega,m)$ lie in the frequency range ${\cal F}^{\flat}$. 
Since $r_{\rm trap}=0$, we have
\begin{equation}
\label{non.deg.cont}
\int_{A_1^*}^{A_2^*}\left[  |(\Psi^{[\pm2]})'|^2+ ( \widetilde{\Lambda}^2+m^2+\omega^2+1)|\Psi^{[\pm2]}|^2\right]
dr^*
\lesssim 
\| \mathfrak{d}\Psi^{[\pm2]} \|^2.
\end{equation}
Multiplying  thus (\ref{plustwofwoflr2}) and (\ref{plustwofwoflr1}) by $m$ and $\omega$ and repeating the argument leading to (\ref{yu1}) and (\ref{yu2}) immediately leads to (\ref{desi1}). 

Suppose on the other hand
that  $(\omega, m)$ lie in the frequency range ${\cal F}^{\sharp}$. Here
we do not have the $m^2$ and $\omega^2$ in $(\ref{non.deg.cont})$ and thus
 we proceed as follows. Commuting (\ref{plustwofwoflr2}) by $\frac{d}{dr^*}$ leads to the identity
\begin{align}
\left(\frac{d}{dr^*} - i \omega + \frac{iam}{r^2+a^2} \right) \left(\sqrt{\Delta} \psi^{[+2]}\right)^\prime = -2 w \left(\Psi^{[+2]}\right)^\prime - 2w^\prime  \Psi^{[+2]} + 2r\frac{iam}{r^2+a^2} w \cdot \sqrt{\Delta}\psi^{[+2]} \, .
\end{align}
Multiplying this by $r \left(\sqrt{\Delta} \overline{\psi^{[+2]}}\right)^\prime$ and adding the complex conjugate multiplied by $r \left(\sqrt{\Delta} {\psi^{[+2]}}\right)^\prime$ we find, upon integration and using Cauchy--Schwarz on the right hand side, the estimate 
\begin{align} \label{yu3}
r \Big|\left(\sqrt{\Delta} \psi^{[+2]}\right)^\prime\Big|^2 \left(A_1^*\right) + \int_{A_1^*}^{A_2^*} dr^* \   \Big|\left(\sqrt{\Delta} \psi^{[+2]}\right)^\prime\Big|^2 
\lesssim \ & r \Big|\left(\sqrt{\Delta} \psi^{[+2]}\right)^\prime\Big|^2 \left(A_2^*\right) +  \|\mathfrak{d} \Psi^{[\pm2]}\|^2 \nonumber \\
&+  \int_{A_1^*}^{A_2^*} dr^* a^2 m^2 |\sqrt{\Delta} \psi^{[+2]}|^2    \, .
\end{align}
Using the pointwise relation (\ref{plustwofwoflr2}) and the definition of the norm $ \|\mathfrak{d} \Psi^{[\pm2]}\|$ (as well as  the simple fact that for $i=1,2$
$
| \Psi^{\pm 2}|^2 \left(A_i^*\right) \lesssim \|\mathfrak{d} \Psi^{[\pm2]}\|^2 
$),
the estimate (\ref{yu3}) is also valid replacing on the left hand side $\Big|\left(\sqrt{\Delta} \psi^{[+2]}\right)^\prime\Big|^2$ by $\Big|\underline{L} \left(\sqrt{\Delta} \psi^{[+2]}\right)\Big|^2=|w\Psi^{[+2]}|^2$. Using the relation (\ref{Lbardefin}) we therefore deduce
\begin{align} \label{yu4}
\left(\omega - \frac{am}{r^2+a^2} \right)^2  \Big| \sqrt{\Delta} \psi^{[+2]}\Big|^2 \left(A_2^*\right) + \int_{A_1^*}^{A_2^*} dr^* \left(\omega - \frac{am}{r^2+a^2} \right)^2  \Big| \sqrt{\Delta} \psi^{[+2]}\Big|^2 \nonumber \\
 \lesssim  \|\mathfrak{d} \Psi^{[\pm2]}\|^2 + \int_{A_1^*}^{A_2^*} dr^* a^2m^2  |\sqrt{\Delta} \psi^{[+2]}|^2 + \left(\omega - \frac{am}{r^2+a^2} \right)^2  \Big| \sqrt{\Delta} \psi^{[+2]}\Big|^2 \left(A_1^*\right)   \, .
\end{align}

In the range ${\cal F}^{\sharp}$,  restricting to sufficiently small
$|a|<a_0\ll M$, we have
that
\[
\omega^2 \lesssim \left(\omega - \frac{am}{r^2+a^2} \right)^2   \lesssim \omega^2.
\]
It follows that in the inequality $(\ref{yu4})$, 
we can replace  the factor in
round bracket on the left hand side
simply by $\omega^2$ and absorb the second term on the right by the left hand side. This establishes (\ref{desi1}) for the $\psi^{[+2]}$-norm on the left. We can now multiply (\ref{yu2}) by $m^2$ and $\omega^2$ and use the estimate just obtained for $\psi^{[+2]}$ to establish the estimate (\ref{desi1}) also for the $u^{[+2]}w$-term. The proof of (\ref{desi1}) is now complete.

To prove (\ref{desi1}) for $s=-2$ 
one follows the identical argument but choosing the multiplier $\frac{1}{r}$ instead of $r$.
\end{proof}

\subsection{Controlling the inhomogeneous term $\mathfrak{K}^{[\pm 2]}$ in Proposition \ref{multestforPsi}}
\label{completing}

\begin{proposition}
\label{newsyntheprop}
The term 
\[
\mathcal{K}^{[\pm2]}=\int_{A_1^*}^{A_2^*} {\mathcal J}^{[\pm2]} \cdot  (f, y, E) \cdot (\Psi^{[\pm2]}, {\Psi^{[\pm2]}}')\, dr^* 
\]
appearing in Proposition \ref{multestforPsi} satisfies
\begin{align} \label{ihen}
\big|\mathcal{K}^{[\pm2]}\big| \lesssim |a|\|\mathfrak{d} \Psi^{[\pm2]}\|^2 + |a|\|\mathfrak{d}\psi^{[\pm2]}\|^2  + |a|\|\mathfrak{d}u^{[\pm2]}\|^2 +  |a| \sum_{i=1}^2 ( \|\mathfrak{d}\psi^{[\pm2]}\|^2(A_i)
 + \|\mathfrak{d}u^{[\pm2]}\|^2(A_{i})).
\end{align}
\end{proposition}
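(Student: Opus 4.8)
The plan is to estimate $\mathcal{K}^{[\pm2]}$ by expanding the definition of $\mathcal{J}^{[\pm2],(a\omega)}_{m\ell}$ from Proposition~\ref{psieq}, noting that every term carries at least one explicit factor of $a$ (indeed, the $aim$ terms are $O(|a|)$ and the remaining terms are $O(a^2)$), and then treating each resulting summand with Cauchy--Schwarz against the multiplier weights $f$, $f'$, $y$, $E\omega$, all of which are $O(1)$ on $[A_1^*,A_2^*]$ by $(\ref{unif.f.bnds})$. The key structural observation, recorded in the remark following Proposition~\ref{psieq}, is that $\mathcal{J}^{[\pm2]}$ involves at most the quantities $\sqrt{\Delta}\psi^{[\pm2]}$, $aim\cdot\sqrt{\Delta}\psi^{[\pm2]}$ (i.e.\ one angular derivative of $\psi$), and $u^{[\pm2]}w$, $aim\cdot u^{[\pm2]}w$. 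Since we are on a fixed $r$-interval bounded away from the horizon and infinity, $\Delta$, $w$, $(r^2+a^2)$ and all their powers are comparable to constants depending only on $M$, so these are interchangeable with $\psi^{[\pm2]}$, $m\psi^{[\pm2]}$, $u^{[\pm2]}$, $mu^{[\pm2]}$ up to $\lesssim$.

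Concretely, I would write, schematically,
\[
\big|\mathcal{K}^{[\pm2]}\big| \lesssim |a|\int_{A_1^*}^{A_2^*}\left(|\Psi^{[\pm2]}| + |(\Psi^{[\pm2]})'| + |\omega|\,|\Psi^{[\pm2]}|\right)\left(|\psi^{[\pm2]}| + |m|\,|\psi^{[\pm2]}| + |u^{[\pm2]}| + |m|\,|u^{[\pm2]}|\right)dr^*,
\]
using $|f|,|f'|,|y|\lesssim 1$ and collecting the worst powers of $|a|$ (which is $|a|^1$, coming from the $aim$ terms; the $a^2$ terms are even smaller). Applying Cauchy--Schwarz $2|XY|\le |X|^2+|Y|^2$ pointwise in $r^*$ and integrating then bounds the first factor squared by $\|\mathfrak{d}\Psi^{[\pm2]}\|^2$ (note that $\|\mathfrak{d}\Psi^{[\pm2]}\|^2$ controls $\int |(\Psi^{[\pm2]})'|^2 + \int(\cdots)|\Psi^{[\pm2]}|^2$ and in particular $\int\omega^2|\Psi^{[\pm2]}|^2$ only where $r_{\rm trap}=0$ --- but wait, this is exactly the point where the $|a|$ smallness must save us: in the trapped range $\omega^2\sim\widetilde\Lambda$, the term $\omega|\Psi^{[\pm2]}|$ is \emph{not} controlled by $\|\mathfrak{d}\Psi^{[\pm2]}\|$, yet it is multiplied by $|a|$ only in combination with $\psi$ or $u$; one must instead pair the $\omega$ or $m$ factor with $\psi^{[\pm2]}$ or $u^{[\pm2]}$, whose norms $\|\mathfrak{d}\psi^{[\pm2]}\|^2$ and $\|\mathfrak{d}u^{[\pm2]}\|^2$ \emph{do} contain $\omega^2$ and $m^2$ weights). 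So the correct pairing is: $E\omega\,\mathrm{Im}(\mathcal{J}^{[\pm2]}\overline{\Psi^{[\pm2]}})$ with the $\omega$ absorbed into $\Psi^{[\pm2]}$ only when $\mathcal{J}^{[\pm2]}$ has no $m$ factor, and when $\mathcal{J}^{[\pm2]}$ carries an $aim$ factor one instead keeps $|a|\cdot|am|\cdot|\psi^{[\pm2]}|$ or similar and pairs $\omega|\Psi^{[\pm2]}|$ --- here one uses $|a\omega|\lesssim|a|(\omega^2+1)^{1/2}$ and that $\omega|\Psi^{[\pm2]}|$ appears against $|a|$ so the product is $|a|^{1+}$, but $\int\omega^2|\Psi^{[\pm2]}|^2$ is still not in $\|\mathfrak{d}\Psi^{[\pm2]}\|^2$ at trapping. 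The resolution: pair $|a\omega m|\,|u^{[\pm2]}|$ against $|\Psi^{[\pm2]}|$, giving $|a|(\omega^2m^2|u^{[\pm2]}|^2 + |\Psi^{[\pm2]}|^2)$ --- and $\omega^2 m^2|u^{[\pm2]}|^2$ exceeds $\|\mathfrak{d}u^{[\pm2]}\|^2$'s weight $(\omega^2+m^2+1)|u^{[\pm2]}|^2$; so one needs the refined bound $\omega^2\lesssim\widetilde\Lambda$ in the trapped range and $|am|^2\lesssim|a|^2\widetilde\Lambda$, hence $|a\omega m|\lesssim|a|\,\widetilde\Lambda\lesssim|a|(\omega^2+m^2)$, making $|a\omega m|^2|u^{[\pm2]}|^2\lesssim|a|^2(\omega^2+m^2)^2|u^{[\pm2]}|^2$ --- still quadratic in the eigenvalues. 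The genuine fix, which I would carry out carefully, is to distribute the \emph{two} available $m,\omega$ factors (one from $aim$ in $\mathcal{J}$, and the $E\omega$ from the multiplier) one onto $u^{[\pm2]}$/$\psi^{[\pm2]}$ and one onto $\Psi^{[\pm2]}$, so that each factor is paired with a quantity whose $\mathfrak{d}$-norm controls it, up to the loss at trapping of the $\Psi$ factor --- which is exactly why a boundary-term argument (integration by parts in $r^*$, exploiting $(\ref{Lbardefin})$, as flagged in Section~\ref{completing}'s preamble and carried out in the proof of Proposition~\ref{prop:synth}) is needed to upgrade $\psi$, $u$ control by two derivatives via $\Psi$, thereby producing the boundary terms $\|\mathfrak{d}\psi^{[\pm2]}\|^2(A_i) + \|\mathfrak{d}u^{[\pm2]}\|^2(A_i)$ on the right-hand side of $(\ref{ihen})$.

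Thus the actual argument I would write is: (i) expand $\mathcal{K}^{[\pm2]}$ into its constituent integrals, each with an explicit $|a|$; (ii) in the ``non-trapped'' frequency ranges (where $r_{\rm trap}=0$ and $\|\mathfrak{d}\Psi^{[\pm2]}\|^2$ contains the full $\omega^2|\Psi^{[\pm2]}|^2$ and $\widetilde\Lambda|\Psi^{[\pm2]}|^2$ weights by $(\ref{non.deg.cont})$), apply Cauchy--Schwarz directly pairing any $m$ or $\omega$ factor with $\psi^{[\pm2]}$ or $u^{[\pm2]}$ and the rest with $\Psi^{[\pm2]}$, yielding immediately $|a|(\|\mathfrak{d}\Psi^{[\pm2]}\|^2 + \|\mathfrak{d}\psi^{[\pm2]}\|^2 + \|\mathfrak{d}u^{[\pm2]}\|^2)$; (iii) in the ``trapped'' range $\widetilde\Lambda\sim\omega^2$, use that here $\omega^2+m^2+\widetilde\Lambda\lesssim\widetilde\Lambda$ and that the degenerate weight $(1-r^{-1}r_{\rm trap})^2(\omega^2+\widetilde\Lambda)|\Psi^{[\pm2]}|^2$ in $\|\mathfrak{d}\Psi^{[\pm2]}\|^2$ degenerates only at $r_{\rm trap}\in[A_1^*/4,A_2^*/4]$; pair the $aim$ factor of $\mathcal{J}$ with $u^{[\pm2]}$/$\psi^{[\pm2]}$ (absorbed into $\|\mathfrak{d}\psi^{[\pm2]}\|^2$, $\|\mathfrak{d}u^{[\pm2]}\|^2$ since $|am|^2\lesssim|a|^2(m^2+1)$ and one power of $|a|$ is spare) and the $E\omega$ factor of the multiplier with $\Psi^{[\pm2]}$; since at trapping $\omega^2|\Psi^{[\pm2]}|^2$ is \emph{not} in $\|\mathfrak{d}\Psi^{[\pm2]}\|^2$, one must here instead invoke the transport relations $(\ref{plustwofwoflr1})$--$(\ref{plustwofwoflr2})$ and $(\ref{Lbardefin})$ to express $\omega\psi^{[\pm2]}$, $\omega u^{[\pm2]}$ in terms of $(\Psi^{[\pm2]})'$, $\Psi^{[\pm2]}$ plus a $(d/dr^*)$ term that is integrated by parts, generating exactly the boundary contributions $|a|\sum_{i=1,2}(\|\mathfrak{d}\psi^{[\pm2]}\|^2(A_i)+\|\mathfrak{d}u^{[\pm2]}\|^2(A_i))$ in $(\ref{ihen})$; and (iv) collect all pieces, using smallness of $|a|<a_0\ll M$ to ensure all implicit constants are $O(1)$.

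\textbf{Main obstacle.} The crux is step (iii): in the trapped frequency range one cannot simply Cauchy--Schwarz the $\omega|\Psi^{[\pm2]}|$ factor coming from the energy multiplier $E\omega\,\mathrm{Im}(\mathcal{J}^{[\pm2]}\overline{\Psi^{[\pm2]}})$ against anything in $\|\mathfrak{d}\Psi^{[\pm2]}\|^2$, because that norm is deliberately degenerate at $r_{\rm trap}$. The way through --- entirely parallel to the integration-by-parts device in the proof of Proposition~\ref{prop:synth} and foreshadowed in the introduction's Section~\ref{odeanalysisintro} (``exploiting the presence of the good first order term $|\partial_{r^*}\Psi|^2$'') --- is to use the relation $(\ref{plustwofwoflr2})$ to trade the problematic $\omega$ (or equivalently $\underline L$, via $(\ref{Lbardefin})$) acting on $\psi^{[\pm2]}$ or $u^{[\pm2]}$ for a $(d/dr^*)$-derivative of $\sqrt{\Delta}\psi^{[\pm2]}$, which equals (up to harmless factors) $w\Psi^{[\pm2]}$, whose non-degenerate zeroth-order appearance is controlled by $\|\mathfrak{d}\Psi^{[\pm2]}\|^2$; the leftover total-derivative term upon integration over $[A_1^*,A_2^*]$ produces the boundary energies at $A_1,A_2$, each with the favourable $|a|$ prefactor. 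Everything else is routine Cauchy--Schwarz on a compact $r$-interval with $O(1)$ weights.
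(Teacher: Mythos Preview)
Your proposal is correct and converges on the same key device as the paper: the $f$, $f'$, $y$ multiplier contributions are bounded by direct Cauchy--Schwarz, while for the dangerous term $E\omega\,\mathrm{Im}(\mathcal{J}^{[\pm2]}\overline{\Psi^{[\pm2]}})$ one substitutes $i\omega = -\underline{L} - \tfrac{d}{dr^*} + \tfrac{iam}{r^2+a^2}$ (from $(\ref{Lbardefin})$), uses the transport relation $\underline{L}(\sqrt{\Delta}\psi^{[\pm2]}) = w\Psi^{[\pm2]}$, and integrates the residual $\tfrac{d}{dr^*}$-term by parts to produce the boundary contributions at $A_1, A_2$. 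Your trapped/non-trapped case split in steps (ii)--(iii) is unnecessary, however: the paper runs this substitution-and-integration-by-parts argument uniformly across all admissible frequency triples, with no reference to $r_{\rm trap}$ or the frequency ranges $\mathcal{G}_1$, $\mathcal{G}_2$.
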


\begin{proof}
Since $f$, $f^\prime$ and $y$ are all uniformly bounded we have by Cauchy--Schwarz:
\begin{align}\label{lkp}
\int_{A_1^*}^{A_2^*} \big|f\text{Re}\left({\Psi^{[\pm2]}}'\overline{{\mathcal J}^{[\pm2]} }\right)\big| + \big|f'\text{Re}\left(\Psi^{[\pm2]}\overline{{\mathcal J}^{[\pm2]} }\right)\big| + \big| y\text{Re}\left({\Psi^{[\pm2]}}'\overline{{\mathcal J}^{[\pm2]} }\right)\big| \nonumber \\
\lesssim |a|\|\mathfrak{d} \Psi^{[\pm2]}\|^2 + |a|\|\mathfrak{d}\psi^{[\pm2]}\|^2  + |a|\|\mathfrak{d}u^{[\pm2]}\|^2 \, .
\end{align}
 For the last remaining term, 
$
\int_{A_1^*}^{A_2^*} \omega\text{Im}\left({\mathcal J}^{[\pm2]} \overline{\Psi^{[\pm2]}}\right)
$, we observe that we only need to estimate
\begin{align} 
\label{two.terms.here}
\Big| \int_{A_1^*}^{A_2^*} \mathfrak{c}\left(r\right) \text{Im}\left(im\psi^{[\pm 2]} \omega \overline{\Psi^{[\pm2]}}\right)\Big| \ \ \ \ \textrm{and} \ \ \ \ \Big| \int_{A_1^*}^{A_2^*}  \mathfrak{c}\left(r\right)  \text{Im}\left(imu^{[\pm 2]} \omega \overline{\Psi^{[\pm2]}}\right)\Big| \, ,
\end{align}
where $\mathfrak{c}\left(r\right)$ denotes a generic bounded real-valued function with uniformly bounded derivative in $\left[A^*_1,A^*_2\right]$ (whose explicit form may change in the estimates below). This is because the other terms appearing in $\mathcal{J}^{[\pm 2]}$ are again easily controlled via Cauchy--Schwarz  and satisfy the estimate (\ref{lkp}). We show how to estimate these terms for $s=+2$, the case $s=-2$ being completely analogous. 

For the first term of $(\ref{two.terms.here})$ we have
\begin{align}
\int_{A_1^*}^{A_2^*} \mathfrak{c}\left(r\right) \text{Im}\left(im\psi^{[+ 2]} \omega \overline{\Psi^{[+2]}}\right) = \int_{A_1^*}^{A_2^*} \mathfrak{c}\left(r\right) \text{Im}\left(m \overline{\Psi^{[+2]}} \left(- \underline{L} \psi^{[+2]} -  \left(\psi^{[+2]}\right)^\prime + \frac{iam}{r^2+a^2} \psi^{[+2]}  \right) \right) \nonumber \\
=  \int_{A_1^*}^{A_2^*}  \mathfrak{c}\left(r\right) \text{Im} \left(m\overline{\Psi^{[+2]}}  \psi^{[+2]} \right) + \mathfrak{c}\left(r\right) \text{Im} \left(\overline{\Psi^{[+2]}}  m \psi^{[+2]} \right)\Big|^{A_2^*}_{A_1^*} + \int_{A_1^*}^{A_2^*}  \mathfrak{c}\left(r\right) \text{Im} \left(\overline{\Psi^{[+2]}}^\prime  m \psi^{[+2]} \right) \nonumber \\
 +  \int_{A_1^*}^{A_2^*}\text{Im} \left( \left(- \mathfrak{c}\left(r\right)  m\overline{\psi^{[+2]}}^\prime + \mathfrak{c}\left(r\right)  m\overline{\psi^{[+2]}}\right)i am \psi^{[+2]}\right)
\end{align}
where we have used the (frequency localised) relation between ${\Psi^{[+2]}}$ and ${\psi}^{[+2]}$ twice. Now the first three terms on the right hand side are again easily controlled using 
Cauchy--Schwarz  (as well as  the simple fact that for $i=1,2$
$
| \Psi^{[\pm 2]}|^2 \left(A_i^*\right) \lesssim \|\mathfrak{d} \Psi^{[\pm2]}\|^2 
$).
For the term in the last line we integrate the first summand by parts while the second is already manifestly controlled by $\|\mathfrak{d}\psi^{[\pm2]}\|^2$. This leads immediately to (\ref{ihen}).

For the second term of $(\ref{two.terms.here})$, write
\begin{align}
\int_{A_1^*}^{A_2^*}  \mathfrak{c}\left(r\right)  \text{Im}\left(im u^{[+ 2]} \omega \overline{\Psi^{[+2]}}\right) =-\int_{A_1^*}^{A_2^*}   \text{Re}\left(m\overline{u^{[+ 2]}} \omega \left(\mathfrak{c}\left(r\right)  \underline{L} \psi^{[+2]} + \mathfrak{c}\left(r\right) \psi^{[+2]}\right)\right) \, .
\end{align}
The second term on the right is already manifestly controlled by $\|\mathfrak{d}\psi^{[\pm2]}\|^2$ and for the first we integrate by parts
\begin{align}
-\int_{A_1^*}^{A_2^*}   \text{Re}\left(m\overline{u^{[+ 2]}} \omega \left(\mathfrak{c}\left(r\right)  \underline{L} \psi^{[+2]} \right)\right)=& \text{Re} \left(m \overline{u^{[+ 2]}} \omega \mathfrak{c}\left(r\right) \psi^{[+2]}\right)\Big|_{A_1^*}^{A_2^*} \nonumber \\
&+ \int_{A_1^*}^{A_2^*} \mathfrak{c}\left(r\right) m\omega |\psi^{[+2]}|^2 + \mathfrak{c}\left(r\right) \text{Re} \left( m \omega \overline{u^{[+ 2]}}  \psi^{[+2]}\right)
\end{align}
from which the estimate (\ref{ihen}) is easily obtained.
\end{proof}

Putting together Propositions~\ref{multestforPsi},~\ref{prop:synth} and~\ref{newsyntheprop},  we obtain
Theorem~\ref{phaseSpaceILED}.

\section{Back to physical space: energy boundedness and integrated local energy decay}
\label{iledsec}

We now turn in this section in ernest
to the study of the Cauchy problem for $(\ref{Teukphysic})$ for $s=\pm2$.
The main result of this section will be a uniform (degenerate) energy boundedness
and integrated energy decay statement.
This will be  stated as {\bf Theorem~\ref{degenerateboundednessandILED}} of 
{\bf Section~\ref{herestatementsec}}. This corresponds to statement 1.~of 
the main result of the paper, Theorem~\ref{finalstatetheor}.

The remainder of the section will then be devoted to the proof of Theorem~\ref{degenerateboundednessandILED}.
We first define in {\bf Section~\ref{cutoffsecs}} 
 a cutoff version $\upalpha_{\text{\Rightscissors}}^{[\pm2]}$ 
of our solution $\upalpha^{[\pm2]}$ of $(\ref{Teukphysic})$
such that $\upalpha_{\text{\Rightscissors}}^{[\pm2]}$ 
satisfies an inhomogeneous equation $(\ref{inhomoteuk})$,
whose inhomogeneous term $F^{[\pm2}_{\text{\Rightscissors}}$ is localised in time to be supported
only ``near'' $\tilde{t}^*=0$ and ``near'' $\tilde{t}^*=\tau_{\rm final}$
and in space to be supported only in $r^*=[2A_1^*,2A_2^*]$.
The cutoff is such that restricted to $r\in [A_1,A_2]$, $\upalpha_{\text{\Rightscissors}}^{[\pm2]}$ 
is compactly supported in $\tilde{t}^*\in [0,\tau_{\rm final}]$. This allows us in
 {\bf Section~\ref{summingsection}}
to then apply the results of Section~\ref{ODEmegasec}
to such $\upalpha_{\text{\Rightscissors}}^{[\pm2]}$, 
summing the resulting estimate over frequencies. In {\bf Section~\ref{retrievingfut}}
we shall combine this estimate with the conditional estimates of 
Section~\ref{Physspacesecnew},
using also the auxiliary estimates of Section~\ref{auxil.est.sec}
to obtain a global integrated energy decay statement, with an error term, however,
on the right
side arising from the cutoff. Finally, we shall 
bound this latter error terms associated to the cutoff
in {\bf Section~\ref{controlerrorsec}}, again 
using the auxiliary estimates of Section~\ref{auxil.est.sec}, allowing us
to infer the statement of Theorem~\ref{degenerateboundednessandILED}.

As remarked in Section~\ref{axi-intro}, in the axisymmetric case, one can directly
distill from the calculations of this paper
an alternative, simpler proof
of Theorem~\ref{degenerateboundednessandILED}  expressed entirely
in physical space.
We do this in {\bf Section~\ref{axi-note}}.

\subsection{Statement of degenerate boundedness and integrated energy decay}
\label{herestatementsec}

\begin{theorem}
\label{degenerateboundednessandILED}
Let $\upalpha^{[\pm2]}$, $\Psi^{[\pm2]}$ and
$\uppsi^{[\pm2]}$ be as in Theorem~\ref{finalstatetheor}.

Then, for $\boxed{s=+2}$, we have the following estimates
\begin{itemize}
\item the basic degenerate Morawetz estimate
 \begin{align} \label{basdegmorprelim}
& \ {\mathbb{I}}^{\rm deg}_{\eta} \left[\Psi^{[+2]}\right] \left(0, \tau_{\rm final}\right) 
+\mathbb{I}_{\eta} \left[\uppsi^{[+2]}\right] \left(0, \tau_{\rm final}\right)
+\mathbb{I}_{\eta} \left[\upalpha^{[+2]}\right]  \left(0, \tau_{\rm final}\right)
\nonumber \\ 
 \lesssim & \ \ {\mathbb{E}}_{\widetilde\Sigma_{\tau},\eta} \left[\Psi^{[+2]}\right] \left(0\right)  +\mathbb{E}_{\widetilde\Sigma_{\tau},\eta} \left[\uppsi^{[+2]}\right] \left(0\right)  
+ \mathbb{E}_{\widetilde\Sigma_{\tau},\eta} \left[\upalpha^{[+2]}\right] \left(0\right)
\end{align}

\item the $\eta$-weighted energy boundedness estimate
\begin{align} \label{ewbndprelim}
& \ \ {\mathbb{E}}_{\mathcal{H}^+} \left[\Psi^{[+2]}\right]\left(0, \tau_{\rm final}\right) + {\mathbb{E}}_{\widetilde\Sigma_{\tau},\eta} \left[\Psi^{[+2]}\right] \left(\tau_{\rm final}\right) \nonumber \\
 \lesssim & \ \ {\mathbb{E}}_{\widetilde\Sigma_{\tau},\eta} \left[\Psi^{[+2]}\right] \left(0\right)  +\mathbb{E}_{\widetilde\Sigma_{\tau},\eta} \left[\uppsi^{[+2]}\right] \left(0\right)  
+ \mathbb{E}_{\widetilde\Sigma_{\tau},\eta} \left[\upalpha^{[+2]}\right] \left(0\right) \, .
\end{align}
\end{itemize}
Similarly, for $\boxed{s=-2}$, we have
\begin{itemize}
\item the basic degenerate Morawetz estimate
 \begin{align} \label{basdegmorprelim2}
& \ {\mathbb{I}}^{\rm deg}_{\eta} \left[\Psi^{[-2]}\right] \left(0, \tau_{\rm final}\right) 
+\mathbb{I} \left[\uppsi^{[-2]}\right] \left(0, \tau_{\rm final}\right)
+\mathbb{I}\left[\upalpha^{[-2]}\right]  \left(0,\tau_{\rm final}\right)
\nonumber \\ 
 \lesssim & \ \ {\mathbb{E}}_{\widetilde\Sigma_{\tau},\eta} \left[\Psi^{[-2]}\right] \left(0\right)  +\mathbb{E}_{\widetilde\Sigma_{\tau}} \left[\uppsi^{[-2]}\right] \left(0\right)  
+ \mathbb{E}_{\widetilde\Sigma_{\tau}} \left[\upalpha^{[-2]}\right] \left(0\right)
\end{align}

\item the $\eta$-weighted energy boundedness estimate
\begin{align} \label{ewbndprelim2}
& \ \ {\mathbb{E}}_{\mathcal{H}^+} \left[\Psi^{[-2]}\right]\left(0, \tau_{\rm final}\right) + {\mathbb{E}}_{\widetilde\Sigma_{\tau},\eta} \left[\Psi^{[-2]}\right] \left(\tau_{\rm final}\right) \nonumber \\
 \lesssim & \ \ {\mathbb{E}}_{\widetilde\Sigma_{\tau},\eta} \left[\Psi^{[-2]}\right] \left(0\right)  +\mathbb{E}_{\widetilde\Sigma_{\tau}} \left[\uppsi^{[-2]}\right] \left(0\right)  
+ \mathbb{E}_{\widetilde\Sigma_{\tau}} \left[\upalpha^{[-2]}\right] \left(0\right) \, .
\end{align}
\end{itemize}

\end{theorem}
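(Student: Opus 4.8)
The strategy is to combine all the estimates assembled in Sections~\ref{Physspacesecnew}--\ref{ODEmegasec} after introducing a suitable time-cutoff of $\upalpha^{[\pm2]}$. Fix $\tau_{\rm final}>0$. The first step is to define $\upalpha^{[\pm2]}_{\text{\Rightscissors}}$ by multiplying $\upalpha^{[\pm2]}$ by a cutoff $\zeta(\tilde t^*, r^*)$ which, when restricted to the trapping region $r^*\in[A_1^*,A_2^*]$, equals $1$ on $[0,\tau_{\rm final}]$ and vanishes outside a slightly larger interval, and whose spacetime derivatives are supported only in the compact region $r^*\in[2A_1^*,2A_2^*]$ near $\tilde t^*=0$ and $\tilde t^*=\tau_{\rm final}$. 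By construction $\upalpha^{[\pm2]}_{\text{\Rightscissors}}$ solves the inhomogeneous Teukolsky equation $(\ref{inhomoteuk})$ with $F^{[\pm2]}_{\text{\Rightscissors}}$ supported in that compact region, and its restriction to $[A_1,A_2]$ is $[A_1,A_2]$-admissible in the sense of Definition~\ref{suffintdef}. One then forms the derived quantities $\uppsi^{[\pm2]}_{\text{\Rightscissors}}$, $\Psi^{[\pm2]}_{\text{\Rightscissors}}$ by the transport relations, which satisfy the inhomogeneous generalised Regge--Wheeler equation $(\ref{RWtypeinthebulk})$ with error $\mathfrak{G}^{[\pm2]}_{\text{\Rightscissors}}$ built from $F^{[\pm2]}_{\text{\Rightscissors}}$.

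The second step is the frequency-localised estimate: taking the Fourier transform in $t$ and decomposing in spin-weighted spheroidal harmonics, one applies Theorem~\ref{phaseSpaceILED} fibre-by-fibre in $r\in[A_1,A_2]$ to $u^{[\pm2],(a\omega)}_{m\ell}$, $\psi^{[\pm2],(a\omega)}_{m\ell}$, $\Psi^{[\pm2],(a\omega)}_{m\ell}$ arising from $\upalpha^{[\pm2]}_{\text{\Rightscissors}}$, and then sums over $(\omega,m,\ell)$ using the Plancherel relations of Section~\ref{coeffssec}. This produces a spacetime integrated estimate for $\Psi^{[\pm2]}_{\text{\Rightscissors}}$, $\uppsi^{[\pm2]}_{\text{\Rightscissors}}$, $\upalpha^{[\pm2]}_{\text{\Rightscissors}}$ over $\widetilde{\mathcal R}^{\rm trap}(0,\tau_{\rm final})$, modulo the boundary terms ${\rm Q}(A_1)$, ${\rm Q}(A_2)$ and an inhomogeneous contribution $\mathfrak{H}^{[\pm2]}$ built from $\mathfrak{G}^{[\pm2]}_{\text{\Rightscissors}}$. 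The third step is to add to this the physical-space estimates of Proposition~\ref{multpropnofreq} and Propositions~\ref{transportplusnof},~\ref{transportminusnof} applied to $\upalpha^{[\pm2]}_{\text{\Rightscissors}}$ on $\widetilde{\mathcal R}^{\rm away}(0,\tau_{\rm final})$. By the construction in Section~\ref{multestsec} the multipliers $f$, $y$ in the frequency-localised estimate were chosen to be frequency-independent and to coincide with $f_0$, $y_0$ near $r=A_1,A_2$; hence, after summation, the dangerous boundary terms ${\rm Q}(A_i)$ exactly cancel the $\mathbb{Q}_{r=A_i}[\Psi^{[\pm2]}]$ appearing in Proposition~\ref{multpropnofreq}, and the transport boundary terms $\mathbb{E}_{r=A_i}[\upalpha^{[\pm2]}]$, $\mathbb{E}_{r=A_i}[\uppsi^{[\pm2]}]$ are controlled by the corresponding left-hand-side boundary terms in Theorem~\ref{phaseSpaceILED} (the $\|\mathfrak{d}\psi\|^2(A_i)$, $\|\mathfrak{d}u\|^2(A_i)$), absorbing them using the smallness of $a$. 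Adding the $T+\upomega_+\chi\Phi$ estimate of Proposition~\ref{proptocontrolcutoff} across $\widetilde{\mathcal R}(0,\tau_{\rm final})$ supplies the degenerate energy boundedness on the slice $\widetilde\Sigma_{\tau_{\rm final}}$ and on $\mathcal H^+$.

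The fourth and, I expect, most delicate step is to dispose of the cutoff error. The inhomogeneous terms $\mathfrak{H}^{[\pm2]}$ and $\mathfrak{G}^{[\pm2]}_{\text{\Rightscissors}}$, as well as the difference between $\upalpha^{[\pm2]}_{\text{\Rightscissors}}$ and $\upalpha^{[\pm2]}$, are supported in the fixed compact region $r^*\in[2A_1^*,2A_2^*]$ near $\tilde t^*\approx 0$ and $\tilde t^*\approx\tau_{\rm final}$; by the $r$-range being fixed and bounded, and by the finiteness of the region in $\tilde t^*$ there, these errors are estimated in terms of $\mathbb E_{\widetilde\Sigma_\tau,\eta}[\Psi^{[\pm2]}]$, $\mathbb E_{\widetilde\Sigma_\tau,[\eta]}[\uppsi^{[\pm2]}]$, $\mathbb E_{\widetilde\Sigma_\tau,[\eta]}[\upalpha^{[\pm2]}]$ at $\tau=0$ and at $\tau=\tau_{\rm final}$ via the local-in-time estimates of Proposition~\ref{localintimehereprop} (this is precisely where the fact that the cutoff is supported at finite $r^*$, unlike in~\cite{DafRodsmalla,partiii}, simplifies matters). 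The term at $\tau=\tau_{\rm final}$ is the problematic one: it must be absorbed into the left-hand side. One handles this exactly as in~\cite{DafRodsmalla}: the coefficient of this term carries a factor $|a|$ (since $\mathcal J^{[\pm2]}$, $\mathfrak G^{[\pm2]}$ are $O(|a|)$ and the cutoff derivative terms are likewise $O(|a|)$ after using the equation), so choosing $a_0$ sufficiently small lets one absorb it using the energy boundedness statement $(\ref{ewbndprelim})$/$(\ref{ewbndprelim2})$ obtained in the previous step, after a continuity/bootstrap argument in $\tau_{\rm final}$ or by first establishing the estimate with $\tau_{\rm final}$ replaced by a generic intermediate slice. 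Letting $\tau_{\rm final}$ be arbitrary then yields $(\ref{basdegmorprelim})$--$(\ref{ewbndprelim})$ and $(\ref{basdegmorprelim2})$--$(\ref{ewbndprelim2})$. For the axisymmetric case one bypasses Sections~\ref{Separationmegasec}--\ref{ODEmegasec} entirely, using instead $r_{\rm trap}$ independent of frequency and the physical-space $f$- and $y$-currents directly, as indicated in Section~\ref{axi-note}.
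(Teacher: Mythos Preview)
Your overall architecture matches the paper's proof closely: define a cutoff $\upalpha_{\text{\Rightscissors}}$, apply Theorem~\ref{phaseSpaceILED} in $[A_1,A_2]$ and sum via Plancherel, add the physical-space estimates of Section~\ref{Physspacesecnew}, observe the exact cancellation of the $\mathbb{Q}_{r=A_i}$ boundary terms, and then close by controlling the cutoff-generated inhomogeneity. Steps 1--3 are essentially right.

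The genuine gap is in Step 4, in your identification of the smallness mechanism. You assert that $\mathfrak G^{[\pm2]}_{\text{\Rightscissors}}$ is $O(|a|)$; this is false. The term $\mathfrak G^{[\pm2]}$ is built from $F^{[\pm2]}_{\text{\Rightscissors}}$, which is the commutator of the Teukolsky operator with the cutoff $\Xi$ (see $(\ref{inhomhereterm})$); it is governed by derivatives of $\Xi$ and is present already in Schwarzschild. The factor $|a|$ attaches to $\mathcal J^{[\pm2]}$, not to $\mathfrak G^{[\pm2]}$. Consequently your proposed absorption ``choosing $a_0$ small'' cannot, by itself, handle the $\mathfrak H$ error.

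What the paper actually does (Sections~\ref{cutoffsecs}--\ref{controlerrorsec}) is introduce a second small parameter $\varepsilon$: the cutoff $\Xi$ transitions over a $\tilde t^*$-interval of length $\varepsilon^{-1}$, so that inside $[A_1,A_2]$ every $\tilde t^*$-derivative of $\Xi$ gains a factor $\varepsilon$ (see $(\ref{gainofeps})$). This is what makes the trap-region part of $\mathfrak H$ small; Proposition~\ref{prop:soclose} records the resulting bound, with $\varepsilon$ appearing on the bulk terms and $\varepsilon^{-2}$ on $\sup_\tau\mathbb E_{\widetilde\Sigma_\tau,0}[\Psi]$. Only then does $|a|$ enter: one invokes $(\ref{using.auxil.again})$, itself a consequence of Proposition~\ref{proptocontrolcutoff}, to bound $\sup_\tau\mathbb E_{\widetilde\Sigma_\tau,0}[\Psi]$ by $|a|\,\mathfrak H+\varepsilon^{-1}\mathbb D(0)$, and the hierarchy $a_0\ll\varepsilon\ll1$ closes the loop (Proposition~8.7). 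You have collapsed this two-parameter argument into a single $|a|$-absorption, which does not work. A secondary point: your cutoff is taken to equal $1$ on $[0,\tau_{\rm final}]$ and to vanish outside a larger interval; the paper instead has $\Xi=0$ outside $[0,\tau_{\rm final}]$ (in the trap region) and $\Xi=1$ on $[\varepsilon^{-1},\tau_{\rm final}-\varepsilon^{-1}]$, which is what makes the transition width $\varepsilon^{-1}$ and the gain $(\ref{gainofeps})$ meaningful.
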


\begin{remark}
In the case $s=-2$ one can prove these estimates using only the $\mathbb{E}_{\widetilde\Sigma_{\tau},0}\left[\Psi^{[-2]}\right]$-energy. However, that energy is insufficient to eventually control the energy flux of $r^{-3} \alpha^{[-2]}$ through null infinity, which is why we kept the estimate as symmetric with the $s=+2$-case as possible. See also Remark \ref{remark:avoideps}.
\end{remark}

In the proof of the theorem, we may assume for convenience that the data
 $(\tilde{\upalpha}^{[\pm2]}_0,\tilde{\upalpha}^{[\pm2]}_1)$
are smooth.
It follows that all associated appropriately rescaled
quantities $\Psi^{[\pm2]}$, etc., are smooth in $\mathcal{R}_0$.
To ease notation we define the data quantities
\begin{align}
\mathbb{D}^{[+ 2]} \left(0\right) &= {\mathbb{E}}_{\widetilde\Sigma_{\tau},\eta} \left[\Psi^{[+2]}\right] \left(0\right)  +\mathbb{E}_{\widetilde\Sigma_{\tau},\eta} \left[\uppsi^{[+2]}\right] \left(0\right)  
+ \mathbb{E}_{\widetilde\Sigma_{\tau},\eta} \left[\upalpha^{[+2]}\right] \left(0\right) \, ,\nonumber \\
\mathbb{D}^{[- 2]} \left(0\right) &= {\mathbb{E}}_{\widetilde\Sigma_{\tau},\eta} \left[\Psi^{[+2]}\right] \left(0\right)  +\mathbb{E}_{\widetilde\Sigma_{\tau}} \left[\uppsi^{[+2]}\right] \left(0\right)  
+ \mathbb{E}_{\widetilde\Sigma_{\tau}} \left[\upalpha^{[+2]}\right] \left(0\right) \, .
\end{align}

\subsection{The past and future cutoffs}
\label{cutoffsecs}

Let $\xi:\mathbb R\to \infty$ be a cutoff function such that $\xi=0$ for $x\le0$  and
$\xi=1$ for $x\ge 1$. Let $\varepsilon>0$ be a parameter to be determined.
Fix $\tau_{\rm final}>0$ and define
\begin{align} \label{alphacutoff}
\tilde\upalpha_{\text{\Rightscissors}}^{[\pm2]}(\tilde{t}^*, r, \theta, \phi)  =  \Xi(\tilde{t}^*,r)
\tilde\upalpha^{[\pm2]}(\tilde{t}^*, r, \theta, \phi) \, 
\end{align}
where
\begin{eqnarray}
\nonumber
\Xi(\tilde{t}^*,r)&=& \xi(\varepsilon t^*)\, \xi (\varepsilon( \tau_{\rm final}-\tilde{t}^*)) 
\, \xi (2A_2^* -r^*)   \,    \xi (r^*-2A_1^*)\\
\nonumber
&&\qquad
+(1-\xi)(\varepsilon \tilde{t}^*)\, \xi (A_1^*-r^*)\, \xi(r^*-A_2)\\
\label{big.Xi.def}
&&\qquad
+(1-\xi)(\varepsilon (\tau_{\rm final}-\tilde{t}^*)) \,  \xi(A_1^*-r^*)\, \xi(r^*-A_2).
\end{eqnarray}
We note that $\tilde\upalpha_{\text{\Rightscissors}}^{[\pm2]}\in\mathscr{S}^{[s]}_\infty(\mathcal{R})$
and satisfies $(\ref{inhomoteuk})$ with inhomogeneity given by
\begin{align}
\label{inhomhereterm}
\frac{\Delta}{\rho^2}\tilde{F}_{\text{\Rightscissors}}^{[\pm2]}= &(L\Xi) (\underline{L}\tilde\upalpha^{[\pm2]})) + (\underline{L} \Xi) (L \tilde\upalpha^{[\pm2]}) + (L\underline{L} \Xi)\tilde\upalpha^{[\pm2]} - 2\frac{w^\prime}{w} (\underline{L} \Xi)\tilde\upalpha^{[\pm2]}  \\
&-\frac{\Delta}{(r^2+a^2)^2} \left(2a(T\Xi)(\Phi \tilde\upalpha^{[\pm2]}) +a^2 \sin^2 \theta \left( (TT\Xi)\tilde\upalpha^{[\pm2]} + 2(T\Xi)(T\tilde\upalpha^{[\pm2]})\right) + 2isa\cos \theta (T\Xi)\tilde\upalpha^{[\pm2]}\right) \nonumber  .
\end{align}
We define now  $\upalpha_{\text{\Rightscissors}}^{[\pm2]}$ to be given by $(\ref{rescaleddefsfirstattempt})$, 
$P^{[\pm2]}_{\text{\Rightscissors}}$ to be given by $(\ref{Pdefsbulkplus2})$--$(\ref{Pdefsbulkmunus2})$, 
$\Psi^{[\pm2]}_{\text{\Rightscissors}}$ to be given by $(\ref{rescaledPSI})$
and
$\uppsi^{[\pm2]}_{\text{\Rightscissors}}$ to be given by $(\ref{officdeflittlepsiplus})$--$(\ref{officdeflittlepsiminus})$,
where all quantities now have ${\text{\Rightscissors}}$.

We note that $\tilde{F}_{\text{\Rightscissors}}^{[\pm2]}$ is supported
in the support of $\nabla \Xi$ (see the shaded regions of Figure~\ref{cutofffig}):
\begin{equation}
\label{supportofF}
\big(\{0\le \tilde{t}^* \le \varepsilon^{-1} \}\cup 
 \{\tau_{\rm final}-\varepsilon^{-1} \le \tilde{t}^*\le \tau_{\rm final}\}\big)
 \bigcap   \{2A_1^* \le r^*\le 2A_2^*\}
 \end{equation}
 while 
 \[
\upalpha_{\text{\Rightscissors}}^{[\pm2]}=0
\]
in $\{A_1\le r\le A_2\}\cap( \{\tilde{t}^*\le 0\}\cup \{\tilde{t}^*\ge \tau_{\rm final}\}$.

\begin{figure}
\centering{
\def\svgwidth{8pc}
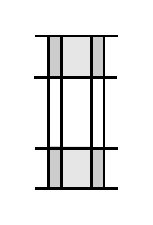}
\caption{Support of $\nabla \Xi$}\label{cutofffig}
\end{figure}

Let us already note 
the following proposition
\begin{proposition}
\label{veo.prop.edw}
Let $\Psi^{[\pm2]}_{\text{\Rightscissors}}$
be as above and let $\mathfrak{G}^{\pm2}$ 
be the inhomogeneous term associated to the generalised
Regge--Wheeler equation $(\ref{RWtypeinthebulk})$
arising from
$\tilde{F}_{\text{\Rightscissors}}^{[\pm2]}$ according to $(\ref{Gerror})$ and
$(\ref{Gerror2})$. Then we have the estimates
\begin{align}
\nonumber
\int_{\widetilde{\mathcal{R}}^{\rm trap}(0,\tau_{\rm final})} |\mathfrak{G}^{[\pm2]}|^2 
dVol &\lesssim \varepsilon^2 
\left( \mathbb{I}^{\rm trap}[\upalpha^{[\pm2]}](0,\varepsilon^{-1}) +\mathbb{I}^{\rm trap}[\uppsi^{[-2]}](0,\varepsilon^{-1})\right)
 \\
 \nonumber
 &\qquad+
\varepsilon^2\left( \mathbb{I}^{\rm trap}[\upalpha^{[\pm2]}](\tau_{\rm final}-\varepsilon^{-1},\tau_{\rm final}) +\mathbb{I}^{\rm trap}[\uppsi^{[-2]}](\tau_{\rm final}-\varepsilon^{-1},\tau_{\rm final})\right)\\
 \label{prwto.bound.veou.prop}
&\qquad+\varepsilon\sup_ {0\le \tau \le \tau_{\rm final}}\mathbb{E}_{\widetilde{\Sigma}_\tau,0} [\Psi^{[\pm2]}],
\end{align}
\begin{align}
\nonumber
\int_{\widetilde{\mathcal{R}}^{\rm away}(0,\tau_{\rm final})}
 |\mathfrak{G}^{[\pm2]}|^2 dVol &\lesssim 
 \mathbb{I}_{[\eta]}[\upalpha^{[\pm2]}](0,\varepsilon^{-1}) +\mathbb{I}_{[\eta]}[\uppsi^{[-2]}](0,\varepsilon^{-1})\\
 \nonumber
&\qquad + \mathbb{I}_{[\eta]}[\upalpha^{[\pm2]}](\tau_{\rm final}-\varepsilon^{-1},\tau_{\rm final}) +\mathbb{I}_{[\eta]}[\uppsi^{[-2]}](\tau_{\rm final}-\varepsilon^{-1},\tau_{\rm final})\\
 \label{deuto.bound.veou.prop}
&\qquad+\varepsilon^{-1}\sup_ {0\le \tau \le \tau_{\rm final}} \mathbb{E}_{\widetilde{\Sigma}_\tau,0} [\Psi^{[\pm2]}].
\end{align}
Here the subindex $\left[\eta\right]$ is equal to $\eta$ in case of $s=+2$ and it is dropped entirely in case $s=-2$.
\end{proposition}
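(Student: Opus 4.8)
\textbf{Proof plan for Proposition \ref{veo.prop.edw}.}

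The strategy is to exploit the explicit support and smallness properties of the cutoff $\Xi$, together with the fact that on the support of $\nabla \Xi$ the quantity $\Psi^{[\pm 2]}_{\text{\Rightscissors}}$ agrees (up to transport relations that do not differentiate) with the uncut $\Psi^{[\pm 2]}$ and the various uncut $\upalpha$ and $\uppsi$ quantities. Concretely, recall from $(\ref{Gerror})$, $(\ref{Gerror2})$ that $\mathfrak{G}^{[\pm 2]}$ is, schematically, a second-order $\underline{L}$-- or $L$--derivative applied to $\frac{\Delta}{w\rho^2}\tilde{F}^{[\pm 2]}_{\text{\Rightscissors}}$ (with the extra $\Delta^2$ factor in the $[-2]$ case), and $\tilde{F}^{[\pm 2]}_{\text{\Rightscissors}}$ in turn is given by $(\ref{inhomhereterm})$ as a sum of terms in which at least one derivative falls on $\Xi$ and the remaining factor is at most one derivative of $\tilde\upalpha^{[\pm 2]}$. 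Since $\mathfrak{G}^{[\pm 2]}$ involves two further derivatives, upon unwinding it becomes a finite sum of terms of the form $(\text{derivative of }\Xi)\cdot(\text{up to three derivatives of }\tilde\upalpha^{[\pm 2]})$ supported in the region $(\ref{supportofF})$. The first step is therefore to write out this expansion and, crucially, to re-express the ``three derivatives of $\tilde\upalpha$'' appearing there in terms of the geometric quantities $\upalpha^{[\pm 2]}$, $\uppsi^{[\pm 2]}$ and $\Psi^{[\pm 2]}$ using the defining relations $(\ref{psirel})$--$(\ref{Prel})$ (resp.\ $(\ref{psibart})$--$(\ref{Pbart})$) and their $T$, $\Phi$ commutations; the point is that $\underline{L}$ (resp.\ $L$) derivatives of $\tilde\upalpha$ are exactly what produce $\uppsi$ and then $\Psi$, while $T$ and $\Phi$ derivatives of $\tilde\upalpha$ are already controlled by the $e_{[\eta]}$ densities entering $\mathbb{I}_{[\eta]}[\upalpha^{[\pm 2]}]$ and $\mathbb{I}^{\rm trap}[\upalpha^{[\pm 2]}]$. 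Derivatives transverse to these, i.e.\ the one bad null direction, are precisely the ones that must be traded for $\Psi^{[\pm 2]}$ via $L=-\underline{L}+2T+\frac{a}{r^2+a^2}\Phi$; this is the content of Remarks \ref{control.the.l.derivs1} and \ref{control.the.l.derivs2}, which supply the $\mathbb{E}_{\widetilde\Sigma_\tau,0}[\Psi^{[\pm2]}]$ terms on the right-hand sides.

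Next I would separate the contribution by the region of the $r^*$-support of $\nabla\Xi$ relative to $[A_1,A_2]$. The key observation is that the \emph{first} piece of $(\ref{big.Xi.def})$ — the genuinely time-cut-off piece carrying factors $\xi(\varepsilon\tilde t^*)$ and $\xi(\varepsilon(\tau_{\rm final}-\tilde t^*))$ — is the one supported in $r^*\in[2A_1^*,2A_2^*]$, hence always contained in $\widetilde{\mathcal{R}}^{\rm trap}$; its $\tilde t^*$-derivatives produce a factor of size $\varepsilon$ (and second $\tilde t^*$-derivatives a factor $\varepsilon^2$), and the $\tilde t^*$-support has length $O(\varepsilon^{-1})$. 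The \emph{second} and \emph{third} pieces of $(\ref{big.Xi.def})$ have $r^*$-derivatives supported near $r^*=A_1^*$ and $r^*=A_2^*$, i.e.\ on the boundary between trap and away regions, but involve no derivatives of the slow cutoffs beyond $(1-\xi)(\varepsilon\tilde t^*)$ which is $O(1)$, not small; these contribute to $\widetilde{\mathcal{R}}^{\rm away}$ and are localised in $\tilde t^*\in[0,\varepsilon^{-1}]\cup[\tau_{\rm final}-\varepsilon^{-1},\tau_{\rm final}]$. So for $(\ref{prwto.bound.veou.prop})$ I would estimate $\int_{\widetilde{\mathcal{R}}^{\rm trap}}|\mathfrak{G}^{[\pm2]}|^2$ by Cauchy--Schwarz in the integrand, pulling out the $\varepsilon$ or $\varepsilon^2$ from the cutoff derivatives: terms with two cutoff derivatives in $\tilde t^*$ give $\varepsilon^2$ times $\mathbb{I}^{\rm trap}$ of $\upalpha$ and $\uppsi$ over the two time-windows; mixed terms with one cutoff $\tilde t^*$-derivative and one genuine $\Psi$-factor give, after Cauchy--Schwarz, $\varepsilon\cdot\varepsilon^{-1}\cdot\varepsilon^2$-type bookkeeping that leaves $\varepsilon\sup_\tau \mathbb{E}_{\widetilde\Sigma_\tau,0}[\Psi^{[\pm2]}]$ once the $\tilde t^*$-integral of length $\varepsilon^{-1}$ is carried out on the $\Psi$-factor treated in $L^\infty_{\tilde t^*}L^2$; and the purely transport-controlled pieces contribute $\varepsilon^2\mathbb{I}^{\rm trap}$. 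For $(\ref{deuto.bound.veou.prop})$ the cutoff derivatives are $O(1)$, so no $\varepsilon$ gain from them, but the region is away from trapping so the non-degenerate densities $e_{[\eta]}$ are available; one again bounds the transport-type factors by $\mathbb{I}_{[\eta]}[\upalpha^{[\pm2]}]$, $\mathbb{I}_{[\eta]}[\uppsi^{[-2]}]$ over the two windows, and the remaining $\Psi$-factor by $\varepsilon^{-1}\sup_\tau\mathbb{E}_{\widetilde\Sigma_\tau,0}[\Psi^{[\pm2]}]$ (the $\varepsilon^{-1}$ being the length of the $\tilde t^*$-window over which $\sup$ is integrated, since here the cutoff derivative carries no compensating $\varepsilon$).

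The main technical obstacle, and the step that needs the most care, is the systematic re-expression of the three-derivative expressions in $\mathfrak{G}^{[\pm2]}$: one must verify that every occurrence of the bad (transversal) second or third derivative of $\tilde\upalpha$ that is \emph{not} hit by a cutoff derivative can be converted, via $(\ref{psirel})$--$(\ref{Prel})$ (resp.\ $(\ref{psibart})$--$(\ref{Pbart})$) and the relation $L=-\underline{L}+2T+\frac{a}{r^2+a^2}\Phi$, into $\Psi^{[\pm2]}$ plus quantities already controlled by the $\mathbb{E}/\mathbb{I}$-energies of $\upalpha$ and $\uppsi$ — and that no uncontrolled combination survives. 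In the $[-2]$ case there is the additional bookkeeping of the $\Delta^2$-weight and the fact (noted in footnote following $(\ref{zas})$, $(\ref{zas2})$ and in Remark \ref{remark:avoideps}) that the $[-2]$-energies carry no $p$-weight, so the subindex $[\eta]$ is dropped; one must check the $r$-weights balance, but since the support of $\nabla\Xi$ is confined to the fixed finite $r^*$-range $[2A_1^*,2A_2^*]$ (away from both the horizon and infinity), all weights $w$, $\Delta$, $\rho^2$, $r$ are comparable to constants there, so the weight bookkeeping is essentially trivial once the region is fixed. This is why the cutoff was arranged to be supported in finite $r^*$ in the first place, and it is what makes the proof ``direct calculation'' in spirit; I would organise it by first reducing to the finite-$r^*$ region, then expanding $\mathfrak{G}^{[\pm2]}$, then applying Cauchy--Schwarz window-by-window as above.
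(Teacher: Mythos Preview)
Your approach matches the paper's: expand $\mathfrak{G}$, convert the resulting derivatives of $\tilde\upalpha$ into $\Psi$, $\uppsi$, $\upalpha$ via the transport relations and Remarks~\ref{control.the.l.derivs1}--\ref{control.the.l.derivs2}, then exploit that cutoff derivatives gain an $\varepsilon$ in the trap region but only $O(1)$ in the away (overlap) region. One correction: the interval $[2A_1^*,2A_2^*]$ \emph{contains} the trap region $[A_1^*,A_2^*]$ (since $A_1^*<0<A_2^*$), not the reverse; the relevant point is rather that \emph{restricted to} $r^*\in[A_1^*,A_2^*]$ the $r$-cutoffs in the first line of~$(\ref{big.Xi.def})$ are identically~$1$ and the other lines vanish, so only $\tilde t^*$-derivatives of $\Xi$ survive there --- this is exactly~$(\ref{gainofeps})$. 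Your $\varepsilon$-accounting is also more intricate than needed: every term in $\mathfrak{G}$ carries at least one cutoff derivative, hence one factor of $\varepsilon$ in the trap region, so $|\mathfrak{G}|^2\lesssim\varepsilon^2(\cdots)$ uniformly, and the $\varepsilon\sup_\tau\mathbb{E}_{\widetilde\Sigma_\tau,0}[\Psi]$ term then arises simply from integrating that $\varepsilon^2$ over a $\tilde t^*$-window of length $\varepsilon^{-1}$ via the coarea formula.
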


\begin{remark}
As the proof shows and is already clear from the support of the cut-offs, only the spacetime integrals in the overlap region are needed on the right hand side of (\ref{deuto.bound.veou.prop}). 
\end{remark}

\begin{proof}
We first prove $(\ref{prwto.bound.veou.prop})$.
Note that the support of $\mathfrak{G}$ is manifestly contained
in the support $(\ref{supportofF})$ of $\tilde{F}_{\text{\Rightscissors}}^{[\pm2]}$.
Moreover, one easily sees that one obtains sum of terms containing
\begin{align*}
L\Psi^{[\pm2]}, \underline{L}\Psi^{[\pm2]}, \Psi^{[\pm2]}, T\Psi^{[\pm2]}, \Phi\Psi^{[\pm2]},
L\uppsi^{[\pm2]}, \underline{L}\uppsi^{[\pm2]},
T\uppsi^{[\pm2]}, \Phi\uppsi^{[\pm2]},
 \uppsi^{[\pm2]},\\
 L\upalpha^{[\pm2]}, \underline{L}\upalpha^{[\pm2]},
T\upalpha^{[\pm2]}, \Phi\upalpha^{[\pm2]},
  \upalpha^{[\pm2]}
\end{align*}
with $r$ and horizon weights which are  uniformly bounded in view of the support.
Since restricted to the region $r\in[A_1,A_2]$ the $r$-dependence of the cutoff
$\Xi$ defined in $(\ref{big.Xi.def})$
is trivial, while the $\tilde{t}^*$-dependence always comes with
a $\varepsilon$, it follows that 
\begin{equation}
\label{gainofeps}
|L^{k_1}\underline{L}^{k_2}T^{k_3}\Xi| \lesssim \varepsilon\qquad{\rm for\ }r\in[A_1,A_2]
\end{equation}
for any $k_1+k_2+k_3\ge 1$,
where we have used also that
$t=t^*=\tilde{t}^*$ in this region
by our choices in Section~\ref{very.slowly.very.slowly}.
It follows that  all terms in the expression
for   $\mathfrak{G}$  pick up an $\varepsilon$
factor. The inequality $(\ref{prwto.bound.veou.prop})$ now follows from Cauchy--Schwarz,
the definition of the norms and Remarks~\ref{control.the.l.derivs1} and~\ref{control.the.l.derivs2},
where in addition we have appealed to the
 coarea formula and size of $\tilde{t}^*$-support for the term involving $\Psi^{[\pm2]}$.

The proof is the same for $(\ref{deuto.bound.veou.prop})$, except
that the presence of the nontrivial $r$  cutoff in $(\ref{big.Xi.def})$ means that
$\varepsilon$ on the right hand side of
$(\ref{gainofeps})$ must now replaced by $1$ outside of $r\in[A_1,A_2]$, and thus
the $\varepsilon^2$ factor of
$(\ref{prwto.bound.veou.prop})$
is no longer present in the right hand side of the final estimate.
\end{proof}

We will in fact not use the bound $(\ref{deuto.bound.veou.prop})$  directly, but 
similar bounds for physical space terms that arise from multiplying
$\mathfrak{G}\Psi$ and $\mathfrak{G}\partial_r\Psi$.

\subsection{The summed relation}
\label{summingsection}

In view of the support of $\tilde\upalpha_{\text{\Rightscissors}}^{[\pm2]}$ and 
the smoothness of 
$(\ref{inhomhereterm})$, it follows that $\upalpha_{\text{\Rightscissors}}^{[\pm2]}$ 
manifestly satisfies the $[A_1,A_2]$-admissibility condition
of Definition~\ref{suffintdef}.
In a slight abuse of notation, we will denote the coefficients of
$\upalpha_{\text{\Rightscissors}}^{[\pm2]}$, $\Psi_{\text{\Rightscissors}}^{[\pm2]}$, etc.,
without the ${\text{\Rightscissors}}$ subscript.\footnote{This will not be a source
of confusion because we will never apply frequency analysis directly to
$\upalpha^{[\pm2]}$.}

We define thus the coefficients $u^{[\pm2], (a\omega)}_{m\ell}$
and we apply Theorem~\ref{phaseSpaceILED} 
with the admissible frequency triple
$(\omega, m, \widetilde{\Lambda}^{[\pm2], (a\omega)}_{m\ell})$.
We now, integrate over $\omega$ and sum over frequencies:
\[
\int_{-\infty}^\infty d\omega \sum_{m\ell}  .
\]
From summing the relation $(\ref{fromPhaseSpace2})$--$(\ref{another.def.here})$,
we hence obtain in view of the Plancherel
relations of Section~\ref{coeffssec} (applied to $\upalpha_{\text{\Rightscissors}}^{[\pm2]}$, 
$\uppsi_{\text{\Rightscissors}}^{[\pm2]}$ and $\Psi_{\text{\Rightscissors}}^{[\pm2]}$):
\begin{proposition}
\label{summed.near}
Let the assumptions of Theorem \ref{degenerateboundednessandILED} hold. Define the cut-off quantities $\upalpha_{\text{\Rightscissors}}^{[\pm 2]}$, $\uppsi_{\text{\Rightscissors}}^{[\pm 2]}$ and $\Psi_{\text{\Rightscissors}}^{[\pm 2]}$ as in (\ref{alphacutoff}), (\ref{rescaleddefsfirstattempt}) and (\ref{Pdefsbulkplus2})--(\ref{officdeflittlepsiminus}). Then we have the estimates 
\begin{align} \label{Step1c}
\mathbb{I}^{\rm trap}[\Psi^{[\pm2]}_{\text{\Rightscissors}}](0,\tau_{\rm final})
\lesssim \mathfrak{H}^{\rm trap}[\Psi^{[\pm 2]}_{\text{\Rightscissors}}]
+\mathbb{Q}_{r=A_2}[\Psi^{[\pm 2]}_{\text{\Rightscissors}}]-\mathbb{Q}_{r=A_1}
[\Psi_{\text{\Rightscissors}}^{[\pm 2]} ]+|a|\sum_{i=1}^2
\left( \mathbb{E}_{r=A_i}[\uppsi_{\text{\Rightscissors}}^{[\pm 2]}]+\mathbb{E}_{r=A_i}[\upalpha^{[\pm 2]}_{\text{\Rightscissors}}]\right),
\end{align}
\begin{align}
&\mathbb{I}^{\rm trap}[\uppsi^{[\pm2]}_{\text{\Rightscissors}}](0,\tau_{\rm final})+
\mathbb{I}^{\rm trap}[\upalpha^{[\pm2]}_{\text{\Rightscissors}}](0,\tau_{\rm final})
+\mathbb{E}_{r=A_{(6\mp2)/4}}[\uppsi_{\text{\Rightscissors}}^{[\pm2]}]+\mathbb{E}_{r=(6\mp2)/4}[\upalpha_{\text{\Rightscissors}}^{[\pm2]}] \nonumber
\\
\label{ovoma.edw}
&\qquad\lesssim
\mathbb{I}^{\rm trap}[\Psi^{[\pm2]}_{\text{\Rightscissors}}](0,\tau_{\rm final}) 
+ \mathbb{E}_{r=A_{(6\pm2)/4}}[\uppsi_{\text{\Rightscissors}}^{[\pm2]}]+\mathbb{E}_{r=A_{(6\pm2)/4}}[\upalpha_{\text{\Rightscissors}}^{[\pm2]}],
\end{align}
where
\[
\mathfrak{H}^{\rm trap}[\Psi^{[\pm 2]}_{\text{\Rightscissors}}]
=
\int_{-\infty}^{\infty}d\omega \sum_{m\ell} \int_{A_1^*}^{A_2^*} dr^*\, \mathfrak{G}^{(a\omega)}_{m\ell}
\cdot (f^{(a\omega)}_{m\ell}, y^{(a\omega)}_{m\ell},E)\cdot (\Psi^{[\pm2], (a\omega)}_{m\ell},
(\Psi')^{[\pm2], (a\omega)}_{m\ell}).
\]
\end{proposition}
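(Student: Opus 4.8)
The plan is to obtain Proposition~\ref{summed.near} by directly applying the fixed-frequency Theorem~\ref{phaseSpaceILED} to the coefficients of the cut-off solution $\upalpha^{[\pm2]}_{\text{\Rightscissors}}$ and summing over all admissible frequencies, using the Plancherel relations of Section~\ref{coeffssec} to convert the frequency-localised norms back into the physical-space spacetime and boundary energies. First I would check admissibility: since $\tilde\upalpha^{[\pm2]}_{\text{\Rightscissors}}$ is smooth on $\mathcal{R}$ and, when restricted to $r\in[A_1,A_2]$, is compactly supported in $\tilde t^*=t^*=t\in[0,\tau_{\rm final}]$ (by the support properties recorded after $(\ref{big.Xi.def})$ and the identification $t=t^*=\tilde t^*$ in $2A_1^*\le r^*\le 2A_2^*$ from Section~\ref{very.slowly.very.slowly}), it is $[A_1,A_2]$-admissible in the sense of Definition~\ref{suffintdef}. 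Hence the coefficients $u^{[\pm2],(a\omega)}_{m\ell}$, $\psi^{[\pm2],(a\omega)}_{m\ell}$, $\Psi^{[\pm2],(a\omega)}_{m\ell}$, together with the inhomogeneity $F^{[\pm2],(a\omega)}_{m\ell}$ coming from $(\ref{inhomhereterm})$, are well-defined, and Proposition~\ref{radodeprop} and Proposition~\ref{psieq} apply, so that $u^{[\pm2],(a\omega)}_{m\ell}$ solves $(\ref{basic})$ and $\Psi^{[\pm2],(a\omega)}_{m\ell}$ solves $(\ref{rwPsi2})$ on $[A_1,A_2]$, with frequency triple $(\omega,m,\widetilde\Lambda^{[\pm2],(a\omega)}_{m\ell})$ which is admissible in the sense of Definition~\ref{admiss.freq.def} by the eigenvalue bound $(\ref{eli1e})$.

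Next I would invoke Theorem~\ref{phaseSpaceILED} for each such triple. The estimate $(\ref{fromPhaseSpace2})$ gives $\|\mathfrak{d}\Psi^{[\pm2]}\|^2$ controlled by $\mathfrak{H}^{[\pm2]}$, the boundary difference ${\rm Q}(A_2)-{\rm Q}(A_1)$, and $|a|$ times the boundary energies of $\psi$ and $u$ at $A_1,A_2$; the estimate $(\ref{fromPhaseSpaceoquant})$ gives the transport-controlled quantities $\|\mathfrak{d}\psi^{[\pm2]}\|^2+\|\mathfrak{d}u^{[\pm2]}\|^2$ and the boundary energies at $A_{\mp}$ in terms of the same data plus the boundary energies at $A_{\pm}$. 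I would then apply $\int_{-\infty}^\infty d\omega\sum_{m\ell}$ to both inequalities. On the left-hand sides, the Plancherel identities of Section~\ref{coeffssec} turn $\int d\omega\sum_{m\ell}\|\mathfrak{d}\Psi^{[\pm2]}\|^2$ into an expression equivalent (up to constants depending only on $M$, since $r$ ranges over the fixed compact interval $[A_1,A_2]$ and $r_{\rm trap}$ satisfies $(\ref{rtrapunifm})$ with the $(1-r^{-1}r_{\rm trap})^2$ weight matching the degeneracy in $\mathbb{I}^{\rm trap}$) to $\mathbb{I}^{\rm trap}[\Psi^{[\pm2]}_{\text{\Rightscissors}}](0,\tau_{\rm final})$; likewise $\int d\omega\sum_{m\ell}(\|\mathfrak{d}\psi\|^2+\|\mathfrak{d}u\|^2)$ becomes $\mathbb{I}^{\rm trap}[\uppsi^{[\pm2]}_{\text{\Rightscissors}}]+\mathbb{I}^{\rm trap}[\upalpha^{[\pm2]}_{\text{\Rightscissors}}]$ (using $(\ref{rescaled})$, $(\ref{plustwofwoflr1})$--$(\ref{minustwofwoflr2})$ to pass between $u,\psi$ and $\upalpha,\uppsi$, absorbing the bounded $\Delta,w,(r^2+a^2)$ factors), and the boundary terms $\|\mathfrak{d}\psi\|^2(A_i)$, $\|\mathfrak{d}u\|^2(A_i)$ sum to $\mathbb{E}_{r=A_i}[\uppsi^{[\pm2]}_{\text{\Rightscissors}}]$, $\mathbb{E}_{r=A_i}[\upalpha^{[\pm2]}_{\text{\Rightscissors}}]$ respectively (noting $A_{(6-2)/4}=A_1$, $A_{(6+2)/4}=A_2$ in the $+2$ case and the reverse in the $-2$ case, consistent with the $A_\pm$ convention of Section~\ref{frequencylocalnormssec}). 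For the ${\rm Q}(A_i)$ boundary terms, since the multipliers $f,y$ coincide with $f_0,y_0$ near $A_i$ by $(\ref{theymatch})$, the summed ${\rm Q}(A_i)$ is exactly $\mathbb{Q}_{r=A_i}[\Psi^{[\pm2]}_{\text{\Rightscissors}}]$ of $(\ref{blackboardQdef})$ — this is the key compatibility that makes the cancellation in Section~\ref{retrievingfut} work. Finally, $\int d\omega\sum_{m\ell}\mathfrak{H}^{[\pm2]}$ is by definition $\mathfrak{H}^{\rm trap}[\Psi^{[\pm2]}_{\text{\Rightscissors}}]$.

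Assembling these gives precisely $(\ref{Step1c})$ and $(\ref{ovoma.edw})$. The main technical point — more bookkeeping than genuine obstacle — is to verify carefully that the Plancherel dictionary correctly matches each frequency-localised norm with its physical-space counterpart, in particular that the $\|\mathfrak{d}\Psi^{[\pm2]}\|^2$ norm with its $r_{\rm trap}$-dependent weight reproduces the \emph{trapped} spacetime norm $\mathbb{I}^{\rm trap}$ defined in $(\ref{I.trapped.def})$ (with its degenerate $\chi$-weighted bulk and the full $|(\underline L-L)\Psi|^2$ and $|\Psi|^2$ terms), using that $r_{\rm trap}=0$ for the untrapped frequency subranges so no degeneracy is lost there, while for the trapped range $r^*_{\rm trap}\in[A_1^*/4,A_2^*/4]$ the weight $(1-r^{-1}r_{\rm trap})^2$ vanishes only where $\chi$ allows it to. One must also note that $\mathfrak{H}^{\rm trap}$ is left in its summed frequency-space form because $f^{(a\omega)}_{m\ell},y^{(a\omega)}_{m\ell}$ are genuinely frequency-dependent; it will be estimated separately (via the transport/integration-by-parts argument and Proposition~\ref{veo.prop.edw}) in Section~\ref{controlerrorsec}, so here it suffices to record it as an error term. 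The only subtlety worth flagging is ensuring that all constants arising from the Plancherel step and from $(\ref{unif.f.bnds})$ depend only on $M$ (permissible since $r$ is confined to $[A_1,A_2]$ and all the parameters $\delta_1,\delta_2,\delta_3,C_\flat,c_\flat,C_\sharp,E$ have by this point been fixed in terms of $M$), which is what justifies writing $\lesssim$ without subscripts in the statement.
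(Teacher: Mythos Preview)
Your proposal is correct and follows exactly the paper's approach: verify $[A_1,A_2]$-admissibility of the cut-off solution, apply Theorem~\ref{phaseSpaceILED} at each admissible frequency triple, and sum via the Plancherel relations of Section~\ref{coeffssec} to pass from the frequency-localised norms to the physical-space energies. One small remark: the second estimate $(\ref{ovoma.edw})$, as stated with $\mathbb{I}^{\rm trap}[\Psi^{[\pm2]}_{\text{\Rightscissors}}]$ on the right-hand side, arises more directly from summing the transport estimate $(\ref{desi1})$ of Proposition~\ref{prop:synth} (which has $\|\mathfrak{d}\Psi^{[\pm2]}\|^2$ on the right) rather than $(\ref{fromPhaseSpaceoquant})$ itself, but since Proposition~\ref{prop:synth} is an ingredient of Theorem~\ref{phaseSpaceILED} this is only a matter of which intermediate form one quotes.
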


\subsection{Global physical space estimates}
\label{retrievingfut}

Let us first combine the above estimates with the conditional physical
space estimates proven in Section~\ref{Physspacesecnew}.

\begin{proposition}
\label{firstinaline}
Let the assumptions of Theorem \ref{degenerateboundednessandILED} hold. Define the cut-off quantities $\upalpha_{\text{\Rightscissors}}^{[\pm 2]}$, $\uppsi_{\text{\Rightscissors}}^{[\pm 2]}$ and $\Psi_{\text{\Rightscissors}}^{[\pm 2]}$ as in (\ref{alphacutoff}), (\ref{rescaleddefsfirstattempt}) and (\ref{Pdefsbulkplus2})--(\ref{officdeflittlepsiminus}). Then we have the estimates
\begin{align}
\label{webeginwith}
\mathbb{E}^{\rm away}_{\widetilde{\Sigma}_\tau,\eta}[\Psi^{[\pm2]}_{\text{\Rightscissors}}]
(\tau_{\rm final})
+
\mathbb{I}_{\eta}^{\rm deg}[\Psi^{[\pm2]}_{\text{\Rightscissors}}](0,\tau_{\rm final})
+ \mathbb{I}_{[\eta]}[\uppsi^{[\pm2]}_{\text{\Rightscissors}}](0,\tau_{\rm final})
+\mathbb{I}_{[\eta]}[\upalpha^{[\pm2]}_{\text{\Rightscissors}}](0,\tau_{\rm final}) 
\lesssim
\mathfrak{H}[\Psi^{[\pm 2]}_{\text{\Rightscissors}}]
 + \mathbb{D}^{[\pm 2]} \left(0\right),
 \end{align}
 \begin{align}
 \label{manifestboundterms}
\sum_{i=1}^2  \mathbb{E}_{r=A_{i}}[\uppsi_{\text{\Rightscissors}}^{[\pm2]}](0,\tau_{\rm final})
+\mathbb{E}_{r=A_i}[\upalpha_{\text{\Rightscissors}}^{[\pm2]}] (0,\tau_{\rm final})  \lesssim
\mathfrak{H}[\Psi^{[\pm 2]}_{\text{\Rightscissors}}]
 + \mathbb{D}^{[\pm 2]} \left(0\right),
 \end{align}
 where
\[
\mathfrak{H}[\Psi^{[\pm 2]}_{\text{\Rightscissors}}]=
\mathfrak{H}^{\rm trap}[\Psi^{[\pm 2]}_{\text{\Rightscissors}}]+
\mathfrak{H}^{\rm away}[\Psi^{[\pm 2]}_{\text{\Rightscissors}}] \, .
\]
In the above, the subindex $\left[\eta\right]$ is equal to $\eta$ in case of $s=+2$ and it is dropped entirely in case $s=-2$.
\end{proposition}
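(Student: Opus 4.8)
The plan is to combine three ingredients already established in the excerpt, all applied to the cut-off quantities: (i) the conditional physical-space multiplier estimate of Proposition~\ref{multpropnofreq}, which controls $\mathbb{E}^{\rm away}_{\widetilde{\Sigma}_\tau,\eta}[\Psi^{[\pm2]}_{\text{\Rightscissors}}](\tau_{\rm final})+\mathbb{I}^{\rm away}_\eta[\Psi^{[\pm2]}_{\text{\Rightscissors}}]$ up to the boundary flux $\mathbb{Q}_{r=A_2}-\mathbb{Q}_{r=A_1}$, the inhomogeneous term $\mathfrak{H}^{\rm away}$, and $|a|$ times the transport integrals; (ii) the conditional transport estimates of Propositions~\ref{transportplusnof} and~\ref{transportminusnof}, which control $\mathbb{I}^{\rm left/right}[\uppsi^{[\pm2]}_{\text{\Rightscissors}}]$, $\mathbb{I}^{\rm left/right}[\upalpha^{[\pm2]}_{\text{\Rightscissors}}]$ and the $r=A_i$ fluxes by $\mathbb{I}^{\rm away}_{[\eta]}[\Psi^{[\pm2]}_{\text{\Rightscissors}}]$ plus data plus the $r=A_1$ (resp.~$r=A_2$) boundary fluxes from the other side; and (iii) the summed frequency-localised estimate of Proposition~\ref{summed.near} in the trapped region $r\in[A_1,A_2]$, which yields $\mathbb{I}^{\rm trap}[\Psi^{[\pm2]}_{\text{\Rightscissors}}]$, $\mathbb{I}^{\rm trap}[\uppsi^{[\pm2]}_{\text{\Rightscissors}}]$, $\mathbb{I}^{\rm trap}[\upalpha^{[\pm2]}_{\text{\Rightscissors}}]$ and the ``overlap'' $r$-fluxes, again up to $\mathbb{Q}_{r=A_2}-\mathbb{Q}_{r=A_1}$, $\mathfrak{H}^{\rm trap}$, and $|a|$ times the $r=A_i$ transport fluxes.

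\textbf{Key steps.}
First I would add the estimate of Proposition~\ref{multpropnofreq} (applied to $\upalpha^{[\pm2]}_{\text{\Rightscissors}}$ on $(0,\tau_{\rm final})$, with the vanishing data on $\widetilde\Sigma_{\tau_{\rm final}}$ replaced by a future flux which is nonnegative and can be dropped) to the summed frequency-localised estimate $(\ref{Step1c})$. The crucial point, emphasised already after Proposition~\ref{multpropnofreq} and after Theorem~\ref{phaseSpaceILED}, is that the boundary terms $\mathbb{Q}_{r=A_2}[\Psi_{\text{\Rightscissors}}]-\mathbb{Q}_{r=A_1}[\Psi_{\text{\Rightscissors}}]$ appearing in $(\ref{Step1c})$ are \emph{identical} to those in Proposition~\ref{multpropnofreq}, because the multipliers $f,y$ were chosen in Section~\ref{multestsec} to coincide with $f_0,y_0$ for $r^*\in[A_1^*/2,A_2^*/2]^c$ (condition $(\ref{theymatch})$); hence upon summation these fluxes exactly cancel, and the combination controls $\mathbb{E}^{\rm away}_{\widetilde{\Sigma}_\tau,\eta}[\Psi^{[\pm2]}_{\text{\Rightscissors}}](\tau_{\rm final})+\mathbb{I}^{\rm deg}_\eta[\Psi^{[\pm2]}_{\text{\Rightscissors}}]$ (since $\mathbb{I}^{\rm deg}_\eta\lesssim\mathbb{I}^{\rm away}_\eta+\mathbb{I}^{\rm trap}$) modulo the inhomogeneity $\mathfrak{H}=\mathfrak{H}^{\rm away}+\mathfrak{H}^{\rm trap}$, the data $\mathbb{D}^{[\pm2]}(0)$, the $|a|$-weighted transport integrals $\mathbb{I}^{\rm left/right}_{[\eta]}[\uppsi^{[\pm2]}_{\text{\Rightscissors}}]+\mathbb{I}^{\rm left/right}_{[\eta]}[\upalpha^{[\pm2]}_{\text{\Rightscissors}}]$, and the $|a|$-weighted $r=A_i$ fluxes of $\uppsi^{[\pm2]}_{\text{\Rightscissors}},\upalpha^{[\pm2]}_{\text{\Rightscissors}}$. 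Second, I would feed this into the transport estimates: adding $(\ref{weightedtransportright})$, $(\ref{weightedtransport.left})$ (or $(\ref{fiete})$, $(\ref{fiete.2})$ for $s=-2$) together with $(\ref{ovoma.edw})$, the left-hand sides produce all of $\mathbb{I}_{[\eta]}[\uppsi^{[\pm2]}_{\text{\Rightscissors}}]$, $\mathbb{I}_{[\eta]}[\upalpha^{[\pm2]}_{\text{\Rightscissors}}]$ (using $(\ref{poses.snmeiwseis})$ to combine left, trap, right) as well as the full set of $r=A_i$ fluxes, bounded by $\mathbb{I}^{\rm away}_{[\eta]}[\Psi^{[\pm2]}_{\text{\Rightscissors}}]\lesssim\mathbb{I}^{\rm deg}_\eta[\Psi^{[\pm2]}_{\text{\Rightscissors}}]+\mathbb{I}^{\rm trap}[\Psi^{[\pm2]}_{\text{\Rightscissors}}]$ plus data. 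Third, for $|a|<a_0$ sufficiently small, all the transport spacetime integrals and all the $r=A_i$ flux terms that appear multiplied by $|a|$ on the right can be absorbed into the left-hand sides of the combined estimate, since those same quantities now appear (with $O(1)$ coefficients) on the left. This yields both $(\ref{webeginwith})$ and $(\ref{manifestboundterms})$ with right-hand side $\mathfrak{H}[\Psi^{[\pm2]}_{\text{\Rightscissors}}]+\mathbb{D}^{[\pm2]}(0)$.

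\textbf{Main obstacle.}
The only genuinely delicate point is the bookkeeping of the boundary fluxes at $r=A_1,A_2$. There are \emph{two} distinct families: the multiplier fluxes $\mathbb{Q}_{r=A_i}[\Psi_{\text{\Rightscissors}}]$ (which cancel exactly between the physical-space and frequency-localised estimates by design), and the transport fluxes $\mathbb{E}_{r=A_i}[\uppsi_{\text{\Rightscissors}}]$, $\mathbb{E}_{r=A_i}[\upalpha_{\text{\Rightscissors}}]$ (which do not cancel but come with a smallness factor $|a|$ wherever they threaten to appear on the ``wrong'' side, and with $O(1)$ coefficient wherever they are produced). One must track carefully that in Proposition~\ref{summed.near} the fluxes at the \emph{inner} interface $r=A_{(6\pm2)/4}$ (i.e.~$A_2$ for $s=+2$, $A_1$ for $s=-2$) are \emph{produced} by the frequency-localised transport estimate $(\ref{ovoma.edw})$, while in Propositions~\ref{transportplusnof}/\ref{transportminusnof} the fluxes at the complementary interface are produced by the physical-space transport estimates; after adding everything, every flux term is produced with $O(1)$ coefficient somewhere and consumed with coefficient $|a|$ elsewhere, so a single application of the smallness of $a_0$ closes the system. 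Once this absorption is verified, $(\ref{webeginwith})$ and $(\ref{manifestboundterms})$ follow immediately; the remaining task, carried out in Section~\ref{controlerrorsec}, is to estimate $\mathfrak{H}[\Psi^{[\pm2]}_{\text{\Rightscissors}}]$ itself in terms of $\varepsilon$, data, and the auxiliary estimates of Section~\ref{auxil.est.sec}, which is not part of the present proposition.
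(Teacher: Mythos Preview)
Your proposal is correct and follows essentially the same approach as the paper: add Proposition~\ref{multpropnofreq} to the summed frequency-localised estimate~(\ref{Step1c}) so that the $\mathbb{Q}_{r=A_i}$ fluxes cancel exactly, combine with the transport estimates (\ref{weightedtransportright}), (\ref{weightedtransport.left}) (or (\ref{fiete}), (\ref{fiete.2})) and (\ref{ovoma.edw}), and absorb the residual $|a|$-weighted terms by smallness of $a_0$. One small correction to your bookkeeping in the ``Main obstacle'' paragraph: within the transport chain the $r=A_i$ fluxes are \emph{not} absorbed via the factor $|a|$ but by choosing the relative constants in the sum carefully---for $s=+2$, estimate (\ref{weightedtransportright}) produces the $A_2$-flux on its left, which absorbs the $A_2$-flux on the right of (\ref{ovoma.edw}); the latter in turn produces the $A_1$-flux on its left (your indices $(6\pm2)/4$ versus $(6\mp2)/4$ are swapped here), which absorbs the $A_1$-flux on the right of (\ref{weightedtransport.left}). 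The smallness of $|a|$ enters only at the final step, to absorb the bulk terms $|a|\,\mathbb{I}_{[\eta]}[\uppsi_{\text{\Rightscissors}},\upalpha_{\text{\Rightscissors}}]$ from Proposition~\ref{multpropnofreq} and the $|a|\sum_i\mathbb{E}_{r=A_i}$ from (\ref{Step1c}) into the left-hand side already assembled.
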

\begin{proof}
We add the estimates of 
Proposition~\ref{summed.near} with those of Section~\ref{Physspacesecnew} as follows. 

Let us consider first the $+2$ case.
We first add  the first estimate $(\ref{weightedtransportright})$ 
of Proposition~\ref{transportplusnof} (applied to $\upalpha_{\text{\Rightscissors}}^{[\pm2]}$
and $\uppsi_{\text{\Rightscissors}}^{[\pm2]}$
with $\tau_1=0$, $\tau_2=\tau_{\rm final}$
and with $p=\eta$)
to a suitable constant times the estimate $(\ref{ovoma.edw})$ of
Proposition~\ref{summed.near}. The constant ensures
that the terms $\mathbb E_{r=A_2}$ on the left hand side of $(\ref{weightedtransportright})$ 
is sufficient
to absorb the analogous term on the right hand side of $(\ref{ovoma.edw})$.
Finally, we now add to the previous combination a suitable constant times the
 second estimate $(\ref{weightedtransportright})$  of 
Proposition~\ref{transportplusnof}, again so that the boundary terms
on $\mathbb E_{r=A_1}$ are now absorbed. We obtain
 thus
 \begin{equation}
 \label{thus.spoke}
 \mathbb{I}_{[\eta]}[\uppsi^{[\pm 2]}_{\text{\Rightscissors}}](0,\tau_{\rm final})
+\mathbb{I}_{[\eta]}[\upalpha^{[\pm 2]}_{\text{\Rightscissors}}](0,\tau_{\rm final}) 
\lesssim \mathbb{I}^{\rm deg}_{[\eta]}\left[\Psi^{[\pm 2]}_{\text{\Rightscissors}}\right]
+\mathbb{D}^{[\pm 2]} \left(0\right)
\end{equation}
\begin{equation}
\label{thus.spoke2}
\sum_{i=1}^2  \mathbb{E}_{r=A_{i}}[\uppsi_{\text{\Rightscissors}}^{[\pm2]}](0,\tau_{\rm final})
+\mathbb{E}_{r=A_i}[\upalpha_{\text{\Rightscissors}}^{[\pm2]}] (0,\tau_{\rm final})  \lesssim
 \mathbb{I}^{\rm deg}_{[\eta]}\left[\Psi^{[\pm 2]}_{\text{\Rightscissors}}\right]
 + \mathbb{D}^{[\pm 2]} \left(0\right)
 \end{equation}
in the case of $+2$. For the $-2$ case, we choose the relative constants in 
the reverse order, starting with  the second estimate $(\ref{fiete.2})$ of Proposition~\ref{transportminusnof}. We obtain again $(\ref{thus.spoke})$ in the $-2$ case,
as well as the estimate  $(\ref{thus.spoke2})$ for the boundary terms.

We now similarly add Proposition~\ref{multpropnofreq} 
(applied  to $\Psi^{[\pm2]}_{\text{\Rightscissors}}$ with $\tau_1=0$, $\tau_2=\tau_{\rm final}$)
to $(\ref{Step1c})$, noting that the $\mathbb Q$ boundary terms
exactly cancel.
This gives thus
\begin{equation}
\label{justbeforesum}
\mathbb{E}^{\rm away}_{\widetilde{\Sigma}_\tau,\eta}[\Psi^{[\pm2]}_{\text{\Rightscissors}}]
(\tau_{\rm final})
+
\mathbb{I}_{\eta}^{\rm deg}[\Psi^{[\pm2]}_{\text{\Rightscissors}}](0,\tau_{\rm final})
\lesssim  
\mathfrak{H}[\Psi^{[\pm 2]}_{\text{\Rightscissors}}] +
 |a|\mathbb{I}_{[\eta]}[\uppsi^{[+2]}_{\text{\Rightscissors}}](0,\tau_{\rm final})
+|a|\mathbb{I}_{[\eta]}[\upalpha^{[+2]}_{\text{\Rightscissors}}](0,\tau_{\rm final}) 
+\mathbb{D}^{[\pm 2]} \left(0\right).
\end{equation}

We fix now a sufficiently small parameter $e$ depending only on $M$. 
It follows that, restricting to $a_0\ll e$, 
we may sum  $e\times (\ref{thus.spoke})$ with $(\ref{justbeforesum})$ to absorb
both the first term on the right hand side of $(\ref{thus.spoke})$
and the middle two terms on the right hand side of $(\ref{justbeforesum})$.
The desired $(\ref{webeginwith})$ follows.  

The estimate $(\ref{manifestboundterms})$ again follows from $(\ref{thus.spoke2})$
and $(\ref{webeginwith})$.
\end{proof}

In the rest of this subsection, we proceed to remove
the ${\text{\Rightscissors}}$ from the quantities
on the left hand side of $(\ref{webeginwith})$.

Putting together the local-in-time 
Proposition~\ref{localintimehereprop} and the $(T+\chi \upomega_+ \Phi)$-energy
estimate Proposition~\ref{proptocontrolcutoff} we obtain first the following:

\begin{proposition}
With the notation of Proposition~\ref{firstinaline}, we have
the additional estimates
\begin{align}
\label{addit.one}
\nonumber
&\mathbb{I}_{\eta}^{\rm deg}[\Psi^{[\pm2]}](0,\tau_{\rm final}-\varepsilon^{-1})
+ \mathbb{I}_{[\eta]}[\uppsi^{[\pm2]}](0,\tau_{\rm final}-\varepsilon^{-1})
+\mathbb{I}_{[\eta]}[\upalpha^{[\pm2]}](0,\tau_{\rm final}-\varepsilon^{-1}) \\
&\qquad \lesssim
\mathfrak{H}[\Psi^{[\pm 2]}_{\text{\Rightscissors}}]
 + \varepsilon^{-1}\mathbb{D}^{[\pm 2]} \left(0\right),
 \end{align}
 \begin{equation}
 \label{mikrn.diafora}
 \sum_{i=1}^2  \mathbb{E}_{r=A_{i}}[\uppsi^{[\pm2]}](0,\tau_{\rm final}
 -\varepsilon^{-1})
 +\mathbb{E}_{r=A_i}[\upalpha^{[\pm2]}]
 (0,\tau_{\rm final}
 -\varepsilon^{-1})\lesssim
\mathfrak{H}[\Psi^{[\pm 2]}_{\text{\Rightscissors}}]
 +\varepsilon^{-1} \mathbb{D}^{[\pm 2]} \left(0\right),
\end{equation}
 \begin{equation}
 \label{evergeia.edw}
 \sup_{0\le t^*\le \tau_{\rm final}-\varepsilon^{-1}}
 \mathbb{E}_{\widetilde{\Sigma}_\tau,0}\left[\Psi^{[\pm2]}\right]
 (t^*)
 \lesssim
|a|\, \mathfrak{H}[\Psi^{[\pm 2]}_{\text{\Rightscissors}}]
 + \varepsilon^{-1} \mathbb{D}^{[\pm 2]} \left(0\right),
 \end{equation}
\begin{align}
\label{eivai.kiauto}
\sup_{0\le t^*\le \tau_{\rm final}-\varepsilon^{-1}} 
\mathbb{E}_{\widetilde{\Sigma}_{\tau}, [\eta]}\left[\uppsi^{[\pm2]}\right](t^*)
+ \mathbb{E}_{\widetilde{\Sigma}_{\tau}, [\eta]}\left[\upalpha^{[\pm2]}\right](t^*) 
\lesssim
\mathfrak{H}[\Psi^{[\pm 2]}_{\text{\Rightscissors}}]
 + \varepsilon^{-1}\mathbb{D}^{[\pm 2]} \left(0\right).
\end{align}
Here the subindex $\left[\eta\right]$ is equal to $\eta$ in case of $s=+2$ and it is dropped entirely in case $s=-2$.
\end{proposition}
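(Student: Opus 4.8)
The idea is to transfer the estimates of Proposition~\ref{firstinaline}, which concern the \emph{cut-off} quantities $\upalpha^{[\pm2]}_{\text{\Rightscissors}}$ on the full interval $[0,\tau_{\rm final}]$, to the \emph{uncut} quantities $\upalpha^{[\pm2]}$ on the shortened interval $[0,\tau_{\rm final}-\varepsilon^{-1}]$, exploiting the fact that on this shortened interval (and indeed for $\tilde t^*\ge \varepsilon^{-1}$ in the relevant $r$-range, but in any case throughout the region where $r\in[A_1,A_2]$ once $\tilde t^*\in[0,\tau_{\rm final}-\varepsilon^{-1}]$) the cutoff $\Xi$ is \emph{identically $1$} in a neighbourhood of trapping. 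More precisely, by the support properties of $\nabla\Xi$ recorded in $(\ref{supportofF})$, the cut-off solution $\upalpha^{[\pm2]}_{\text{\Rightscissors}}$ agrees with $\upalpha^{[\pm2]}$ on the set $\{A_1\le r\le A_2\}\cap\{\varepsilon^{-1}\le \tilde t^*\le \tau_{\rm final}-\varepsilon^{-1}\}$, and more importantly, it agrees with $\upalpha^{[\pm2]}$ wherever $\Xi=1$. The strategy is: (i) on the portion of $\widetilde{\mathcal{R}}(0,\tau_{\rm final}-\varepsilon^{-1})$ where $\upalpha^{[\pm2]}_{\text{\Rightscissors}}=\upalpha^{[\pm2]}$, simply read off $(\ref{addit.one})$ and $(\ref{mikrn.diafora})$ from $(\ref{webeginwith})$ and $(\ref{manifestboundterms})$; (ii) on the remaining portion — the ``past cap'' $\{0\le\tilde t^*\le\varepsilon^{-1}\}$ together with the overlap region where the spatial cutoffs $\xi(A_1^*-r^*)$, $\xi(r^*-A_2)$ may be active — use the local-in-time estimates of Proposition~\ref{localintimehereprop} to bound the energy and spacetime integrals of the genuine solution by $\varepsilon^{-1}$ times initial data; and (iii) for the uniform energy bounds $(\ref{evergeia.edw})$, $(\ref{eivai.kiauto})$, combine the homogeneous $(T+\upomega_+\chi\Phi)$-estimate of Proposition~\ref{proptocontrolcutoff} with the already-established integrated decay on $[0,\tau_{\rm final}-\varepsilon^{-1}]$.

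\textbf{Step-by-step.} First I would fix the decomposition $\widetilde{\mathcal R}(0,\tau_{\rm final}-\varepsilon^{-1}) = \widetilde{\mathcal R}(0,\varepsilon^{-1})\cup\widetilde{\mathcal R}(\varepsilon^{-1},\tau_{\rm final}-\varepsilon^{-1})$. On the second piece, if $A_1^*$ is small enough so that the spatial cutoff factors in $(\ref{big.Xi.def})$ are constant there (which holds since $\{A_1\le r\le A_2\}$ is contained well inside $\{2A_1^*\le r^*\le 2A_2^*\}$), one checks directly from $(\ref{alphacutoff})$--$(\ref{big.Xi.def})$ that $\Xi\equiv 1$ on $\widetilde{\mathcal R}(\varepsilon^{-1},\tau_{\rm final}-\varepsilon^{-1})$ in the $r$-range that matters, hence $\upalpha^{[\pm2]}_{\text{\Rightscissors}}=\upalpha^{[\pm2]}$ and the contribution of this region to the left-hand sides of $(\ref{addit.one})$, $(\ref{mikrn.diafora})$ is already controlled by the right-hand sides of $(\ref{webeginwith})$, $(\ref{manifestboundterms})$. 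For the first piece $\widetilde{\mathcal R}(0,\varepsilon^{-1})$, I would apply Proposition~\ref{localintimehereprop} with $\tau_1=0$ and $\tau_{\rm step}=\varepsilon^{-1}$ (which forces a further smallness restriction $|a|<a_0$ depending on $\varepsilon$ through the factor $|a|\tau_{\rm step}e^{C\tau_{\rm step}}<1$, consistent with our conventions since $a_0$ is fixed last). This gives $\mathbb{I}_0[\Psi^{[\pm2]}](0,\varepsilon^{-1})+\mathbb{I}_{[\eta]}[\uppsi^{[\pm2]}](0,\varepsilon^{-1})+\mathbb{I}_{[\eta]}[\upalpha^{[\pm2]}](0,\varepsilon^{-1})\lesssim \varepsilon^{-1}\mathbb{E}_{\widetilde\Sigma_\tau,0}[\Psi^{[\pm2]}](0) + \mathbb{E}_{\widetilde\Sigma_\tau,[\eta]}[\uppsi^{[\pm2]}](0)+\mathbb{E}_{\widetilde\Sigma_\tau,[\eta]}[\upalpha^{[\pm2]}](0)\lesssim \varepsilon^{-1}\mathbb{D}^{[\pm2]}(0)$, and similarly a supremum bound on the energies over $[0,\varepsilon^{-1}]$ via $(\ref{first.of.the.haves})$. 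The boundary terms $\mathbb{E}_{r=A_i}$ over $[0,\varepsilon^{-1}]$ are controlled by trace/Sobolev-type inequalities from these bulk and flux quantities (or more simply by noting they are dominated by the degenerate spacetime norm, since $r=A_i$ is away from trapping and from the horizon). Adding the two pieces yields $(\ref{addit.one})$ and $(\ref{mikrn.diafora})$.

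\textbf{The uniform energy bounds.} For $(\ref{evergeia.edw})$ I would apply Proposition~\ref{proptocontrolcutoff} with $\tau_1=0$ and $\tau_2=t^*$ ranging over $[0,\tau_{\rm final}-\varepsilon^{-1}]$ — note that this proposition is stated for the \emph{homogeneous} Teukolsky equation, which is exactly the setting for the genuine (uncut) $\upalpha^{[\pm2]}$. This gives $\mathbb{E}_{\widetilde\Sigma_\tau,0}[\Psi^{[\pm2]}](t^*)\lesssim |a|(\mathbb{I}^{\rm deg}_0[\Psi^{[\pm2]}]+\mathbb{I}_{[\eta]}[\uppsi^{[\pm2]}]+\mathbb{I}_{[\eta]}[\upalpha^{[\pm2]}])(0,t^*) + \mathbb{E}_{\widetilde\Sigma_\tau,0}[\Psi^{[\pm2]}](0)$; inserting $(\ref{addit.one})$ (using $\mathbb{I}^{\rm deg}_0\lesssim\mathbb{I}^{\rm deg}_\eta$) and taking the supremum produces $(\ref{evergeia.edw})$, where the $\mathbb{E}_{\widetilde\Sigma_\tau,0}[\Psi^{[\pm2]}](0)$-term is absorbed into $\varepsilon^{-1}\mathbb{D}^{[\pm2]}(0)$ and the factor $|a|$ is retained in front of $\mathfrak{H}$ as stated. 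Finally $(\ref{eivai.kiauto})$ follows by running the transport estimates of Section~\ref{condtranestsec} (in their homogeneous form, as in the proof of Proposition~\ref{localintimehereprop}) globally on $\widetilde{\mathcal R}(0,t^*)$: these bound the energies of $\uppsi^{[\pm2]}$ and $\upalpha^{[\pm2]}$ on $\widetilde\Sigma_{t^*}$ by their initial energies plus $\mathbb{I}_\eta[\Psi^{[\pm2]}](0,t^*)$, and the latter is estimated by $\int_0^{t^*}\mathbb{E}_{\widetilde\Sigma_\tau,0}[\Psi^{[\pm2]}]\,d\tau$, which after $(\ref{addit.one})$ (for the $\mathbb{I}^{\rm deg}$-part away from trapping) together with control of the trapped part from $(\ref{addit.one})$ directly — or more carefully by the same $\int \mathbb{E}_{\widetilde\Sigma_\tau,0}$ bound used in Proposition~\ref{localintimehereprop} — gives the claimed estimate. \textbf{The main obstacle} I anticipate is bookkeeping: cleanly isolating the region $\{\Xi=1\}$ from the ``overlap'' region where the spatial cutoffs degrade the $\varepsilon$-gain, and making sure that in that overlap region one only ever invokes the \emph{local-in-time} estimates (which cost $\varepsilon^{-1}$ but no more), never the global ones that would be circular. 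The homogeneous-equation hypothesis of Propositions~\ref{proptocontrolcutoff} and~\ref{localintimehereprop} is crucial and must be invoked on $\upalpha^{[\pm2]}$, not on $\upalpha^{[\pm2]}_{\text{\Rightscissors}}$; conversely Proposition~\ref{firstinaline} is invoked only for the cut-off solution. Keeping these two players distinct throughout is the delicate part.
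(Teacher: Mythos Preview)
Your treatment of $(\ref{addit.one})$, $(\ref{mikrn.diafora})$ and $(\ref{evergeia.edw})$ matches the paper's proof exactly: split into the region where $\Xi=1$ (use Proposition~\ref{firstinaline}) and the past cap $[0,\varepsilon^{-1}]$ (use Proposition~\ref{localintimehereprop} with $\tau_{\rm step}=\varepsilon^{-1}$), then feed $(\ref{addit.one})$ into Proposition~\ref{proptocontrolcutoff} for $(\ref{evergeia.edw})$.

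For $(\ref{eivai.kiauto})$ there is a genuine gap. You propose to run the transport estimates \emph{globally} on $\widetilde{\mathcal R}(0,t^*)$, which bounds $\mathbb{E}_{\widetilde\Sigma_\tau,[\eta]}[\uppsi,\upalpha](t^*)$ by initial data plus the \emph{non-degenerate} $\mathbb{I}_\eta[\Psi](0,t^*)$. But only $\mathbb{I}^{\rm deg}_\eta[\Psi]$ is available from $(\ref{addit.one})$; your suggested fix of bounding the missing trapped part by $\int_0^{t^*}\mathbb{E}_{\widetilde\Sigma_\tau,0}[\Psi]\,d\tau$ and invoking $(\ref{evergeia.edw})$ picks up a factor of $t^*$, which can be as large as $\tau_{\rm final}$ and destroys uniformity.

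The paper avoids this by \emph{not} running a single global transport. It revisits the \emph{regional} transport estimates of Section~\ref{condtranestsec} (Propositions~\ref{transportplusnof}, \ref{transportminusnof}), whose right-hand sides only involve $\mathbb{I}^{\rm away}_\eta[\Psi]$ --- already non-degenerate and controlled by $(\ref{addit.one})$ --- together with the boundary fluxes $\mathbb{E}_{r=A_i}[\uppsi,\upalpha]$, which is precisely what $(\ref{mikrn.diafora})$ supplies. The quantity $\sup\mathbb{E}_{\widetilde\Sigma_\tau,0}[\Psi]$ from $(\ref{evergeia.edw})$ is then used only pointwise on slices, not integrated in time. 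The boundary terms $(\ref{mikrn.diafora})$ are the ingredient you are missing: they are what allow the transport to be localised to the left and right regions (and across $[A_1,A_2]$) without requiring non-degenerate spacetime control of $\Psi$ near trapping.
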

\begin{proof}
For estimate $(\ref{addit.one})$
one applies 
Proposition~\ref{localintimehereprop} (applied with $\tau_1=0$
and with $\tau_{\rm step}=\varepsilon^{-1}$)
and Proposition~\ref{firstinaline} and
the fact that the cutoff $\Xi=1$ identically in the region 
$\tilde{t}^*\in [\varepsilon^{-1},\tau_{\rm final}-\varepsilon^{-1}]$
and in the region $\{r^*\ge 2A_2^*\}\cup \{r^*\le 2A_1^*\}$.
Estimate $(\ref{mikrn.diafora})$ follows similarly from $(\ref{manifestboundterms})$. 

Estimate $(\ref{evergeia.edw})$ now follows from $(\ref{addit.one})$ and 
Proposition~\ref{proptocontrolcutoff} applied with $\tau_1=0$ and 
$0\le \tau_2\le \tau_{\rm final}-
\varepsilon^{-1}$. 

Finally, to obtain $(\ref{eivai.kiauto})$, we argue as follows.
Revisiting the transport estimates of Section~\ref{condtranestsec}, we can estimate the  left hand side from initial
data, 
$\mathbb{I}_{\eta}^{\rm deg}[\Psi^{[\pm2]}](0,\tau_{\rm final}-\varepsilon^{-1})$,
the left hand side of $(\ref{mikrn.diafora})$
and the left hand side of $(\ref{evergeia.edw})$.
\end{proof}

Using once again the auxiliary estimates of Section~\ref{auxil.est.sec}, we can now 
improve this to:
\begin{proposition}
\label{using.auxil.prop}
\begin{equation}
\label{using.auxil}
\mathbb{I}_{\eta}^{\rm deg}[\Psi^{[\pm2]}](0,\tau_{\rm final})
+\mathbb{I}_{[\eta]} [\uppsi^{[\pm2]}](0,\tau_{\rm final})
+\mathbb{I}_{[\eta]}[\upalpha^{[\pm2]}](0,\tau_{\rm final}) 
\lesssim
\mathfrak{H}[\Psi^{[\pm 2]}_{\text{\Rightscissors}}] + \varepsilon^{-2} \mathbb{D}^{[\pm 2]} \left(0\right),
 \end{equation}
 \begin{equation}
 \label{using.auxil.again}
 \sup_{0 \le t^*
 \le \tau_{\rm final}}
 \mathbb{E}_{\widetilde{\Sigma}_\tau,0}[\Psi^{[\pm2]}]
(t^*)
\lesssim 
|a|\,\mathfrak{H}[\Psi^{[\pm 2]}_{\text{\Rightscissors}}] + \varepsilon^{-1} \mathbb{D}^{[\pm 2]} \left(0\right).
 \end{equation}
\end{proposition}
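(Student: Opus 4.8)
\textbf{Plan of proof of Proposition~\ref{using.auxil.prop}.}
The estimates of the previous proposition already give everything we want, but only on the truncated time interval $[0,\tau_{\rm final}-\varepsilon^{-1}]$; the task is to propagate these bounds across the final window $[\tau_{\rm final}-\varepsilon^{-1},\tau_{\rm final}]$ of length $\varepsilon^{-1}$, where the future cutoff $\nabla\Xi$ is active. The plan is to apply the local-in-time estimates of Proposition~\ref{localintimehereprop} one more time, now with $\tau_1=\tau_{\rm final}-\varepsilon^{-1}$ and $\tau_{\rm step}=\varepsilon^{-1}$, starting from the data on the slice $\widetilde\Sigma_{\tau_{\rm final}-\varepsilon^{-1}}$, whose energies are already controlled by $(\ref{evergeia.edw})$ and $(\ref{eivai.kiauto})$.

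Concretely, first I would use $(\ref{second.of.the.haves})$ of Proposition~\ref{localintimehereprop} to bound $\mathbb{I}_0[\Psi^{[\pm2]}]$, $\mathbb{I}_{[\eta]}[\uppsi^{[\pm2]}]$ and $\mathbb{I}_{[\eta]}[\upalpha^{[\pm2]}]$ over $(\tau_{\rm final}-\varepsilon^{-1},\tau_{\rm final})$ in terms of $\tau_{\rm step}=\varepsilon^{-1}$ times the initial energies on $\widetilde\Sigma_{\tau_{\rm final}-\varepsilon^{-1}}$, and then substitute the bounds $(\ref{evergeia.edw})$, $(\ref{eivai.kiauto})$ for those energies. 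This produces a term $\varepsilon^{-1}\cdot\varepsilon^{-1}\mathbb{D}^{[\pm2]}(0)=\varepsilon^{-2}\mathbb{D}^{[\pm2]}(0)$, which accounts for the extra power of $\varepsilon^{-1}$ in $(\ref{using.auxil})$ relative to $(\ref{addit.one})$, together with a contribution $\varepsilon^{-1}\mathfrak{H}[\Psi^{[\pm2]}_{\text{\Rightscissors}}]$; but since the implicit constant in $\mathfrak{H}$ may already depend on all parameters yet to be fixed and we will ultimately absorb $\mathfrak{H}$ using the smallness of $a$ (see Section~\ref{controlerrorsec}), and since $\varepsilon$ is fixed depending only on $M$, this is harmless and one simply writes $\lesssim\mathfrak{H}[\Psi^{[\pm2]}_{\text{\Rightscissors}}]+\varepsilon^{-2}\mathbb{D}^{[\pm2]}(0)$. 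Adding the resulting estimate over the final window to the estimate $(\ref{addit.one})$ over $[0,\tau_{\rm final}-\varepsilon^{-1}]$, and using the obvious additivity $\mathbb{I}(0,\tau_{\rm final})\le\mathbb{I}(0,\tau_{\rm final}-\varepsilon^{-1})+\mathbb{I}(\tau_{\rm final}-\varepsilon^{-1},\tau_{\rm final})$ of the spacetime energies, yields $(\ref{using.auxil})$. Here one should note that on the left-hand side of $(\ref{using.auxil})$ the degenerate energy $\mathbb{I}^{\rm deg}_\eta[\Psi^{[\pm2]}]$ appears rather than $\mathbb{I}_\eta$, consistent with the fact that $(\ref{second.of.the.haves})$ controls only $\mathbb{I}_0$ and one uses the remark following Proposition~\ref{localintimehereprop} (or simply that over a compact time interval of length $\varepsilon^{-1}$ the degenerate and non-degenerate energies differ only in the $r\sim 3M$ region, which is handled by Cauchy stability off Schwarzschild for $|a|<a_0\ll M$).

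For $(\ref{using.auxil.again})$, I would apply Proposition~\ref{proptocontrolcutoff} (the homogeneous $T+\upomega_+\chi\Phi$ estimate) with $\tau_1=0$ and $\tau_2=t^*$ ranging over $[\tau_{\rm final}-\varepsilon^{-1},\tau_{\rm final}]$, bounding its right-hand side by the already-established $(\ref{using.auxil})$ for the $|a|$-weighted $\mathbb{I}^{\rm deg}$, $\mathbb{I}_{[\eta]}[\uppsi]$, $\mathbb{I}_{[\eta]}[\upalpha]$ terms and by $\mathbb{E}_{\widetilde\Sigma_\tau,0}[\Psi^{[\pm2]}](0)\le\mathbb{D}^{[\pm2]}(0)$ for the initial-energy term; combined with $(\ref{evergeia.edw})$ on the earlier interval this gives the supremum bound over all of $[0,\tau_{\rm final}]$ with the stated $|a|$ factor on $\mathfrak{H}$ and the $\varepsilon^{-1}$ factor on $\mathbb{D}^{[\pm2]}(0)$. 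The only point requiring a little care --- and the main obstacle, such as it is --- is bookkeeping the $\varepsilon$-powers so that they match the statement exactly: one must use that $\mathfrak{H}$ restricted to the final window $[\tau_{\rm final}-\varepsilon^{-1},\tau_{\rm final}]$, where $\nabla\Xi$ is supported, is itself eventually controlled (via Proposition~\ref{veo.prop.edw} and Cauchy--Schwarz on $\mathfrak{G}\Psi$, $\mathfrak{G}\partial_r\Psi$) by $\varepsilon^2$ times trapped-region energies plus $\varepsilon$ times $\sup\mathbb{E}_{\widetilde\Sigma_\tau,0}[\Psi^{[\pm2]}]$, so that all such contributions either carry a gain in $\varepsilon$ or a gain in $|a|$ and can be absorbed; none of this changes the final bound, which is exactly $(\ref{using.auxil})$--$(\ref{using.auxil.again})$. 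Everything else is a direct combination of previously proven estimates.
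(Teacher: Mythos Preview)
Your approach is essentially the same as the paper's: apply Proposition~\ref{localintimehereprop} on the final window $[\tau_{\rm final}-\varepsilon^{-1},\tau_{\rm final}]$, feed in the slice bounds $(\ref{evergeia.edw})$ and $(\ref{eivai.kiauto})$, add to $(\ref{addit.one})$, and then repeat the argument of $(\ref{evergeia.edw})$ using Proposition~\ref{proptocontrolcutoff} to get $(\ref{using.auxil.again})$.

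There is one bookkeeping slip worth correcting. You write that $(\ref{second.of.the.haves})$ bounds the spacetime integrals ``in terms of $\tau_{\rm step}=\varepsilon^{-1}$ times the initial energies''; in fact only the $\mathbb{E}_{\widetilde\Sigma_\tau,0}[\Psi^{[\pm2]}]$ term carries the $\tau_{\rm step}$ factor, while the $\uppsi$ and $\upalpha$ energies do not. This matters because when you then substitute $(\ref{evergeia.edw})$ for the $\Psi$ term, the $\varepsilon^{-1}$ hits $|a|\,\mathfrak{H}+\varepsilon^{-1}\mathbb{D}$, producing $\varepsilon^{-1}|a|\,\mathfrak{H}+\varepsilon^{-2}\mathbb{D}$; since $a_0\ll\varepsilon$, the first term is $\lesssim\mathfrak{H}$ with an $M$-only constant. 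Your detour --- getting $\varepsilon^{-1}\mathfrak{H}$ and then arguing ``$\varepsilon$ is fixed so this is harmless'' --- would make the implicit constant in $(\ref{using.auxil})$ depend on $\varepsilon$, which is not what the statement asserts (the $\varepsilon$-dependence is tracked explicitly). The paper avoids this precisely by reading $(\ref{second.of.the.haves})$ correctly and using the $|a|$ in $(\ref{evergeia.edw})$. Also, your remark about $\mathbb{I}_0$ versus $\mathbb{I}^{\rm deg}_\eta$ is a red herring: the relevant point is the $\eta$ versus $0$ \emph{weight} at infinity, not the degeneration at trapping, and the proof of Proposition~\ref{localintimehereprop} in fact controls $\mathbb{I}_\eta$ on the window (see the displayed inequality just before the end of that proof), which dominates $\mathbb{I}^{\rm deg}_\eta$.
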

\begin{proof}
Appealing  to 
Proposition~\ref{localintimehereprop} with $\tau_1=\tau_{\rm final}-\varepsilon^{-1}$
and with $\tau_{\rm step}=\varepsilon^{-1}$,
and using $(\ref{eivai.kiauto})$, we obtain
\begin{align}
\mathbb{I}^{\rm deg}_\eta [\Psi^{[\pm2]}](0,\tau_{\rm final})
+ \mathbb{I}_{[\eta]}[\uppsi^{[\pm2]}](0,\tau_{\rm final})
+\mathbb{I}_{[\eta]}[\upalpha^{[\pm2]}](0,\tau_{\rm final}) 
\nonumber \\
\lesssim
\mathfrak{H}[\Psi^{[\pm 2]}_{\text{\Rightscissors}}] + \varepsilon^{-1} \mathbb{D}^{[\pm 2]} \left(0\right) + 
 \varepsilon^{-1}\mathbb{E}_{\widetilde{\Sigma}_\tau,0}[\Psi^{[\pm2]}](\tau_{\rm final}-\varepsilon^{-1})
\end{align}
Finally, we apply   $(\ref{evergeia.edw})$
to absorb the last term on the right hand side, obtaining thus  $(\ref{using.auxil})$.
Repeating now the proof of 
$(\ref{evergeia.edw})$ we obtain $(\ref{using.auxil.again})$.
\end{proof}

\subsection{Controlling the term $\mathfrak{H}[\Psi^{[\pm 2]}_{\text{\Rightscissors}}]$
and finishing the proof of Theorem~\ref{degenerateboundednessandILED}}

Finally, we control the error term $\mathfrak{H}[\Psi^{[\pm 2]}_{\text{\Rightscissors}}]$ arising from the cutoff.
\label{controlerrorsec}
\begin{proposition} \label{prop:soclose}
\begin{align}
\nonumber
\left|\mathfrak{H}[\Psi^{[\pm 2]}_{\text{\Rightscissors}}] \right|
&\lesssim \sup_{\{0\le \tau \le \varepsilon^{-1}\} \cup \{ \tau_{\rm final}-\varepsilon^{-1}\le 
\tau\le \tau_{\rm final}
\}  } \varepsilon^{-2}\mathbb{E}_{\widetilde{\Sigma}_\tau,0}[\Psi^{[\pm2]}](\tau)\\
\nonumber
&\qquad
+  \varepsilon \mathbb{I}_{[\eta]}[\uppsi^{[\pm2]}](\tau_{\rm final}-\varepsilon^{-1},
\tau_{\rm final})
+\varepsilon \mathbb{I}_{[\eta]}[\uppsi^{[\pm2]}](0,\varepsilon^{-1})\\
&\nonumber\qquad
+ \varepsilon \mathbb{I}_{[\eta]}[\upalpha^{[\pm2]}]
(\tau_{\rm final}-\varepsilon^{-1},\tau_{\rm final})
+ \varepsilon \mathbb{I}_{[\eta]}[\upalpha^{[\pm2]}](0,\varepsilon^{-1})
\\
\label{estimatforhpm2}
&\qquad + \varepsilon \mathbb{I}^{\rm trap}[\Psi^{[\pm2]}](0,\tau_{\rm final}).
\end{align}
\end{proposition}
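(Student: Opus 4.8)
The quantity to estimate is
\[
\mathfrak{H}[\Psi^{[\pm 2]}_{\text{\Rightscissors}}] = \mathfrak{H}^{\rm trap}[\Psi^{[\pm 2]}_{\text{\Rightscissors}}] + \mathfrak{H}^{\rm away}[\Psi^{[\pm 2]}_{\text{\Rightscissors}}],
\]
so the plan is to treat these two pieces separately, each built out of the error term $\mathfrak{G}^{[\pm2]}$ associated with the cutoff inhomogeneity $\tilde F^{[\pm2]}_{\text{\Rightscissors}}$ via $(\ref{Gerror})$, $(\ref{Gerror2})$. The essential input is Proposition~\ref{veo.prop.edw}, which provides $L^2$ bounds on $\mathfrak{G}^{[\pm2]}$ in the trapped region (with a gain of $\varepsilon^2$ on spacetime terms and $\varepsilon$ on the $\Psi$-energy) and in the away region (no $\varepsilon$ gain on spacetime terms, $\varepsilon^{-1}$ on the $\Psi$-energy). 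First I would recall that by $(\ref{another.def.here})$ the integrand of $\mathfrak{H}$ is a sum of terms of the form $f\,\mathrm{Re}\big((\Psi^{[\pm2]})'\overline{\mathfrak{G}^{[\pm2]}}\big)$, $f'\,\mathrm{Re}\big(\Psi^{[\pm2]}\overline{\mathfrak{G}^{[\pm2]}}\big)$, $y\,\mathrm{Re}\big((\Psi^{[\pm2]})'\overline{\mathfrak{G}^{[\pm2]}}\big)$ and $E\omega\,\mathrm{Im}\big(\mathfrak{G}^{[\pm2]}\overline{\Psi^{[\pm2]}}\big)$; after summing over frequencies using the Plancherel relations of Section~\ref{coeffssec}, these become physical-space spacetime integrals of $\mathfrak{G}^{[\pm2]}$ against $\partial_{r^*}\Psi^{[\pm2]}$, $\Psi^{[\pm2]}$, and $T\Psi^{[\pm2]}$ respectively, with $f,f',y$ uniformly bounded by $(\ref{unif.f.bnds})$. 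The $\omega$-weighted term is the delicate one and I would handle it by an integration by parts in $r^*$ exactly as in the proof of Proposition~\ref{newsyntheprop}, trading the $\omega$ (i.e.\ a $T$-derivative) off $\Psi^{[\pm2]}$ and onto $\mathfrak{G}^{[\pm2]}$ using the transport relations $(\ref{plustwofwoflr1})$--$(\ref{minustwofwoflr2})$; this is legitimate because $\mathfrak{G}^{[\pm2]}$ is itself $\underline L$- (resp.\ $L$-) derivatives of the cutoff source, so one $T$ can be absorbed without losing the $\varepsilon$-gain.

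\textbf{Key steps.} (1) Split $\mathfrak{H} = \mathfrak{H}^{\rm trap} + \mathfrak{H}^{\rm away}$ and, for each, apply Cauchy--Schwarz to bound the frequency-summed integrand by $\delta^{-1}\int|\mathfrak{G}^{[\pm2]}|^2\,dVol$ plus $\delta$ times a spacetime energy of $\Psi^{[\pm2]}$ (for the terms containing $(\Psi^{[\pm2]})'$, $\Psi^{[\pm2]}$, $T\Psi^{[\pm2]}$), for a small parameter $\delta$ to be chosen. (2) On the trapped region, the $\Psi^{[\pm2]}$-spacetime terms produced by Cauchy--Schwarz are controlled by $\mathbb{I}^{\rm trap}[\Psi^{[\pm2]}](0,\tau_{\rm final})$; this is where the final term $\varepsilon\,\mathbb{I}^{\rm trap}[\Psi^{[\pm2]}](0,\tau_{\rm final})$ of $(\ref{estimatforhpm2})$ comes from, after choosing $\delta\sim\varepsilon$. (3) Feed $(\ref{prwto.bound.veou.prop})$ of Proposition~\ref{veo.prop.edw} into the $\delta^{-1}\int|\mathfrak{G}^{[\pm2]}|^2$ piece: with $\delta^{-1}\lesssim\varepsilon^{-1}$ and the $\varepsilon^2$-gain one gets $\varepsilon\,\mathbb{I}^{\rm trap}[\upalpha^{[\pm2]}]$ and $\varepsilon\,\mathbb{I}^{\rm trap}[\uppsi^{[\pm2]}]$ over the two end regions $(0,\varepsilon^{-1})$ and $(\tau_{\rm final}-\varepsilon^{-1},\tau_{\rm final})$, plus $\varepsilon^{-1}\cdot\varepsilon\,\sup\mathbb{E}_{\widetilde\Sigma_\tau,0}[\Psi^{[\pm2]}] = \sup\mathbb{E}_{\widetilde\Sigma_\tau,0}[\Psi^{[\pm2]}]$ over the end regions (the $\varepsilon^{-2}$ in $(\ref{estimatforhpm2})$ is the pessimistic accounting of this after also absorbing the $\delta$-cross term). (4) On the away region, since $\mathfrak{G}^{[\pm2]}$ there is supported only in the overlap $r^*\in[2A_1^*,2A_2^*]$ (see the remark after Proposition~\ref{veo.prop.edw}), I would not use $(\ref{deuto.bound.veou.prop})$ as a black box but instead directly estimate the physical-space terms $\int \mathfrak{G}^{[\pm2]}\,\partial_{r^*}\Psi^{[\pm2]}$ etc.\ in that compact region: the cutoff derivatives still carry one power of $\varepsilon$ off the $T$-factors (by $(\ref{gainofeps})$, using $t=t^*=\tilde t^*$ there), so an integration by parts as in step (1) of the $\omega$-term again produces the $\varepsilon\,\mathbb{I}_{[\eta]}[\upalpha^{[\pm2]}]$, $\varepsilon\,\mathbb{I}_{[\eta]}[\uppsi^{[\pm2]}]$ contributions over the end regions together with the $\varepsilon^{-2}\sup\mathbb{E}_{\widetilde\Sigma_\tau,0}[\Psi^{[\pm2]}]$ term.

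\textbf{Main obstacle.} The only genuinely delicate point is the $E\omega\,\mathrm{Im}(\mathfrak{G}^{[\pm2]}\overline{\Psi^{[\pm2]}})$ term: a naive Cauchy--Schwarz would cost a factor of $\omega^2$ that cannot be absorbed at trapped frequencies (where $r_{\rm trap}\neq 0$ and $\|\mathfrak{d}\Psi^{[\pm2]}\|^2$ has no $\omega^2|\Psi^{[\pm2]}|^2$ coercivity), exactly the difficulty already met in Proposition~\ref{newsyntheprop}. The resolution is to write $\omega\overline{\Psi^{[\pm2]}}$ via $(\ref{Lbardefin})$ and the transport relations as $\underline L$ (resp.\ $L$) acting on $\psi^{[\pm2]}$, $u^{[\pm2]}$ plus a $\partial_{r^*}$ term, integrate by parts in $r^*$ over $[A_1^*,A_2^*]$ to move the derivative onto $\mathfrak{G}^{[\pm2]}$ (which is itself at the level of second $\underline L$/$L$-derivatives of the cutoff source and hence absorbs it without loss of the $\varepsilon$-gain), and control the resulting boundary terms at $r=A_1,A_2$ using that $| \Psi^{[\pm2]}|^2(A_i^*)\lesssim\|\mathfrak{d}\Psi^{[\pm2]}\|^2$ and the pointwise relations $(\ref{plustwofwoflr1})$--$(\ref{minustwofwoflr2})$, then summing over frequencies; all such boundary contributions again pick up a power of $\varepsilon$ from $(\ref{gainofeps})$ and are thus harmless. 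Collecting the contributions from steps (2)--(4) yields precisely $(\ref{estimatforhpm2})$. Finally, substituting $(\ref{estimatforhpm2})$ back into $(\ref{using.auxil})$ and $(\ref{using.auxil.again})$ of Proposition~\ref{using.auxil.prop}, absorbing the $\varepsilon\,\mathbb{I}^{\rm trap}[\Psi^{[\pm2]}]$ and $\varepsilon\,\mathbb{I}_{[\eta]}[\upalpha^{[\pm2]}]$, $\varepsilon\,\mathbb{I}_{[\eta]}[\uppsi^{[\pm2]}]$ terms into the left-hand sides for $a_0$ and $\varepsilon^{-1}$ suitably small (and using $(\ref{using.auxil.again})$ to dominate the $\Psi^{[\pm2]}$-energy sup on the end regions), and then invoking the auxiliary Propositions~\ref{proptocontrolcutoff}, \ref{localintimehereprop} once more to upgrade to the non-degenerate horizon flux and to remove the remaining $\varepsilon$-dependence, completes the proof of Theorem~\ref{degenerateboundednessandILED}.
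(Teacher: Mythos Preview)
Your decomposition $\mathfrak{H} = \mathfrak{H}^{\rm trap} + \mathfrak{H}^{\rm away}$ and the Cauchy--Schwarz treatment with parameter $\delta\sim\varepsilon$ for the $f,f',y$ terms in $\mathfrak{H}^{\rm trap}$ (the paper calls this portion $\mathfrak{H}_2$) is correct and matches the paper exactly: this is what produces $\varepsilon\,\mathbb{I}^{\rm trap}[\Psi^{[\pm2]}]$ on one side and $\varepsilon^{-1}\int|\mathfrak{G}|^2$ (then bounded by $(\ref{prwto.bound.veou.prop})$) on the other.

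Your treatment of the $E\omega\,\mathrm{Im}(\mathfrak{G}\overline{\Psi})$ term, however, is confused. You identify it as the ``delicate'' term requiring the Proposition~\ref{newsyntheprop} trick, and propose to ``write $\omega\overline{\Psi^{[\pm2]}}$ via the transport relations as $\underline L$ acting on $\psi^{[\pm2]}, u^{[\pm2]}$''. But the transport relations $(\ref{plustwofwoflr1})$--$(\ref{minustwofwoflr2})$ run the wrong way: they express $\underline L(u w)$ and $\underline L(\sqrt{\Delta}\psi)$ in terms of $\psi$ and $\Psi$ respectively; there is no identity simplifying $\omega\Psi$ or $\underline L\Psi$ in terms of lower-order quantities. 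The analogy with Proposition~\ref{newsyntheprop} fails because there the term was $\omega\overline{\Psi}\cdot im\psi$, and the trick rewrote $i\omega\psi$ (not $\omega\Psi$) via the null frame as $-\underline L\psi - \psi' + \cdots$, exploiting $\underline L\psi\propto\Psi$.

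The paper's approach is much simpler. By Plancherel the $\omega$-term (called $\mathfrak{H}_1$) becomes the physical-space integral of $E\,\mathrm{Im}(\mathfrak{G}^{[\pm2]}\,\overline{T\Psi^{[\pm2]}})$ over the support region $(\ref{supportofF})\cap\{r\in[A_1,A_2]\}$. One then applies Cauchy--Schwarz directly: $\int_{\rm support}|T\Psi|^2 \lesssim \varepsilon^{-1}\sup_\tau \mathbb{E}_{\widetilde\Sigma_\tau,0}[\Psi]$ by the coarea formula over the time-extent $\varepsilon^{-1}$, and this sup is already on the right-hand side of $(\ref{estimatforhpm2})$ with the generous weight $\varepsilon^{-2}$. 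Combined with $(\ref{prwto.bound.veou.prop})$ for $\int|\mathfrak{G}|^2$, this bounds $\mathfrak{H}_1$ by the first three lines of $(\ref{estimatforhpm2})$ with no integration by parts needed. The key point you missed is that, unlike in Proposition~\ref{newsyntheprop} where $\mathcal{K}$ had to be absorbed into the \emph{degenerate} fixed-frequency norm $\|\mathfrak{d}\Psi\|^2$, here $\mathfrak{H}$ is only being bounded by the right-hand side of $(\ref{estimatforhpm2})$, which contains the non-degenerate flux $\mathbb{E}_{\widetilde\Sigma_\tau,0}[\Psi]$ controlling $T\Psi$. The $\omega$-term is therefore not the obstacle here; it is handled identically to $\mathfrak{H}^{\rm away}$.
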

\begin{proof}
Recalling
\[
\mathfrak{H}[\Psi^{[\pm 2]}_{\text{\Rightscissors}}]
=\mathfrak{H}^{\rm trap}[\Psi^{[\pm 2]}_{\text{\Rightscissors}}]+
\mathfrak{H}^{\rm away}[\Psi^{[\pm 2]}_{\text{\Rightscissors}}]
\]
let us further partition 
$\mathfrak{H}^{\rm trap}[\Psi^{[\pm 2]}_{\text{\Rightscissors}}]$
as $\mathfrak{H}^{\rm trap}[\Psi^{[\pm 2]}_{\text{\Rightscissors}}]=
\mathfrak{H}_1+\mathfrak{H}_2$ where 
we define
\begin{equation}
\label{where.we.def1}
\mathfrak{H}_1[\Psi^{[\pm 2]}_{\text{\Rightscissors}}] = \int_{-\infty}^{\infty} d\omega \sum_{m\ell}
\int_{A_1^*}^{A_2^*}
-E\omega{\rm Im}\left({\mathfrak G}^{[\pm2]} \overline{\Psi^{[\pm2]}}\right)dr^*,
\end{equation}
\begin{equation}
\label{where.we.def2}
\mathfrak{H}_2[\Psi^{[\pm 2]}_{\text{\Rightscissors}}] =
\int_{-\infty}^{\infty} d\omega \sum_{m\ell}
\int_{A_1^*}^{A_2^*}\left(
-2f{\rm Re}\left({\Psi^{[\pm2]}}'\overline{{\mathfrak G}^{[\pm2]} }\right) -f'{\rm Re}\left(\Psi^{[\pm2]}\overline{{\mathfrak G}^{[\pm2]} }\right) -2y\text{Re}\left({\Psi^{[\pm2]}}'\overline{{\mathfrak G}^{[\pm2]} }\right)\right)dr^*.
\end{equation}
We will show the above estimate for  
$\mathfrak{H}_1$, $\mathfrak{H}_2$
and $\mathfrak{H}^{\rm away}[\Psi^{[\pm 2]}_{\text{\Rightscissors}}]$.

Let us first deal with the term 
$\mathfrak{H}^{\rm away}[\Psi^{[\pm 2]}_{\text{\Rightscissors}}]$.
This is supported in
\begin{equation}
\label{where.supported.0}
\Big(\{ 0\le \tilde{t}^*\le \varepsilon^{-1} \}\cup \{ \tau_{\rm final}-\varepsilon^{-1}\le 
\tilde{t}^*\le \tau_{\rm final}  \}  \Big)
\cap \{2A_1^*\le r^*\le 2A_2^*\}
\end{equation}
and consists of quadratic terms one of which always contains a $\Psi^{[\pm2]}$-term.
Thus, by Cauchy--Schwarz this can easily be bounded by the first
three lines of the right hand side of $(\ref{estimatforhpm2})$, where an $\varepsilon^{-1}$ 
factor is introduced on the
$\Psi$ term, compensated by an $\varepsilon$ on the other terms. (The extra
$\varepsilon$ factor in $\varepsilon^{-2}$ arises from estimating a spacetime
integral by the supremum. Cf.~the proof of $(\ref{deuto.bound.veou.prop})$.)

For $\mathfrak{H}_1$, by the exact Plancherel formulae of Section~\ref{coeffssec},
the integral $(\ref{where.we.def1})$
transforms into a physical space integral supported
in 
\begin{equation}
\label{where.supported}
\Big(\{ 0\le \tilde{t}^*\le \varepsilon^{-1} \}\cup \{ \tau_{\rm final}-\varepsilon^{-1}\le \tilde{t}^*\le 
\tau_{\rm final}  \}  \Big)
\cap \{A_1^*\le r^*\le A_2^*\}
\end{equation}
which similarly to before, is obviously estimable from the first three lines of
the right hand side of $(\ref{estimatforhpm2})$. (In fact,
one could replace the factor $\varepsilon^{-2}$ with $1$, since,
just as
in the proof of $(\ref{prwto.bound.veou.prop})$,  $\tilde{t}^*$
derivatives of the cutoff $\Xi$ always generate extra $\varepsilon$ factors; we will use this 
idea below for estimating the remaining term.)

For $\mathfrak{H}_2$, we first apply Cauchy--Schwarz,
introducing a $\varepsilon^{-1}$,
\[
\left|\mathfrak{H}_2^{[\pm2]} \right|\lesssim \int_{-\infty}^{\infty}\sum_{m\ell}\int_{A_1^*}^{A_2^*}dr^*
\varepsilon^{-1} \|\mathfrak{G}^{(a\omega)}_{m\ell}\|^2
+ \varepsilon \|(\Psi^{[\pm2], (a\omega)}_{m\ell},
(\Psi')^{[\pm2], (a\omega)}_{m\ell})\|^2,
\]
where we have used $(\ref{unif.f.bnds})$ to  bound the $f$, $f'$ and $y$ factors uniformly over
frequencies.
We now apply Plancherel.
We note that by Proposition~\ref{veo.prop.edw},
the first term on the right hand side
is bounded by $\varepsilon^{-1}\times$ the right hand side of
 $(\ref{prwto.bound.veou.prop})$ while the second term is manifestly bounded
 by
 \[
 \varepsilon\mathbb{I}^{\rm trap}[\Psi^{[\pm2]}](0,\tau_{\rm final}).
 \]
We obtain $(\ref{estimatforhpm2})$ for $\mathfrak{H}_2$, finishing the proof.
\end{proof}

\begin{proposition}
For sufficiently small $a_0\ll \varepsilon\ll 1$, we have
\begin{align}
\label{almost.there}
\left|\mathfrak{H}_2^{[\pm2]} \right|
\lesssim \varepsilon^{-3}\mathbb{D}^{[\pm2]}(0)
\end{align}
\end{proposition}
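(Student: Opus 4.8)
The plan is to take the bound on $\mathfrak H[\Psi^{[\pm 2]}_{\text{\Rightscissors}}]$ obtained in Proposition~\ref{prop:soclose} and replace every term on its right-hand side by the data quantity $\mathbb D^{[\pm 2]}(0)$, at the cost of negative powers of $\varepsilon$, by invoking the estimates of Section~\ref{retrievingfut}. The key point is that each term on the right of $(\ref{estimatforhpm2})$ is supported either near $\tilde t^*=0$ (in the strip $0\le \tilde t^*\le\varepsilon^{-1}$) or near $\tilde t^*=\tau_{\rm final}$ (in the strip $\tau_{\rm final}-\varepsilon^{-1}\le\tilde t^*\le\tau_{\rm final}$), or else is a spacetime-integrated trapped quantity over $(0,\tau_{\rm final})$. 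The strips near $\tilde t^*=0$ are controlled directly by the local-in-time Proposition~\ref{localintimehereprop} applied with $\tau_1=0$, $\tau_{\rm step}=\varepsilon^{-1}$ (together with Proposition~\ref{proptocontrolcutoff}), which gives bounds of the form $\lesssim \varepsilon^{-1}e^{C\varepsilon^{-1}}\mathbb D^{[\pm 2]}(0)$ — but since $\varepsilon$ will ultimately be fixed depending only on $M$, this is simply $\lesssim_\varepsilon \mathbb D^{[\pm 2]}(0)$, and writing it crudely as $\varepsilon^{-3}\mathbb D^{[\pm 2]}(0)$ absorbs all such constants (for $\varepsilon$ small and $a_0\ll\varepsilon$).

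First I would handle the ``$\tilde t^*=0$'' contributions: $\mathbb I_{[\eta]}[\uppsi^{[\pm2]}](0,\varepsilon^{-1})$, $\mathbb I_{[\eta]}[\upalpha^{[\pm2]}](0,\varepsilon^{-1})$, and the supremum of $\mathbb E_{\widetilde\Sigma_\tau,0}[\Psi^{[\pm2]}]$ over $0\le\tau\le\varepsilon^{-1}$. By $(\ref{second.of.the.haves})$ and $(\ref{first.of.the.haves})$ of Proposition~\ref{localintimehereprop} (with $\tau_1=0$, $\tau_{\rm step}=\varepsilon^{-1}$), each of these is bounded by a constant depending only on $M$ and $\varepsilon$ times $\mathbb E_{\widetilde\Sigma_\tau,0}[\Psi^{[\pm2]}](0)+\mathbb E_{\widetilde\Sigma_\tau,[\eta]}[\uppsi^{[\pm2]}](0)+\mathbb E_{\widetilde\Sigma_\tau,[\eta]}[\upalpha^{[\pm2]}](0)\le\mathbb D^{[\pm2]}(0)$. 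Multiplying by the accompanying $\varepsilon$ (resp.\ $\varepsilon^{-2}$) factor in $(\ref{estimatforhpm2})$, these contribute $\lesssim\varepsilon^{-2}\mathbb D^{[\pm 2]}(0)$, comfortably within $\varepsilon^{-3}\mathbb D^{[\pm 2]}(0)$.

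Next I would handle the ``$\tilde t^*=\tau_{\rm final}$'' contributions, namely $\mathbb I_{[\eta]}[\uppsi^{[\pm2]}](\tau_{\rm final}-\varepsilon^{-1},\tau_{\rm final})$, $\mathbb I_{[\eta]}[\upalpha^{[\pm2]}](\tau_{\rm final}-\varepsilon^{-1},\tau_{\rm final})$ and $\sup_{\tau_{\rm final}-\varepsilon^{-1}\le\tau\le\tau_{\rm final}}\mathbb E_{\widetilde\Sigma_\tau,0}[\Psi^{[\pm2]}](\tau)$. These are the genuine obstacle, because near $\tilde t^*=\tau_{\rm final}$ we cannot simply run forward from data; instead I would apply Proposition~\ref{localintimehereprop} with $\tau_1=\tau_{\rm final}-\varepsilon^{-1}$, $\tau_{\rm step}=\varepsilon^{-1}$, which reduces everything to the energies $\mathbb E_{\widetilde\Sigma_\tau,0}[\Psi^{[\pm2]}]$, $\mathbb E_{\widetilde\Sigma_\tau,[\eta]}[\uppsi^{[\pm2]}]$, $\mathbb E_{\widetilde\Sigma_\tau,[\eta]}[\upalpha^{[\pm2]}]$ evaluated at $\tilde t^*=\tau_{\rm final}-\varepsilon^{-1}$, and these are in turn controlled by $(\ref{evergeia.edw})$ and $(\ref{eivai.kiauto})$ of Proposition~\ref{using.auxil.prop}'s preceding proposition (which bound them by $|a|\,\mathfrak H[\Psi^{[\pm 2]}_{\text{\Rightscissors}}]+\varepsilon^{-1}\mathbb D^{[\pm 2]}(0)$ and $\mathfrak H[\Psi^{[\pm 2]}_{\text{\Rightscissors}}]+\varepsilon^{-1}\mathbb D^{[\pm 2]}(0)$ respectively). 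Substituting, the right-hand side of $(\ref{estimatforhpm2})$ becomes $\lesssim \mathfrak H[\Psi^{[\pm 2]}_{\text{\Rightscissors}}]+\varepsilon^{-3}\mathbb D^{[\pm2]}(0)+\varepsilon\,\mathbb I^{\rm trap}[\Psi^{[\pm2]}](0,\tau_{\rm final})$, where the coefficient of $\mathfrak H$ carries either an explicit $\varepsilon$ (from the $\uppsi,\upalpha$ terms) or an $|a|$ (from the $\Psi$ energy term); similarly the trapped term is bounded using $\mathbb I^{\rm trap}[\Psi^{[\pm2]}](0,\tau_{\rm final})\lesssim\mathbb I^{\rm deg}_\eta[\Psi^{[\pm2]}](0,\tau_{\rm final})$ and $(\ref{using.auxil})$, producing again $\varepsilon\cdot(\mathfrak H+\varepsilon^{-2}\mathbb D^{[\pm2]}(0))$. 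The resulting self-referential inequality
\[
\left|\mathfrak H[\Psi^{[\pm 2]}_{\text{\Rightscissors}}]\right|\lesssim (\varepsilon+|a|)\left|\mathfrak H[\Psi^{[\pm 2]}_{\text{\Rightscissors}}]\right|+\varepsilon^{-3}\mathbb D^{[\pm2]}(0)
\]
is absorbed by choosing $\varepsilon$ small (depending only on $M$) and then $a_0\ll\varepsilon$, yielding $(\ref{almost.there})$. The only subtlety to watch is bookkeeping of the $\varepsilon$-powers: one must make sure that wherever $(\ref{using.auxil})$ or $(\ref{evergeia.edw})$ is used, the $\varepsilon^{-1}$ or $\varepsilon^{-2}$ it carries, combined with the prefactors in $(\ref{estimatforhpm2})$, stays below $\varepsilon^{-3}$ — which it does, since the worst case is $\varepsilon^{-2}\cdot\varepsilon^{-1}$ never occurring simultaneously (the $\varepsilon^{-2}$ prefactor multiplies a supremum already controlled at order $\varepsilon^{-1}$, giving $\varepsilon^{-3}$ exactly in the extreme case).
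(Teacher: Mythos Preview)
Your approach is correct and follows essentially the same strategy as the paper: substitute the global-in-time bounds from Section~\ref{retrievingfut} into the right-hand side of $(\ref{estimatforhpm2})$, obtain a self-referential inequality for $\mathfrak H$, and absorb using smallness of $\varepsilon$ and then of $a_0$. The paper's own proof is more compact because it invokes Proposition~\ref{using.auxil.prop} directly---the estimates $(\ref{using.auxil})$ and $(\ref{using.auxil.again})$ already bound \emph{all} the spacetime integrals and the energy supremum over the full interval $[0,\tau_{\rm final}]$, so there is no need to treat the strips near $\tilde t^*=0$ and near $\tilde t^*=\tau_{\rm final}$ separately or to re-invoke Proposition~\ref{localintimehereprop}. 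Your route through Proposition~\ref{localintimehereprop} and $(\ref{evergeia.edw})$--$(\ref{eivai.kiauto})$ essentially re-derives what Proposition~\ref{using.auxil.prop} has already packaged.

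One bookkeeping remark: the coefficient of $\mathfrak H$ arising from the $\varepsilon^{-2}\sup\mathbb E[\Psi]$ term is not $|a|$ but $\varepsilon^{-2}|a|$ (from $(\ref{using.auxil.again})$), and your more indirect route picks up additional factors of the form $|a|\varepsilon^{-3}e^{C\varepsilon^{-1}}$ via $(\ref{first.of.the.haves})$. This is harmless under the paper's convention that ``$a_0\ll\varepsilon$'' allows $a_0$ to be chosen small depending on $\varepsilon$ in any way required, but your parenthetical claim that the coefficient is simply ``an $|a|$'' understates the dependence.
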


\begin{proof}
Apply Proposition~\ref{proptocontrolcutoff} of 
Section~\ref{auxil.est.sec} to the estimate of Proposition \ref{prop:soclose} and combine with Proposition \ref{using.auxil.prop}.
\end{proof}

Now let $\varepsilon$ be fixed by the requirement of the above proposition.
From $(\ref{almost.there})$ and Proposition \ref{using.auxil.prop}
all statements of Theorem~\ref{degenerateboundednessandILED} now follow.

\subsection{Note on the axisymmetric case: a pure physical-space  proof}
\label{axi-note}

We note that in the axisymmetric case $\partial_\phi \upalpha^{[\pm2]}=0$, 
the physical space multiplier and transport estimates
of Section~\ref{Physspacesecnew} 
can be applied directly \emph{globally} in the region 
$\widetilde{\mathcal{R}}(\tau_1,\tau_2)$, i.e.~without the restriction
to 
$\widetilde{\mathcal{R}}^{\rm away}(\tau_1,\tau_2)$.
This leads already to a much shorter
proof of Theorem~\ref{degenerateboundednessandILED} 
which can be expressed entirely with 
physical space methods.
We explain how this physical-space proof can be distilled directly from the more
general calculations
of Section~\ref{ODEmegasec} done at fixed frequency.

Given $|a|<a_0\ll M$ sufficiently small,
let $r_{\rm trap}$ be the unique value  given by Lemma~\ref{lem.max} and define 
$f$ by~$(\ref{simple.def.of.f})$ and $y$ by $\delta_1((1-\chi )f+\chi f^3-\delta_1\tilde\chi(r)r^{-\eta})$
where $\chi$ is the cutoff appearing in $(\ref{y1def.here})$ 
and $\tilde\chi$ is the cutoff appearing in
$(\ref{newdefshere2})$. 
The calculation of Section~\ref{gonerangesec} now shows that
the coercivity property of the physical space current
$I^f+I^y$ holds globally
in $\widetilde{\mathcal{R}}^{\rm trap}(\tau_1,\tau_2)$
and thus
$(\ref{fundcoerci})$ holds when integrated globally in
$\widetilde{\mathcal{R}}(\tau_1,\tau_2)$, i.e.~without restriction to the ``away'' region
and with $\mathbb{I}^{\rm away}$ replaced by $\mathbb{I}^{\rm deg}_\eta$.
One also produces an estimate
for the future boundary term:
\begin{equation}
\label{got.produced}
\mathbb{E}_\eta \left[\Psi^{\pm2}\right](\tau_2)
\end{equation}
in view of  property 3.~of the proof of Proposition~\ref{multpropnofreq}.

We apply this estimate then
 in the region $\widetilde{\mathcal{R}}(0,\tau_2)$   directly to $\Psi^{[\pm2]}$
arising from a solution $\upalpha^{[\pm2]}$ of the homogeneous
Teukolsky equation $(\ref{Teukphysic})$.

We must estimate the error term arising from the coupling $\mathcal{J}^{[\pm2]}$.
For this we turn first to global transport estimates.

Note that in the axisymmetric case, the simple estimate
applied in Section~\ref{transestsec} for frequencies in the range
$\cal{F}^\sharp$ applies
now for all frequencies (since $\cal{F}^\flat=\emptyset$ if $m=0$) and
 corresponds to commuting the transport equations
by $\partial_{r*}$ and integrating by parts.
This physical space procedure, say in the $[+2]$ case,
allows one to obtain the estimate
\begin{align}
\label{newversion.middle}
\nonumber
\mathbb{E}_{r=A_1}\left[\upalpha^{[+2]}\right] (0,\tau_2)
+\mathbb{E}_{r=A_1}\left[\uppsi^{[+2]}\right] (0,\tau_2)
+\mathbb{E}^{\rm trap}\left[\upalpha^{[+2]}\right](\tau_2)
+\mathbb{E}^{\rm trap}\left[\upalpha^{[+2]}\right](\tau_2)\\
+\mathbb{I}^{\rm trap} \left[\upalpha^{[+2]}\right](0,\tau_2)
+\mathbb{I}^{\rm trap}\left[\uppsi^{[+2]}\right](0,\tau_2)
 \lesssim \mathbb{I}^{\rm trap}\left[\Psi^{[+2]}\right](0,\tau_2)
 +\mathbb{D}^{[+2]}(0).
\end{align}
Note that $\mathbb{I}^{\rm trap}(0,\tau_2)$ is degenerate and thus 
controlled by $\mathbb{I}^{\rm deg}_\eta(0,\tau_2)$.
Summing $(\ref{newversion.middle})$ with 
the estimates obtained from $(\ref{weightedtransportright})$ and $(\ref{weightedtransport.left})$, 
as in the proof of Proposition~\ref{firstinaline},
allows one to estimate finally
\begin{equation}
\label{9atoxoume}
\mathbb{E}_{\eta}\left[\upalpha^{[+2]}\right](\tau_2)
+\mathbb{E}_{\eta} \left[\upalpha^{[+2]}\right](\tau_2)
+\mathbb{I}_{[\eta]} \left[\upalpha^{[\pm2]}\right](\tau_1,\tau_2)
+\mathbb{I}_{[\eta]} \left[\uppsi^{[\pm2]} \right]  (\tau_1,\tau_2) \lesssim \mathbb{I}^{\rm deg}_\eta 
 \left[\Psi^{[\pm2]}\right](\tau_1,\tau_2)+
\mathbb{D}^{[\pm2]}(0).
\end{equation}

With this we estimate the new contribution to
$\mathcal{J}^{\pm2]}$ coming from
the region $\mathcal{R}^{\rm trap}(\tau_1,\tau_2)$. 
The only difficult term is the one arising from the $T$ multiplier.
In the fixed frequency estimate of Section~\ref{completing}, this corresponded to 
passing an $\omega$ from $\psi$ to $\Psi$ before applying
Cauchy--Schwarz. 
In physical space, this
corresponds simply to integration by parts in $t$.
By this physical space estimate,
we obtain that the resulting term is bounded by 
\begin{align}
\nonumber
\label{bounded.by.here.as.well}
 |a| \mathbb{I}^{\rm trap}[\Psi](0,\tau_2) + |a|
\mathbb{I}^{\rm trap} [\upalpha](0,\tau_2)
+|a| \mathbb{I}^{\rm trap} [\uppsi] (0,\tau_2) \\
+ |a| \mathbb{E}^{\rm trap}_{\widetilde{\Sigma}_{\tau}}\left[\Psi^{[\pm2]}\right](\tau_2)
+|a|\mathbb{E}^{\rm trap}_{\widetilde{\Sigma}_{\tau}}\left[\upalpha^{[\pm2]}\right](\tau_2)
+|a|\mathbb{E}^{\rm trap}_{\widetilde{\Sigma}_{\tau}}\left[\uppsi^{[\pm2]}\right](\tau_2)
+|a| \mathbb{D}^{[\pm2]}(0),
\end{align}
where the future boundary terms arise from this integration by parts.
(Note that all other terms in $\mathcal{J}^{[\pm2]}$ 
are estimated by the first line of $(\ref{bounded.by.here.as.well})$ alone.)
Combining with the original statement of Proposition~\ref{firstinaline},
this yields
\begin{equation}
\label{kovteuoume}
\mathbb{E}_{\widetilde{\Sigma}_\tau,\eta}\left[\Psi^{[\pm2]}\right]
+\mathbb{I}^{\rm deg}_\eta\left[\Psi^{[\pm2]}\right] (\tau_1,\tau_2)
\lesssim \mathbb{D}^{[\pm2]}(0) + (\ref{bounded.by.here.as.well}).
\end{equation}

In view of $ (\ref{9atoxoume})$, for sufficiently small $|a|<a_0\ll M$,
one can absorb the  terms $(\ref{bounded.by.here.as.well})$ on the right hand
side of $(\ref{kovteuoume})$ into the left hand side.
The remaining statements 
of Theorem~\ref{degenerateboundednessandILED}
follow immediately.

\section{The redshift effect and its associated Morawetz estimate} \label{sec:rsim}
In this section we will obtain statement~2.~of Theorem~\ref{finalstatetheor}
concerning the boundedness and integrated local energy
decay of the so-called red-shifted energy.
The required statement is  contained in {\bf Theorem~\ref{prop:basicnondeg}} below.

\subsection{Statement of red-shifted boundedness and integrated decay}

\begin{theorem} \label{prop:basicnondeg}
Let $\upalpha^{[\pm2]}$, $\Psi^{[\pm2]}$ and
$\uppsi^{[\pm2]}$ be as in Theorem~\ref{finalstatetheor}.
Then the following holds 
for any $\tau_2 > \tau_1\geq 0$.

For $\boxed{s=+2}$
\begin{itemize}
\item the basic degenerate Morawetz estimate
\begin{align} \label{basdegmor}
& \ \overline{\mathbb{I}}^{\rm deg}_{\eta} \left[\Psi^{[+2]}\right] \left(\tau_1,\tau_2\right) 
+\mathbb{I}_\eta \left[\uppsi^{[+2]}\right]  \left(\tau_1,\tau_2\right)
+\mathbb{I}_\eta \left[\upalpha^{[+2]}\right]  \left(\tau_1,\tau_2\right)
\nonumber \\ 
 \lesssim & \ \ \overline{\mathbb{E}}_{\widetilde\Sigma_{\tau},\eta} \left[\Psi^{[+2]}\right] \left(\tau_1\right)  +\mathbb{E}_{\widetilde\Sigma_{\tau},\eta} \left[\uppsi^{[+2]}\right] \left(\tau_1\right)  
+ \mathbb{E}_{\widetilde\Sigma_{\tau},\eta} \left[\upalpha^{[+2]}\right] \left(\tau_1\right) \, ,
\end{align}

\item the basic non-degenerate Morawetz estimate
\begin{align} \label{basnondegmor}
\overline{\mathbb{I}}_{\eta} \left[\Psi^{[+2]}\right] \left(\tau_1,\tau_2\right) 
 \lesssim & \ \ \overline{\mathbb{E}}_{\widetilde\Sigma_{\tau},\eta} \left[\Psi^{[+2]}\right] \left(\tau_1\right)  +\mathbb{E}_{\widetilde\Sigma_{\tau},\eta} \left[\uppsi^{[+2]}\right] \left(\tau_1\right)  
+ \mathbb{E}_{\widetilde\Sigma_{\tau},\eta} \left[\upalpha^{[+2]}\right] \left(\tau_1\right) \nonumber \\ & + \overline{\mathbb{E}}_{\widetilde\Sigma_{\tau},\eta} \left[T\Psi^{[+2]}\right] \left(\tau_1\right)  +\mathbb{E}_{\widetilde\Sigma_{\tau},\eta} \left[T\uppsi^{[+2]}\right] \left(\tau_1\right)  
+ \mathbb{E}_{\widetilde\Sigma_{\tau},\eta} \left[T\upalpha^{[+2]}\right] \left(\tau_1\right) \, ,
\end{align}
\item the $\eta$-weighted energy boundedness estimate
\begin{align} \label{ewbnd}
& \ \ \overline{\mathbb{E}}_{\mathcal{H}^+} \left[\Psi^{[+2]}\right] \left(\tau_1,\tau_2\right) + \overline{\mathbb{E}}_{\widetilde\Sigma_{\tau},\eta} \left[\Psi^{[+2]}\right] \left(\tau_2\right) \nonumber \\
 \lesssim & \ \ \overline{\mathbb{E}}_{\widetilde\Sigma_{\tau},\eta} \left[\Psi^{[+2]}\right] \left(\tau_1\right)  +\mathbb{E}_{\widetilde\Sigma_{\tau},\eta} \left[\uppsi^{[+2]}\right] \left(\tau_1\right)  
+ \mathbb{E}_{\widetilde\Sigma_{\tau},\eta} \left[\upalpha^{[+2]}\right] \left(\tau_1\right) \, .
\end{align}
\end{itemize}
For $\boxed{s=-2}$
\begin{itemize}
\item the basic degenerate Morawetz estimate
\begin{align} \label{basdegmor2}
&\overline{\mathbb{I}}^{\rm deg}_{\eta} \left[\Psi^{[-2]}\right] \left(\tau_1,\tau_2\right) 
+\mathbb{I} \left[\uppsi^{[-2]}\right]  \left(\tau_1,\tau_2\right)
+\mathbb{I} \left[\upalpha^{[-2]}\right]  \left(\tau_1,\tau_2\right)
\nonumber \\ 
 \lesssim & \ \ \overline{\mathbb{E}}_{\widetilde\Sigma_{\tau},\eta} \left[\Psi^{[-2]}\right] \left(\tau_1\right)  +\mathbb{E}_{\widetilde\Sigma_{\tau}} \left[\uppsi^{[-2]}\right] \left(\tau_1\right)  
+ \mathbb{E}_{\widetilde\Sigma_{\tau}} \left[\upalpha^{[-2]}\right] \left(\tau_1\right) \, ,
\end{align}

\item the basic non-degenerate Morawetz estimate
\begin{align} \label{basnondegmor2}
\overline{\mathbb{I}}_{\eta} \left[\Psi^{[-2]}\right] \left(\tau_1,\tau_2\right) 
 \lesssim & \ \ \overline{\mathbb{E}}_{\widetilde\Sigma_{\tau},\eta} \left[\Psi^{[-2]}\right] \left(\tau_1\right)  +\mathbb{E}_{\widetilde\Sigma_{\tau}} \left[\uppsi^{[-2]}\right] \left(\tau_1\right)  
+ \mathbb{E}_{\widetilde\Sigma_{\tau}} \left[\upalpha^{[-2]}\right] \left(\tau_1\right) \nonumber \\ & + \overline{\mathbb{E}}_{\widetilde\Sigma_{\tau},\eta} \left[T\Psi^{[-2]}\right] \left(\tau_1\right)  +\mathbb{E}_{\widetilde\Sigma_{\tau}} \left[T\uppsi^{[-2]}\right] \left(\tau_1\right)  
+ \mathbb{E}_{\widetilde\Sigma_{\tau}} \left[T\upalpha^{[-2]}\right] \left(\tau_1\right) \, , .
\end{align}
\item the $\eta$-weighted energy boundedness estimate
\begin{align} \label{ewbnd2}
& \ \ \overline{\mathbb{E}}_{\mathcal{H}^+} \left[\Psi^{[-2]}\right] \left(\tau_1,\tau_2\right) +\overline{\mathbb{E}}_{\widetilde\Sigma_{\tau},\eta} \left[\Psi^{[-2]}\right] \left(\tau_2\right)  \nonumber \\ \lesssim & \ \ \overline{\mathbb{E}}_{\widetilde\Sigma_{\tau},\eta} \left[\Psi^{[-2]}\right] \left(\tau_1\right)  +\mathbb{E}_{\widetilde\Sigma_{\tau}} \left[\uppsi^{[-2]}\right] \left(\tau_1\right)  
+ \mathbb{E}_{\widetilde\Sigma_{\tau}} \left[\upalpha^{[-2]}\right] \left(\tau_1\right) \, .
\end{align}
\end{itemize}

\end{theorem}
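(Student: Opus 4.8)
The idea is to upgrade the degenerate estimates of Theorem~\ref{degenerateboundednessandILED} near the horizon by adding the classical red-shift multiplier of~\cite{DafRod2}, exactly as in the corresponding step in~\cite{holzstabofschw}. Since Theorem~\ref{degenerateboundednessandILED} already provides $\mathbb{I}^{\rm deg}_\eta[\Psi^{[\pm2]}]$, $\mathbb{I}_{[\eta]}[\uppsi^{[\pm2]}]$, $\mathbb{I}_{[\eta]}[\upalpha^{[\pm2]}]$ and the $\eta$-weighted flux bounds, what remains is purely local analysis in a neighbourhood $\{r_+\le r\le r_++\delta_{\rm rs}\}$ of the event horizon, together with the already-known good behaviour of the relevant vector fields there. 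I would organise the argument into three steps: (i) the red-shift estimate for $\Psi^{[\pm2]}$, (ii) upgrading the transport estimates for $\uppsi^{[\pm2]}$ and $\upalpha^{[\pm2]}$ to the non-degenerate (overbar) norms, and (iii) commuting with $T$ to obtain the non-degenerate Morawetz estimate $(\ref{basnondegmor})$, $(\ref{basnondegmor2})$.

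\textbf{Step (i): the red-shift multiplier for $\Psi^{[\pm2]}$.} The generalised Regge--Wheeler operator $\mathfrak{R}^{[\pm2]}$ of $(\ref{RWoprdefhere})$ has principal part $-\Box_g$ and smooth coefficients up to $\mathcal{H}^+$, and $K=T+\upomega_+\Phi$ is null on $\mathcal{H}^+$ and timelike just outside it (Section~\ref{kerrmetricsection}); moreover for $|a|<a_0\ll M$ all the structures are close to Schwarzschild. Hence the construction of~\cite{DafRod2} applies verbatim: there is a vector field $N$, timelike up to and on $\mathcal{H}^+$, supported in $r\le r_++\delta_{\rm rs}$, such that the associated current $\mathcal{E}[N]\Psi^{[\pm2]}$ has a spacetime term coercive over \emph{all} derivatives of $\Psi^{[\pm2]}$ near $\mathcal{H}^+$, with an error term supported in $\{r_++\delta_{\rm rs}/2\le r\le r_++\delta_{\rm rs}\}$ bounded away from trapping. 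Adding a large multiple of this current to the estimate of Theorem~\ref{degenerateboundednessandILED}, the error term is absorbed by $\mathbb{I}^{\rm deg}_\eta[\Psi^{[\pm2]}]$ (which is non-degenerate away from $r=3M$), the inhomogeneity $\mathcal{J}^{[\pm2]}$ contributes, after Cauchy--Schwarz, terms $|a|(\mathbb{I}_{[\eta]}[\uppsi^{[\pm2]}]+\mathbb{I}_{[\eta]}[\upalpha^{[\pm2]}])$ which are themselves controlled by Theorem~\ref{degenerateboundednessandILED} for $a_0$ small, and the boundary flux on $\mathcal{H}^+$ becomes $\overline{\mathbb{E}}_{\mathcal{H}^+}[\Psi^{[\pm2]}]$ while the flux on $\widetilde\Sigma_{\tau_2}$ becomes $\overline{\mathbb{E}}_{\widetilde\Sigma_\tau,\eta}[\Psi^{[\pm2]}]$. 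This yields $(\ref{basdegmor})$ and $(\ref{ewbnd})$ (and their $s=-2$ analogues $(\ref{basdegmor2})$, $(\ref{ewbnd2})$), modulo the transport quantities which are handled next.

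\textbf{Step (ii): non-degenerate transport estimates, and Step (iii): $T$-commutation.} For the overbar energies of $\uppsi^{[\pm2]}$ and $\upalpha^{[\pm2]}$, which include the $\frac{r^2+a^2}{\Delta}\underline L$-derivative near $\mathcal{H}^+$, I would revisit the transport identities $(\ref{idp1})$--$(\ref{idp2})$, $(\ref{klp})$--$(\ref{klo})$ in the left region: integrating these $\underline L$- (resp.\ $L$-)transport relations toward $\mathcal{H}^+$ with the weight $\Delta^{-1}$ absorbed produces precisely the improved derivative, the right-hand side being controlled by $\overline{\mathbb{I}}^{\rm deg}_\eta[\Psi^{[\pm2]}]$ from Step (i) plus the non-degenerate horizon flux $\overline{\mathbb{E}}_{\mathcal{H}^+}[\Psi^{[\pm2]}]$; this is exactly the Schwarzschild computation of~\cite{holzstabofschw}, which is stable under the $|a|\ll M$ perturbation. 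Finally, the non-degenerate Morawetz estimates $(\ref{basnondegmor})$, $(\ref{basnondegmor2})$ follow by the standard device: $T\upalpha^{[\pm2]}$ again satisfies the homogeneous Teukolsky equation, so $T\Psi^{[\pm2]}$, $T\uppsi^{[\pm2]}$ satisfy the same system; applying the already-established degenerate estimate $(\ref{basdegmor})$ to the $T$-commuted quantities controls all derivatives of $\Psi^{[\pm2]}$ \emph{except} those transversal to $\mathcal{H}^+$ with a $T$ removed, and one recovers the lost $\partial_{r^*}$-derivative at $r=3M$ by using equation $(\ref{RWtypeinthebulk})$ itself (elliptically in the trapped region, treating $\omega^2\Psi\sim (T\Psi)$ terms as known), upgrading $\overline{\mathbb{I}}^{\rm deg}_\eta$ to $\overline{\mathbb{I}}_\eta$. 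The main obstacle is purely bookkeeping: one must check that every error term generated by $\mathcal{J}^{[\pm2]}$ and by the $N$-multiplier error region carries either a factor $|a|$ or is supported away from trapping, so that it can be absorbed; since all constructions here are perturbations of the Schwarzschild ones of~\cite{holzstabofschw} this presents no genuine difficulty, and I expect the proof to be essentially a transcription of the corresponding sections of~\cite{holzstabofschw} with the extra coupling terms tracked.
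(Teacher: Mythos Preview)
Your Step (i) is essentially the paper's proof: the paper applies exactly the multiplier $Y=\frac{1}{w}\xi\underline{L}\overline{\Psi}$ (the red-shift multiplier) to the homogeneous Regge--Wheeler equation for $\Psi^{[\pm2]}$, observes that the first term $\frac{1}{2}\xi\frac{w'}{w^2}|\underline{L}\Psi|^2$ in the bulk identity~$(\ref{redshiftIY})$ is manifestly positive and supplies the missing $\frac{r^2+a^2}{\Delta}\underline{L}$-derivative, that the remaining bulk and boundary terms are controlled by the un-overbarred energies already obtained in Theorem~\ref{degenerateboundednessandILED}, and that the $\mathcal{J}^{[\pm2]}$-error is absorbed via Cauchy--Schwarz into $\mathbb{I}_{[\eta]}[\uppsi^{[\pm2]}]+\mathbb{I}_{[\eta]}[\upalpha^{[\pm2]}]$. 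So the heart of your plan is correct and matches the paper.

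Two remarks on the rest. First, your Step (ii) is unnecessary for this theorem: the overbar in Theorem~\ref{prop:basicnondeg} appears only on the $\Psi^{[\pm2]}$ energies, never on $\uppsi^{[\pm2]}$ or $\upalpha^{[\pm2]}$, so there is nothing to upgrade via transport here (the overbarred transport energies for $s=-2$ appear only later, in Proposition~\ref{prop:weightedtransport2}). Second, in Step (iii) you have the degeneration backwards: by the definition of $\mathbb{I}^{\rm deg}_p$ the $(\underline{L}-L)\Psi=2\partial_{r^*}\Psi$ term is \emph{already} non-degenerate at trapping; what degenerates is $T\Psi$ and the angular derivatives. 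The paper handles~$(\ref{basnondegmor})$ more simply than your elliptic argument: it first observes that the un-overbarred~$(\ref{basnondegmor})$ is a ``simple corollary'' of the un-overbarred~$(\ref{basdegmor})$ and~$(\ref{ewbnd})$ applied to $\Psi$ and $T\Psi$ (the $T$-commuted degenerate estimate controls $|T\Psi|^2$ without degeneration, and the equation then recovers the angular term), and then the overbarred~$(\ref{basnondegmor})$ follows just by adding the overbarred~$(\ref{basdegmor})$ established via red-shift. Your route would work but is more circuitous.
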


\subsection{Proof of Theorem \ref{prop:basicnondeg}}
We only prove the $s=+2$ case. The $s=-2$ case is completely analogous and slightly easier because the term $\mathcal{J}^{[-2]}$ has stronger degeneration near the event horizon. 
Note that in Section \ref{iledsec} we have already proven the estimates (\ref{basdegmor}) and (\ref{ewbnd}) provided we drop all overbars from the energies that appear. The estimate (\ref{basnondegmor}), which does not degenerate in a neighbourhood of $r=3M$ but loses a derivative, is a simple corollary of (\ref{basdegmor}) and (\ref{ewbnd}) again provided we drop all overbars from the energies.
Hence the only task left is to improve the $\underline{L}$ derivative in the energies that appear. This is achieved using the standard redshift multiplier:

\paragraph{The redshift identity.}
Recall the notational conventions of Section~\ref{sec:multid}.
Multiplying (\ref{RWtypeinthebulk}) by $Y=\frac{1}{w} \xi \underline{L}\overline{\Psi}$ (with $\xi$ a smooth radial cut-off function equal to $1$ for $r \in \left[r_+, r_+ +\frac{1}{4}M\right]$ and equal to zero for $r \geq r_+ + \frac{1}{2}M$) and taking the real parts yields (use the formulae of Appendix \ref{sec:comprs})
\begin{align} \label{redshiftphysid}
\underline{L} \big\{F^{Y}_{\underline{L}}\big\} +L \big\{F^{Y}_{L}  \big\} + I^{Y} \equiv {\rm Re} \left(-\left(\mathcal{J}^{[s]} + \mathfrak{G}^{[s]}\right) \frac{1}{w} \xi \underline{L}\overline{\Psi}\right)
\end{align}
where
\begin{align} \label{redshiftFY}
F^{Y}_L 
&=  \frac{1}{2}\frac{1}{w} \xi |\underline{L}\Psi|^2+\frac{1}{2} a \xi \textrm{Re} \left( \Phi \Psi \underline{L} \overline{\Psi} \right) + \frac{1}{2}  \xi a^2 \sin^2\theta \textrm{Re} \left( T\Psi \underline{L}\overline{\Psi}\right),   \\
F^Y_{\underline{L}} &= \frac{1}{2} \xi |\mathring{\slashed{\nabla}}^{[s]}  \Psi |^2 + \frac{1}{2} \left( \xi \left[s^2 -\frac{6Mr}{r^2+a^2} \frac{r^2-a^2}{r^2+a^2} - \frac{7a^2\Delta}{(r^2+a^2)^2}  \right] \right) |\Psi|^2  \nonumber \\
& \ \ \ \ + \frac{a^2}{r^2+a^2}  \xi  |\Phi\Psi|^2 - \frac{1}{2}a\xi \textrm{Re} \left( \Phi\Psi L\overline{\Psi} \right) -\frac{1}{2} \xi a^2 \sin^2\theta |T\Psi|^2 + \frac{1}{2}  \xi a^2 \sin^2\theta \textrm{Re} \left( T\Psi \underline{L}\overline{\Psi}\right),
\\
 \label{redshiftIY}
I^{Y} &=-\frac{1}{2} \left(\frac{\xi}{w}\right)^\prime |\underline{L}\Psi|^2  +\frac{1}{2}\xi^\prime |\mathring{\slashed{\nabla}} \Psi |^2 + \frac{1}{2} \left( \xi \left[s^2 -\frac{6Mr}{r^2+a^2} \frac{r^2-a^2}{r^2+a^2} - \frac{7a^2\Delta}{(r^2+a^2)^2}  \right] \right)^\prime |\Psi|^2 \nonumber \\
& \ \ \  + \frac{2r a \xi}{r^2+a^2}  \textrm{Re} \left( \Phi \Psi \underline{L}\Psi\right)  - \left[ \underline{L} \left( \frac{a^2}{r^2+a^2}  \xi \right)\right] |\Phi\Psi|^2- \frac{1}{2} a \xi^\prime \textrm{Re} \left( \Phi \Psi \left(\underline{L} + L\right) \overline{\Psi}\right)- \frac{1}{2} a\xi \textrm{Re} \left( \Phi \Psi \left[L, \underline{L} \right] \overline{\Psi}\right)
\nonumber \\
& \ \ \ -\frac{1}{2} \xi^\prime a^2 \sin^2\theta | T\Psi|^2 -2sa \cos \theta \xi \textrm{Im} \left(T\Psi \underline{L} \overline{\Psi} \right) .
\end{align}

We apply the identity (\ref{redshiftphysid}) to the equation satisfied by $\Psi^{[+2]}$. In particular, $ \mathfrak{G}^{[s]}=0$ because $\alpha^{[+2]}$ satisfies the \emph{homogeneous} Teukolsky equation.
Upon integration over $\widetilde{\mathcal{R}}\left(\tau_1,\tau_2\right)$ 
(recalling Remark \ref{rem:convert}) we obtain (\ref{basdegmor}) and (\ref{ewbnd}) 
after making the following observations:
\begin{itemize}
\item The first term in $F^Y_L$ and the first term in $F^Y_{\underline{L}}$ are manifestly non-negative and produce precisely the desired improvement in the $\underline{L}$ derivative and the missing angular derivative in the horizon term respectively. All other terms appearing as boundary terms can now be controlled using Cauchy--Schwarz and (\ref{azimuthal}) by the energies without the overbar (sometimes borrowing an $\epsilon$ from the just obtained good $\underline{L}$-derivative term and the good angular term respectively is required).
\item Examining (\ref{redshiftIY}), the term $\frac{1}{2} \xi \frac{w^\prime}{w^2} | \underline{L} \Psi|^2$ is manifestly positive and produces precisely the desired improvement of the $|\underline{L} \Psi|^2$ in the spacetime energy without the overbar. All other terms can be controlled by the spacetime energy without the overbar, sometimes borrowing an $\epsilon$ from the improved $|\underline{L} \Psi|^2$ term.
\item The error term 
\[
\int_{\mathcal{M}\left(\tau_1,\tau_2\right)} \Big|\mathcal{J}^{[+2]} \frac{1}{w} \xi \underline{L}\overline{\Psi}\Big| \frac{1}{\rho^2} \frac{r^2+a^2}{\Delta}
\]
is controlled using Cauchy's inequality with $\epsilon$ and the energies $\mathbb{I}_0 \left[\uppsi^{[+2]}\right]  \left(\tau_1,\tau_2\right)
+\mathbb{I}_0 \left[\upalpha^{[+2]}\right]  \left(\tau_1,\tau_2\right)$.
\end{itemize}

Finally, the estimate (\ref{basnondegmor}) follows from its un-overbarred version by adding the just established (\ref{basdegmor}).

\section{The $r^p$-weighted hierarchy and the main decay result}
\label{rphierarchysec}
To complete the proof of Theorem~\ref{finalstatetheor},
it remains to obtain statements 3.~and 4.~concerning the $r^p$-weighted
hierarchy and polynomial decay. 
The required statement is  contained in {\bf Theorem~\ref{prop:maindecayprop}} below.

\subsection{Statement of the decay theorem}

\begin{theorem} \label{prop:maindecayprop}
Let $\upalpha^{[\pm2]}$, $\Psi^{[\pm2]}$ and
$\uppsi^{[\pm2]}$ be as in Theorem~\ref{finalstatetheor}. Then  the following holds
 for any $\tau > \tau_0=0$.

For $\boxed{s=+2}$ we have
\begin{align}
\overline{\mathbb{E}}_{\widetilde\Sigma_{\tau},\eta} \left[\Psi^{[+2]}\right] \left(\tau\right)   + {\mathbb{E}}_{\widetilde\Sigma_{\tau},\eta} \left[ \uppsi^{[+2]}\right] \left(\tau\right) + \mathbb{E}_{\widetilde\Sigma_{\tau},\eta} \left[ \upalpha^{[+2]}\right] \left(\tau\right) \lesssim \frac{\mathbb{D}_{2, 2} \left[\Psi^{[+2]}, \uppsi^{[+2]}, \upalpha^{[+2]}\right] \left(\tau_0\right)}{\tau^{2-\eta}} \,  \label{step1}
\end{align}
for the initial data energy
\begin{align}
\mathbb{D}_{2, 2} \left[\Psi^{[+2]}, \uppsi^{[+2]}, \upalpha^{[+2]}\right] \left(\tau_0\right) = &\sum_{k=0}^1 \left(\overline{\mathbb{E}}_{\widetilde\Sigma_{\tau},2} \left[T^k\Psi^{[+2]}\right] \left(\tau_0\right) +   \mathbb{E}_{\widetilde\Sigma_{\tau},2} \left[T^k \uppsi^{[+2]}\right] \left(\tau_0\right) + \mathbb{E}_{\widetilde\Sigma_{\tau},2} \left[T^k\upalpha^{[+2]}\right] \left(\tau_0\right)   \right)
\nonumber \\  
 &+ \overline{\mathbb{E}}_{\widetilde\Sigma_{\tau},\eta} \left[T^{2}\Psi^{[+2]}\right] \left(\tau_0\right)  +\mathbb{E}_{\widetilde\Sigma_{\tau},\eta} \left[T^{2}\uppsi^{[+2]}\right] \left(\tau_0\right)  
+ \mathbb{E}_{\widetilde\Sigma_{\tau},\eta} \left[T^{2}\upalpha^{[+2]}\right] \left(\tau_0\right) \nonumber \, .
\end{align}
For $\boxed{s=-2}$ we have
\begin{align} 
\overline{\mathbb{E}}_{\widetilde\Sigma_{\tau},\eta} \left[\Psi^{[-2]}\right] \left(\tau\right)  +   {\mathbb{E}}_{\widetilde\Sigma_{\tau}} \left[ \uppsi^{[-2]}\right] \left(\tau\right) +\mathbb{E}_{\widetilde\Sigma_{\tau}} \left[ \upalpha^{[-2]}\right] \left(\tau\right)  \lesssim \frac{\mathbb{D}_{2, 2} \left[\Psi^{[-2]}, \uppsi^{[-2]}, \upalpha^{[-2]}\right] \left(\tau_0\right)}{\tau^{2-\eta}} \,  \label{step1n}
\end{align}
and
\begin{align}
\overline{\mathbb{E}}_{\widetilde\Sigma_{\tau},\eta} \left[\Psi^{[-2]}\right] \left(\tau\right)  +   \overline{\mathbb{E}}_{\widetilde\Sigma_{\tau}} \left[ \uppsi^{[-2]}\right] \left(\tau\right) +\overline{\mathbb{E}}_{\widetilde\Sigma_{\tau}} \left[ \upalpha^{[-2]}\right] \left(\tau\right)  \lesssim \frac{\overline{\mathbb{D}}_{2, 2} \left[\Psi^{[-2]}, \uppsi^{[-2]}, \upalpha^{[-2]}\right] \left(\tau_0\right)}{\tau^{2-\eta}} \,  \label{step3n}
\end{align}
for the initial data energy
\begin{align}
\mathbb{D}_{2, 2} \left[\Psi^{[-2]}, \uppsi^{[-2]}, \upalpha^{[-2]}\right] \left(\tau_0\right) = &\sum_{k=0}^1 \left(\overline{\mathbb{E}}_{\widetilde\Sigma_{\tau},2} \left[T^k\Psi^{[-2]}\right] \left(\tau_0\right) +    \mathbb{E}_{\widetilde\Sigma_{\tau}} \left[T^k \uppsi^{[-2]}\right] \left(\tau_0\right) + \mathbb{E}_{\widetilde\Sigma_{\tau}} \left[T^k\upalpha^{[-2]}\right] \left(\tau_0\right)   \right)
\nonumber \\  
 &+ \overline{\mathbb{E}}_{\widetilde\Sigma_{\tau},\eta} \left[T^{2}\Psi^{[-2]}\right] \left(\tau_0\right)  +\mathbb{E}_{\widetilde\Sigma_{\tau}} \left[T^{2}\uppsi^{[-2]}\right] \left(\tau_0\right)  
+ \mathbb{E}_{\widetilde\Sigma_{\tau}} \left[T^{2}\upalpha^{[-2]}\right] \left(\tau_0\right) \nonumber \, ,
\end{align}
and with $\overline{\mathbb{D}}_{2, 2} \left[\Psi^{[-2]}, \uppsi^{[-2]}, \upalpha^{[-2]}\right] \left(\tau_0\right)$ defined by putting an overbar on all energies appearing in \\ $\mathbb{D}_{2, 2} \left[\Psi^{[-2]}, \uppsi^{[-2]}, \upalpha^{[-2]}\right] \left(\tau_0\right)$.
\end{theorem}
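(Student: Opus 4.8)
The plan is to derive Theorem~\ref{prop:maindecayprop} from the red-shifted boundedness and integrated local energy decay statement of Theorem~\ref{prop:basicnondeg} via the $r^p$-weighted energy hierarchy of Dafermos--Rodnianski, following very closely the analogous argument in the Schwarzschild case \cite{holzstabofschw}. First I would establish the $r^p$-weighted hierarchy for $\Psi^{[\pm2]}$: applying the $r^p$-weighted identity $(\ref{rpphysspaceid})$ (with the cutoff $\xi$ supported near infinity) and integrating over $\widetilde{\mathcal{R}}(\tau_1,\tau_2)$, one obtains, after controlling the error terms from $\mathcal{J}^{[\pm2]}$ by Cauchy--Schwarz using the $|a|$-smallness and the energies $\mathbb{I}_p[\uppsi^{[\pm2]}]$, $\mathbb{I}_p[\upalpha^{[\pm2]}]$, estimates of the schematic form
\begin{align*}
\mathbb{E}^{\rm right}_{\widetilde\Sigma_{\tau},p}[\Psi^{[\pm2]}](\tau_2) + \int_{\tau_1}^{\tau_2} \mathbb{E}^{\rm right}_{\widetilde\Sigma_{\tau},p-1}[\Psi^{[\pm2]}](\tau)\, d\tau \lesssim \mathbb{E}^{\rm right}_{\widetilde\Sigma_{\tau},p}[\Psi^{[\pm2]}](\tau_1) + \text{(bulk + transport error terms)}
\end{align*}
for $p\in(0,2]$; this is Proposition~\ref{prop:nondegweighted} (and Proposition~\ref{prop:nondegweighted2} for $s=-2$). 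One then couples these with the weighted transport estimates for $\uppsi^{[\pm2]}$, $\upalpha^{[\pm2]}$ from Propositions~\ref{transportplusnof} and~\ref{transportminusnof} (Propositions~\ref{prop:weightedtransport},~\ref{prop:weightedtransport2}), absorbing the coupling terms using the $r^p$-weighted energies and the $|a|$-smallness, exactly as in Proposition~\ref{firstinaline} of Section~\ref{iledsec} but now with nontrivial $r$-weights.

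The second step is the standard pigeonhole/interpolation argument. From the $p=2$ hierarchy one extracts a dyadic sequence $\tau_n$ along which $\mathbb{E}_{\widetilde\Sigma_{\tau},1}[\Psi^{[\pm2]}](\tau_n)$ decays like $\tau_n^{-1}$; feeding this into the $p=1$ estimate gives $\mathbb{E}_{\widetilde\Sigma_{\tau},0}[\Psi^{[\pm2]}](\tau)\lesssim \tau^{-1}$ plus the degenerate integrated decay, and feeding this back into the hierarchy with $p=\eta$ (to handle the loss at the photon sphere and the $\mathbb{I}^{\rm deg}$ degeneracy, using the $T$-commuted estimates of Theorem~\ref{prop:basicnondeg} to recover the trapped frequencies) yields the $\tau^{-(2-\eta)}$ rate. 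Here one commutes the entire scheme with $T$ (which commutes with all the operators $\mathfrak{R}^{[\pm2]}$, the transport relations, and the definition of $\Psi^{[\pm2]}$, $\uppsi^{[\pm2]}$ since $T$ is Killing), which is why the data norm $\mathbb{D}_{2,2}$ involves up to two $T$-derivatives. The propagation of the decay rate from $\Psi^{[\pm2]}$ down to $\uppsi^{[\pm2]}$ and $\upalpha^{[\pm2]}$ is then immediate from the weighted transport estimates, since their right-hand sides are controlled by the (now decaying) spacetime energies of $\Psi^{[\pm2]}$ together with decaying data; the overbarred version $(\ref{step3n})$ for $s=-2$ follows by additionally invoking the red-shift-improved fluxes near $\mathcal{H}^+$ from Theorem~\ref{prop:basicnondeg}.

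I expect the main obstacle to be bookkeeping rather than conceptual: one must verify that the error terms generated by the coupling term $\mathcal{J}^{[\pm2]}$ in $(\ref{RWtypeinthebulk})$, when hit with the $r^p$-weighted multiplier $r^p\beta_4\xi L\overline{\Psi}$, genuinely come with the $|a|$-smallness factor \emph{and} with $r$-weights compatible with absorption into the left-hand side of the hierarchy — in particular that the worst term $\Phi(\sqrt{\Delta}\uppsi^{[\pm2]})$ in $\mathcal{J}^{[\pm2]}$, which near infinity behaves like $w\cdot\uppsi^{[\pm2]}$, produces after Cauchy--Schwarz a term bounded by $|a|\,\mathbb{I}_p[\uppsi^{[\pm2]}]$ with the correct power of $r$ so that the $p$-weighted transport estimate $(\ref{weightedtransportright})$ (with its built-in $\eta$-loss at $p=2$) closes the loop. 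The asymmetry in $r$-weights between the $s=+2$ and $s=-2$ cases (the $[-2]$-energies carrying no $p$-weight, cf.\ the footnote before $(\ref{zas})$) must also be tracked carefully when coupling the $\Psi^{[-2]}$-hierarchy to the $\upalpha^{[-2]}$, $\uppsi^{[-2]}$ transport estimates, but since the transport for $[-2]$-quantities is in the $L$-direction and matches the weight of the slices $\widetilde\Sigma_\tau$, this is consistent. Everything else — the dyadic sequence extraction, the interpolation between consecutive $p$-levels, and the final $T$-commutation — is the verbatim DR scheme and requires no new ideas beyond those already deployed in \cite{holzstabofschw}.
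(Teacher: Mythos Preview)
Your proposal is correct and matches the paper's approach almost exactly: the $r^p$-hierarchy for $\Psi^{[\pm2]}$ (Propositions~\ref{prop:nondegweighted},~\ref{prop:nondegweighted2}), the weighted transport estimates (Propositions~\ref{prop:weightedtransport},~\ref{prop:weightedtransport2}), and the dyadic pigeonhole iteration with $T$-commutation are precisely what the paper does in Sections~\ref{sec:finishit} and~\ref{sec:finishit2}. The one point to watch is that for $s=-2$ at $p=2$ straight Cauchy--Schwarz on the $\mathcal{J}^{[-2]}$ error does \emph{not} close (it only works for $p\le 2-\eta$); the paper instead integrates by parts in $L$ using the transport relations $(\ref{auxrel1})$--$(\ref{auxrel2})$, trading the worst terms for boundary contributions on $\mathcal{I}^+$ (absorbed by the left-hand flux $\mathbb{E}_{\mathcal{I}^+,p}[\Psi^{[-2]}]$) plus bulk terms with an extra power of $r$-decay.
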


\subsection{Proof of Theorem \ref{prop:maindecayprop} for $s=+2$}

The $s=+2$ case of Theorem \ref{prop:maindecayprop} will be proven in Section \ref{sec:finishit} by combining basic estimates from the $r^p$ hierarchy associated with the inhomogeneous wave equation satisfied by $\Psi^{[+2]}$ (derived in Section \ref{sec:rph}) and basic transport estimates for $\uppsi^{[+2]}$ and $\upalpha^{[+2]}$ (derived in Section \ref{sec:rptrans}).

\subsubsection{The weighted $r^p$ hierarchy for $\Psi^{[+2]}$ in physical space} \label{sec:rph}

\begin{proposition} \label{prop:nondegweighted}
Under the assumptions of Theorem~\ref{prop:maindecayprop} we have
for any $\tau_2 > \tau_1\geq 0$ 
and for $p=2$, $p=1$ and $p=\eta$ the estimate
\begin{align}
& \ \ \  \overline{\mathbb{E}}_{\widetilde\Sigma_{\tau},p} \left[\Psi^{[+2]}\right] \left(\tau_2\right)  + \overline{\mathbb{I}}^{\rm deg}_{p} \left[\Psi^{[+2]}\right] \left(\tau_1,\tau_2\right) + \mathbb{E}_{\mathcal{I}^+,p} \left[\Psi^{[+2]}\right] \left(\tau_1,\tau_2\right) \nonumber \\
& \lesssim  \overline{\mathbb{E}}_{\widetilde\Sigma_{\tau},p} \left[\Psi^{[+2]}\right] \left(\tau_1\right)  +\mathbb{E}_{\widetilde\Sigma_{\tau},\eta} \left[\uppsi^{[+2]}\right] \left(\tau_1\right)  
+ \mathbb{E}_{\widetilde\Sigma_{\tau},\eta} \left[\upalpha^{[+2]}\right] \left(\tau_1\right) \, \nonumber  
\end{align}
and the non-degenerate estimate
\begin{align} \label{estph}
& \ \  \ \overline{\mathbb{E}}_{\widetilde\Sigma_{\tau},p} \left[\Psi^{[+2]}\right] \left(\tau_2\right)  + \overline{\mathbb{I}}_{p} \left[\Psi^{[+2]}\right] \left(\tau_1,\tau_2\right) + \mathbb{E}_{\mathcal{I}^+,p} \left[\Psi^{[+2]}\right] \left(\tau_1,\tau_2\right) \nonumber \\
& \lesssim  \overline{\mathbb{E}}_{\widetilde\Sigma_{\tau},p} \left[\Psi^{[+2]}\right] \left(\tau_1\right)  +\mathbb{E}_{\widetilde\Sigma_{\tau},\eta} \left[\uppsi^{[+2]}\right] \left(\tau_1\right)  
+  \mathbb{E}_{\widetilde\Sigma_{\tau},\eta} \left[\upalpha^{[+2]}\right] \left(\tau_1\right) \nonumber \\ 
& + \overline{\mathbb{E}}_{\widetilde\Sigma_{\tau},\eta} \left[T\Psi^{[+2]}\right] \left(\tau_1\right)  +\mathbb{E}_{\widetilde\Sigma_{\tau},\eta} \left[T\uppsi^{[+2]}\right] \left(\tau_1\right)  
+ \mathbb{E}_{\widetilde\Sigma_{\tau},\eta} \left[T\upalpha^{[+2]}\right] \left(\tau_1\right) \nonumber . 
\end{align}
\end{proposition}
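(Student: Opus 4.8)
\textbf{Proof proposal for Proposition~\ref{prop:nondegweighted}.}
The plan is to apply the $r^p$-weighted multiplier identity $(\ref{rpphysspaceid})$ of Section~\ref{sec:multid} directly to $\Psi^{[+2]}$ (which solves the \emph{homogeneous} generalised Regge--Wheeler equation, i.e.~$\mathfrak{G}^{[+2]}=0$, but retains the coupling term $\mathcal{J}^{[+2]}$) and to integrate it over $\widetilde{\mathcal{R}}(\tau_1,\tau_2)$ against the spacetime volume form, using Remark~\ref{rem:convert} to convert to a genuine divergence identity. First I would recall that $R$ in the cutoff $\xi$ has been chosen (directly below $(\ref{ipb})$) so that, in the Schwarzschild case, the coefficients of $|L\Psi|^2$, $|\mathring{\slashed\nabla}^{[s]}\Psi|^2$ and $|\Psi|^2$ in $I^{r^p}$ are all non-negative for $p\in[0,2]$; by smooth dependence on $a$ (and the fact that all the genuinely $a$-dependent terms in $I^{r^p}$ carry explicit $a$ or $a^2$ factors and are supported in a region where $w\gtrsim 1$), the same positivity survives for $|a|<a_0\ll M$ with the constants only slightly perturbed. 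This yields, after integration, that the bulk term controls the integrand of $\overline{\mathbb{I}}^{\rm deg}_p[\Psi^{[+2]}]$ in $r\ge R+M$; in the region $r\le R+M$ one simply appeals to the degenerate Morawetz and redshift estimates of Theorem~\ref{prop:basicnondeg} (specifically $(\ref{basdegmor})$ and $(\ref{ewbnd})$), which already control everything there. The boundary terms $F^{r^p}_{\underline L}$ on $\widetilde\Sigma_{\tau_2}$ are, up to lower order, $\tfrac12\xi r^p\beta_4|L\Psi|^2$ plus terms absorbable by Cauchy--Schwarz into the good weighted fluxes and the $\overline{\mathbb{E}}_{\widetilde\Sigma_\tau,\eta}$ energy; on $\mathcal{I}^+$ the flux $F^{r^p}_L$ is manifestly the energy $\mathbb{E}_{\mathcal{I}^+,p}$; the $\mathcal{H}^+$ flux is controlled since $\xi$ vanishes near the horizon, and the $\widetilde\Sigma_{\tau_1}$ term gives $\overline{\mathbb{E}}_{\widetilde\Sigma_\tau,p}[\Psi^{[+2]}](\tau_1)$.

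The one genuinely new ingredient compared with the Schwarzschild analysis of~\cite{holzstabofschw} is the error term generated by the right-hand side $-\tfrac{\rho^2}{\Delta}\mathcal{J}^{[+2]}$ of $(\ref{RWtypeinthebulk})$, which appears as ${\rm Re}\big(-\mathcal{J}^{[+2]}\,r^p\beta_4\xi L\overline{\Psi}\big)$ integrated against the volume form. Here I would use that $\mathcal{J}^{[+2]}$ carries an overall $|a|$ factor and, crucially, that away from the horizon $\mathcal{J}^{[+2]}$ is (schematically) a bounded multiple of $\Phi(\sqrt\Delta\uppsi^{[+2]})$, $\sqrt\Delta\uppsi^{[+2]}$ and $\upalpha^{[+2]}\Delta^2(r^2+a^2)^{-3/2}$ with $r$-weights that decay like $r^{-2}$ relative to the natural weights — so one $L\Psi$ factor at weight $r^{p/2}$ pairs by Cauchy--Schwarz with a $\uppsi$ (or $\upalpha$) term at weight $r^{p/2-2}$, which after using the definitions $(\ref{dens1})$--$(\ref{dens2})$ is bounded by $\mathbb{I}_p[\uppsi^{[+2]}]+\mathbb{I}_p[\upalpha^{[+2]}]$; for $p\le 1$ this sits inside $\mathbb{I}_\eta[\uppsi^{[+2]}]+\mathbb{I}_\eta[\upalpha^{[+2]}]$ on the right, and for $p=2$ the $\eta$-loss built into the densities $(\ref{dens1})$--$(\ref{dens2})$ (the factor $r^{-\boldsymbol\delta^p_2\eta}$) is exactly what allows $\mathbb{I}_2[\uppsi^{[+2]}]\lesssim\mathbb{E}_{\widetilde\Sigma_\tau,\eta}[\uppsi^{[+2]}](\tau_1)$ once we feed in the transport estimates of Proposition~\ref{transportplusnof} applied globally. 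Since $\Psi^{[+2]}$ itself already satisfies the degenerate/nondegenerate Morawetz estimates of Theorem~\ref{prop:basicnondeg}, the term $\mathbb{I}^{\rm away}_p[\Psi^{[+2]}]$ arising on the right of the transport estimates is controlled (the $\Psi$-self-coupling piece $|a|\,\overline{\mathbb{I}}^{\rm deg}_p[\Psi^{[+2]}]$ that Cauchy--Schwarz leaves behind is absorbed into the left-hand side using smallness of $a$).

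For the non-degenerate estimate $(\ref{estph})$ I would proceed exactly as in~\cite{holzstabofschw}: the quantity $\overline{\mathbb{I}}_p$ differs from $\overline{\mathbb{I}}^{\rm deg}_p$ only by the non-degeneracy of the $|L\Psi|^2$ term near $r=3M$, and this is recovered by adding a small multiple of the \emph{non}-degenerate Morawetz estimate $(\ref{basnondegmor})$ of Theorem~\ref{prop:basicnondeg} applied to $\Psi^{[+2]}$ (equivalently, commuting with $T$), which is why the right-hand side of $(\ref{estph})$ carries the additional $T$-commuted data energies; one then needs the transport estimates of Proposition~\ref{transportplusnof} also for $T\uppsi^{[+2]}$ and $T\upalpha^{[+2]}$, which hold verbatim since $(\ref{psirel})$--$(\ref{Prel})$ commute with $T$. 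The main obstacle — and the only place requiring care — is bookkeeping the $r$-weights in the $\mathcal{J}^{[+2]}$ error term so that, simultaneously for all three values $p\in\{2,1,\eta\}$, it closes against the stated right-hand side; in particular one must check that the most dangerous term, $|a|$ times $\Phi(\sqrt\Delta\uppsi^{[+2]})$ paired with $r^{p}L\overline\Psi$, is indeed no worse than $r^{p-2}$ relative weight, which is immediate from the explicit coefficient $\tfrac{\Delta}{(r^2+a^2)^2}\cdot\tfrac{-8r^2+8a^2}{r^2+a^2}\sim r^{-2}$ in $\mathcal{J}^{[+2]}$, and then that the resulting $\mathbb{I}_p[\uppsi^{[+2]}]$ is handled by the global transport estimate with exactly the $r$-weight bookkeeping already set up in Proposition~\ref{transportplusnof}. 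Everything else is a routine repetition of the Schwarzschild argument with $a$-dependent perturbations absorbed by smallness of $a_0$.
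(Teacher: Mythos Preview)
Your overall strategy is the paper's: apply the $r^p$ identity $(\ref{rpphysspaceid})$ to $\Psi^{[+2]}$, add a large multiple of the already-established degenerate Morawetz $(\ref{basdegmor})$ to secure bulk coercivity and boundary terms, estimate the $\mathcal{J}^{[+2]}$ error by Cauchy--Schwarz, and for the non-degenerate version add $(\ref{basnondegmor})$. Two points, however, deserve correction.

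First, your treatment of the $\mathcal{J}^{[+2]}$ error is needlessly convoluted and contains a misstatement. The paper's computation is simply
\[
\int r^{p+1}|\mathcal{J}^{[+2]}|^2 \lesssim a^2\big(\mathbb{I}_\eta[\uppsi^{[+2]}]+\mathbb{I}_\eta[\upalpha^{[+2]}]\big)
\]
for \emph{all} $p\in[\eta,2]$, because the leading $r$-weight in $|\mathcal{J}^{[+2]}|^2$ is $r^{-4}$ and $r^{p-3}\le r^{\eta-1}$ whenever $p\le 2$; the right-hand side is then controlled directly by $(\ref{basdegmor})$. There is no need to invoke transport estimates here at all, and no case distinction on $p$. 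Your sentence ``for $p\le 1$ this sits inside $\mathbb{I}_\eta$'' is literally false as written (since $\mathbb{I}_p\ge\mathbb{I}_\eta$ for $p\ge\eta$); what is true is that the \emph{error term itself}, with its $r^{p-3}$ weight, is dominated by the $\mathbb{I}_\eta$ integrand---which makes the whole detour through $\mathbb{I}_p$ and Proposition~\ref{transportplusnof} superfluous.

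Second, you gloss over the special feature of $p=2$: in $(\ref{ipb})$ the coefficient of $|\mathring{\slashed\nabla}^{[s]}\Psi|^2$ has leading term $(2-p)r^{p-3}$, which vanishes at $p=2$, leaving only an $r^{-2}$ bulk control of angular derivatives---insufficient for the $r^{-1-\eta}$ weight demanded by $\overline{\mathbb{I}}^{\rm deg}_2$. The paper handles this (see $(\ref{ugc1b})$) by summing the $r^p$ identities at $p=2$ and $p=2-\eta$; the latter supplies the missing angular coercivity. You should note this.
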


\begin{proof}
Given $\alpha^{[+2]}$ we 
apply the multiplier identity (\ref{rpphysspaceid}) to $\Psi^{[+2]}$. To the identity that is being produced after integration over $\widetilde{\mathcal{R}}(\tau_1,\tau_2)$, we can add a large constant $B$ (depending only on $M$) times the basic estimate (\ref{basdegmor}) such that the following holds: For the boundary term we have for all $p \in \left[\eta,2\right]$
\begin{align}  
\int_{\widetilde{\mathcal{R}}(\tau_1,\tau_2)} \left( L \big\{ F^{r^p}_L\big\} + \underline{L} \big\{ F^{r^p}_{\underline{L}} \big\}\right)
 \frac{1}{\rho^2} \frac{r^2+a^2}{\Delta} dVol + B \cdot \overline{\mathbb{E}}_{\widetilde{\Sigma}_\tau,\eta} \left[\Psi^{[+2]}\right] (\tau_2) \nonumber \\
\gtrsim   b \cdot \overline{\mathbb{E}}_{\widetilde{\Sigma}_\tau,p} \left[\Psi^{[+2]}\right] (\tau_2) - B \cdot \overline{\mathbb{E}}_{\widetilde{\Sigma}_\tau,p} \left[\Psi^{[+2]}\right] (\tau_1) 
 + b\ \mathbb{E}_{\mathcal{I}^+,p} \left[\Psi^{[+2]}\right] \left(\tau_1,\tau_2\right) \, .
\end{align}
For the spacetime term we have
\begin{equation} \label{ugc1}
\int_{\widetilde{\mathcal{R}}(\tau_1,\tau_2)} \left( I^{r^{p}}\right)
 \frac{1}{\rho^2} \frac{r^2+a^2}{\Delta} dVol + B \cdot \overline{\mathbb{I}}^{deg}_{\eta} \left[\Psi^{[+2]}\right] (\tau_1,\tau_2)
\geq b \cdot \overline{\mathbb{I}}^{\rm deg}_{p} \left[\Psi^{[+2]}\right] (\tau_1,\tau_2) \, 
\end{equation}
for $p \in \left[\eta,2\right)$ and for $p=2$
\begin{equation} \label{ugc1b}
\sum_{p=2-\eta, p=2} \int_{\widetilde{\mathcal{R}}(\tau_1,\tau_2)} \left( I^{r^{p}}\right)
 \frac{1}{\rho^2} \frac{r^2+a^2}{\Delta} dVol + B \cdot \overline{\mathbb{I}}^{deg}_{\eta} \left[\Psi^{[+2]}\right] (\tau_1,\tau_2)
\geq b \cdot \overline{\mathbb{I}}^{\rm deg}_{2} \left[\Psi^{[+2]}\right] (\tau_1,\tau_2) \, ,
\end{equation}
the latter case being special because for $p=2$ we lose control of the angular derivatives in (\ref{ipb}). For the error term (which in view of $\xi$ being supported for large $r$ is supported for large $r$) we have, for any $\lambda>0$,
\begin{align} 
\int_{\widetilde{\mathcal{R}}(\tau_1,\tau_2)} \Big| \mathcal{J}^{[+2]}| |\xi| |\beta_4| |r^p L \Psi^{[+2]}|  \frac{r^2+a^2}{\Delta \rho^2} dVol  \lesssim  \int_{\widetilde{\mathcal{R}}(\tau_1,\tau_2)} dVol   \frac{r^2+a^2}{\Delta \rho^2} \left( \lambda r^{p-1} |L\Psi^{[+2]}|^2 + \frac{r^{p+1}}{\lambda}  \Big| \mathcal{J}^{[+2]}|^2 \right) \nonumber \\
\lesssim \lambda  \overline{\mathbb{I}}^{deg}_{p} \left[\Psi^{[+2]}\right] (\tau_1,\tau_2) + \frac{a^2}{\lambda}\left( \mathbb{I}_\eta \left[\uppsi^{[+2]}\right]  \left(\tau_1,\tau_2\right)
+\mathbb{I}_\eta \left[\upalpha^{[+2]}\right]  \left(\tau_1,\tau_2\right)\right) \, . \nonumber
\end{align}
Note that there is no $\mathfrak{G}^{[+2]}$ error term as $F^{[+2]}=0$ and hence $\mathfrak{G}^{[+2]}=0$.
Combining the above estimates yields the first estimate of the Proposition after using the basic estimate (\ref{basdegmor}) yields and choosing $\lambda$ sufficiently small (depending only on $M$). The second estimate follows immediately be combining the first one with the non-degenerate (\ref{basnondegmor}).
\end{proof}

\subsubsection{Physical space weighted transport for $\uppsi^{[+2]}$ and $P^{[+2]}$} \label{sec:rptrans}
We now turn to deriving weighted Morawetz and boundedness estimates for $\uppsi^{[+2]}$ and $\upalpha^{[+2]}$ from the transport equations they satisfy. Combining (\ref{weightedtransportright}) with the basic estimate (\ref{basdegmor}) we immediately obtain

\begin{proposition} \label{prop:weightedtransport}
Under the assumptions of~Theorem~\ref{prop:maindecayprop} we have
for any $\tau_2 > \tau_1\geq 0$ and for $p \in \{\eta,1,2\}$ the estimate
\begin{align} 
\mathbb{E}_{\widetilde\Sigma_{\tau},p} \left[\upalpha^{[+2]}\right] \left(\tau_2\right) + \mathbb{I}_p \left[\upalpha^{[+2]}\right]  \left(\tau_1,\tau_2\right) \lesssim \mathbb{I}_p \left[\uppsi^{[+2]}\right]  \left(\tau_1,\tau_2\right) + \mathbb{E}_{\widetilde\Sigma_{\tau},p} \left[\upalpha^{[+2]}\right] \left(\tau_1\right)
\end{align}
and the estimate
\begin{align} 
\mathbb{E}_{\widetilde\Sigma_{\tau},p} \left[\uppsi^{[+2]}\right] \left(\tau_2\right) + \mathbb{I}_p \left[\uppsi^{[+2]}\right]  \left(\tau_1,\tau_2\right) &\lesssim \mathbb{I}^{\rm deg}_p \left[\Psi^{[+2]}\right]  \left(\tau_1,\tau_2\right) + \mathbb{E}_{\widetilde\Sigma_{\tau},p} \left[\uppsi^{[+2]}\right] \left(\tau_1\right) \nonumber \\
& \ \ + \mathbb{E}_{\widetilde\Sigma_{\tau},\eta} \left[\Psi^{[+2]}\right] \left(\tau_1\right) 
+ \mathbb{E}_{\widetilde\Sigma_{\tau},\eta} \left[\upalpha^{[+2]}\right] \left(\tau_1\right) \
\end{align}
\end{proposition}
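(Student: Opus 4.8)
\textbf{Proof proposal for Proposition~\ref{prop:weightedtransport}.}

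The plan is to obtain both estimates directly by integrating the transport relations $(\ref{psirel})$ and $(\ref{Prel})$ against suitable $r$-weights over $\widetilde{\mathcal{R}}(\tau_1,\tau_2)$, exactly as in Proposition~\ref{transportplusnof} but now \emph{globally} rather than only in the ``right'' region, and then invoking the already-proven basic Morawetz estimate $(\ref{basdegmor})$ of Theorem~\ref{prop:basicnondeg} to close. The point is that the conditional estimate $(\ref{weightedtransportright})$ already has essentially the right form in the far region; what remains is to absorb the boundary terms at $r=A_1$, $r=A_2$ and the behaviour near the horizon. Since $\upalpha^{[+2]}$ satisfies the \emph{homogeneous} Teukolsky equation here, there is no $\mathfrak{G}^{[+2]}$ error term and no cutoff, so one simply does not encounter the timelike boundary terms $\mathbb{E}_{r=A_i}$ that appeared in the $[A_1,A_2]$-localised analysis; they are replaced by the degenerate bulk norm $\mathbb{I}_p^{\rm deg}[\Psi^{[+2]}]$, which is controlled by $(\ref{basdegmor})$.

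In detail, for the second estimate I would take the divergence identity $(\ref{idp2})$ (and its $T$- and $\Phi$-commuted versions) with the weights $n\in\{\eta,1,2-\eta\}$ and integrate over the full region $\widetilde{\mathcal{R}}(\tau_1,\tau_2)$, rather than over $\widetilde{\mathcal{R}}^{\rm right}$ only. Near the horizon the factor $\Delta^{-1}$ in $w^{-1}$ is harmless because $\sqrt{\Delta}\uppsi^{[+2]}$ is the regular quantity, and the $r^p$ weights there are bounded; the flux on $\mathcal{H}^+$ is nonnegative, so one retains control of $\mathbb{E}_{\widetilde\Sigma_{\tau},p}[\uppsi^{[+2]}](\tau_2)$ and $\mathbb{I}_p[\uppsi^{[+2]}]$. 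The right-hand side produces $\mathbb{I}_p^{\rm deg}[\Psi^{[+2]}](\tau_1,\tau_2)$ (with the degeneration at $r=3M$ being acceptable since the integrand of $\mathbb{I}_p[\uppsi^{[+2]}]$ near trapping is anyway controlled by the non-degenerate $L$-derivative structure, or, more carefully, one notes that $(\ref{basdegmor})$ together with $(\ref{weightedtransport.left})$ already yields control of $\uppsi^{[+2]}$ in the trapped region), the initial flux $\mathbb{E}_{\widetilde\Sigma_{\tau},p}[\uppsi^{[+2]}](\tau_1)$, and lower-order contributions from the commuted identities bounded by $\mathbb{E}_{\widetilde\Sigma_{\tau},\eta}[\Psi^{[+2]}](\tau_1)$ and $\mathbb{E}_{\widetilde\Sigma_{\tau},\eta}[\upalpha^{[+2]}](\tau_1)$. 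For the first estimate one repeats the same procedure with $(\ref{idp1})$, weights $n\in\{2+\eta,3,4-\eta\}$, obtaining $\mathbb{E}_{\widetilde\Sigma_{\tau},p}[\upalpha^{[+2]}](\tau_2)+\mathbb{I}_p[\upalpha^{[+2]}]\lesssim \mathbb{I}_p[\uppsi^{[+2]}]+\mathbb{E}_{\widetilde\Sigma_{\tau},p}[\upalpha^{[+2]}](\tau_1)$, precisely as in $(\ref{setx})$ but globally. The combination of the two, with $(\ref{basdegmor})$ supplying the bound on $\mathbb{I}_p^{\rm deg}[\Psi^{[+2]}]$, gives the stated inequalities.

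The main obstacle, such as it is, is bookkeeping at the two ends: one must check that integrating the transport identities all the way to $\mathcal{H}^+$ (left) and to $\mathcal{I}^+$ (right) produces only \emph{favourably-signed} flux terms, so that the estimates $(\ref{weightedtransportright})$ and $(\ref{weightedtransport.left})$ genuinely patch together without leftover $\mathbb{E}_{r=A_i}$ contributions. This is exactly the observation already used in the proof of Proposition~\ref{transportplusnof} — the good $\uppsi$-spacetime term generated from the $P^{[+2]}$-transport is stronger in $r$-weight than what is needed to drive the $\upalpha^{[+2]}$-transport — so no new idea is required; one simply has to assemble the relative constants in the correct order (first the $\upalpha$-estimate, then the $\uppsi$-estimate, absorbing the boundary terms using smallness of $a$ is not even needed here since the equations are decoupled at the level of these transport relations). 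I expect the proof to be a short paragraph in the paper, citing $(\ref{weightedtransportright})$ and $(\ref{basdegmor})$ and noting the homogeneity $F^{[+2]}=0$.
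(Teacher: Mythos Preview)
Your proposal is correct and matches the paper's approach; the paper's proof is indeed a single sentence, ``Combining $(\ref{weightedtransportright})$ with the basic estimate $(\ref{basdegmor})$ we immediately obtain\ldots'', exactly as you anticipate. One small clarification: the extra initial-data terms $\mathbb{E}_{\widetilde\Sigma_{\tau},\eta}[\Psi^{[+2]}](\tau_1)$ and $\mathbb{E}_{\widetilde\Sigma_{\tau},\eta}[\upalpha^{[+2]}](\tau_1)$ on the right-hand side of the second estimate are not ``lower-order contributions from the commuted identities'' but rather arise precisely because one invokes $(\ref{basdegmor})$ (whose right-hand side is those data terms) to handle the bounded-$r$ region where the $T$- and $\Phi$-commuted transport right-hand sides are not controlled by $\mathbb{I}^{\rm deg}_p[\Psi^{[+2]}]$ alone --- which is the point you make in your parenthetical remark about $(\ref{basdegmor})$ already yielding control of $\uppsi^{[+2]}$ near trapping.
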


\subsubsection{Completing the proof of Theorem \ref{prop:maindecayprop}}  \label{sec:finishit}

Combining the estimate of Proposition \ref{prop:nondegweighted} with that of Proposition \ref{prop:weightedtransport} we deduce for $p \in \{\eta, 1, 2\}$ (first for $K=0$ and then by trivial commutation with the Killing field $T$ for any $K \in \mathbb{N}$) the estimate

\begin{align}
&\sum_{k=0}^K \left( \mathbb{E}_{\widetilde\Sigma_{\tau},p} \left[T^k\upalpha^{[+2]}\right] \left(\tau_2\right) + \mathbb{E}_{\widetilde\Sigma_{\tau},p} \left[T^k \uppsi^{[+2]}\right] \left(\tau_2\right) + \overline{\mathbb{E}}_{\widetilde\Sigma_{\tau},p} \left[T^k\Psi^{[+2]}\right] \left(\tau_2\right)   \right)
\nonumber \\
 +&\sum_{k=0}^K \left( \mathbb{I}_p \left[T^k\upalpha^{[+2]}\right]  \left(\tau_1,\tau_2\right)+\mathbb{I}_p \left[T^k\uppsi^{[+2]}\right]  \left(\tau_1,\tau_2\right) 
 + \overline{\mathbb{I}}^{\rm deg}_{p} \left[T^k\Psi^{[+2]}\right] \left(\tau_1,\tau_2\right)  \right)
 \nonumber \\
 \lesssim &\sum_{k=0}^K \left( \mathbb{E}_{\widetilde\Sigma_{\tau},p} \left[T^k\upalpha^{[+2]}\right] \left(\tau_1\right) + \mathbb{E}_{\widetilde\Sigma_{\tau},p} \left[T^k \uppsi^{[+2]}\right] \left(\tau_1\right) + \overline{\mathbb{E}}_{\widetilde\Sigma_{\tau},p} \left[T^k\Psi^{[+2]}\right] \left(\tau_1\right)   \right)
 \label{dyadcon}
\end{align}
and also
\begin{align}
&\sum_{k=0}^K \left( \mathbb{E}_{\widetilde\Sigma_{\tau},p} \left[T^k\upalpha^{[+2]}\right] \left(\tau_2\right) + \mathbb{E}_{\widetilde\Sigma_{\tau},p} \left[T^k \uppsi^{[+2]}\right] \left(\tau_2\right) + \overline{\mathbb{E}}_{\widetilde\Sigma_{\tau},p} \left[T^k\Psi^{[+2]}\right] \left(\tau_2\right)   \right)
\nonumber \\
 +&\sum_{k=0}^K \left( \mathbb{I}_p \left[T^k\upalpha^{[+2]}\right]  \left(\tau_1,\tau_2\right)+\mathbb{I}_p \left[T^k\uppsi^{[+2]}\right]  \left(\tau_1,\tau_2\right) 
 + \overline{\mathbb{I}}_{p} \left[T^k\Psi^{[+2]}\right] \left(\tau_1,\tau_2\right)  \right)
 \nonumber \\
 \lesssim &\sum_{k=0}^K \left( \mathbb{E}_{\widetilde\Sigma_{\tau},p} \left[T^k\upalpha^{[+2]}\right] \left(\tau_1\right) + \mathbb{E}_{\widetilde\Sigma_{\tau},p} \left[T^k \uppsi^{[+2]}\right] \left(\tau_1\right) + \overline{\mathbb{E}}_{\widetilde\Sigma_{\tau},p} \left[T^k\Psi^{[+2]}\right] \left(\tau_1\right)   \right)
\nonumber \\  
 &    
+ \mathbb{E}_{\widetilde\Sigma_{\tau},\eta} \left[T^{K+1}\upalpha^{[+2]}\right] \left(\tau_1\right) +\mathbb{E}_{\widetilde\Sigma_{\tau},\eta} \left[T^{K+1}\uppsi^{[+2]}\right] \left(\tau_1\right) + \overline{\mathbb{E}}_{\widetilde\Sigma_{\tau},\eta} \left[T^{K+1}\Psi^{[+2]}\right] \left(\tau_1\right) \, . \label{dyad}
\end{align}
Let us denote the right hand side of the second estimate on the initial data slice $\widetilde\Sigma_0$ (i.e.~for $\tau_1=\tau_0$) by $\mathbb{D}_{K+1, p} \left[\Psi^{[+2]}, \uppsi^{[+2]}, \upalpha^{[+2]}\right] \left(\tau_0\right)$. 

Applying (\ref{dyad}) for $K=1$ and $p=2$ implies (after using a standard argument involving dyadic sequences) along a dyadic sequence $\tau_n \sim 2^{n}\tau_0$ the estimate
\begin{align}
\sum_{k=0}^1 \left( \mathbb{E}_{\widetilde\Sigma_{\tau},1} \left[T^k \upalpha^{[+2]}\right] \left(\tau_n\right) + \mathbb{E}_{\widetilde\Sigma_{\tau},1} \left[T^k \uppsi^{[+2]}\right] \left(\tau_n\right) + \overline{\mathbb{E}}_{\widetilde\Sigma_{\tau},1} \left[T^k\Psi^{[+2]}\right] \left(\tau_n\right) \right)  \lesssim \frac{\mathbb{D}_{2, 2} \left[\Psi^{[+2]}, \uppsi^{[+2]}, \upalpha^{[+2]}\right] \left(\tau_0\right)}{\tau_n} \, . \nonumber
\end{align}
Using the above and applying (\ref{dyadcon}) for $p=1$, $K=1$ between the time $\tau_1=\tau_n$ and any $\tau_2 \in \left(\tau_n, \tau_{n+1}\right]$ yields the previous estimate for any $\tau$, not only the members of the dyadic sequence.
Turning back to (\ref{dyad}) now with $K=0$ we use the previous estimate and a similar dyadic argument to produce along a dyadic sequence the estimate
\begin{align}
\mathbb{E}_{\widetilde\Sigma_{\tau},\eta} \left[ \upalpha^{[+2]}\right] \left(\tau_n\right) + \mathbb{E}_{\widetilde\Sigma_{\tau},\eta} \left[ \uppsi^{[+2]}\right] \left(\tau_n\right) + \overline{\mathbb{E}}_{\widetilde\Sigma_{\tau},\eta} \left[\Psi^{[+2]}\right] \left(\tau_n\right)  \lesssim \frac{\mathbb{D}_{2, 2} \left[\Psi^{[+2]}, \uppsi^{[+2]}, \upalpha^{[+2]}\right] \left(\tau_0\right)}{\tau_n^{2-\eta}} \, . \nonumber 
\end{align}
Using the above and applying (\ref{dyadcon}) with $p=\eta$ and $K=0$ now yields the estimate (\ref{step1}) of Theorem \ref{prop:maindecayprop}.

\subsection{Proof of Theorem \ref{prop:maindecayprop} for $s=-2$}

The $s=-2$ case of Theorem \ref{prop:maindecayprop} will be proven in Section \ref{sec:finishit2} by combining basic estimates from the $r^p$ hierarchy associated with the inhomogeneous wave equation satisfied by $\Psi^{[-2]}$ (derived in Section \ref{sec:rph2}) and basic transport estimates for $\uppsi^{[-2]}$ and $\upalpha^{[-2]}$ (derived in Section \ref{sec:rptrans2}).

\subsubsection{The weighted $r^p$ hierarchy for $\Psi^{[-2]}$ in physical space} \label{sec:rph2}

\begin{proposition} \label{prop:nondegweighted2}
Under the assumptions of Theorem \ref{prop:maindecayprop} we have
for any $\tau_2 > \tau_1\geq 0$ 
and for $p=2$, $p=1$ and $p=\eta$ the estimate
\begin{align} \label{estph1a}
& \ \ \overline{\mathbb{E}}_{\widetilde\Sigma_{\tau},p} \left[\Psi^{[-2]}\right] \left(\tau_2\right)  + \overline{\mathbb{I}}^{\rm deg}_{p} \left[\Psi^{[-2]}\right] \left(\tau_1,\tau_2\right) + \mathbb{E}_{\mathcal{I}^+,p} \left[\Psi^{[-2]}\right] \left(\tau_1,\tau_2\right) \nonumber \\ 
 \lesssim & \ \ \overline{\mathbb{E}}_{\widetilde\Sigma_{\tau},p} \left[\Psi^{[-2]}\right] \left(\tau_1\right)  +\mathbb{E}_{\widetilde\Sigma_{\tau}} \left[\uppsi^{[-2]}\right] \left(\tau_1\right)  
+ \mathbb{E}_{\widetilde\Sigma_{\tau}} \left[\upalpha^{[-2]}\right] \left(\tau_1\right)  \nonumber \\
& \ \ + |a|\left(\mathbb{E}_{\mathcal{I}^+} \left[\uppsi^{[-2]}\right] \left(\tau_1,\tau_2\right) +\mathbb{E}_{\mathcal{I}^+} \left[\upalpha^{[-2]}\right] \left(\tau_1,\tau_2\right)\right)
\end{align}
and the non-degenerate estimate
\begin{align} \label{estph2}
& \ \ \overline{\mathbb{E}}_{\widetilde\Sigma_{\tau},p} \left[\Psi^{[-2]}\right] \left(\tau_2\right)  + \overline{\mathbb{I}}_{p} \left[\Psi^{[-2]}\right] \left(\tau_1,\tau_2\right) + \mathbb{E}_{\mathcal{I}^+,p} \left[\Psi^{[-2]}\right] \left(\tau_1,\tau_2\right) \nonumber \\ 
 \lesssim & \ \ \overline{\mathbb{E}}_{\widetilde\Sigma_{\tau},p} \left[\Psi^{[-2]}\right] \left(\tau_1\right)  +\mathbb{E}_{\widetilde\Sigma_{\tau}} \left[\uppsi^{[-2]}\right] \left(\tau_1\right)  
+ \mathbb{E}_{\widetilde\Sigma_{\tau}} \left[\upalpha^{[-2]}\right] \left(\tau_1\right) \nonumber \\ & + \overline{\mathbb{E}}_{\widetilde\Sigma_{\tau},\eta} \left[T\Psi^{[-2]}\right] \left(\tau_1\right)  +\mathbb{E}_{\widetilde\Sigma_{\tau}} \left[T\uppsi^{[-2]}\right] \left(\tau_1\right)  
+ \mathbb{E}_{\widetilde\Sigma_{\tau}} \left[T\upalpha^{[-2]}\right] \left(\tau_1\right) 
\nonumber \\
& \ \ + a\left(\mathbb{E}_{\mathcal{I}^+} \left[\uppsi^{[-2]}\right] \left(\tau_1,\tau_2\right) +\mathbb{E}_{\mathcal{I}^+} \left[\upalpha^{[-2]}\right] \left(\tau_1,\tau_2\right)\right) .
\end{align}
\end{proposition}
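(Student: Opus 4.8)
The plan is to mimic exactly the structure of the proof of Proposition~\ref{prop:nondegweighted} (the $s=+2$ case), applying the $r^p$-weighted multiplier identity $(\ref{rpphysspaceid})$ to $\Psi^{[-2]}$ and adding a large multiple of the basic degenerate Morawetz estimate $(\ref{basdegmor2})$ to absorb the bad terms. First I would fix $\upalpha^{[-2]}$ satisfying the \emph{homogeneous} Teukolsky equation, so that $F^{[-2]}=0$ and hence $\mathfrak{G}^{[-2]}=0$, eliminating any inhomogeneous contribution in $(\ref{RWtypeinthebulk})$ for $s=-2$. Integrating $(\ref{rpphysspaceid})$ over $\widetilde{\mathcal{R}}(\tau_1,\tau_2)$ and invoking Remark~\ref{rem:convert} to convert to a genuine divergence identity, the boundary terms $F^{r^p}_L$, $F^{r^p}_{\underline{L}}$ — after adding $B\cdot\overline{\mathbb{E}}_{\widetilde\Sigma_\tau,\eta}[\Psi^{[-2]}](\tau_2)$ with $B=B(M)$ large — control $b\cdot\overline{\mathbb{E}}_{\widetilde\Sigma_\tau,p}[\Psi^{[-2]}](\tau_2)$ plus $b\cdot\mathbb{E}_{\mathcal{I}^+,p}[\Psi^{[-2]}](\tau_1,\tau_2)$ and are controlled on the initial slice by $\overline{\mathbb{E}}_{\widetilde\Sigma_\tau,p}[\Psi^{[-2]}](\tau_1)$, exactly as in the $+2$ case; the spacetime bulk term $I^{r^p}$ is handled by the choice of $R$ in the cutoff below $(\ref{ipb})$, giving $(\ref{ugc1})$ for $p\in[\eta,2)$ and the summed version $(\ref{ugc1b})$ for $p=2$ to recover angular derivatives.

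The one genuinely new point compared with the $s=+2$ argument is the treatment of the error term arising from $\mathcal{J}^{[-2]}$ on the right-hand side of $(\ref{RWtypeinthebulk})$. Here I would use Cauchy's inequality with a parameter $\lambda>0$:
\[
\int_{\widetilde{\mathcal{R}}(\tau_1,\tau_2)} |\mathcal{J}^{[-2]}||\xi||\beta_4||r^p L\Psi^{[-2]}| \frac{r^2+a^2}{\Delta\rho^2}\,dVol \lesssim \lambda\,\overline{\mathbb{I}}^{\rm deg}_p[\Psi^{[-2]}](\tau_1,\tau_2) + \frac{1}{\lambda}\int_{\widetilde{\mathcal{R}}(\tau_1,\tau_2)} r^{p+1}|\mathcal{J}^{[-2]}|^2 \frac{r^2+a^2}{\Delta\rho^2}\,dVol.
\]
The first term is absorbed by the left side for $\lambda$ small depending only on $M$. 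For the second term one inspects the explicit form of $\mathcal{J}^{[-2]}$: it is a sum of terms of the form $a\cdot\mathfrak{c}(r)\cdot(\text{something})\cdot(\sqrt{\Delta}\uppsi^{[-2]})$ and $a^2\cdot\mathfrak{c}(r)\cdot(\text{something})\cdot(u^{[-2]}\Delta/(r^2+a^2)^2)$, all carrying an overall power of $|a|$. Crucially — and this is the reason $(\ref{estph1a})$ has the extra terms $|a|(\mathbb{E}_{\mathcal{I}^+}[\uppsi^{[-2]}] + \mathbb{E}_{\mathcal{I}^+}[\upalpha^{[-2]}])$ not present in the $+2$ case — the quantity $\sqrt{\Delta}\uppsi^{[-2]}$ and $u^{[-2]}\Delta(r^2+a^2)^{-2} = (r^2+a^2)^{-3/2}\upalpha^{[-2]}\cdot\Delta$ do \emph{not} decay near $\mathcal{I}^+$ in the way $\sqrt{\Delta}\uppsi^{[+2]}$ did (recall the Remark after $(\ref{rescaledeqTeu2})$: $u^{[-2]}w$ is finite and generically nonzero on $\mathcal{I}^+$). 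Hence $\int r^{p+1}|\mathcal{J}^{[-2]}|^2 \cdots dVol$ must be bounded by $|a|^2$ times the $\mathbb{I}$-energies of $\uppsi^{[-2]}$ and $\upalpha^{[-2]}$ (which appear in $(\ref{fiid})$--$(\ref{finid})$ with the $r^{-1-\eta}$ weight, coming from the transport estimates of Proposition~\ref{transportminusnof}) \emph{plus} an additional flux contribution on $\mathcal{I}^+$ weighted by $|a|$. Once $\lambda$ is fixed, absorbing the $|a|^2 \overline{\mathbb{I}}$-terms is done using the basic estimate $(\ref{basdegmor2})$; the $\mathbb{I}$-energies of $\uppsi^{[-2]}$, $\upalpha^{[-2]}$ are traded for $\mathbb{E}_{\widetilde\Sigma_\tau}[\uppsi^{[-2]}](\tau_1)$, $\mathbb{E}_{\widetilde\Sigma_\tau}[\upalpha^{[-2]}](\tau_1)$ again via $(\ref{basdegmor2})$.

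Assembling: combining the boundary, bulk, and error estimates and applying $(\ref{basdegmor2})$ with a large constant gives $(\ref{estph1a})$ for $p\in\{\eta,1,2\}$. The non-degenerate version $(\ref{estph2})$ then follows immediately by adding $(\ref{estph1a})$ to the basic non-degenerate Morawetz estimate $(\ref{basnondegmor2})$ — exactly the last line of the proof of Proposition~\ref{prop:nondegweighted} — noting that $(\ref{basnondegmor2})$ brings in the $T$-commuted initial data energies on the right. I expect the main obstacle to be bookkeeping the $r$-weights in the $\mathcal{J}^{[-2]}$ error term: one must check term by term that $r^{p+1}|\mathcal{J}^{[-2]}|^2$, when integrated against $\frac{r^2+a^2}{\Delta\rho^2}$, is genuinely controlled by the $r^{-1-\eta}$-weighted spacetime norms of $\uppsi^{[-2]}$, $\upalpha^{[-2]}$ together with the $\mathcal{I}^+$-flux terms — which works only because the $\frac{\Delta}{(r^2+a^2)^2}$ prefactor in $\mathcal{J}^{[-2]}$ supplies two extra powers of decay in $r$, precisely matching what the $s=-2$ energies in $(\ref{zas})$--$(\ref{zas2})$ can afford. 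No loss of $a$-smallness occurs anywhere, so no restriction beyond $|a|<a_0\ll M$ is needed, and the proof of $(\ref{estph1a})$--$(\ref{estph2})$ is complete modulo these routine inspections.
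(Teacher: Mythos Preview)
Your overall structure is right, and for $p\in\{\eta,1\}$ the direct Cauchy--Schwarz argument you sketch does go through. The gap is in the treatment of the $\mathcal{J}^{[-2]}$ error for $p=2$.

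Check the weights: the worst contribution in $\mathcal{J}^{[-2]}$ is $\sim a\,r^{-2}\Phi(\sqrt{\Delta}\uppsi^{[-2]})$ (since $\Delta(r^2+a^2)^{-2}\sim r^{-2}$ and the bracket $\frac{8r^2-8a^2}{r^2+a^2}\sim 1$). Your Cauchy--Schwarz step puts an $r^{p+1}\cdot r^{-4}=r^{p-3}$ weight on $|\Phi(\sqrt{\Delta}\uppsi^{[-2]})|^2$, and you need this controlled by the $r^{-1-\eta}$ weight in $\mathbb{I}[\uppsi^{[-2]}]$. That requires $p-3\le -1-\eta$, i.e.\ $p\le 2-\eta$. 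So for $p=2$ the bound fails by $r^{\eta}$, and your claim that the prefactors ``precisely match'' the $s=-2$ energies is off by exactly that margin. Moreover, a spacetime integral $\int r^{p+1}|\mathcal{J}^{[-2]}|^2\cdots$ cannot by itself produce a flux term on $\mathcal{I}^+$; boundary fluxes arise only from integration by parts, not from pointwise Cauchy--Schwarz.

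What the paper actually does for $p=2$ is to integrate by parts in $L$ \emph{before} applying Cauchy--Schwarz: for the two worst terms one writes (schematically)
\[
a\,r^{-2}\Phi(\sqrt{\Delta}\uppsi^{[-2]})\cdot r^2 L\overline{\Psi^{[-2]}}
= L\bigl(a\,\Phi(\sqrt{\Delta}\uppsi^{[-2]})\,\overline{\Psi^{[-2]}}\bigr)
+ a\,\Delta(r^2+a^2)^{-2}\,(\Phi\Psi^{[-2]})\,\overline{\Psi^{[-2]}},
\]
using the transport relation $L(\sqrt{\Delta}\uppsi^{[-2]})=-w\Psi^{[-2]}$ from $(\ref{auxrel2})$, and similarly for the $\upalpha^{[-2]}$ term via $(\ref{psibart})$. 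The $L$-total-derivative produces the boundary flux on $\mathcal{I}^+$ (this is the true origin of the $|a|\,\mathbb{E}_{\mathcal{I}^+}[\uppsi^{[-2]}]$ and $|a|\,\mathbb{E}_{\mathcal{I}^+}[\upalpha^{[-2]}]$ terms, controlled after borrowing from $\mathbb{E}_{\mathcal{I}^+,p}[\Psi^{[-2]}]$ on the left), and the remaining bulk term has gained a full power of $r^{-1}$ decay so that ordinary Cauchy--Schwarz now closes. You need to insert this integration-by-parts step; once you do, the rest of your argument is fine.
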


\begin{proof}
The proof is exactly as in Proposition \ref{prop:nondegweighted} except that we need to inspect carefully the error term $\mathcal{J}^{[-2]}$. (This is of course because the Regge--Wheeler operators are almost identical for $s=\pm 2$.) Checking the $r$-weights in the application of the 
Cauchy--Schwarz inequality, the analogous computation
\begin{align} 
\int_{\widetilde{\mathcal{R}}(\tau_1,\tau_2)} \Big| \mathcal{J}^{[-2]}| |\xi| |\beta_4| |r^p L \Psi^{[-2]}|  \frac{r^2+a^2}{\Delta \rho^2} dVol  \lesssim  \int_{\widetilde{\mathcal{R}}(\tau_1,\tau_2)} dVol   \frac{r^2+a^2}{\Delta \rho^2} \left( \lambda r^{p-1} |L\Psi^{[-2]}|^2 + \frac{r^{p+1}}{\lambda}  \Big| \mathcal{J}^{[-2]}|^2 \right) \nonumber \\
\lesssim \lambda  \overline{\mathbb{I}}^{deg}_{p} \left[\Psi^{[-2]}\right] (\tau_1,\tau_2) + \frac{a^2}{\lambda}\left( \mathbb{I} \left[\uppsi^{[-2]}\right]  \left(\tau_1,\tau_2\right)
+\mathbb{I} \left[\upalpha^{[-2]}\right]  \left(\tau_1,\tau_2\right)\right) \, \nonumber
\end{align}
is seen to be valid only for $p\in \left[\eta,2-\eta\right]$. For $p=2$ we need to integrate by parts. Note that the two worst (the others being controlled by the above estimate for $\lambda$ depending only on $M$) contributions from the error $\mathcal{J}^{[-2]}\beta_4 \xi r^p L \overline{\Psi}$ can be written (omitting taking real parts for the moment)
\begin{align}
 ar^{-2} \Phi \left(\sqrt{\Delta} \psi^{[-2]}\right) r^2 L \overline{\Psi^{[-2]}}  = L \left( a\Phi \left(\sqrt{\Delta} \psi^{[-2]}\right) \overline{\Psi^{[-2]}}\right) + a\Delta \left(r^2+a^2\right)^{-2} \left(\Phi \Psi^{[-2]}\right) \overline{\Psi^{[-2]}}  \, ,
\end{align} and
\begin{align}
 a^2 r^{-2} \left(r^2+a^2\right)^{-3/2} \upalpha^{[-2]} \ r^2 L \overline{\Psi^{[-2]}}  = L \left( a^2\left(r^2+a^2\right)^{-3/2} \upalpha^{[-2]}\overline{\Psi^{[-2]}}\right) - \frac{a^2\Delta}{ \left(r^2+a^2\right)^{2}} \left(\sqrt{\Delta}\psi^{[-2]} \right) \overline{\Psi^{[-2]}} \nonumber \, ,
\end{align}
where we have used the relations (\ref{auxrel2}) and (\ref{auxrel1}). Now upon taking real parts and integration, the second term in each line can be controlled by the basic Cauchy--Schwarz inequality, the integration by parts having gained a power in $r$. The first term in each line is a boundary term and controlled
by the terms appearing on the right hand side of the estimate (\ref{estph1a}), where for the boundary term on null infinity we borrow from the term $\mathbb{E}_{\mathcal{I}^+,p} \left[\Psi^{[-2]}\right] \left(\tau_1,\tau_2\right)$ appearing on the left hand side.
\end{proof}

\subsubsection{Physical space weighted transport for ${\uppsi}^{[-2]}$ and $\Psi^{[-2]}$}
\label{sec:rptrans2}

\begin{proposition} \label{prop:weightedtransport2}
Under the assumptions of Theorem~\ref{prop:maindecayprop} we have
for any $\tau_2 > \tau_1\geq 0$ the estimate
\begin{align} \label{setx2}
\mathbb{E}_{\widetilde\Sigma_{\tau}} \left[\upalpha^{[-2]}\right] \left(\tau_2\right) + \mathbb{I} \left[\upalpha^{[-2]}\right]  \left(\tau_1,\tau_2\right) + \mathbb{E}_{\mathcal{I}^+} \left[\upalpha^{[-2]}\right] \left(\tau_1,\tau_2\right) \lesssim \mathbb{I} \left[\uppsi^{[-2]}\right]  \left(\tau_1,\tau_2\right) + \mathbb{E}_{\widetilde\Sigma_{\tau}} \left[\upalpha^{[-2]}\right] \left(\tau_1\right)
\end{align}
and the estimate
\begin{align} \label{sety2}
\mathbb{E}_{\widetilde\Sigma_{\tau}} \left[\uppsi^{[-2]}\right] \left(\tau_2\right) + \mathbb{I} \left[\uppsi^{[-2]}\right]  \left(\tau_1,\tau_2\right)+\mathbb{E}_{\mathcal{I}^+} \left[\uppsi^{[-2]}\right] \left(\tau_1,\tau_2\right) \nonumber \\ 
\lesssim \mathbb{I}_{\eta}^{\rm deg} \left[\Psi^{[-2]}\right]  \left(\tau_1,\tau_2\right) + \mathbb{E}_{\widetilde\Sigma_{\tau}} \left[\uppsi^{[-2]}\right] \left(\tau_1\right) 
+ \mathbb{E}_{\widetilde\Sigma_{\tau},\eta} \left[\Psi^{[-2]}\right] \left(\tau_1\right) 
+ \mathbb{E}_{\widetilde\Sigma_{\tau}} \left[\upalpha^{[-2]}\right] \left(\tau_1\right) \, .
\end{align}
The same estimates hold with an overbar on all terms.
\end{proposition}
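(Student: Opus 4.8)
\textbf{Proof strategy for Proposition~\ref{prop:weightedtransport2}.} The plan is to run exactly the argument already used in the proof of Proposition~\ref{transportminusnof}, but now applied \emph{globally} in $\widetilde{\mathcal{R}}(\tau_1,\tau_2)$ rather than only in the left or right subregions, and then to combine the resulting transport identities with the basic degenerate Morawetz estimate (\ref{basdegmor2}) of Theorem~\ref{prop:basicnondeg}. Concretely, I would first recall the pointwise relations (\ref{psibart})--(\ref{Pbart}) together with the two differential identities (\ref{klp}) and (\ref{klo}) derived from them (and their $T$- and $\Phi$-commuted versions). For (\ref{setx2}) I integrate (\ref{klp}) and its $T$- and $\Phi$-commuted analogues over $\widetilde{\mathcal{R}}(\tau_1,\tau_2)$: the divergence term on the left produces the flux $\mathbb{E}_{\widetilde\Sigma_\tau}[\upalpha^{[-2]}](\tau_2)$ on the future slice, the flux on $\mathcal{I}^+$ (here one uses the relation (\ref{innp2}) and the fact that the $L$-transport is precisely aligned with the generator of $\mathcal{I}^+$, which is why no $p$-weight is needed and the energies of footnote carry the weight $r^{-1-\eta}$ consistently), the bulk term $\mathbb{I}[\upalpha^{[-2]}](\tau_1,\tau_2)$, and on the right hand side the bulk term $\mathbb{I}[\uppsi^{[-2]}](\tau_1,\tau_2)$, after applying Cauchy--Schwarz exactly as in the proof of Proposition~\ref{transportminusnof}. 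Since we now integrate over the whole region, there is no boundary term at $r=A_1$ or $r=A_2$ and no restriction to the ``left'' or ``right'', so the estimate simplifies to (\ref{setx2}).

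For (\ref{sety2}) I integrate (\ref{klo}) and its $T$- and $\Phi$-commuted analogues over $\widetilde{\mathcal{R}}(\tau_1,\tau_2)$ in the same way, obtaining $\mathbb{E}_{\widetilde\Sigma_\tau}[\uppsi^{[-2]}](\tau_2)$, the flux on $\mathcal{I}^+$, the bulk term $\mathbb{I}[\uppsi^{[-2]}](\tau_1,\tau_2)$ on the left, and on the right the bulk term controlled by $\mathbb{I}_\eta[\Psi^{[-2]}](\tau_1,\tau_2)$ together with data on $\widetilde{\Sigma}_{\tau_1}$. At this point I use the basic degenerate Morawetz estimate (\ref{basdegmor2}) of Theorem~\ref{prop:basicnondeg} to replace $\mathbb{I}_\eta[\Psi^{[-2]}]$ by $\mathbb{I}^{\rm deg}_\eta[\Psi^{[-2]}]$ plus initial data; more precisely, since (\ref{basdegmor2}) already bounds $\overline{\mathbb{I}}^{\rm deg}_\eta[\Psi^{[-2]}]$, $\mathbb{I}[\uppsi^{[-2]}]$ and $\mathbb{I}[\upalpha^{[-2]}]$ by the data energies, and since the non-degenerate $\mathbb{I}_\eta$ is not available near $r=3M$, one must in fact feed in the degenerate version on the right and note that the $r$-weights of $\mathbb{I}^{\rm deg}_\eta$ away from trapping suffice to close the transport estimate (the key observation, as already used in Proposition~\ref{transportminusnof}, is that the good spacetime term generated by (\ref{klo}) has stronger $r$-decay than what is needed on the right of (\ref{klp}), so the coupling closes). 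The $\mathbb{E}_{\widetilde\Sigma_\tau,\eta}[\Psi^{[-2]}]$ and $\mathbb{E}_{\widetilde\Sigma_\tau}[\upalpha^{[-2]}]$ terms on the right of (\ref{sety2}) appear precisely because (\ref{basdegmor2}) requires these data energies. The overbarred versions follow by running the identical argument but keeping in the collection of commutation vector fields $\Gamma$ also $\frac{r^2+a^2}{\Delta}\underline{L}$, using the improved redshifted estimate (\ref{basdegmor2}) with overbars (which is exactly what Theorem~\ref{prop:basicnondeg} provides) and the fact that, by (\ref{psibart})--(\ref{Pbart}) and Remark~\ref{control.the.l.derivs2}, control of $\overline{\mathbb{E}}$ on $\Psi^{[-2]}$ plus the $L$-transport yields control of the $\underline{L}$-derivative of $\uppsi^{[-2]}$ and $\upalpha^{[-2]}$ near the horizon.

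\textbf{Main obstacle.} The genuinely delicate point is the treatment of the $\mathcal{I}^+$ flux terms and the matching of $r$-weights between the three quantities: because no $p$-weight appears in the $[-2]$-energies (unlike the $[+2]$ case), one must verify that the weight $r^{-1-\eta}$ chosen in (\ref{zas})--(\ref{zas2}) and (\ref{fiid})--(\ref{finid}) is simultaneously (i) weak enough that the $r^\eta$-multiplier bulk term $\mathbb{I}_\eta[\Psi^{[-2]}]$ controls the right-hand side of (\ref{klo}), and (ii) strong enough that integrating (\ref{klp}) produces a coercive flux on $\mathcal{I}^+$ for $\upalpha^{[-2]}$; this is the content of the remark after Proposition~\ref{transportminusnof} (cf.\ also Remark~\ref{remark:avoideps}) and it is why the factor $1+r^{-\eta}$ rather than a pure power is used in (\ref{klp})--(\ref{klo}). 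Once this weight-bookkeeping is checked, everything else is a routine integration by parts combined with Cauchy--Schwarz and the already-established Theorem~\ref{prop:basicnondeg}.
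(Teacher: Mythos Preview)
Your approach to (\ref{setx2}) is correct and matches the paper: since the right hand side of the $\Gamma$-commuted (\ref{klp}) involves $|\Gamma\uppsi^{[-2]}|^2$, which is controlled \emph{non-degenerately} by $\mathbb{I}[\uppsi^{[-2]}]$, global integration works without issue. The paper explicitly notes that no cut-off is required for this estimate.

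There is, however, a genuine gap in your treatment of (\ref{sety2}). When you integrate the $T$- and $\Phi$-commuted versions of (\ref{klo}) globally over $\widetilde{\mathcal{R}}(\tau_1,\tau_2)$, the right hand side contains $\int |T\Psi^{[-2]}|^2 r^{\eta-3}$ and $\int |\Phi\Psi^{[-2]}|^2 r^{\eta-3}$ over the \emph{entire} spacetime, including the trapping region $r\sim 3M$. But $\mathbb{I}^{\rm deg}_\eta[\Psi^{[-2]}]$ does \emph{not} control $|T\Psi^{[-2]}|^2$ or $|\Phi\Psi^{[-2]}|^2$ there: in its definition the first-order terms carry the cut-off $\chi$, which vanishes in $[A_1^*/4,A_2^*/4]$, and only the zeroth order term $|\Psi^{[-2]}|^2$ and the $\partial_{r^*}$-derivative survive at trapping. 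Your sentence ``the $r$-weights of $\mathbb{I}^{\rm deg}_\eta$ away from trapping suffice to close the transport estimate'' does not address what happens inside the trapping region, and (\ref{basdegmor2}) cannot be used to ``replace $\mathbb{I}_\eta$ by $\mathbb{I}^{\rm deg}_\eta$ plus initial data'' since the inequality goes the wrong way.

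The paper resolves this by multiplying (\ref{klo}) by a radial cut-off $\xi$ equal to $1$ for $r\geq 9M$ and $0$ for $r\leq 8M$, so that the transport identity is applied only where $|T\Psi^{[-2]}|^2$ \emph{is} controlled by $\mathbb{I}^{\rm deg}_\eta[\Psi^{[-2]}]$. Bringing $\xi$ inside the divergence produces an error supported in $[8M,9M]$ involving only $\uppsi^{[-2]}$, which is controlled by the non-degenerate $\mathbb{I}[\uppsi^{[-2]}]$ from (\ref{basdegmor2}). For the overbarred statement, the paper proceeds in two steps: first commute (\ref{Pbart}) and (\ref{psibart}) with $\underline{L}$, controlling the commutator $[L,\underline{L}]\sim \frac{a}{r^3}\Phi$ by the already-established $\Phi$-commuted estimate; then commute with $\frac{r^2+a^2}{\Delta}\underline{L}$ and observe that the additional commutator term has a \emph{good sign} near the horizon in (\ref{klo}) and (\ref{klp}). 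Your one-step proposal of simply adding $\frac{r^2+a^2}{\Delta}\underline{L}$ to the set $\Gamma$ misses both the intermediate $\underline{L}$-commutation and this sign observation.
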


\begin{proof}
Multiply (\ref{klo}) by a cut-off function $\xi$ which is equal to $1$ for $r\geq 9M$ and equal to zero for $r\leq 8M$. Bringing $\xi$ inside the first bracket produces an error-term supported in $\left[8M,9M\right]$, which (upon integration) is for any $n$ controlled by the basic estimate (\ref{basdegmor2}). Upon integration of the resulting identity we deduce (\ref{sety2}) for $\Gamma$ being the identity in the energies appearing. Now we observe that the same estimate holds for the $T$ and $\Phi$ commuted equations (note that (\ref{Pbart}) commutes trivially with the Killing fields $T$ and $\Phi$ so the estimate (\ref{idp2}) trivially holds for the commuted variables). 

The estimate (\ref{setx2}) is proven completely analogously except that here no cut-off is required in view of the non-degenerate norm of $\uppsi^{[-2]}$ appearing on the right hand side: One first applies (\ref{klp}) and the same estimate for the $T$ and $\Phi$ commuted variables. 

To obtain the estimates with an overbar 
one first commutes (\ref{Pbart}) and (\ref{psibart}) with $ \underline{L}$ and notes that the analogue of (\ref{klo}) and (\ref{klp}) can now be applied with the error from the commutator $\left[L, \underline{L}\right] \sim \frac{a}{r^3} \Phi $ being controlled by the previous step.
Secondly, one commutes (\ref{Pbart}) and (\ref{psibart})  with the vectorfield $\frac{r^2+a^2}{\Delta} \underline{L}$ which extends regularly to the horizon and observes that the additional commutator term leads to a good sign (near the horizon) in the estimates (\ref{klo}) and (\ref{klp}).
\end{proof}

\subsubsection{Completing the proof of Theorem \ref{prop:maindecayprop} for $s=-2$} \label{sec:finishit2}
Combining the estimate of Proposition \ref{prop:nondegweighted2} with that of Proposition \ref{prop:weightedtransport2} we deduce for $p \in \{\eta, 1, 2\}$ (first for $K=0$ and then by trivial commutation with the Killing field $T$ for any $K \in \mathbb{N}$) the estimate

\begin{align}
&\sum_{k=0}^K \left( \mathbb{E}_{\widetilde\Sigma_{\tau}} \left[T^k\upalpha^{[-2]}\right] \left(\tau_2\right) + \mathbb{E}_{\widetilde\Sigma_{\tau}} \left[T^k \uppsi^{[-2]}\right] \left(\tau_2\right) + \overline{\mathbb{E}}_{\widetilde\Sigma_{\tau},p} \left[T^k\Psi^{[-2]}\right] \left(\tau_2\right)   \right)
\nonumber \\
 +&\sum_{k=0}^K \left( \mathbb{I} \left[T^k\upalpha^{[-2]}\right]  \left(\tau_1,\tau_2\right)+\mathbb{I} \left[T^k\uppsi^{[-2]}\right]  \left(\tau_1,\tau_2\right) 
 + \overline{\mathbb{I}}^{\rm deg}_{p} \left[T^k\Psi^{[-2]}\right] \left(\tau_1,\tau_2\right)  \right)
 \nonumber \\
 \lesssim &\sum_{k=0}^K \left( \mathbb{E}_{\widetilde\Sigma_{\tau}} \left[T^k\upalpha^{[-2]}\right] \left(\tau_1\right) + \mathbb{E}_{\widetilde\Sigma_{\tau}} \left[T^k \uppsi^{[-2]}\right] \left(\tau_1\right) + \overline{\mathbb{E}}_{\widetilde\Sigma_{\tau},p} \left[T^k\Psi^{[-2]}\right] \left(\tau_1\right)   \right)
 \label{dyadcon2}
\end{align}
and also
\begin{align}
&\sum_{k=0}^K \left( \mathbb{E}_{\widetilde\Sigma_{\tau}} \left[T^k\upalpha^{[-2]}\right] \left(\tau_2\right) + \mathbb{E}_{\widetilde\Sigma_{\tau}} \left[T^k \uppsi^{[-2]}\right] \left(\tau_2\right) + \overline{\mathbb{E}}_{\widetilde\Sigma_{\tau},p} \left[T^k\Psi^{[-2]}\right] \left(\tau_2\right)   \right)
\nonumber \\
 +&\sum_{k=0}^K \left( \mathbb{I} \left[T^k\upalpha^{[-2]}\right]  \left(\tau_1,\tau_2\right)+\mathbb{I} \left[T^k\uppsi^{[-2]}\right]  \left(\tau_1,\tau_2\right) 
 + \overline{\mathbb{I}}_{p} \left[T^k\Psi^{[-2]}\right] \left(\tau_1,\tau_2\right)  \right)
 \nonumber \\
 \lesssim &\sum_{k=0}^K \left( \mathbb{E}_{\widetilde\Sigma_{\tau}} \left[T^k\upalpha^{[-2]}\right] \left(\tau_1\right) + \mathbb{E}_{\widetilde\Sigma_{\tau}} \left[T^k \uppsi^{[-2]}\right] \left(\tau_1\right) + \overline{\mathbb{E}}_{\widetilde\Sigma_{\tau},p} \left[T^k\Psi^{[-2]}\right] \left(\tau_1\right)   \right) \nonumber \\
 & \ +  \mathbb{E}_{\widetilde\Sigma_{\tau}} \left[T^{K+1}\upalpha^{[-2]}\right] \left(\tau_1\right) + \mathbb{E}_{\widetilde\Sigma_{\tau}} \left[T^{K+1} \uppsi^{[-2]}\right] \left(\tau_1\right) + \overline{\mathbb{E}}_{\widetilde\Sigma_{\tau},\eta} \left[T^{K+1}\Psi^{[-2]}\right] \left(\tau_1\right)  \, .
 \label{dyad2}
\end{align}
Let us denote the right hand side of the second estimate on the initial data slice $\widetilde\Sigma_0$ (i.e.~for $\tau_1=\tau_0$) by $\mathbb{D}_{K+1, p} \left[\Psi^{[-2]}, \uppsi^{[-2]}, \upalpha^{[-2]}\right] \left(\tau_0\right)$. 

Applying the estimate (\ref{dyad2}) for $K=1$ and $p=2$ implies (after using a standard argument involving dyadic sequences) along a dyadic sequence $\tau_n \sim 2^{n}\tau_0$ the estimate
\begin{align}
\sum_{k=0}^1 \left( \mathbb{E}_{\widetilde\Sigma_{\tau}} \left[T^k \upalpha^{[-2]}\right] \left(\tau_n\right) + \mathbb{E}_{\widetilde\Sigma_{\tau}} \left[T^k \uppsi^{[-2]}\right] \left(\tau_n\right) + \overline{\mathbb{E}}_{\widetilde\Sigma_{\tau},1} \left[T^k\Psi^{[-2]}\right] \left(\tau_n\right) \right)  \lesssim \frac{\mathbb{D}_{2, 2} \left[\Psi^{[-2]}, \uppsi^{[-2]}, \upalpha^{[-2]}\right] \left(\tau_0\right)}{\tau_n} \, . \nonumber
\end{align}
Using the above and applying (\ref{dyadcon2}) for $p=1$, $K=1$ between the time $\tau_1=\tau_n$ and any $\tau_2 \in \left(\tau_n, \tau_{n+1}\right]$ yields the previous estimate for any $\tau$, not only the members of the dyadic sequence.
Turning back to (\ref{dyad2}) now with $K=0$ we use the previous estimate and a similar dyadic argument to produce along a dyadic sequence the estimate
\begin{align}
\mathbb{E}_{\widetilde\Sigma_{\tau}} \left[ \upalpha^{[-2]}\right] \left(\tau_n\right) + \mathbb{E}_{\widetilde\Sigma_{\tau}} \left[ \uppsi^{[-2]}\right] \left(\tau_n\right) + \overline{\mathbb{E}}_{\widetilde\Sigma_{\tau},\eta} \left[\Psi^{[-2]}\right] \left(\tau_n\right)  \lesssim \frac{\mathbb{D}_{2, 2} \left[\Psi^{[-2]}, \uppsi^{[-2]}, \upalpha^{[-2]}\right] \left(\tau_0\right)}{\tau_n^{2-\eta}} \, . \nonumber 
\end{align}
Using the above and applying (\ref{dyadcon2}) with $p=\eta$ and $K=0$ now yields (\ref{step1n}) of Theorem \ref{prop:maindecayprop}. To obtain the second estimate, one simply repeats the above proof using that the estimate of Proposition \ref{prop:weightedtransport2} also holds for the energies with an overbar.

\appendix

\section{Derivation of the equation for $\Psi^{[s]}$}
\label{sec:Psiderivation}

\subsection{The Teukolsky equation}
The Teukolsky equation has appeared in various forms throughout the paper. The most concise is perhaps the mode decomposed form (\ref{basic}), which we use as a starting point in our derivations. Of course all computations in this appendix could be carried out also in physical space. It is only for the purpose of cleaner notation that we use the mode decomposed version for our derivations.

Recall from (\ref{wdefinit}) the definition
\[
w = \frac{\Delta}{\left(r^2+a^2\right)^2} \, .
\]
Using the relation
\[
\frac{w^{\prime \prime}}{w} - 2 \frac{\left(w^\prime\right)^2}{w^2}-V_0^{[+2]} -2w= \frac{w^{\prime \prime}}{w} - 2 \frac{\left(w^\prime\right)^2}{w^2}-V_0^{[-2]} +2w = -3w \frac{a^4+a^2r^2-2Mr^3}{(r^2+a^2)^2} +2w \, ,
\]
we can rewrite the separated Teukolsky equation (\ref{basic}) for spin $s=+2$ as
\begin{align} \label{teudnp}
-L\underline{L}  \left(u^{[+2]}w\right) = & -2\frac{w^\prime}{w} \underline{L} \left(u^{[+2]}w\right) + w \left(\Lambda_{m \ell}^{[+2]} + 2\right) \left(u^{[+2]}w\right) - 3 w\left(+2\right) \frac{r}{r^2+a^2} iam\left(u^{[+2]}w\right) \nonumber \\
&+ \left(u^{[+2]}w\right) \left(-3w \frac{a^4+a^2r^2-2Mr^3}{(r^2+a^2)^2} +2w\right) - \frac{\Delta}{\rho^2} F^{[+2]}\, .
\end{align}
Similarly, we can write the separated Teukolsky equation for spin $s=-2$ as
\begin{align} \label{teudnm}
-\underline{L}L  \left(u^{[-2]}w\right) = & 2\frac{w^\prime}{w} L  \left(u^{[-2]}w\right) + w \left(\Lambda_{m \ell}^{[-2]} - 2\right)  \left(u^{[-2]}w\right) - 3 w\left(-2\right) \frac{r}{r^2+a^2} iam \left(u^{[-2]}w\right) \nonumber \\
&+  \left(u^{[-2]}w\right)\left(-3w \frac{a^4+a^2r^2-2Mr^3}{(r^2+a^2)^2} +2w\right)- \frac{\Delta}{\rho^2}F^{[-2]} \, .
\end{align}
Introducing
\[
Q^{[\pm 2]} = \mp 6 \frac{r}{r^2+a^2} iam -3 \frac{a^4+a^2r^2-2Mr^3}{(r^2+a^2)^2} +2
\]
as well as recalling the definitions
\begin{align} \label{ghu}
\underline{L} \left(u^{[+2]} \cdot w\right) = - 2w \sqrt{\Delta}\psi^{[+2]} \ \ \ , \ \ \ \underline{L} \left(\sqrt{\Delta} \psi^{[+2]}\right) = w \Psi^{[+2]} \, ,
\end{align}
we can write (\ref{teudnp}) as
\begin{align} \label{teubest}
2 L \left(\sqrt{\Delta}\psi^{[+2]}\right) - 2\sqrt{\Delta}\psi^{[+2]} \frac{w^\prime}{w}  =  \left(\Lambda_{m \ell}^{[+2]} + 2\right) \left(u^{[+2]}w\right) + Q^{[+2]} \left(u^{[+2]}w\right) - \frac{\Delta}{w\rho^2} F^{[+2]} 
 \, .
\end{align}
Similarly, recalling the definitions 
\begin{align} \label{ghu2}
{L} \left(u^{[-2]} \cdot w\right) =  2w \sqrt{\Delta}\psi^{[-2]} \ \ \ , \ \ \ {L} \left(\sqrt{\Delta} \psi^{[-2]}\right) = -w \Psi^{[-2]} \, ,
\end{align}
we can write (\ref{teudnm}) as 
\begin{align} 
-2 \underline{L} \left(\sqrt{\Delta}{\psi}^{[-2]}\right) - 2\sqrt{\Delta}{\psi}^{[-2]} \frac{w^\prime}{w}  =  \left(\Lambda_{m \ell}^{[-2]} - 2\right) \left({u}^{[-2]}w\right) +Q^{[-2]} \left(u^{[-2]}w\right) - \frac{\Delta^3}{w\rho^2} F^{[-2]} \, . \nonumber
\end{align} 
\begin{remark}
Recall that ${u^{[-2]}w} \sim \Delta^2$ and ${\psi}^{[-2]} \sim \sqrt{\Delta}$ near the horizon and rewriting this slightly one sees that the left hand side is also $\mathcal{O}\left(\Delta^2\right)$:
\[
-2 \underline{L} \left(\sqrt{\Delta}{\psi}^{[-2]}\right) - 2\sqrt{\Delta}{\psi}^{[-2]} \frac{w^\prime}{w}  = +2 (r^2+a^2) w\underline{L} \left(\frac{\sqrt{\Delta}{\psi}^{[-2]}}{w(r^2+a^2)}\right) - 2\sqrt{\Delta}{\psi}^{[-2]}\frac{2r \Delta}{(r^2+a^2)^2} \, .
\]
\end{remark}

\subsection{Derivation of the $\Psi^{[+2]}$ equation (separated version)}
  
Observing the commutation relation
\[
\left[\underline{L},L\right] = \frac{4ra}{r^2+a^2} w \cdot i m \, ,
\]
we obtain after applying $\underline{L}$ to the Teukolsky equation in the form (\ref{teubest}) recalling again (\ref{ghu})
\begin{align} 
2 L \left(w \Psi^{[+2]}\right) + \frac{8ra}{r^2+a^2} w \cdot im\left(\sqrt{\Delta} \psi^{[+2]}\right) - 2w^\prime \Psi^{[+2]} + 2\left(\sqrt{\Delta} \psi^{[+2]}\right) \left(\frac{w^\prime}{w}\right)^\prime \nonumber \\
= -2w \left(\Lambda_{m \ell}^{[+2]} + 2\right) \left(\sqrt{\Delta} \psi^{[+2]}\right) -2w Q^{[+2]} \left(\sqrt{\Delta}\psi^{[+2]}\right) - \left(\partial_{r^*} Q^{[+2]}\right) \left(u^{[+2]}w\right) -\underline{L} \left(\frac{\Delta}{w\rho^2} F^{[+2]}\right)\, ,
\end{align}
which we immediately simplify to 
\begin{align}  \label{elliptic1}
 L \left( \Psi^{[+2]}\right) + \frac{4ra}{r^2+a^2}  \cdot im\left(\sqrt{\Delta} \psi^{[+2]}\right)  + \left(\sqrt{\Delta} \psi^{[+2]}\right) \frac{1}{w} \left(\frac{w^\prime}{w}\right)^\prime \nonumber \\
= - \left(\Lambda_{m \ell}^{[+2]} + 2\right) \left(\sqrt{\Delta} \psi^{[+2]}\right) - Q^{[+2]} \left(\sqrt{\Delta}\psi^{[+2]}\right) - \frac{1}{2w}\left(Q^{[+2]}\right)^\prime \left(u^{[+2]}w\right) - \frac{1}{2w}\underline{L} \left(\frac{\Delta}{w\rho^2} F^{[+2]}\right)\, .
\end{align}
We now apply another $\underline{L}$ derivative, which produces 
\begin{align}  \label{elm}
 L \underline{L} \left( \Psi^{[+2]}\right) &+w \frac{8ra}{r^2+a^2}  \cdot im \Psi^{[+2]} + w Q^{[+2]} \Psi^{[+2]}
 + \left(\frac{w^\prime}{w}\right)^\prime \Psi^{[+2]}  +w \left(\Lambda_{m \ell}^{[+2]} + 2\right) \Psi^{[+2]}
 \nonumber \\
= &\left[ 4aim \left(\frac{r}{r^2+a^2}\right)^\prime + \left(\frac{1}{w} \left(\frac{w^\prime}{w}\right)^\prime \right)^\prime + 2\left(Q^{[+2]}\right)^\prime \right] \left(\sqrt{\Delta}\psi^{[+2]}\right) \nonumber \\
 &+ \left[ \left(\frac{1}{2w} \left(Q^{[+2]}\right)^\prime \right)^\prime \right] \left(u^{[+2]}w\right) -\frac{1}{2}\underline{L} \left( \frac{1}{w} \underline{L} \left(\frac{\Delta}{w\rho^2} F^{[+2]}\right)\right)\, . 
\end{align}
Using elementary algebra we simplify the terms in the first line of (\ref{elm}) to
\begin{align}
 L \underline{L} \left( \Psi^{[+2]}\right) &+w \frac{2ra}{r^2+a^2}  \cdot im \Psi^{[+2]} + w \left(\Lambda_{m \ell}^{[+2]} +6 - \frac{6M}{r} \frac{r^2-a^2}{r^2+a^2} -7a^2 w\right) \Psi^{[+2]} \, ,
\end{align}
which can be written more succinctly as
\begin{align}
\frac{1}{2} \left( L \underline{L} + \underline{L} L \right) \left( \Psi^{[+2]}\right) + w \left(\Lambda_{m \ell}^{[+2]} +6 - \frac{6M}{r} \frac{r^2-a^2}{r^2+a^2} -7a^2 w\right) \Psi^{[+2]} \, .
\end{align}
The terms in the second line simplify to
\begin{align}
w \left[ -8aim \cdot \frac{-r^2+a^2}{r^2+a^2} + 20a^2 \frac{r^3-3Mr^2+ra^2+Ma^2}{\left(r^2+a^2\right)^2}\right]\sqrt{\Delta}\psi^{[+2]} \, .
\end{align}
Finally, for the third line of (\ref{elm}) excluding the inhomogeneous term we obtain
\begin{align}
\left[ +12a^3 w \frac{r}{r^2+a^2} im -3a^2 w  \frac{r^4 -a^4+10Mr^3-6Ma^2r}{(r^2+a^2)^2} \right] \left(u^{[+2]} w\right) \, .
\end{align}

\subsection{Derivation of the $\Psi^{[-2]}$ equation (separated version)}
  
Observing the commutation relation
\[
\left[\underline{L},L\right] = \frac{4ra}{r^2+a^2} w \cdot i m \, ,
\]
we obtain after applying ${L}$ to the Teukolsky equation in the form 
\begin{align} 
-2 \underline{L} \left(\sqrt{\Delta}{\psi}^{[-2]}\right) - 2\sqrt{\Delta}{\psi}^{[-2]} \frac{w^\prime}{w}  =  \left(\Lambda_{m \ell}^{[-2]} - 2\right) \left({u}^{[-2]}w\right) +Q^{[-2]} \left(u^{[-2]}w\right) - \frac{\Delta^3}{w\rho^2} F^{[-2]} \, \nonumber
\end{align}
recalling again (\ref{ghu2})
\begin{align} 
+2 \underline{L} \left(w \Psi^{[-2]}\right) + \frac{8ra}{r^2+a^2} w \cdot im\left(\sqrt{\Delta} \psi^{[-2]}\right) + 2w^\prime \Psi^{[-2]} - 2\left(\sqrt{\Delta} \psi^{[-2]}\right) \left(\frac{w^\prime}{w}\right)^\prime \nonumber \\
= 2w \left(\Lambda_{m \ell}^{[-2]} - 2\right) \left(\sqrt{\Delta} \psi^{[-2]}\right) +2w Q^{[-2]} \left(\sqrt{\Delta}\psi^{[-2]}\right) + \left(\partial_{r^*} Q^{[-2]}\right) \left(u^{[-2]}w\right)- L \left(\frac{\Delta^3}{w\rho^2} F^{[-2]}\right) \, ,\nonumber
\end{align}
which we immediately simplify to 
\begin{align} \label{elliptic2}
 \underline{L} \left( \Psi^{[-2]}\right) + \frac{4ra}{r^2+a^2}  \cdot im\left(\sqrt{\Delta} \psi^{[-2]}\right)  - \left(\sqrt{\Delta} \psi^{[-2]}\right) \frac{1}{w} \left(\frac{w^\prime}{w}\right)^\prime \nonumber \\
=  \left(\Lambda_{m \ell}^{[-2]} - 2\right) \left(\sqrt{\Delta} \psi^{[-2]}\right) + Q^{[-2]} \left(\sqrt{\Delta}\psi^{[-2]}\right) +\frac{1}{2w}\left(Q^{[-2]}\right)^\prime \left(u^{[-2]}w\right) - \frac{1}{2w}L \left(\frac{\Delta^3}{w\rho^2} F^{[-2]}\right)\, .
\end{align}
We now apply another ${L}$ derivative, which produces 
\begin{align}  \label{elm2}
 L \underline{L} \left( \Psi^{[-2]}\right) &-w \frac{4ra}{r^2+a^2}  \cdot im \Psi^{[-2]} + w Q^{[-2]} \Psi^{[-2]}
 + \left(\frac{w^\prime}{w}\right)^\prime \Psi^{[-2]}  +w \left(\Lambda_{m \ell}^{[-2]} - 2\right) \Psi^{[-2]}
 \nonumber \\
= &\left[ -4aim \left(\frac{r}{r^2+a^2}\right)^\prime + \left(\frac{1}{w} \left(\frac{w^\prime}{w}\right)^\prime \right)^\prime + 2\left(Q^{[+2]}\right)^\prime \right] \left(\sqrt{\Delta}\psi^{[-2]}\right) \nonumber \\
 &+ \left[ \left(\frac{1}{2w} \left(Q^{[+2]}\right)^\prime \right)^\prime \right] \left(u^{[-2]}w\right)- \frac{1}{2} L \left( \frac{1}{w}L \left(\frac{\Delta^3}{w\rho^2} F^{[-2]}\right)\right) \, .
\end{align}
Using elementary algebra we simplify the terms in the first line of (\ref{elm2}) to
\begin{align}
 L \underline{L} \left( \Psi^{[-2]}\right) &+w \frac{2ra}{r^2+a^2}  \cdot im \Psi^{[-2]} + w \left(\Lambda_{m \ell}^{[-2]} +2 - \frac{6M}{r} \frac{r^2-a^2}{r^2+a^2} -7a^2 w\right) \Psi^{[-2]} \, ,
\end{align}
which can be written more succinctly as
\begin{align}
\frac{1}{2} \left( L \underline{L} + \underline{L} L \right) \left( \Psi^{[-2]}\right) + w \left(\Lambda_{m \ell}^{[-2]} +2 - \frac{6M}{r} \frac{r^2-a^2}{r^2+a^2} -7a^2 w\right) \Psi^{[-2]} \, .
\end{align}
The terms in the second line simplify to
\begin{align}
w \left[ +8aim \cdot \frac{-r^2+a^2}{r^2+a^2} + 20a^2 \frac{r^3-3Mr^2+ra^2+Ma^2}{\left(r^2+a^2\right)^2}\right]\sqrt{\Delta}\psi^{[-2]} \, .
\end{align}
Finally, for the third line of (\ref{elm}) excluding the inhomogeneous term we obtain
\begin{align}
\left[ -12a^3 w \frac{r}{r^2+a^2} im -3a^2 w  \frac{r^4 -a^4+10Mr^3-6Ma^2r}{(r^2+a^2)^2} \right] \left(u^{[-2]} w\right) \, .
\end{align}

\subsection{Summary: The $\Psi^{[\pm 2]}$ equation (separated version)} \label{appendix:RW}
The separated equation for $\Psi^{[+2]}$ and $\Psi^{[-2]}$ can be written as 
\begin{align} \label{dnf}
-\frac{1}{2} \left( L \underline{L}+ \underline{L} L\right) \Psi^{[s]}  &- \frac{\Delta}{\left(r^2+a^2\right)^2} \left(\lambda_{m \ell}^{[s]} -2am\omega + a^2 \omega^2 +s^2 +s \right)  \Psi^{[s]} \nonumber \\ &
+ \frac{\Delta}{\left(r^2+a^2\right)^2} \frac{6Mr}{r^2+a^2} \frac{r^2-a^2}{r^2+a^2}\Psi^{[s]}  
 +7a^2 \frac{\Delta^2}{(r^2+a^2)^4} \Psi^{[s]}  = \frac{\Delta}{\rho^2}\mathcal{J}^{[s]} + \mathfrak{G}^{[s]} \, ,
\end{align}
where the right hand is side given by
\begin{align}
\mathcal{J}^{[s]} = & \frac{\rho^2}{\left(r^2+a^2\right)^2} \left[ -4s\frac{r^2-a^2}{r^2+a^2} aim -20a^2 \frac{r^3-3Mr^2+ra^2+Ma^2}{\left(r^2+a^2\right)^2}\right]\left( \sqrt{\Delta}\psi^{[s]} \right) \nonumber \\
& +a^2 \frac{\rho^2}{\left(r^2+a^2\right)^2} \left[ -6s \frac{r}{r^2+a^2} aim
+3  \left( \frac{r^4 -a^4+10Mr^3-6Ma^2r}{(r^2+a^2)^2}\right) \right] \left(u^{[s]}w\right) \, ,
\end{align}
\begin{align}
\mathfrak{G}^{[+2]}= \frac{1}{2} \underline{L} \left( \frac{1}{w}\underline{L} \left(\frac{\Delta}{w\rho^2} F^{[+2]}\right)\right) \ \ \ , \ \ \ \mathfrak{G}^{[-2]}= \frac{1}{2} L \left( \frac{1}{w}L \left(\frac{\Delta^3}{w\rho^2} F^{[-2]}\right)\right) \, .
\end{align}
Finally, note that we can write (\ref{dnf}) also as
\[
\left({\Psi}^{[s]}\right)^{\prime \prime} + \left(\omega^2 - \mathcal{V}^{[+2]} \right){\Psi}^{[s]} = \frac{\Delta}{\rho^2}\mathcal{J}^{[s]} +\mathfrak{G}^{[s]}\, 
\]
for the potential
\begin{align} \label{Vdefappendix}
\mathcal{V}^{[s]}&=\frac{ \Delta \left(\lambda^{[s]}_{m \ell} +a^2\omega^2+s^2+s\right) + 4Mram\omega -a^2 m^2 }{\left(r^2+a^2\right)^2} -\frac{\Delta}{(r^2+a^2)^2}\frac{6Mr (r^2-a^2)}{(r^2+a^2)^2} -7 a^2 \frac{\Delta^2}{(r^2+a^2)^4} \nonumber \\
&= \mathcal{V}^{[s]}_0 + \mathcal{V}^{[s]}_1 + \mathcal{V}^{[s]}_2 \, .
\end{align}

\subsection{The $\Psi^{[\pm2]}$ equation (physical space)}
We can now translate the equation in the previous subsection to physical space to obtain the result of Proposition \ref{follfundprop}. We have
\begin{align}
-\frac{1}{2} \left( L \underline{L} + \underline{L} L\right) \Psi^{[s]}  &- \frac{\Delta}{\left(r^2+a^2\right)^2} \Big\{ \left(\mathring{\slashed\triangle}^{[s]}_m \left(0\right)+s^2 +s\right) \Psi^{[s]} -\frac{6Mr}{r^2+a^2} \frac{r^2-a^2}{r^2+a^2} \Psi^{[s]} -7a^2 \frac{\Delta}{(r^2+a^2)^2} \Psi^{[s]}\Big\}       \nonumber \\
& + \frac{\Delta}{\left(r^2+a^2\right)^2}\left(2 a T \Phi (\Psi^{[s]}) +a^2 \sin^2\theta TT(\Psi^{[s]}) - 2i s a\cos \theta T (\Psi^{[s]} ) \right) = \mathcal{J}^{[s]} + \mathfrak{G}^{[s]},
\end{align}
where
\begin{align}
\mathcal{J}^{[+2]} = & \frac{\Delta}{\left(r^2+a^2\right)^2} \left[ \frac{-8r^2+8a^2}{r^2+a^2} a\Phi -20a^2 \frac{r^3-3Mr^2+ra^2+Ma^2}{\left(r^2+a^2\right)^2}\right]\left( \sqrt{\Delta}\uppsi^{[+2]} \right) \nonumber \\
& +a^2 \frac{\Delta}{\left(r^2+a^2\right)^2} \left[ -12 \frac{r}{r^2+a^2}  a\Phi
+3  \left( \frac{r^4 -a^4+10Mr^3-6Ma^2r}{(r^2+a^2)^2}\right) \right] \left(\upalpha^{[+2]}\Delta^2 \left(r^2+a^2\right)^{-\frac{3}{2}}\right) \nonumber
\end{align}
and
\begin{align}
\mathcal{J}^{[-2]} = 
& \frac{\Delta}{\left(r^2+a^2\right)^2} \left[ \frac{8r^2-8a^2}{r^2+a^2} a \Phi  -20a^2 \frac{r^3-3Mr^2+ra^2+Ma^2}{\left(r^2+a^2\right)^2} \right]\left( \sqrt{\Delta}\psi^{[-2]} \right) \nonumber \\
& +a^2 \frac{\Delta}{\left(r^2+a^2\right)^2} \left[  +12 \frac{r}{r^2+a^2} a \Phi
+3  \left( \frac{r^4 -a^4+10Mr^3-6Ma^2r}{(r^2+a^2)^2}\right) \right] \left(\upalpha^{[-2]} \left(r^a+a^2\right)^{-\frac{3}{2}}\right)  \, .
\end{align}

 \newpage

\section{Auxilliary calculations for physical space multipliers}
\label{otherappendix}

\def\Psib{\overline\Psi}
We first recall the relations
\[
L + \underline{L} = 2T +2\frac{a}{r^2+a^2} \Phi \, \ \ , \ \ L-\underline{L} = 2 \partial_{r^*} \, .
\]
We will consider the identities generated by the following four multipliers (the smooth radial cut-offs $\chi$, $\xi$ and the smooth radial functions $f$, $h$, $y$ are chosen appropriately in the body of the paper)
\begin{enumerate}
\item The $T$-energy: $T\Psib$
\item The Lagragian multiplier: $h \Psib$
\item The $\Phi$-multiplier: $\omega_+\chi\Phi \Psib$  ($\chi$ a radial cut-off)
\item The $y$-multiplier: $f \left(L-\underline{L}\right)\Psib$
\item The redshift multiplier: $\frac{1}{w}\xi \underline{L} \Psib$ ($\xi$ a radial cut-off near the horizon)
\item The $r^p$ weighed multiplier: $r^p \beta_k \xi L\overline{\Psi}$ with $\beta_k=1+k\frac{M}{r}$ ($\xi$ a radial cut-off near infinity)
\end{enumerate}
each acting on the second order terms in the equation (\ref{RWtypeinthebulk}), namely (recall $w=\frac{\Delta}{\left(r^2+a^2\right)^2}$)
\begin{enumerate}[I.]
\item $\frac{1}{2} \left( L \underline{L} + \underline{L} L \right)\Psi$
\item $w\mathring{\slashed\triangle}^{[s]}_m \left(0\right) \Psi = \frac{1}{2} w\left(\eth \overline{\eth}+\overline{\eth} \eth \right)\Psi$
\item $ w 2 a T \Phi \Psi$
\item $ w a^2 \sin^2\theta TT\Psi$
\end{enumerate}
The point is that the $0^{th}$ order terms in (\ref{RWtypeinthebulk}) are easy to handle while for the (only) first order term in (\ref{RWtypeinthebulk}), $2is w a \cos \theta T\Psi$, we observe that for $X$ any real vectorfield commuting with $T$ we have
\begin{align}
2is w a \cos \theta T\Psi X \overline{\Psi} = & T \left(2is w a \cos \theta \Psi X \overline{\Psi}\right) - X \left( 2is w a \cos \theta \Psi T \overline{\Psi}  \right) + X \left(2isw a \cos \theta\right) \Psi T \overline{\Psi} \nonumber \\
& +2is w a \cos \theta X\Psi T\overline{\Psi} \, ,
\end{align}
and hence
\begin{align} \label{convert}
\textrm{Re} \left(2is wa \cos \theta T\Psi X\overline{\Psi} \right) &=  T \left(iswa \cos \theta  \Psi X \overline{\Psi} \right)- X \left( is w a \cos \theta \Psi T \overline{\Psi}  \right)+ X \left(isw a \cos \theta\right) \Psi T \overline{\Psi} \,  \nonumber \\
 &=  -T \left(swa \cos \theta \textrm{Im} \Psi X \overline{\Psi} \right)+ X \left( s w a \cos \theta \textrm{Im} \Psi T \overline{\Psi}  \right)- X \left(sw a \cos \theta\right) \textrm{Im} \Psi T \overline{\Psi}.
\end{align}
In particular for $X=T$ the right hand side is zero while for $X=\omega_+ \chi \Phi $ only the first two terms survive (and only the first after integration in $\phi$).

\newpage

\subsection{The $T$-multiplier: $T\Psib$} \label{sec:compT}
\subsubsection{Part I: $\frac{1}{2} \left( L \underline{L} + \underline{L} L \right)\Psi$}
\begin{align}
\frac{1}{2}{\rm Re} \left( L \underline{L} + \underline{L} L \right)\Psi\, T \Psib &= \frac{1}{4} {\rm Re}\left(L+\underline{L}\right) \left(L + \underline{L}\right) \Psi T \Psib -  \frac{1}{4} {\rm Re}\left(L-\underline{L}\right) \left(L - \underline{L}\right) \Psi T \Psib \nonumber \\
&=  \frac{1}{4} {\rm Re}\left( L + \underline{L}\right) \Big\{ \left(L + \underline{L}\right) \Psi T \Psib \Big\} - \frac{1}{8} T \Big\{ |\left(L+\underline{L}\right)\Psi|^2 \Big\} \nonumber \\
& \ \ -  \frac{1}{4} {\rm Re}\left( L - \underline{L}\right) \Big\{ \left(L - \underline{L}\right) \Psi T \Psib \Big\} + \frac{1}{8} T \Big\{ |\left(L-\underline{L}\right)\Psi|^2 \Big\}
\end{align}
which we write as
\begin{align}
\frac{1}{2} \textrm{Re} \left( L \underline{L} + \underline{L} L \right) \Psi T \overline{\Psi} &= \frac{1}{16} \left(L+\underline{L}\right) \Big\{  |\left(L+\underline{L}\right)\Psi|^2+ |\left(L-\underline{L}\right)\Psi|^2 -\frac{4a}{r^2+a^2}{\rm Re} \Phi\Psib \left(L+\underline{L}\right) \Psi \Big\} \nonumber \\
& \ \ +\frac{1}{8} \Phi \Big\{ \frac{a}{r^2+a^2} \left(  |\left(L+\underline{L}\right)\Psi|^2-|\left(L-\underline{L}\right)\Psi|^2\right)  \Big\} \nonumber \\
&\ \ -  \frac{1}{4} \left( L - \underline{L}\right) {\rm Re} \Big\{ \left(L - \underline{L}\right) \Psi T \Psib \Big\} 
\end{align}

\subsubsection{Part II: $w\left(\mathring{\slashed\triangle}^{[s]}_m \left(0\right) +s\right)\Psi$ (after integration over $\int \sin \theta d\theta d\phi$, see \eqref{laplacef})}

\begin{align}
 w{\rm Re}\left(\mathring{\slashed\triangle}^{[s]}_m \left(0\right) +s\right)\Psi T \Psib = +\frac{1}{2} T \Big\{ w |\mathring{\slashed{\nabla}}^{[s]} \Psi |^2\Big\}
\end{align}

\subsubsection{Part III: $w 2 a T \Phi \Psi$}
\begin{align}
 w 2 a {\rm Re} T \Phi \Psi T \Psib = \Phi \Big\{ aw |T\Psi|^2\Big\}
\end{align}

\subsubsection{Part IV: $w a^2 \sin^2\theta TT\Psi$}

\begin{align}
w a^2 \sin^2\theta {\rm Re} TT\Psi T \Psib = \frac{1}{2} T \Big\{ wa^2 \sin^2\theta |T\Psi|^2 \Big\}
\end{align}

\newpage

\subsection{The Lagrangian term: $h \Psib$} \label{sec:compL}
\subsubsection{Part I: $\frac{1}{2} \left( L \underline{L} + \underline{L} L \right)\Psi$}
\begin{align}
\frac{1}{2} \left( L \underline{L} + \underline{L} L \right){\rm Re} \Psi h \Psib &= \frac{1}{4} \left(L+\underline{L}\right) \left(L + \underline{L}\right) {\rm Re}\Psi h \Psib -  \frac{1}{4} \left(L-\underline{L}\right) \left(L - \underline{L}\right) {\rm Re}\Psi h \Psib \nonumber \\
&= \frac{1}{4} \left(L+\underline{L} \right) {\rm Re}\Big\{  \left(L + \underline{L}\right) \Psi h \Psib  \Big\} - \frac{1}{4} \left(L-\underline{L} \right) {\rm Re}\Big\{  \left(L - \underline{L}\right) \Psi h \Psib  \Big\} \nonumber \\
& \ \ \ +\frac{1}{4} h \Big[  |(L-\underline{L})\Psi|^2 -  |(L+\underline{L})\Psi|^2 \Big] + \frac{1}{4}h^\prime \left(L-\underline{L}\right) |\Psi|^2 \nonumber \\
&= \frac{1}{4} \left(L+\underline{L} \right) {\rm Re}\Big\{  \left(L + \underline{L}\right) \Psi h \Psi  \Big\} - \frac{1}{4} \left(L-\underline{L} \right) \Big\{  \left(L - \underline{L}\right) {\rm Re}\Psi h \Psi - h^\prime |\Psi|^2  \Big\} \nonumber \\
& \ \ \ +\frac{1}{4} h \Big[  |(L-\underline{L})\Psi|^2 -  |(L+\underline{L})\Psi|^2 \Big] - \frac{1}{2}h^{\prime \prime}  |\Psi|^2
\end{align}

\subsubsection{Part II: $w\left(\mathring{\slashed\triangle}^{[s]}_m \left(0\right) +s\right)\Psi $ (after integration over $\int \sin \theta d\theta d\phi$)}

\begin{align}
w\left(\mathring{\slashed\triangle}^{[s]}_m \left(0\right) +s\right)\Psi  h \Psib = + h w  |\mathring{\slashed{\nabla}}^{[s]} \Psi |^2
\end{align}

\subsubsection{Part III: $ w 2 a T \Phi \Psi$}
\begin{align}
 w 2 a {\rm Re} (T \Phi \Psi h \Psib) = \Phi {\rm Re}\left(w 2 a T \Psi h \Psib \right) - w2 a h {\rm Re}\{(T  \Psi)  (\Phi \Psib )\}
\end{align}

\subsubsection{Part IV: $wa^2 \sin^2\theta TT\Psi$}

\begin{align}
w a^2 \sin^2\theta{\rm Re}\{ TT\Psi h \Psib\} = T {\rm Re}\left(w a^2 \sin^2\theta T\Psi h\Psib \right) - w a^2 \sin^2\theta h |T\Psi|^2
\end{align}

\newpage

\subsection{The $\Phi$ multipier: $\omega_+\chi \Phi \Psib$} \label{sec:compphi}
\subsubsection{Part I: $\frac{1}{2} \left( L \underline{L} + \underline{L} L \right)\Psi$}

\begin{align}
\frac{1}{2} \left( L \underline{L} + \underline{L} L \right){\rm Re}\{\Psi\upomega_+\chi \Phi \Psib\}   &= \frac{1}{4} \left(L+\underline{L}\right) \left(L + \underline{L}\right) {\rm Re}\{\Psi \upomega_+\chi \Phi \Psib\} -  \frac{1}{4} \left(L-\underline{L}\right) \left(L - \underline{L}\right) {\rm Re}\{\Psi \upomega_+\chi \Phi \Psib\}\nonumber \\
&= \frac{1}{4} \left(L+\underline{L}\right) {\rm Re}\Big\{  \left(L + \underline{L}\right) \Psi \upomega_+\chi \Phi \Psib \Big\} - \frac{1}{8} \Phi \Big\{ \upomega_+ \chi\left|\left(L + \underline{L}\right) \Psi \right|^2 \Big\} \nonumber \\
& \ \ -\frac{1}{4} \left(L-\underline{L}\right) {\rm Re}\Big\{  \left(L - \underline{L}\right) \Psi \upomega_+\chi \Phi \overline{\Psi} \Big\} + \frac{1}{8} \Phi \Big\{ \upomega_+ \chi\left|\left(L - \underline{L}\right) \Psi \right|^2 \Big\} \nonumber \\
& \ \ +\frac{1}{2} \upomega_+ \chi^\prime  {\rm Re}\left((L-\underline{L})\Psi\right)(\Phi \overline{\Psi})
\end{align}

\subsubsection{Part II: $w\left(\mathring{\slashed\triangle}^{[s]}_m \left(0\right) + s\right) \Psi$ (after integration over $\int \sin \theta d\theta d\phi$)}

\begin{align}
w{\rm Re}\left(\mathring{\slashed\triangle}^{[s]}_m \left(0\right) + s\right) \Psi \upomega_+ \chi \Phi \Psib= +\frac{1}{2} \Phi \left(\upomega_+ \chi   |\mathring{\slashed{\nabla}}^s \Psi |^2 \right)
\end{align}

\subsubsection{Part III: $ w2 a T \Phi \Psi$}
\begin{align}
w2a{\rm Re}\{T\Phi \Psi \upomega_+\chi \Phi \Psib\} = T \Big\{ wa \upomega_+\chi |\Phi \Psi|^2 \Big\}
\end{align}

\subsubsection{Part IV: $w a^2 \sin^2\theta TT\Psi$}
\begin{align}
w a^2 \sin^2\theta {\rm Re}\{TT\Psi \upomega_+\chi \Phi \Psib\}  = T {\rm Re}\Big \{ w a^2 \sin^2\theta \upomega_+\chi (T\Psi)  \Phi \overline{\Psi} \Big\} - \frac{1}{2} \Phi \Big\{w a^2 \sin^2\theta \upomega_+\chi |T\Psi|^2 \Big\} 
\end{align}

\newpage 

\subsection{The $y$-multiplier: $y (L-$\underline{$L$}$)\Psib$} \label{sec:compy}
\subsubsection{Part I: $\frac{1}{2} \left( L \underline{L} + \underline{L} L \right)\Psi$}

\begin{align}
\frac{1}{2} {\rm Re}\{\left( L \underline{L} + \underline{L} L \right)\Psi \left(y (L-\underline{L})\Psib\right)\}  &= \frac{1}{4} {\rm Re}\{\left(L+\underline{L}\right) \left(L + \underline{L}\right) \Psi \left(y (L-\underline{L})\Psib\right)\}-  \frac{1}{4} {\rm Re}\{\left(L-\underline{L}\right) \left(L - \underline{L}\right) \Psi \left(y (L-\underline{L})\Psib\right)\} \nonumber \\
&=\frac{1}{4} {\rm Re}\left(L+ \underline{L}\right) \Big\{  y \left(L + \underline{L}\right) \Psi \left( (L-\underline{L})\Psi\right) \Big\}-\frac{1}{8} \left(L- \underline{L}\right) \Big\{  y  \left|(L-\underline{L})\Psi\right|^2 \Big\} \nonumber \\
&-\frac{1}{8}\left(L-\underline{L} \right) \Big\{ y  \left|(L+\underline{L})\Psi\right|^2 \Big\} + \frac{1}{4} y {\rm Re}\{\left[L-\underline{L},L+\underline{L}\right] \Psi \left(L+\underline{L}\right)\Psib\}
\nonumber \\
&+\frac{1}{4} y^\prime \left[ \left|(L+\underline{L})\Psi\right|^2 + \left|(L-\underline{L})\Psi\right|^2 \right]
\end{align}
Using the commutator identity
\[
\left[L-\underline{L},L+\underline{L}\right] = 2 \left[L,\underline{L} \right] = 4 \partial_{r^*} \left(\frac{a}{r^2+a^2}\right) \Phi = - \frac{8r a}{\left(r^2+a^2\right)^2} \frac{\Delta}{r^2+a^2} \Phi
\]
we conclude
\begin{align}
\frac{1}{2} \left( L \underline{L} + \underline{L} L \right)\Psi \left(y (L-\underline{L})\Psi\right)  
&=\frac{1}{4} {\rm Re}\left(L+ \underline{L}\right) \Big\{  y \left(L + \underline{L}\right) \Psi \left( (L-\underline{L})\Psi\right) \Big\} \nonumber \\
&-\frac{1}{8}\left(L-\underline{L} \right) \Big\{ y \left|(L+\underline{L})\Psi\right|^2 +y  \left|(L-\underline{L})\Psi\right|^2\Big\} 
\nonumber \\
&+\frac{1}{4} y^\prime \left[ \left|(L+\underline{L})\Psi\right|^2 + \left|(L-\underline{L})\Psi\right|^2 \right]-2y \frac{r a}{\left(r^2+a^2\right)^2} \frac{\Delta}{r^2+a^2} {\rm Re}\{\Phi \Psi \left(L+\underline{L}\right)\Psib\} \nonumber
\end{align}

\subsubsection{Part II: $w\left(\mathring{\slashed\triangle}^{[s]}_m \left(0\right) +s\right) \Psi$}

\begin{align}
{\rm Re}\int \sin \theta d\theta d\phi \,w\left(\mathring{\slashed\triangle}^{[s]}_m \left(0\right) +s\right) \Psi \left(y (L-\underline{L})\Psib\right) &= +\frac{1}{2} \left(L-\underline{L}\right)\Big\{\int \sin \theta d\theta d\phi \,  w y |\mathring{\slashed{\nabla}}^{[s]} \Psi |^2 \Big\} \nonumber\\ &- \frac{1}{2} \left[ \left(L-\underline{L}\right) \left(w y \right)\right] \int \sin \theta d\theta d\phi \,|\mathring{\slashed{\nabla}}^{[s]} \Psi |^2  
\end{align}

\subsubsection{Part III: $ w 2 a T \Phi \Psi$}

\begin{align}
w 2 a {\rm Re}\{T \Phi \Psi \left(y (L-\underline{L})\Psib\right)\} &=w  a y {\rm Re}\{\left(L+\underline{L} -2\frac{a}{r^2+a^2}\Phi \right) \Phi\Psi \left( (L-\underline{L})\Psib\right)\} \nonumber \\
&=-\Phi {\rm Re}\Big\{\frac{2a^2}{r^2+a^2}  w y  (\Phi \Psi) \left( (L-\underline{L})\Psib\right) \Big\} +   (L-\underline{L}) \Big \{ \frac{a^2}{r^2+a^2}  w y  |\Phi\Psi|^2  \Big\} \nonumber \\
&\ \ \ \ - \left[ (L-\underline{L}) \left( \frac{a^2}{r^2+a^2}  w y \right)\right] |\Phi\Psi|^2
 +\frac{1}{2} \Phi \Big\{ way |L\Psi|^2\Big\} - \frac{1}{2} \Phi \Big\{ way |\underline{L}\Psi|^2\Big\} \nonumber \\
& \ \ \ \ -L {\rm Re}\Big\{ way \Phi \Psi \underline{L} \Psib \Big\} + \underline{L} {\rm Re}\Big\{ way \Phi \Psi {L} \Psib\Big\} -\frac{4ra^2}{(r^2+a^2)^2} \frac{\Delta}{r^2+a^2} w y |\Phi \Psi|^2
\nonumber \\
& \ \ \ \ + a \left[ L \left(wy\right) \right]{\rm Re}\{ (\Phi\Psi)(\underline{L} \Psib)\} - a \left[ \underline{L} \left(wy\right) \right] {\rm Re}\{(\Phi\Psi)({L} \Psib)\}
\end{align}

\subsubsection{Part IV: $ wa^2 \sin^2\theta TT\Psi$}

\begin{align}
w a^2 \sin^2\theta {\rm Re}\{TT\Psi \left(y (L-\underline{L})\Psib\right)\} =& T {\rm Re}\Big\{ w a^2 \sin^2\theta T\Psi \left(y (L-\underline{L})\Psib \right) \Big\} - \frac{1}{2} \left(L-\underline{L}\right) \Big\{w a^2 \sin^2\theta y |T\Psi|^2 \Big\} \nonumber \\
&+\left(w y\right)^\prime a^2 \sin^2\theta | T\Psi|^2
\end{align}

\subsection{The redshift multiplier: $\frac{1}{w}\xi$\underline{$L$}$\overline{\Psi}$} \label{sec:comprs}
\subsubsection{Part I: $\frac{1}{2} \left( L \underline{L} + \underline{L} L \right)\Psi$}

\begin{align}
\frac{1}{2} \textrm{Re} \left( L \underline{L} + \underline{L} L \right)\Psi \left(\frac{1}{w} \xi \underline{L}\Psi\right)  &= L \underline{L} \Psi \left(\frac{1}{w} \xi \underline{L}\Psi\right) - \frac{1}{2} \textrm{Re} \left( \left[L,\underline{L}\right] \Psi \left(\frac{1}{w} \xi \underline{L}\Psi\right) \right)\nonumber \\
&=\frac{1}{2} L  \left(\frac{1}{w} \xi |\underline{L}\Psi|^2\right) -\frac{1}{2} \left(\frac{\xi}{w}\right)^\prime |\underline{L}\Psi|^2 + \frac{2r a \xi}{r^2+a^2}  \textrm{Re} \left( \Phi \Psi \underline{L}\Psi\right) \, .
\end{align}

\subsubsection{Part II: $w\left(\mathring{\slashed\triangle}^{[s]}_m \left(0\right) +s\right) \Psi$}

\begin{align}
\textrm{Re} \left( w\left(\mathring{\slashed\triangle}^{[s]}_m \left(0\right) +s\right) \Psi \left(\frac{1}{w} \xi \underline{L}\overline{\Psi}\right) \right)= +\frac{1}{2} \underline{L}\Big\{  \xi |\mathring{\slashed{\nabla}} \Psi |^2 \Big\} +\frac{1}{2}\xi^\prime |\mathring{\slashed{\nabla}} \Psi |^2  
\end{align}


\subsubsection{Part III: $ w 2 a T \Phi \Psi$}

\begin{align} \label{imuz}
w 2 a T \Phi \Psi \left(\frac{1}{w} \xi \underline{L}\overline{\Psi}\right) &=  a \xi \left(L+\underline{L} -2\frac{a}{r^2+a^2}\Phi \right) \Phi\Psi \underline{L}\overline{\Psi} \nonumber \\
&=-\Phi \Big\{ \frac{2a^2}{r^2+a^2}  \xi  (\Phi \Psi) \underline{L}\overline{\Psi}\Big\} +   \underline{L} \Big\{ \frac{a^2}{r^2+a^2}  \xi  |\Phi\Psi|^2  \Big\} \nonumber \\
&\ \ \ \ - \left[ \underline{L} \left( \frac{a^2}{r^2+a^2}  \xi \right)\right] |\Phi\Psi|^2
 +\frac{1}{2} \Phi \Big\{ a\xi |\underline{L}\Psi|^2\Big\} + a \xi L  \Phi\Psi \underline{L}\overline{\Psi} \, .
\end{align}
In view of
\begin{align}
a\xi L\Phi \Psi \underline{L} \overline{\Psi} &= L \left(a \xi \Phi \Psi \underline{L} \overline{\Psi} \right) -  a \xi^\prime \Phi \Psi \underline{L} \overline{\Psi} - a\xi \Phi \Psi \left[L, \underline{L} \right] \overline{\Psi} - a\xi \Phi \Psi \underline{L} L\overline{\Psi} \nonumber \\
&=L \left(a \xi \Phi \Psi \underline{L} \overline{\Psi} \right) -  a \xi^\prime \Phi \Psi \underline{L} \overline{\Psi} - a\xi \Phi \Psi \left[L, \underline{L} \right] \overline{\Psi} - \underline{L} \left( a \xi \Phi \Psi L \overline{\Psi}  \right)  - a\xi^\prime \Phi \Psi L\overline{\Psi} + a\xi \Phi \underline{L}\Psi L \overline{\Psi} 
\end{align}
and hence
\begin{align}
2 \textrm{Re} \left( a\xi L\Phi \Psi \underline{L} \overline{\Psi}\right) = & L \left(a \xi \textrm{Re} \left(\Phi \Psi \underline{L} \overline{\Psi}\right) \right) -  a \xi^\prime \textrm{Re} \left( \Phi \Psi \left(\underline{L} + L\right) \overline{\Psi}\right)- a\xi \textrm{Re} \left( \Phi \Psi \left[L, \underline{L} \right] \overline{\Psi}\right) \nonumber \\
& \ \ - \underline{L} \left( a \xi \textrm{Re} \left( \Phi \Psi L\overline{\Psi}\right)  \right) +  \Phi \left(a \xi \textrm{Re} \left( \underline{L}\Psi L \overline{\Psi} \right)\right) \nonumber
\end{align}
we conclude from (\ref{imuz})
\begin{align}
\textrm{Re} \left( w 2 a T \Phi \Psi \left(\frac{1}{w} \xi \underline{L}\overline{\Psi}\right) \right)
&=\Phi \Big\{-\frac{2a^2}{r^2+a^2}  \xi \textrm{Re} \left( \Phi \Psi \underline{L}\overline{\Psi} \right)+ \frac{1}{2} a \xi \textrm{Re} \left( \underline{L}\Psi L \overline{\Psi} \right)+\frac{1}{2} a\xi |\underline{L}\Psi|^2\Big\} 
 \\ &\ \ \ \
+   \underline{L} \Big \{ \frac{a^2}{r^2+a^2}  \xi  |\Phi\Psi|^2 - \frac{1}{2}a\xi \textrm{Re} \left( \Phi\Psi L\overline{\Psi} \right) \Big\} +L \left(\frac{1}{2} a \xi \textrm{Re} \left( \Phi \Psi \underline{L} \overline{\Psi} \right) \right) \nonumber \\
&\ \ \ \ - \left[ \underline{L} \left( \frac{a^2}{r^2+a^2}  \xi \right)\right] |\Phi\Psi|^2- \frac{1}{2} a \xi^\prime \textrm{Re} \left( \Phi \Psi \left(\underline{L} + L\right) \overline{\Psi}\right)- \frac{1}{2} a\xi \textrm{Re} \left( \Phi \Psi \left[L, \underline{L} \right] \overline{\Psi}\right)
 \, . \nonumber
\end{align}

\subsubsection{Part IV: $ wa^2 \sin^2\theta TT\Psi$}

\begin{align}
\textrm{Re} \left( w a^2 \sin^2\theta TT\Psi\left(\frac{1}{w} \xi \underline{L}\overline{\Psi}\right)\right)=& T \Big\{ \xi a^2 \sin^2\theta \textrm{Re} \left( T\Psi \underline{L}\overline{\Psi}\right) \Big\} - \frac{1}{2} \underline{L} \Big\{\xi a^2 \sin^2\theta |T\Psi|^2 \Big\} \nonumber \\
&-\frac{1}{2} \xi^\prime a^2 \sin^2\theta | T\Psi|^2 \, .
\end{align}

\subsection{The $r^p$ multiplier: $r^p \beta_k \xi L\overline{\Psi}$ with $\beta_k=1+k\frac{M}{r}$} \label{sec:comprp}
\subsubsection{Part I: $\frac{1}{2} \left( L \underline{L} + \underline{L} L \right)\Psi$}

\begin{align}
\textrm{Re} \left( \frac{1}{2} \left( L \underline{L} + \underline{L} L \right)\Psi \left(r^p\beta_k\xi L\overline{\Psi}\right)\right)  &= \textrm{Re} \left( \underline{L} L\Psi \left(r^p\beta_k \xi L\overline{\Psi}\right)\right) + \textrm{Re} \left( \frac{1}{2} \left[L,\underline{L}\right] \Psi \left(r^p \beta_k\xi L\overline{\Psi}\right)\right) \nonumber \\
&=\frac{1}{2} \underline{L}  \left(\xi r^p \beta_k |{L}\Psi|^2\right) +\frac{1}{2} \left( \xi \left(pr^{p-1} + \mathcal{O}\left(r^{p-2}\right)\right) + \xi^\prime r^p \beta_k \right) |{L}\Psi|^2 \nonumber \\
& \ \ \ - \frac{2r a \xi r^p \beta_kw}{r^2+a^2}  \textrm{Re} \left( \Phi \Psi {L}\overline{\Psi}\right)
\end{align}

\subsubsection{Part II: $w\left(\mathring{\slashed\triangle}^{[s]}_m \left(0\right) +s\right) \Psi$}

\begin{align}
\textrm{Re}\left( w\left(\mathring{\slashed\triangle}^{[s]}_m \left(0\right) +s\right) \Psi \left(r^p\xi L\Psi\right)\right)  &= +\frac{1}{2} {L}\Big\{ w \xi r^p \beta_k |\mathring{\slashed{\nabla}} \Psi |^2 \Big\} \nonumber \\
& \ \ +\frac{1}{2}\left(\xi \left[\frac{\left(2-p\right)}{r^{3-p}} + \frac{(3-p)\left(k-2\right)M}{r^{4-p}} + \mathcal{O}\left(r^{p-5}\right) \right]+  \frac{\xi^\prime w}{r^{-p}}\right) |\mathring{\slashed{\nabla}} \Psi |^2  
\end{align}

\subsubsection{Part III: $ w 2 a T \Phi \Psi$}

\begin{align} \label{imu}
\textrm{Re} \left( w 2 a T \Phi \Psi \left(r^p \beta_k \xi {L}\overline{\Psi}\right) \right)&=  \textrm{Re} \left( a \xi w r^p \beta_k \left(L+\underline{L} -2\frac{a}{r^2+a^2}\Phi \right) \Phi\Psi {L}\overline{\Psi}\right) \\
&=-\Phi \Big\{\textrm{Re} \left( \frac{2a^2}{r^2+a^2}  \xi w r^p \beta_k  (\Phi \Psi) L\overline{\Psi} \right)\Big\} +  {L} \Big \{ \frac{a^2}{r^2+a^2}  \xi w r^p \beta_k |\Phi\Psi|^2  \Big\} \nonumber \\
&\ \ \ \ -  \left( \frac{a^2}{r^2+a^2}  \xi w r^p \beta_k \right)^\prime |\Phi\Psi|^2
 +\frac{1}{2} \Phi \Big\{ a\xi w r^p \beta_k |L\Psi|^2\Big\} + \textrm{Re} \left(a \xi w r^p \beta_k \underline{L}  \Phi\Psi {L}\overline{\Psi}\right) \nonumber \, .
\end{align}
In view of
\begin{align}
a\xi w r^p \beta_k \underline{L}\Phi \Psi {L} \overline{\Psi} &= \underline{L} \left(a \xi w r^p \beta_k \Phi \Psi {L} \overline{\Psi} \right) +  \left(a \xi w r^p \beta_k\right)^\prime \Phi \Psi {L} \overline{\Psi} + a\xi w r^p \beta_k \Phi \Psi \left[L, \underline{L} \right] \overline{\Psi} - a\xi w r^p \beta_k \Phi \Psi  L\underline{L}\overline{\Psi} \nonumber \\
&= \underline{L} \left(a \xi w r^p \beta_k \Phi \Psi {L} \overline{\Psi} \right) +  \left(a \xi w r^p \beta_k\right)^\prime \Phi \Psi {L} \overline{\Psi} + a\xi w r^p \beta_k \Phi \Psi \left[L, \underline{L} \right] \overline{\Psi} \nonumber \\
&  - {L} \left( a \xi w r^p \beta_k \Phi \Psi \underline{L}\overline{\Psi}  \right)  + \left( a\xi w r^p \beta_k\right)^\prime \Phi \Psi \underline{L}\overline{\Psi} + \Phi\left( a\xi w r^p \beta_k  L\Psi \underline{L} \overline{\Psi} \right) - a\xi w r^p \beta_k L \Psi \underline{L} \Phi \overline{\Psi}  \nonumber
\end{align}
we conclude from (\ref{imu})
\begin{align}
&\textrm{Re} \left( w 2 a T \Phi \Psi \left(r^p \beta_k \xi {L}\overline{\Psi}\right) \right)
=\Phi \Big\{-\textrm{Re} \left( \frac{2a^2 w r^p}{r^2+a^2}  \xi  \beta_k  (\Phi \Psi) L\overline{\Psi} \right) + \frac{1}{2} a\xi w r^p \beta_k |L\Psi|^2 + \frac{1}{2} \textrm{Re}\left( a\xi w r^p \beta_k  L\Psi \underline{L} \overline{\Psi}\right) \Big\} \nonumber \\
& \phantom{XXXXXX} +  {L} \Big \{ \frac{a^2 w r^p}{r^2+a^2}  \xi  \beta_k |\Phi\Psi|^2 -\frac{1}{2} \textrm{Re} \left(a \xi w r^p \beta_k \Phi \Psi \underline{L}\overline{\Psi}\right) \Big\} + \underline{L} \Big\{\frac{1}{2} \textrm{Re} \left(a \xi w r^p \beta_k \Phi \Psi {L} \overline{\Psi} \right) \Big\} \nonumber \\
& \phantom{XXXXXX} -  \left( \frac{a^2}{r^2+a^2}  \xi w r^p \beta_k \right)^\prime |\Phi\Psi|^2 + \frac{1}{2} \textrm{Re} \left( \left(a \xi w r^p \beta_k\right)^\prime \Phi \Psi \left( {L} + \underline{L}\right) \overline{\Psi}  \right) + \frac{1}{2} \textrm{Re} \left(a\xi w r^p \beta_k \Phi \Psi \left[L, \underline{L} \right] \overline{\Psi}\right)
  \nonumber \, .
\end{align}

\subsubsection{Part IV: $ wa^2 \sin^2\theta TT\Psi$}

\begin{align}
\textrm{Re} \left( w a^2 \sin^2\theta TT\Psi\left(r^p\beta_k \xi {L}\overline{\Psi}\right)\right)=& T \Big\{ w a^2 \sin^2\theta r^p \beta_k \xi \textrm{Re} \left( T\Psi {L}\overline{\Psi}\right) \Big\} - \frac{1}{2} {L} \Big\{w a^2 \sin^2\theta \xi  r^p\beta_k  |T\Psi|^2 \Big\} \nonumber \\
& \ \ +\frac{1}{2} \left(\xi \left(w r^p \beta_k\right)^\prime + \xi^\prime w r^p \beta_k \right) a^2 \sin^2\theta | T\Psi|^2 \, .
\end{align}

\bibliographystyle{DHRalpha}
\bibliography{Teukpaper}

\begin{thebibliography}{AMPW17}

\bibitem[AAG16a]{Angelopoulos:2016moe}
Y.~Angelopoulos, S.~Aretakis, and D.~Gajic.
\newblock {A vector field approach to almost-sharp decay for the wave equation
  on spherically symmetric, stationary spacetimes}.
\newblock {\em arXiv:1612.01565, preprint}, 2016.

\bibitem[AAG16b]{Angelopoulos:2016wcv}
Y.~Angelopoulos, S.~Aretakis, and D.~Gajic.
\newblock {Late-time asymptotics for the wave equation on spherically
  symmetric, stationary spacetimes}.
\newblock {\em arXiv:1612.01566, preprint}, 2016.

\bibitem[AB15a]{AndBlue}
L.~Andersson and P.~Blue.
\newblock Hidden symmetries and decay for the wave equation on the {K}err
  spacetime.
\newblock {\em Ann. of Math. (2)}, 182(3):787--853, 2015.

\bibitem[AB15b]{andersson2013uniform}
L.~Andersson and P.~Blue.
\newblock Uniform energy bound and asymptotics for the {M}axwell field on a
  slowly rotating {K}err black hole exterior.
\newblock {\em J. Hyperbolic Differ. Equ.}, 12(4):689--743, 2015.

\bibitem[AG00]{andersson2000superradiance}
N.~Andersson and K.~Glampedakis.
\newblock Superradiance resonance cavity outside rapidly rotating black holes.
\newblock {\em Phys. Rev. Lett.}, 84(20):4537, 2000.

\bibitem[AMPW17]{doi:10.1063/1.4991656}
L.~Andersson, S.~Ma, C.~Paganini, and B.~F. Whiting.
\newblock Mode stability on the real axis.
\newblock {\em J. Math. Phys.}, 58(7):072501, 2017.

\bibitem[Are12]{Aretakis}
S.~Aretakis.
\newblock Decay of axisymmetric solutions of the wave equation on extreme
  {K}err backgrounds.
\newblock {\em J. Funct. Anal.}, 263(9):2770--2831, 2012.

\bibitem[Are15]{Aretakis2}
S.~Aretakis.
\newblock Horizon instability of extremal black holes.
\newblock {\em Adv. Theor. Math. Phys.}, 19(3):507--530, 2015.

\bibitem[Bac90]{MR1069953}
A.~Bachelot.
\newblock Op\'erateur de diffraction pour le syst\`eme de {M}axwell en
  m\'etrique de {S}chwarzschild.
\newblock In {\em Journ\'ees ``\'Equations aux {D}\'eriv\'ees {P}artielles''
  ({S}aint {J}ean de {M}onts, 1990), Exp.\ No.\ III, 11}. \'Ecole Polytech.,
  Palaiseau, 1990.

\bibitem[BCC06]{Berti:2005gp}
E.~Berti, V.~Cardoso, and M.~Casals.
\newblock {Eigenvalues and eigenfunctions of spin-weighted spheroidal harmonics
  in four and higher dimensions}.
\newblock {\em Phys. Rev.}, D73:024013, 2006.
\newblock [Erratum: Phys. Rev.D73,109902(2006)].

\bibitem[BDFW14]{Whale}
F.~Beyer, B.~Daszuta, J.~Frauendiener, and B.~Whale.
\newblock {Numerical evolutions of fields on the 2-sphere using a spectral
  method based on spin-weighted spherical harmonics}.
\newblock {\em Class. Quantum Grav.}, 31:075019, 2014.

\bibitem[Blu08]{Blue}
P.~Blue.
\newblock {Decay of the Maxwell field on the Schwarzschild manifold}.
\newblock {\em J. Hyperbolic Differ. Equ.}, 5(4):807--856, 2008.

\bibitem[{Bon}08]{Haefner}
{Bony, J.-F. and H\"afner, D.}
\newblock {Decay and non-decay of the local energy for the wave equation in the
  De Sitter-Schwarzschild metric}.
\newblock {\em {Commun. Math. Phys.}}, 282:697--719, 2008.

\bibitem[BP73]{bardeen1973}
J.~M. Bardeen and W.~H. Press.
\newblock {Radiation fields in the Schwarzschild background}.
\newblock {\em J. Math. Phys.}, 14(1):7--19, 1973.

\bibitem[BR11]{bizon2011weakly}
P.~Bizo{\'n} and A.~Rostworowski.
\newblock Weakly turbulent instability of anti--de {S}itter spacetime.
\newblock {\em Phys. Rev. Lett.}, 107(3):031102, 2011.

\bibitem[BS03]{MR1972492}
P.~Blue and A.~Soffer.
\newblock Semilinear wave equations on the {S}chwarzschild manifold. {I}.
  {L}ocal decay estimates.
\newblock {\em Adv. Differential Equations}, 8(5):595--614, 2003.
\newblock [Erratum: arXiv:gr-qc/0608073].

\bibitem[BS05]{BlueSoffer}
P.~Blue and A.~Soffer.
\newblock {The wave equation on the Schwarzschild metric. II: Local decay for
  the spin 2 Regge Wheeler equation}.
\newblock {\em J. Math. Phys.}, 46:012502, 2005.
\newblock [Erratum: arXiv:gr-qc/0608073].

\bibitem[BS06]{Sterbenz}
P.~Blue and J.~Sterbenz.
\newblock Uniform decay of local energy and the semi-linear wave equation on
  {S}chwarzschild space.
\newblock {\em Commun. Math. Phys.}, 268 (2):481--504, 2006.

\bibitem[Car68]{carter1968hamilton}
B.~Carter.
\newblock Hamilton-{J}acobi and {S}chr{\"o}dinger separable solutions of
  {E}instein's equations.
\newblock {\em Commun. Math. Phys.}, 10(4):280--310, 1968.

\bibitem[Car73]{kerrdesitrefcarter}
B.~Carter.
\newblock Black hole equilibrium states.
\newblock In C.~DeWitt and B.~S. DeWitt, editors, {\em Black holes/{L}es astres
  occlus (\'Ecole d'\'Et\'e {P}hys. {T}h\'eor., {L}es {H}ouches, 1972)}, pages
  57--214. Gordon and Breach, New York, 1973.

\bibitem[CD76]{Chandrasekhar165}
S.~Chandrasekhar and S.~Detweiler.
\newblock On the equations governing the gravitational perturbations of the
  {K}err black hole.
\newblock {\em Proc. Roy. Soc. A}, 350(1661):165--174, 1976.

\bibitem[Cha92]{Chandrasekhar}
S.~Chandrasekhar.
\newblock {\em The Mathematical Theory of Black Holes}.
\newblock Oxford University Press, Oxford, 3rd edition, 1992.

\bibitem[Chr87]{maththeory}
D.~Christodoulou.
\newblock A mathematical theory of gravitational collapse.
\newblock {\em Commun. Math. Phys.}, 109(4):613--647, 1987.

\bibitem[Chr99]{Chrmil}
D.~Christodoulou.
\newblock {On the global initial value problem and the issue of singularities}.
\newblock {\em Class.~Quantum Grav.}, 16:A23--A35, 1999.

\bibitem[Chr12]{christorome}
D.~Christodoulou.
\newblock {\em The global initial value problem in general relativity}, pages
  44--54.
\newblock World Scientific Publishing Company, 2012.

\bibitem[CK93]{CK}
D.~Christodoulou and S.~Klainerman.
\newblock {\em The global nonlinear stability of the {M}inkowski space},
  volume~41 of {\em Princeton Mathematical Series}.
\newblock Princeton University Press, Princeton, NJ, 1993.

\bibitem[CL78]{Lerner}
W.~Curtis and D.~E. Lerner.
\newblock {Complex line bundles in relativity}.
\newblock {\em J. Math. Phys.}, 19:874--877, 1978.

\bibitem[Det80]{detweiler1980klein}
S.~Detweiler.
\newblock Klein-{G}ordon equation and rotating black holes.
\newblock {\em Phys. Rev. D}, 22(10):2323, 1980.

\bibitem[DH06]{eguchiha}
M.~Dafermos and G.~Holzegel.
\newblock {Dynamic instability of solitons in 4 + 1-dimensional gravity with
  negative cosmological constant}.
\newblock {\em (unpublished)}, 2006.
\newblock See https://www.dpmms.cam.ac.uk/ \textasciitilde
  md384/ADSinstability.pdf.

\bibitem[DHR13]{vacuumscatter}
M.~Dafermos, G.~Holzegel, and I.~Rodnianski.
\newblock {A scattering theory construction of dynamical vacuum black holes}.
\newblock {\em arXiv:1306.5364, preprint}, 2013.
\newblock to appear in J. Diff. Geom.

\bibitem[DHR16]{holzstabofschw}
M.~Dafermos, G.~Holzegel, and I.~Rodnianski.
\newblock The linear stability of the {S}chwarzschild solution to gravitational
  perturbations.
\newblock {\em arXiv:1601.06467, preprint}, 2016.

\bibitem[DK86]{Dimock2}
J.~Dimock and B.~S. Kay.
\newblock {Classical and quantum scattering theory for linear scalar fields on
  the Schwarzschild metric 2}.
\newblock {\em J.~Math.~Phys.}, 27:2520--2525, 1986.

\bibitem[DK87]{Dimock1}
J.~Dimock and B.~S. Kay.
\newblock {Classical and quantum scattering theory for linear scalar fields on
  the Schwarzschild metric 1}.
\newblock {\em Ann. Phys.}, 175:366--426, 1987.

\bibitem[DL17]{DafLuk1}
M.~Dafermos and J.~Luk.
\newblock {The interior of dynamical vacuum black holes I: The $C^0$-stability
  of the Kerr Cauchy horizon}.
\newblock {\em arXiv:1710.01722, preprint}, 2017.

\bibitem[Dot16]{dotti2016black}
G.~Dotti.
\newblock Black hole non-modal linear stability: the {S}chwarzschild ({A}) d{S}
  cases.
\newblock {\em Class. Quantum Grav.}, 33(20):205005, 2016.

\bibitem[DR05]{DRPrice}
M.~Dafermos and I.~Rodnianski.
\newblock A proof of {P}rice's law for the collapse of self-gravitating scalar
  field.
\newblock {\em Invent. Math.}, 162:381--457, 2005.

\bibitem[DR07]{Dafermos:2007jd}
M.~Dafermos and I.~Rodnianski.
\newblock {The wave equation on Schwarzschild-de Sitter spacetimes}.
\newblock {\em arXiv:0709.2766, preprint}, 2007.

\bibitem[DR09a]{DafRodnew}
M.~Dafermos and I.~Rodnianski.
\newblock A new physical-space approach to decay for the wave equation with
  applications to black hole spacetimes.
\newblock In P.~Exner, editor, {\em XVIth International Congress on
  Mathematical Physics}, pages 421--433. World Scientific, London, 2009.

\bibitem[DR09b]{DafRod2}
M.~Dafermos and I.~Rodnianski.
\newblock {The red-shift effect and radiation decay on black hole spacetimes}.
\newblock {\em Comm.~Pure Appl.~Math.}, 62:859--919, 2009.

\bibitem[DR10]{DafRodsmalla}
M.~Dafermos and I.~Rodnianski.
\newblock Decay for solutions of the wave equation on {K}err exterior
  spacetimes {I}-{II}: The cases $|a| \ll m$ or axisymmetry.
\newblock {\em arXiv:1010.5132, preprint}, 2010.

\bibitem[DR11a]{DafRodKerr}
M.~Dafermos and I.~Rodnianski.
\newblock {A proof of the uniform boundedness of solutions to the wave equation
  on slowly rotating Kerr backgrounds}.
\newblock {\em {Invent. Math.}}, {185 (3)}:467--559, 2011.

\bibitem[DR11b]{dafrodlargea}
M.~Dafermos and I.~Rodnianski.
\newblock {The black hole stability problem for linear scalar perturbations}.
\newblock In T.~D. et~al, editor, {\em {Proceedings of the Twelfth Marcel
  Grossmann Meeting on General Relativity}}, pages 132--189. {World Scientific,
  Singapore}, 2011.

\bibitem[DR13]{Mihalisnotes}
M.~Dafermos and I.~Rodnianski.
\newblock {Lectures on black holes and linear waves}.
\newblock In {\em Evolution equations, Clay Mathematics Proceedings, Vol. 17},
  pages 97--205. Amer. Math. Soc., Providence, RI, 2013.

\bibitem[DRSR14]{Dafermos:2014jwa}
M.~Dafermos, I.~Rodnianski, and Y.~Shlapentokh-Rothman.
\newblock A scattering theory for the wave equation on {K}err black hole
  exteriors.
\newblock {\em arXiv:1412.8379, preprint}, 2014.
\newblock to appear in Ann. Sci. \'Ecole Norm. S.

\bibitem[DRSR16]{partiii}
M.~Dafermos, I.~Rodnianski, and Y.~Shlapentokh-Rothman.
\newblock Decay for solutions of the wave equation on {K}err exterior
  spacetimes {III}: {T}he full subextremal case {$|a|<M$}.
\newblock {\em Ann. of Math. (2)}, 183(3):787--913, 2016.

\bibitem[DSR17]{DafShlap}
M.~Dafermos and Y.~Shlapentokh-Rothman.
\newblock Time-translation invariance of scattering maps and blue-shift
  instabilities on {K}err black hole spacetimes.
\newblock {\em Commun. Math. Phys.}, 350(3):985--1016, 2017.

\bibitem[DSS12]{Donninger2012}
R.~Donninger, W.~Schlag, and A.~Soffer.
\newblock On pointwise decay of linear waves on a {S}chwarzschild black hole
  background.
\newblock {\em Commun. Math. Phys.}, 309(1):51--86, Jan 2012.

\bibitem[Dya11]{dyatlov2011quasi}
S.~Dyatlov.
\newblock Quasi-normal modes and exponential energy decay for the {K}err-de
  {S}itter black hole.
\newblock {\em Commun. Math. Phys.}, 306(1):119--163, 2011.

\bibitem[FKSY03]{finster2003}
F.~Finster, N.~Kamran, J.~Smoller, and S.-T. Yau.
\newblock The long-time dynamics of {D}irac particles in the {K}err--{N}ewman
  black hole geometry.
\newblock {\em Adv. Theor. Math. Phys.}, 7(1):25--52, 02 2003.

\bibitem[Fra16]{annefranzen}
A.~T. Franzen.
\newblock Boundedness of massless scalar waves on {R}eissner-{N}ordstr\"om
  interior backgrounds.
\newblock {\em Commun. Math. Phys.}, 343(2):601--650, 2016.

\bibitem[Fra17]{Franzen2}
A.~Franzen.
\newblock Boundedness of massless scalar waves on {K}err interior backgrounds.
\newblock {\em preprint}, 2017.

\bibitem[FS09]{MR3607063}
F.~Finster and J.~Smoller.
\newblock Decay of solutions of the {T}eukolsky equation for higher spin in the
  {S}chwarzschild geometry.
\newblock {\em Adv. Theor. Math. Phys.}, 13(1):71--110, 2009.
\newblock [Erratum: Adv. Theor. Math. Phys. 20 (2016), no. 6, 1485--1486].

\bibitem[FS16]{Finster:2016tky}
F.~Finster and J.~Smoller.
\newblock Linear stability of the non-extreme {K}err black hole.
\newblock {\em arXiv:1606.08005, preprint}, 2016.

\bibitem[Gaj17a]{Gajic:2015csa}
D.~Gajic.
\newblock Linear waves in the interior of extremal black holes {I}.
\newblock {\em Commun. Math. Phys.}, 353(2):717--770, 2017.

\bibitem[Gaj17b]{Gajic:2015hyu}
D.~Gajic.
\newblock {Linear waves in the interior of extremal black holes II}.
\newblock {\em arXiv:1512.08953, preprint}, 2017.
\newblock to appear in Ann. Henri Poincar\'e.

\bibitem[GGH17]{georgescu}
V.~Georgescu, C.~G\'erard, and D.~H\"afner.
\newblock Asymptotic completeness for superradiant {K}lein--{G}ordon equations
  and applications to the {D}e {S}itter--{K}err metric.
\newblock {\em J. Eur. Math. Soc.}, 19(8):2371--2444, 2017.

\bibitem[GJK17]{Glampedakis:2017rar}
K.~Glampedakis, A.~D. Johnson, and D.~Kennefick.
\newblock The {D}arboux transformation in black hole perturbation theory.
\newblock {\em arXiv:1702.06459, preprint}, 2017.

\bibitem[Hin15]{Hintz:2015koq}
P.~Hintz.
\newblock {Boundedness and decay of scalar waves at the Cauchy horizon of the
  Kerr spacetime}.
\newblock {\em arXiv:1512.08003, preprint}, 2015.

\bibitem[Hin17]{hintz2015resonance}
P.~Hintz.
\newblock Resonance expansions for tensor-valued waves on asymptotically
  {K}err--de {S}itter spaces.
\newblock {\em J. Spectr. Theory}, 7(2):519--557, 2017.

\bibitem[HKW17]{hung2017linear}
P.-K. Hung, J.~Keller, and M.-T. Wang.
\newblock Linear stability of {S}chwarzschild spacetime: The {C}auchy problem
  of metric coefficients.
\newblock {\em arXiv:1702.02843, preprint}, 2017.

\bibitem[HLSW15]{Holzegel:2015swa}
G.~Holzegel, J.~Luk, J.~Smulevici, and C.~Warnick.
\newblock {Asymptotic properties of linear field equations in anti-de Sitter
  space}.
\newblock {\em arXiv:1502.04965, preprint}, 2015.

\bibitem[Hol10a]{holzbiax}
G.~Holzegel.
\newblock Stability and decay rates for the five-dimensional {S}chwarzschild
  metric under biaxial perturbations.
\newblock {\em Adv. Theor. Math. Phys.}, 14(5):1245--1372, 2010.

\bibitem[Hol10b]{Holzegelspin2}
G.~Holzegel.
\newblock {Ultimately {S}chwarzschildean spacetimes and the black hole
  stability problem}.
\newblock {\em arXiv:1010.3216, preprint}, 2010.

\bibitem[Hol16]{Holzegelfluxes}
G.~Holzegel.
\newblock Conservation laws and flux bounds for gravitational perturbations of
  the {S}chwarzschild metric.
\newblock {\em Class. Quantum Grav.}, 33(20):205004, 2016.

\bibitem[HS13]{holzegel2013decay}
G.~Holzegel and J.~Smulevici.
\newblock Decay properties of {K}lein--{G}ordon fields on {K}err--{A}d{S}
  spacetimes.
\newblock {\em Comm. Pure Appl. Math.}, 66(11):1751--1802, 2013.

\bibitem[HS14]{HolzSmulevici}
G.~Holzegel and J.~Smulevici.
\newblock Quasimodes and a lower bound on the uniform energy decay rate for
  {K}err-{A}d{S} spacetimes.
\newblock {\em Anal. PDE}, 7(5):1057--1090, 2014.

\bibitem[HV16]{hintz2016global}
P.~Hintz and A.~Vasy.
\newblock The global non-linear stability of the {K}err-de {S}itter family of
  black holes.
\newblock {\em arXiv:1606.04014, preprint}, 2016.

\bibitem[HW74]{hartle1974analytic}
J.~B. Hartle and D.~C. Wilkins.
\newblock Analytic properties of the {T}eukolsky equation.
\newblock {\em Commun. Math. Phys.}, 38(1):47--63, 1974.

\bibitem[HW13]{hollands2013stability}
S.~Hollands and R.~M. Wald.
\newblock Stability of black holes and black branes.
\newblock {\em Commun. Math. Phys.}, 321(3):629--680, 2013.

\bibitem[IK15]{Ionescu2015}
A.~D. Ionescu and S.~Klainerman.
\newblock On the global stability of the wave-map equation in {K}err spaces
  with small angular momentum.
\newblock {\em Annals of PDE}, 1(1):1, Sep 2015.

\bibitem[Joh15]{johnson}
T.~Johnson.
\newblock The {R}egge--{W}heeler and {Z}erilli equations.
\newblock {\em {Report for Imperial College}}, 2015.
\newblock Unpublished.

\bibitem[Ker63]{Kerr}
R.~Kerr.
\newblock {Gravitational field of a spinning mass as an example of
  algebraically special metrics}.
\newblock {\em Phys.~Rev.~Lett.}, 11 (5):237--238, 1963.

\bibitem[KJW89]{kalnins}
E.~G. Kalnins, W.~M. Jr., and G.~C. Williams.
\newblock Killing--{Y}ano tensors and variable separation in {K}err geometry.
\newblock {\em J. Math. Phys.}, 30(10):2360--2365, 1989.

\bibitem[KS17]{SKqasp}
S.~Klainerman and J.~Szeftel.
\newblock {Global nonlinear stability of Schwarzschild spacetime under
  axisymmetric polarized perturbations}.
\newblock {\em {preprint}}, {2017}.

\bibitem[KW87]{KayWald}
B.~S. Kay and R.~M. Wald.
\newblock Linear stability of {S}chwarzschild under perturbations which are
  nonvanishing on the bifurcation two sphere.
\newblock {\em Class. Quantum Grav.}, 4:893--898, 1987.

\bibitem[LO17]{LukOhpub}
J.~Luk and S.-J. Oh.
\newblock Proof of linear instability of the {R}eissner--{N}ordstr\"om {C}auchy
  horizon under scalar perturbations.
\newblock {\em Duke Math. J.}, 166(3):437--493, 2017.

\bibitem[LR12]{lucietti2012gravitational}
J.~Lucietti and H.~S. Reall.
\newblock Gravitational instability of an extreme {K}err black hole.
\newblock {\em Phys. Rev. D}, 86(10):104030, 2012.

\bibitem[LS16]{LukSbierski}
J.~Luk and J.~Sbierski.
\newblock Instability results for the wave equation in the interior of {K}err
  black holes.
\newblock {\em J. Funct. Anal.}, 271(7):1948--1995, 2016.

\bibitem[LT16]{lindblad2016global}
H.~Lindblad and M.~Tohaneanu.
\newblock Global existence for quasilinear wave equations close to
  {S}chwarzschild.
\newblock {\em arXiv:1610.00674, preprint}, 2016.

\bibitem[Luk13]{MR3082240}
J.~Luk.
\newblock The null condition and global existence for nonlinear wave equations
  on slowly rotating {K}err spacetimes.
\newblock {\em J. Eur. Math. Soc. (JEMS)}, 15(5):1629--1700, 2013.

\bibitem[Ma17a]{Ma:2017yui}
S.~Ma.
\newblock {Uniform energy bound and Morawetz estimate for extreme components of
  spin fields in the exterior of a slowly rotating Kerr black hole \emph{I}:
  Maxwell field}.
\newblock {\em arXiv:1705.06621, preprint}, 2017.

\bibitem[Ma17b]{Ma:2017yui2}
S.~Ma.
\newblock {Uniform energy bound and Morawetz estimate for extreme components of
  spin fields in the exterior of a slowly rotating Kerr black hole \emph{II}:
  linearized gravity}.
\newblock {\em arXiv:1708.07385, preprint}, 2017.

\bibitem[McN78a]{McNamara1}
J.~M. McNamara.
\newblock Behaviour of scalar perturbations of a {R}eissner--{N}ordstr\"om
  black hole inside the event horizon.
\newblock {\em Proc. Roy. Soc. A.}, 364:121--134, 1978.

\bibitem[McN78b]{mcnamara1978instability}
J.~M. McNamara.
\newblock Instability of black hole inner horizons.
\newblock {\em Proc. Roy. Soc. A}, 358(1695):499--517, 1978.

\bibitem[Mok17]{Mokdad:2017aua}
M.~Mokdad.
\newblock Conformal scattering of {M}axwell fields on
  {R}eissner-{N}ordstrom--de {S}itter black hole spacetimes.
\newblock {\em arXiv:1706.06993, preprint}, 2017.

\bibitem[Mor68]{morawetz1968time}
C.~S. Morawetz.
\newblock Time decay for the nonlinear {K}lein--{G}ordon equation.
\newblock {\em Proc. Roy. Soc. A}, 306(1486):291--296, 1968.

\bibitem[Mos16]{Moschnewmeth}
G.~Moschidis.
\newblock The $r^p$-weighted energy method of {D}afermos and {R}odnianski in
  general asymptotically flat spacetimes and applications.
\newblock {\em Annals of PDE}, 2(1):6, May 2016.

\bibitem[Mos17a]{moschidis2017proof}
G.~Moschidis.
\newblock A proof of the instability of {A}d{S} for the {E}instein--null dust
  system with an inner mirror.
\newblock {\em arXiv:1704.08681, preprint}, 2017.

\bibitem[Mos17b]{moschidis2017superradiant}
G.~Moschidis.
\newblock Superradiant instabilities for short-range non-negative potentials on
  {K}err spacetimes and applications.
\newblock {\em J. Funct. Anal.}, 273(8):2719--2813, 2017.

\bibitem[MTT12]{tohaneanu}
J.~Metcalfe, D.~Tataru, and M.~Tohaneanu.
\newblock Price's law on nonstationary space-times.
\newblock {\em Adv.~Math.}, 230(3):995--1028, 2012.

\bibitem[MTT17]{METCALFE201753}
J.~Metcalfe, D.~Tataru, and M.~Tohaneanu.
\newblock Pointwise decay for the {M}axwell field on black hole space-times.
\newblock {\em Adv. Math.}, 316:53--93, 2017.

\bibitem[Nic16]{nicolas}
J.-P. Nicolas.
\newblock Conformal scattering on the {S}chwarzschild metric.
\newblock {\em Ann. Inst. Fourier (Grenoble)}, 66(3):1175--1216, 2016.

\bibitem[NP62]{newmanpenrose}
E.~Newman and R.~Penrose.
\newblock An approach to gravitational radiation by a method of spin
  coefficients.
\newblock {\em J. Math. Phys.}, 3(3):566--578, 1962.

\bibitem[Pas16]{pasqualotto2016spin}
F.~Pasqualotto.
\newblock The spin $\pm1$ {T}eukolsky equations and the {M}axwell system on
  {S}chwarzschild.
\newblock {\em arXiv:1612.07244, preprint}, 2016.

\bibitem[Pas17]{Pasqualotto:2017rkh}
F.~Pasqualotto.
\newblock {Nonlinear stability for the Maxwell--Born--Infeld system on a
  Schwarzschild background}.
\newblock {\em arXiv:1706.07764, preprint}, 2017.

\bibitem[Pri72]{Price}
R.~H. Price.
\newblock {Nonspherical perturbations of relativistic gravitational collapse.
  1. Scalar and gravitational perturbations}.
\newblock {\em Phys. Rev.}, D5:2419--2438, 1972.

\bibitem[PW15]{prabhu2015black}
K.~Prabhu and R.~M. Wald.
\newblock Black hole instabilities and exponential growth.
\newblock {\em Commun. Math. Phys.}, 340(1):253--290, 2015.

\bibitem[RW57]{Regge}
T.~Regge and J.~A. Wheeler.
\newblock {Stability of a Schwarzschild Singularity}.
\newblock {\em Phys. Rev.}, 108:1063--1069, 1957.

\bibitem[Sbi14]{sbierskithesis}
J.~Sbierski.
\newblock {\em On the initial value problem in general relativity and wave
  propagation in black-hole spacetimes}.
\newblock PhD thesis, University of Cambridge, Cambridge, 2014.

\bibitem[Sch16]{schwarzschild1916}
K.~Schwarzschild.
\newblock {{\"U}ber das Gravitationsfeld eines Massenpunktes nach der
  Einsteinschen Theorie}.
\newblock {\em Sitzungsber. K. Preuss. Akad. Wiss.}, 1:189--196, 1916.

\bibitem[Sch16]{schlue2016decay}
V.~Schlue.
\newblock Decay of the {W}eyl curvature in expanding black hole cosmologies.
\newblock {\em arXiv:1610.04172, preprint}, 2016.

\bibitem[SN82]{doi:10.1143/PTP.67.1788}
M.~Sasaki and T.~Nakamura.
\newblock Gravitational radiation from a {K}err black hole. {I}. {F}ormulation
  and a method for numerical analysis.
\newblock {\em Prog. Theor. Phys.}, 67(6):1788--1809, 1982.

\bibitem[SR14]{yakovinsta}
Y.~Shlapentokh-Rothman.
\newblock Exponentially growing finite energy solutions for the
  {K}lein--{G}ordon equation on sub-extremal {K}err spacetimes.
\newblock {\em Commun. Math. Phys.}, 329(3):859--891, 2014.

\bibitem[SR15]{SRT}
Y.~Shlapentokh-Rothman.
\newblock Quantitative mode stability for the wave equation on the {K}err
  spacetime.
\newblock {\em Ann. Henri Poincar{\'e}}, 16(1):289--345, Jan 2015.

\bibitem[ST15]{MR3373052}
J.~Sterbenz and D.~Tataru.
\newblock Local energy decay for {M}axwell fields {P}art {I}: {S}pherically
  symmetric black-hole backgrounds.
\newblock {\em Int. Math. Res. Not.}, (11):3298--3342, 2015.

\bibitem[SX12]{Smoller2012}
J.~Smoller and C.~Xie.
\newblock Asymptotic behavior of massless {D}irac waves in {S}chwarzschild
  geometry.
\newblock {\em Ann. Henri Poincar{\'e}}, 13(4):943--989, 2012.

\bibitem[Teu73]{teukolsky1973}
S.~A. Teukolsky.
\newblock Perturbations of a rotating black hole. {I}. {F}undamental equations
  for gravitational, electromagnetic, and neutrino-field perturbations.
\newblock {\em Astrophysical J.}, 185:635--648, 1973.

\bibitem[TT11]{Toha2}
D.~Tataru and M.~Tohaneanu.
\newblock {Local energy estimate on Kerr black hole backgrounds}.
\newblock {\em Int. Math. Res. Not.}, 2011:248--292, 2011.

\bibitem[Vas13]{vasy}
A.~Vasy.
\newblock Microlocal analysis of asymptotically hyperbolic and {K}err-de
  {S}itter spaces (with an appendix by {S}emyon {D}yatlov).
\newblock {\em Invent. Math.}, 194(2):381--513, 2013.

\bibitem[Wal73]{WaldKerr}
R.~M. Wald.
\newblock {On perturbations of a Kerr black hole}.
\newblock {\em J. Math. Phys.}, 14:1453--1461, 1973.

\bibitem[Whi89]{Whiting}
B.~F. Whiting.
\newblock {Mode Stability of the Kerr black hole}.
\newblock {\em J. Math. Phys.}, 30:1301--1306, 1989.

\bibitem[ZE79]{zouros1979instabilities}
T.~J. Zouros and D.~M. Eardley.
\newblock Instabilities of massive scalar perturbations of a rotating black
  hole.
\newblock {\em Ann. Phys.}, 118(1):139--155, 1979.

\bibitem[Zer70]{Zerilli!}
F.~J. Zerilli.
\newblock {Effective potential for even-parity Regge--Wheeler gravitational
  perturbation equations}.
\newblock {\em Phys. Rev. Lett.}, 24:737--738, Mar 1970.

\end{thebibliography}

\end{document}